\documentclass[conference]{IEEEtran}
 \pdfoutput=1 

 \usepackage{cite}
 \usepackage[standard]{ntheorem}
\renewenvironment{proof}{\begin{IEEEproof}}{\end{IEEEproof}}
\newif\ifconf
\conftrue
\conffalse

\ifconf
  \newcommand{\short}[1]{#1}
  \newcommand{\full}[1]{}
  \usepackage{environ}
  \NewEnviron{killcontents}{}

\else
  \newcommand{\short}[1]{}
  \newcommand{\full}[1]{{#1}}
\fi

\usepackage{etex}

\usepackage[bookmarks=false]{hyperref}

\usepackage{proof}

\newcommand{\wi}{{interchangeable}}

\newlength{\myheight}

\usepackage{amssymb,graphicx,epsfig,color}
\usepackage{subcaption}
\usepackage[arrow,matrix,frame,curve,cmtip,color]{xy}
\newdir{  >}{*{}!/-.9em/\dir{>}}
\newdir{ >}{*{}!/-.4em/\dir{>}}
\usepackage{wrapfig}

\usepackage{pgf}
\usepackage{tikz}
\usetikzlibrary{arrows,backgrounds,patterns,positioning}
\tikzstyle{node}=[circle, draw=black, minimum size=1mm]
\tikzstyle{trans}=[font=\scriptsize]
\tikzstyle{lab}=[font=\small]

\usepackage{tikz-cd}

\newcommand{\pgfBox}{
  \begin{pgfonlayer}{background} 
    \fill[blue!2,thick,draw=black!50,rounded corners,inner sep=3mm] ([xshift=-1.5pt]current bounding box.south west) rectangle ([xshift=1.5pt]current bounding box.north east);
  \end{pgfonlayer}
}

\newcommand{\bx}[1]{\phantom{\big(}#1{\phantom{\big)}}} %

\newcommand{\Rrel}[1]   {\stackrel{{#1}}{\Longrightarrow}}

\usepackage{stmaryrd}

\usepackage[latin1]{inputenc}
\usepackage{amsmath}
\usepackage{enumerate}
\usepackage{xspace}
\usepackage{amsfonts}
\usepackage{mathrsfs}
\usepackage{cite}
\usepackage{float}
\usepackage{fancybox}

\newcommand{\compact}[1]{\ensuremath{\mathop{\mathsf{K}({#1})}}}

\newcommand{\principal}[1]{\ensuremath{\mathop{\downarrow\!{#1}}}}

\newcommand{\ideal}[1]{\ensuremath{\mathsf{Idl}({#1})}}

\newcommand{\pr}[1]{\ensuremath{\mathop{\mathit{pr}({#1})}}}
\newcommand{\wpr}[1]{\ensuremath{\mathop{\mathit{wpr}({#1})}}}

\newcommand{\ir}[1]{\ensuremath{\mathop{\mathit{ir}({#1})}}}

\newcommand{\diff}[2]{\ensuremath{\delta({#1},{#2})}}

\newcommand{\esabbr}{\textsc{es}}
\newcommand{\esnabbr}{\textsc{esnb}}
\newcommand{\esnmabbr}{\textsc{esn}}
\newcommand{\eseqabbr}{\textsc{epes}}

\newcommand{\pred}[1]{\ensuremath{\mathit{p}({#1})}}

\newcommand{\esir}[2]{\ensuremath{\langle{#1}, {#2}\rangle}}

\newcommand{\eqclass}[2][]{\ensuremath{[{#2}]_{\scriptscriptstyle {#1}}}}

\newcommand{\eqclassir}[1]{\ensuremath{\eqclass[\leftrightarrow^*]{#1}}}

\newcommand{\quotient}[2]{\ensuremath{{#1}_{\scriptscriptstyle {#2}}}}

\newcommand{\es}{\ensuremath{\mathsf{ES}}}
\newcommand{\ses}{\ensuremath{\mathsf{sES}}}
\newcommand{\pes}{\ensuremath{\mathsf{pES}}}
\newcommand{\epes}{\ensuremath{\mathsf{epES}}}

\newcommand{\ces}{\ensuremath{\mathsf{cES}}}

\newcommand{\WDom}{\ensuremath{\mathsf{wDom}}}
\newcommand{\PDom}{\ensuremath{\mathsf{pDom}}}

\newcommand{\esn}{\ensuremath{\mathsf{ES_{nb}}}}
\newcommand{\cesn}{\ensuremath{\mathsf{cES_{nb}}}}

\newcommand{\WDomb}{\ensuremath{\mathsf{wDom_b}}}
\newcommand{\Domb}{\ensuremath{\mathsf{Dom_b}}}
\newcommand{\slice}[2]{\ensuremath{({#1} \downarrow {#2})}}

\newcommand{\zev}[0]{\ensuremath{\mathcal{E}}}
\newcommand{\ev}[1]{\ensuremath{\zev({#1})}}

\newcommand{\zconnes}[0]{\ensuremath{\mathcal{C}}}

\newcommand{\zinces}[0]{\ensuremath{\mathcal{I}}}

\newcommand{\zsev}[0]{\ensuremath{\mathcal{E}_S}}
\newcommand{\sev}[1]{\ensuremath{\zsev({#1})}}

\newcommand{\zeveq}[0]{\ensuremath{\mathcal{E}_{eq}}}
\newcommand{\eveq}[1]{\ensuremath{\zeveq({#1})}}

\newcommand{\zfuse}[0]{\ensuremath{\mathcal{M}}}
\newcommand{\fuse}[1]{\ensuremath{\zfuse({#1})}}
\newcommand{\zunf}[0]{\ensuremath{\zunf}}
\newcommand{\unf}[1]{\ensuremath{\mathcal{U}({#1})}}

\newcommand{\zevwd}[0]{\ensuremath{\mathcal{E}_{wd}}}
\newcommand{\evwd}[1]{\ensuremath{\zevwd({#1})}}

\newcommand{\conf}[1]{\ensuremath{\mathit{Conf}({#1})}}
\newcommand{\conff}[1]{\ensuremath{\mathit{Conf_F}({#1})}}

\newcommand{\pmin}[1]{\ensuremath{U_{#1}}}

\newcommand{\conn}[1]{\ensuremath{\stackrel{#1}{\frown}}}

\newcommand{\zdom}[0]{\ensuremath{\mathcal{D}}}
\newcommand{\dom}[1]{\ensuremath{\zdom({#1})}}

\newcommand{\zdomeq}[0]{\ensuremath{\mathcal{D}_{eq}}}
\newcommand{\domeq}[1]{\ensuremath{\zdomeq({#1})}}

\newcommand{\poset}[1]{\ensuremath{\mathcal{P}({#1})}}

\newcommand{\pdom}[1]{\ensuremath{\mathcal{D}_S({#1})}}
\newcommand{\ppdom}[0]{\ensuremath{\mathcal{D}_S}}

\newcommand{\Pow}[1]{\ensuremath{\mathbf{2}^{#1}}}

\newcommand{\Powfin}[1]{\ensuremath{\mathbf{2}_\mathit{fin}^{#1}}}

\newcommand{\interval}[2][1]{\ensuremath{[{#1},{#2}]}}

\newcommand{\dint}[2]{\ensuremath{[{#1},{#2}]}}

\newcommand{\IntSet}[1]{\ensuremath{\mathop{\mathit{Int}({#1})}}}

\newcommand{\inir}{\ensuremath{\mathop{\mathit{\zeta}}}}
\newcommand{\irin}{\ensuremath{\mathop{\mathit{\iota}}}}

\newcommand{\perm}{\sigma}

\newcommand\twoheaddownarrow{\mathrel{\raisebox{0.8\depth}{\rotatebox{270}{$\twoheadrightarrow$}}}}
\newcommand{\scauses}[1]{\ensuremath{\twoheaddownarrow\!\!{#1}\,}}
\newcommand{\causes}[1]{\ensuremath{\,\downarrow\!\!{#1}}}

\newcommand{\sat}[1]{\ensuremath{\tilde{#1}}}

\newcommand{\tr}[1]{\ensuremath{\mathsf{Tr}({#1})}}
\newcommand{\trs}[1]{\ensuremath{\mathsf{Tr}_s({#1})}}
\newcommand{\graph}{\ensuremath{\mathsf{Graph}}}
\newcommand{\tgraph}[1]{\ensuremath{\mathsf{Graph}_{#1}}}
\newcommand{\can}[1]{\ensuremath{\mathsf{C}({#1})}}
\newcommand{\source}[1]{\ensuremath{\mathsf{s}({#1})}}
\newcommand{\target}[1]{\ensuremath{\mathsf{t}({#1})}}
\newcommand{\col}[1]{\ensuremath{\mathsf{col}({#1})}}

\newcommand{\ltrace}[1]{\ensuremath{\langle {#1}\rangle_c}}

\title{Domains and Event Structures for Fusions}
\author{\IEEEauthorblockN{Paolo Baldan}\IEEEauthorblockA{University of Padova}\IEEEauthorblockN{Andrea Corradini, Fabio Gadducci}\IEEEauthorblockA{University of Pisa}}

\begin{document}

\maketitle

\begin{abstract}
  Stable event structures, and their duality with prime algebraic domains
  arising as partial orders of configurations, are a landmark
  of concurrency theory, providing a clear characterisation of
  causality in computations.
  They have been used for defining a concurrent semantics of several
  formalisms, from Petri nets to (linear) graph rewriting systems,
  which in turn lay at the basis of many visual modelling frameworks.
  Stability however is restrictive when dealing with formalisms
  with ``fusion'', i.e., where a computational step can not only consume
  and produce but also merge parts of the state. This happens, e.g.,
  for graph rewriting systems with non-linear rules, which are needed
  to cover some relevant applications (such as the graphical encoding
  of calculi with name passing).
  Guided by the need of capturing the semantics of formalisms with
  fusion we leave aside stability and we characterise, as a natural
  generalisation of prime algebraic domains, a class of domains,
  referred to as weak prime domains.
  We then identify a corresponding class of event structures, that we call 
  connected event structures, via a duality result formalised as an equivalence
  of categories.
  We show that connected event structures are exactly the class of
  event structures that arise as the semantics of non-linear graph
  rewriting systems.
  Interestingly, the category of general unstable event structures
  coreflects into our category of {weak prime} domains, so that our
  result provides a characterisation of the partial orders of
  configurations of such event structures.

  \begin{IEEEkeywords}
    Event structures, fusions, graph rewriting, process calculi.
  \end{IEEEkeywords}
\end{abstract}

\section{Introduction}

For a long time stable/prime event structures and their duality with
prime algebraic domains have been considered one of the landmarks of
concurrency theory, providing a clear characterisation of causality in
software systems. They have been used to provide a concurrent
semantics to a wide range of foundational formalisms, from Petri
nets~\cite{NPW:PNES} to linear graph rewriting
systems~\cite{Handbook,Bal:PhD,Sch:RRSG} and process
calculi~\cite{Win:ESSCCS,VY:TESLP,BMM:ESSNC}. They are one of the
standard tools for the formal treatment of (true, i.e.,
non-interleaving) concurrency. See, e.g.,~\cite{Winskel11} for a
reasoned survey on the use of such causal models. Recently, they have
been used in the study of concurrency in weak memory
models~\cite{PS:CSRA,JR:OTRES,CV:GTARES} and for process mining and
differencing~\cite{DG:PMRES}.

In order to endow a chosen formalism with an event structure
semantics, a standard construction consists in viewing the class of
computations as a partial order. An element of the order is some sort
of configuration, i.e., an execution trace up to an equivalence that
identifies traces differing only for the order of independent steps
(e.g., interchange law~\cite{Mes92} in term rewriting, shift
equivalence~\cite{CMREHL:AAGT} in graph rewriting or permutation
equivalence~\cite{JJL80} in $\lambda$-calculus),
and the order relates two computations when
the latter is an extension of the former.
Events are then identified with configurations consisting
of a maximal computation step (e.g., a transition of a CCS process or
a firing for a Petri net) with all its causes.
As a simple example, consider the CCS process $a.c \mid b$. 
The corresponding transition system is depicted in Fig.~\ref{fi:ccs-ts}. We can identify the states of the computation with the sets of actions executed and obtain the partial order depicted in Fig.~\ref{fi:ccs-domain}.
The fact that each {computation step}  in a configuration has a uniquely determined
set of causes, a property that for event structures is called
\emph{stability}, allows one to characterise such elements, order
theoretically, as the prime elements: if they are included in a join
they must be included in one of the elements that are joined.
{For example}, in Fig.~\ref{fi:ccs-domain}, the events correspond to configurations $\{a\}$ (transition $a$ with
empty set of causes), $\{a,c\}$ (transition $c$ caused by $a$) and
$\{b\}$ (transition $b$ with empty set of causes).
Each element of the partial order of configurations can be
reconstructed uniquely as the join of the primes under it, so that the partial
order is prime algebraic.
This duality between event structures and domains of configurations 
can be nicely formalised in terms of an equivalence between the
categories of prime event structures and prime algebraic 
domains~\cite{NPW:PNES,Win:ES}.

\begin{figure}
  \subcaptionbox{
    \label{fi:ccs-ts}
  }{
    \begin{tikzpicture}[node distance=5mm, >=stealth',x=17mm,y=5mm]
      \node at (1,0) (s)  {$a.c \mid b$};
      \node at (0,1) (a)  {$c \mid b$};
      \node at (2,1) (c)  {$a.c \mid 0$};
      \node at (-1,2) (ab) {$0 \mid b$};
      \node at (1,2) (ac) {$c \mid 0$};
      \node at (0,3) (abc) {$0 \mid 0$};
      \draw [->] (s) -- (a);
      \draw [->] (s) -- (c);
      \draw [->] (a) -- (ab);
      \draw [->] (a) -- (ac);
      \draw [->] (c) -- (ac);
      \draw [->] (ab) -- (abc);
      \draw [->] (ac) -- (abc);
    \end{tikzpicture}
  }
  \subcaptionbox{
    \label{fi:ccs-domain}
  }{
    \begin{tikzpicture}[node distance=5mm, >=stealth',x=17mm,y=5mm]
      \node at (1,0) (s)  {$\emptyset$};
      \node at (0,1) (a)  {$\{a\}$};
      \node at (2,1) (c)  {$\{b\}$};
      \node at (-1,2) (ab) {$\{a,c\}$};
      \node at (1,2) (ac) {$\{a,b\}$};
      \node at (0,3) (abc) {$\{a,b,c\}$};
      \draw [->] (s) -- (a);
      \draw [->] (s) -- (c);
      \draw [->] (a) -- (ab);
      \draw [->] (a) -- (ac);
      \draw [->] (c) -- (ac);
      \draw [->] (ab) -- (abc);
      \draw [->] (ac) -- (abc);
    \end{tikzpicture}
  }
  \caption{The (a) transition system and (b) domain of configurations
    of the process $a.c \mid b$.}
  \label{fi:ccs}
\end{figure}

The set up described so far fails when moving to formalisms where a
computational step can merge parts of the state. This happens, e.g.,
in nominal calculi where, as a result of name passing,
the received name is identified with a local one at the
receiver~\cite{CVY:ESSPE,Gad07} or in the modelling of bonding in biological/chemical processes~\cite{PUY:MBPE}.
Whenever we think of the state of the system as some kind of graph
with the dynamics described by graph rewriting, this means that rewriting rules
are non-linear (more precisely, in the jargon of the double pushout
approach~\cite{Ehr:TIAA}, left-linear but possibly not right-linear).
In general terms, the point is that, in the presence of fusions, the
same event can be enabled by different minimal sets of events, thus
preventing the identification of a proper notion of causality.

\begin{figure}[h!]
  \begin{center}
  \subcaptionbox{The start graph $G_s$ and the rules $p_y$ ($y \in \{ a, b\}$) and $p_c$.
    \label{fi:running-gg}
  }{
    \begin{tikzpicture}[node distance=6mm, >=stealth',baseline=(current bounding box.center)]
      \node at (0,0) [node, label=below:$c$] (nc) {} 
      edge [in=160, out=130, loop]  node [lab,above] {$\bar{b}$} ()
      edge [in=105, out=75, loop]  node [lab,above] {$\bar{a}$} ()
      edge [in=50, out=20, loop]  node [lab,above] {$\mathit{in}$} ();
      \node at (1,0) [node, label=below:$\nu$] (nu) {} 
      edge [in=105, out=75, loop]  node [lab,above] {$\bar{\nu}$} ();
      \pgfBox
    \end{tikzpicture}
    \
    \hfill
    \
    \begin{tikzpicture}[node distance=3mm, baseline=(current bounding box.center)]
      \node (l) {
      \begin{tikzpicture}[node distance=8mm, >=stealth']
      \node at (0,0) [node, label=below:$c$] (nc) {} 
      edge [in=105, out=75, loop]  node [lab,above] {$\bar{y}$} ();
      \node at (.5,0) [node, label=below:$\nu$] (nu) {} 
      edge [in=105, out=75, loop]  node [lab,above] {$\bar{\nu}$} ();
      \pgfBox
      \end{tikzpicture} 
    };
    \node  [right=of l] (r) {
      \begin{tikzpicture}[node distance=8mm, >=stealth']
        \node at (0,0) [node, label=below:${c,\nu}$] (nc) {} 
        edge [in=105, out=75, loop]  node [lab,above] {$\bar{\nu}$} ();
        \pgfBox
      \end{tikzpicture}
    };
    \path (l) edge[->] node[trans, above] {$p_y$} (r);
    \end{tikzpicture}
    \
    \hfill
    \
    \begin{tikzpicture}[node distance=3mm, baseline=(current bounding box.center)]
      \node (l) {
      \begin{tikzpicture}[node distance=8mm, >=stealth']
        \node at (0,0) [node, label=below:${c,\nu}$] (nc) {} 
        edge [in=160, out=130, loop]  node [lab,above] {$\mathit{in}$} ()
        edge [in=105, out=75, loop]  node [lab,above] {$\bar{\nu}$} ();
        \pgfBox
      \end{tikzpicture} 
    };
    \node  [right=of l] (r) {
      \begin{tikzpicture}[node distance=8mm, >=stealth']
        \node at (0,0) [node, label=below:${c,\nu}$] (nc) {} 
        edge [in=105, out=75, loop]  node [lab,above] {$\bar{\nu}$} ();
        \pgfBox
      \end{tikzpicture}
    };
    \path (l) edge[->] node[trans, above] {$p_c$} (r);
  \end{tikzpicture}  
  }
  \end{center}
  \begin{center}
  \subcaptionbox{The possible rewrites.
    \label{fi:running-rewriting}
  }
  {
    \begin{tikzpicture}[node distance=5mm, >=stealth']
      \node (Gs) {$G_s$};
      \node [yshift=1.2cm, right=of Gs] (Ga) {
        \begin{tikzpicture}[node distance=8mm, >=stealth']
          \node  [node, label=below:${c,v}$] (nc) {} 
          edge [in=160, out=130, loop]  node [lab,above] {$\bar{a}$} ()
          edge [in=105, out=75, loop]  node [lab,above] {$\bar{\nu}$} ()
          edge [in=50, out=20, loop]  node [lab,above] {$\mathit{in}$} ();
          \pgfBox
        \end{tikzpicture}
      };
      \node [lab, xshift=-2mm, yshift=4mm] at (Ga.west) {$G_a$};
      \node [yshift=-1.2cm, right=of Gs] (Gb) {
        \begin{tikzpicture}[node distance=8mm, >=stealth']
          \node at (0,0) [node, label=below:${c,v}$] (nc) {} 
          edge [in=160, out=130, loop]  node [lab,above] {$\bar{b}$} ()
          edge [in=105, out=75, loop]  node [lab,above] {$\bar{\nu}$} ()
          edge [in=50, out=20, loop]  node [lab,above] {$\mathit{in}$} ();
          \pgfBox
        \end{tikzpicture}
      };
     \node [lab, xshift=-2mm, yshift=-4mm] at (Gb.west) {$G_b$};
      \node [yshift=-1.2cm, right=of Ga] (Gab) {
        \begin{tikzpicture}[node distance=8mm, >=stealth']
          \node at (0,0) [node, label=below:${c,\nu}$] (nc) {} 
          edge [in=105, out=75, loop]  node [lab,above] {$\bar{\nu}$} ()
          edge [in=50, out=20, loop]  node [lab,above] {$\mathit{in}$} ();
          \pgfBox
        \end{tikzpicture}
      };
      \node [lab, xshift=-2mm, yshift=0mm] at (Gab.west) {{$G_{ab}$}};
      \node [yshift=1.2cm, right=of Gab] (Gac) {
        \begin{tikzpicture}[node distance=8mm, >=stealth']
          \node at (0,0) [node, label=below:${c,\nu}$] (nc) {} 
          edge [in=160, out=130, loop]  node [lab,above] {$\bar{a}$} ()
          edge [in=105, out=75, loop]  node [lab,above] {$\bar{\nu}$} ();
          \pgfBox
        \end{tikzpicture}
      };
      \node [lab, xshift=-2mm, yshift=4mm] at (Gac.west) {$G_{ac}$};
      \node [yshift=-1.2cm, right=of Gab] (Gbc) {
        \begin{tikzpicture}[node distance=8mm, >=stealth']
          \node at (0,0) [node, label=below:${c,\nu}$] (nc) {} 
          edge [in=160, out=130, loop]  node [lab,above] {$\bar{b}$} ()
          edge [in=105, out=75, loop]  node [lab,above] {$\bar{\nu}$} ();
          \pgfBox
        \end{tikzpicture}
      };
      \node [lab, xshift=-2mm, yshift=-4mm] at (Gbc.west) {$G_{bc}$};
      \node [yshift=-1.2cm, right=of Gac] (Gc) {
        \begin{tikzpicture}[node distance=8mm, >=stealth']
          \node at (0,0) [node, label=below:${c,\nu}$] (nc) {} 
          edge [in=105, out=75, loop]  node [lab,above] {$\bar{\nu}$} ();
          \pgfBox
        \end{tikzpicture}
      };
      \node [lab, xshift=0mm, yshift=2mm] at (Gc.north) {$G_{c}$};
            
      \path (Gs) edge [->] node[trans,pos=0.3,below] {$p_a$} (Gb)
                 edge [->] node[trans,pos=0.3,above] {$p_b$} (Ga);
      \path (Ga) edge [->] node[trans,pos=0.6,above] {$p_a$} (Gab) 
                 edge [->] node[trans,above] {$p_c$} (Gac);
      \path (Gb) edge [->] node[trans,below] {$p_b$} (Gab) 
                 edge [->] node[trans,below] {$p_c$} (Gbc);
      \path (Gab) edge[->] node[trans,pos=0.15,above] {$p_c$} (Gc);
      \path (Gac) edge[->] node[trans,above] {$p_a$} (Gc);
      \path (Gbc) edge[->] node[trans,below] {$p_b$} (Gc);
    \end{tikzpicture}
  }
  \end{center}
  \begin{center}
  \subcaptionbox{The domain of configurations.
    \label{fi:running-configurations}
  }
  {  
    \begin{tikzpicture}[node distance=5mm, >=stealth',x=23mm,y=6mm]
      \node at (1,0) (Gs)  {$\emptyset$};
      \node at (0,1) (Ga)  {$\{a\}$};
      \node at (2,1) (Gb)  {$\{b\}$};
      \node at (1,2.1) (Gab) {$\{a,b\}$};
      \node at (0,2.7) (Gac)  {$\{a,c\}$};
      \node at (2,2.7) (Gbc)  {$\{b,c\}$};
      \node at (1,4) (Gc)  {$\{a,b,c\}$};
      \draw [->] (Gs) -- (Ga);
      \draw [->] (Gs) -- (Gb);
      \draw [->] (Ga) -- (Gab);
      \draw [->] (Gb) -- (Gab);
      \draw [->] (Ga) -- (Gac);
      \draw [->] (Gb) -- (Gbc);
      \draw [->] (Gac) -- (Gc);
      \draw [->] (Gbc) -- (Gc);
      \draw [->] (Gab) -- (Gc);
    \end{tikzpicture}
    }
  \end{center}
  \caption{A graph rewriting system with fusions.}
  \label{fi:running}
\end{figure}

\begin{figure}
{\begin{center}
{
    \begin{tikzpicture}[node distance=2mm, >=stealth']
       \node (Gs) {$(\nu c)(\bar{a}(c) \mid \bar{b}(c) \mid
c())$};
        \node [yshift=1.2cm, right=-9mm of Gs] (Ga) {$\bar{a}(c) \mid
c()$};
        \node [yshift=-1.2cm, right=-9mmof Gs] (Gb) {$\bar{b}(c) \mid
c()$};
 	\node [yshift=-1.2cm, right=of Ga] (Gab) {$c()$}; 
        \node [yshift=1.2cm, right=of Gab] (Gac) {$\bar{a}(c)$};      %
      \node [yshift=-1.2cm, right=of Gab] (Gbc) {$\bar{b}(c)$};      %
     \node [yshift=-1.2cm, right=of Gac] (Gc) {$0$};    
      \path (Gs) edge [->] node[trans,pos=0.1,below] {$\bar{a}(c)$} (Gb)
                 edge [->] node[trans,pos=0.1,above] {$\bar{b}(c)$} (Ga);
      \path (Ga) edge [->] node[trans,pos=0.7,xshift=0.2cm, above] {$\bar{a}(c)$} (Gab) 
                 edge [->] node[trans,above] {$c()$} (Gac);
      \path (Gb) edge [->] node[trans,pos=0.7, xshift=0.2cm, below] {$\bar{b}(c)$} (Gab) 
                 edge [->] node[trans,below] {$c()$} (Gbc);
      \path (Gab) edge[->] node[trans,pos=0.5,above] {$c()$} (Gc);
      \path (Gac) edge[->] node[trans,pos=0.7, xshift=0.2cm, above] {$\bar{a}(c)$} (Gc);
      \path (Gbc) edge[->] node[trans,pos=0.7, xshift=0.2cm, below] {$\bar{b}(c)$} (Gc);
    \end{tikzpicture}
  }
  \end{center}
  }
\caption{The possible transitions of the $\pi$-calculus process $(\nu c)(\bar{a}(c) \mid \bar{b}(c) \mid c())$.}
  \label{fi:pi}
\end{figure}

As an example, consider the graph rewriting system in
Fig.~\ref{fi:running}.
The start graph $G_s$ and the rewriting rules $p_a$, $p_b$, and $p_c$
are reported in Fig.~\ref{fi:running-gg}.  Observe that rules $p_y$,
where $y$ can be either $a$ or $b$, delete edge $\bar{y}$ and merge
nodes $c$ and $\nu$. The possible rewrites are depicted in
Fig.~\ref{fi:running-rewriting}. For instance, applying $p_a$ to $G_s$
we get the graph $G_b$. Now, $p_b$ can still be applied to $G_b$
matching its left-hand side non-injectively, thus getting graph
$G_{ab}$. Similarly, we can apply first $p_b$ and then $p_a$,
obtaining again $G_{ab}$. Observe that at least one between $p_a$ 
and $p_b$ must be applied to enable $p_c$, since the latter rule requires 
nodes $c$ and $\nu$ to be merged.
{Note also} that in a situation where all the three rules $p_a$, $p_b$, and
$p_c$ are applied, since $p_a$ and $p_b$ are independent,
it is not possible to define a proper notion of causality. We only
know that at least one between $p_a$ and $p_b$ must be applied  
before $p_c$.
The corresponding domain of configurations, reported in
Fig.~\ref{fi:running-configurations}, is naturally derived from the
possible rewrites in Fig.~\ref{fi:running-rewriting}. 

The graph rewriting system of Fig.~\ref{fi:running-gg} is a (simplified) representation of the
$\pi$-calculus process $(\nu c)(\bar{a}(c) \mid \bar{b}(c) \mid
c())$. Rules $p_y$, for $y \in \{a,b\}$, represent the execution of
$\bar{y}(c)$ that outputs on channel $y$ the restricted name $c$. The
first rule that is executed extrudes name $c$, while the second is just a standard
output. The name $c$ is available outside the scope only after the extrusion, and 
after that the input prefix $c()$ can be consumed. {Figure~\ref{fi:pi} shows the possible transitions of the process, which correspond one-to-one to the possible rewrites of  Fig.~\ref{fi:running-rewriting}.}

The impossibility of modelling these situations with stable
event structures 
is well-known (see,
e.g.,~\cite{Win:ES} for a general discussion,~\cite{Handbook} for
graph rewriting systems or~\cite{CVY:ESSPE} for the $\pi$-calculus). One has
to drop the stability requirement and replace causality
by an enabling relation $\vdash$. More precisely, in the specific case 
we would have
$\emptyset \vdash a$, $\emptyset \vdash b$, $\{ a \} \vdash c$,
$\{ b \} \vdash c$.

\medskip

The questions that we try to answer is: what can be retained of the
duality between events structures and domains, when 
dealing with formalisms with fusions? Which are the properties of 
the domain of computations that arise in this setting? What are the 
event structure counterparts?

The domain of configurations of the example suggests that in this
context an event is still a computation that cannot be decomposed
as the join of other computations. Hence, in order theoretical terms,
it is an irreducible.
However, %
due to unstability, irreducibles are not {necessarily} primes: two different
irreducibles can {represent the same computation step with different minimal enablings}, 
in
a way that an irreducible can be included in a computation that is the
join of two computations without being included in any of the two. For
instance, in the example above, $\{ a, c \}$ is an irreducible,
corresponding to the execution of $c$ enabled by $a$, and it is
included in $\{ a \} \sqcup \{ b, c \} = \{ a, b , c \}$,
although neither
$\{ a, c \} \subseteq \{ a \}$ nor
$\{ a, c \} \subseteq \{ b , c \}$.
Uniqueness of decomposition of an element in terms of (downward closed sets of) irreducibles
also fails, e.g.,
$\{a,b,c\} = \{a\} \sqcup \{b\} \sqcup \{a,c\} = \{a\} \sqcup \{b\} \sqcup
\{b,c\}$: the irreducibles $\{a,c\}$ and $\{b,c\}$ can be used
interchangeably in the decomposition of $\{a,b,c\}$.

Building on the previous observation, we introduce an equivalence on irreducibles
identifying those that can be used interchangeably in the
decompositions of an element (intuitively, different minimal enablings
of the same {computation step}). This is used to define a weaker notion of primality
(up to interchangeability) that allows us to characterise the class of domains
suited for modelling the semantics of formalisms with fusions {as}
the class of weak prime algebraic domains.

Given a weak prime algebraic domain, a corresponding event structure
can be obtained by taking as events the set of irreducibles,
quotiented under the (transitive closure of the) interchangeability
relation. The resulting class of event structures is a (mild)
restriction of the general event structures in~\cite{Win:ES}
that we call connected event structures. Categorically, we get an
equivalence between the category of weak prime algebraic domains and
the one of connected event structures, generalising the equivalence
between prime algebraic domains and prime event structures.

We also show that, in the same way as prime algebraic domains/prime
event structures are exactly what is needed for Petri nets/linear graph 
rewriting systems, weak prime
algebraic domains/connected event structures are exactly what is needed
for non-linear graph rewriting systems: each rewriting system maps to
a connected event structure and conversely each connected event structure
arises as the semantics of some rewriting system. This supports the
adequateness of weak prime algebraic domains and connected event
structures as semantics structures for formalisms with fusions.

Interestingly, we can also show that the category of general
event structures~\cite{Win:ES} coreflects into our category of weak
prime algebraic domains. Therefore our notion of weak prime algebraic
domain can be seen as a novel characterisation of the partial order of
configurations of such event structures that is alternative to those
based on intervals in~\cite{Winskel:phd,Dro:ESD}.  It represents a
natural generalisation of the one for prime event structures, with
irreducibles (instead of primes) having a tight connection with
events. The correspondence is established, at a categorical level, as a
coreflection of categories:
to the best of our knowledge, this {has} not been done before in the literature.

As mentioned above, weak prime domains, corresponding to possibly
unstable event structures, satisfy the same conditions as prime
domains, corresponding to stable event structures, up to an
equivalence on irreducibles. This suggests the possibility of viewing
unstable event structures as stable ones up to an equivalence on
events.
We show how this can be formalised  with a set up closely
related to the framework of prime event structures with equivalence 
recently devised in~\cite{win2017,VismeW19}.

Event structures and their domains have been also studied in relation
with automata with concurrency~\cite{Dro:CAD,DK:ACRS}, 
a form of automata endowed with a concurrency relation on transitions 
(local to each state).
On a similar line, the transition graphs of prime event structures
have been given a characterisation in terms of local axioms
in~\cite{PU:RMC}, answering a question posed in~\cite{SNW:MFCTC}.
Recently, in connection with the abstract theory of rewriting and
concurrent games, a slightly different but equivalent characterisation
has been rediscovered in~\cite{Mel:hab}, where prime event structures
are shown to correspond exactly to a suitable class of
asynchronous graphs.
Roughly, an asynchronous graph is a transition system where some
squares are declared to commute, meaning that the coinitial edges of
the square are concurrent and each one can follow the
other. Asynchronous graphs correspond to prime event structures that
satisfy the cube axiom, consisting of two parts: the forward
and the backward cube axioms, the latter often referred to as the
stability axiom. We show that asynchronous graphs that verify only the
forward part of the cube axiom are exactly the transition systems of
weak prime domains.

The rest of the paper is structured as follows. In
Section~\ref{se:background} we recall the basics of (prime) event
structures and their correspondence with prime algebraic domains.  In
Section~\ref{se:fes} we introduce weak prime algebraic domains and
connected event structures, and we characterise their relation
categorically.  
In Section~\ref{se:characterisations}
we present a characterisation of our proposal 
in terms of a formalism reminiscent of prime event structures with equivalence of~\cite{win2017,VismeW19}.
We also discuss and formalise the relation of our work with alternative  characterisations of the domains of event structures  based on intervals and on asynchronous graphs. 
In Section~\ref{se:graphs} we show the intimate connection between
weak prime algebraic domains (or equivalently, connected event
structures) and non-linear graph rewriting systems.
Finally, in Section~\ref{se:conc} we wrap up the main contributions of
the paper and we sketch further advances and some connections with
related works.

The paper is rounded up with an appendix extending our characterisation results
to event structures with non-binary conflict~\cite{Dro:ESD}. We also discuss the relation with a  proposal based on labelled event structures for
modelling the concurrent computations of name passing process
calculi~\cite{CVY:ESSPE}.

This is a revised and extended  version of the conference paper~\cite{BCG:DESF}.

\section{Background: Domains and Event Structures}
\label{se:background}

In this section we review the basics of event structures, as introduced
in~\cite{Win:ES}, and their duality with partial orders.

\subsection{Event Structures}
\label{ss:es}

For the sake of presentation, 
we focus on event structures with binary conflict.
Most results can be easily rephrased for event structures with
non-binary conflict expressed by means of a consistency predicate
(This is explicitly discussed in Appendix~\ref{app:consistency}).
Given a set $X$ we denote by
$\Pow{X}$ and $\Powfin{X}$ the powerset and the set of finite subsets
of $X$, respectively. For $m,n \in \mathbb{N}$, we denote by
$\interval[m]{n}$ the set $\{ m, m+1, \ldots, n\}$.

\begin{definition}[event structure]
  \label{de:es}
  An \emph{event structure} ({\esabbr} for short) is a tuple
  $\langle E, \vdash, \# \rangle$ such that
  \begin{itemize}
  \item $E$ is a set of events;
  \item ${\vdash} \subseteq \Powfin{E} \times E$ is the \emph{enabling}
    relation, satisfying $X \vdash e$ and $X \subseteq Y$
    implies $Y \vdash e$;
  \item $\# \subseteq E \times E$ is the conflict relation.
  \end{itemize}
  A subset $X \subseteq E$ is \emph{consistent} if $\neg (e \# e')$
  for all $e, e' \in X$.
\end{definition}

An {\esabbr} $\langle E, \vdash, \# \rangle$ is often
denoted simply by $E$.  Computations are captured by the notion of
configuration.

\begin{definition}[configuration, live {event structure}]
  Let $\langle E, \vdash, \# \rangle$ be an {\esabbr}. A
  \emph{configuration} of $E$ is a consistent subset $C \subseteq E$
  that is \emph{secured}, i.e., such that for all $e \in C$ there are
  $e_1, \ldots, e_n \in C$ with $e_n =e$ such that
  $\{ e_1, \ldots, e_{k-1} \} \vdash e_k$ for all $k \in \interval{n}$
  (in particular, $\emptyset \vdash e_1$). The set of configurations
  of an {\esabbr} $E$ is denoted by $\conf{E}$ and the subset of
  \emph{finite} configurations by $\conff{E}$.
  An {\esabbr} is \emph{live} if conflict is 
 \emph{saturated}, i.e., for all
  $e, e' \in E$, if there is no $C \in \conf{E}$ such that
  $\{e,e'\} \subseteq C$ then $e \# e'$, and moreover
  for all $e \in E$ it holds $\neg (e \# e)$.
\end{definition}

\begin{remark}
  In live {\esabbr}, the fact that conflict is saturated corresponds
  to inheritance of conflict in prime event structures. Moreover, the
  absence of self-conflicts implies that each event appears in some
  configuration (intuitively, it is executable). In the rest of the
  paper, we restrict to live {\esabbr}, hence the qualification
  ``live'' is omitted.
\end{remark}

In this setting, two events are \emph{concurrent} when they are consistent
and enabled by the same configuration.

Since the enabling predicate is over finite sets of events, we can
consider minimal sets of events enabling a given one.

\begin{definition}[minimal enabling]
  \label{de:minimimal-enabling}
  Let $\langle E, \vdash, \# \rangle$ be an {\esabbr}. Given a
  configuration $C \in \conf{{E}}$ and an event $e \in E$ we write
  $C \vdash_0 e$ and call it a \emph{minimal enabling} for $e$, when
  $C \cup \{ e \} \in \conf{E}$ (hence $C \cup \{ e \}$ consistent and
  $C \vdash e$), and for any other configuration $C' \subseteq C$, if
  $C' \vdash e$ then $C' = C$.
\end{definition}

The classes of stable and prime {\esabbr} represent our starting point and 
play an important role in the paper.

\begin{definition}[stable and prime {event structures}]
  \label{de:stable-prime-es}
  An {\esabbr} $\langle E, \vdash, \# \rangle$ is \emph{stable}
  if $X\vdash e$, $Y \vdash e$, and $X \cup Y \cup \{e\}$ consistent
  imply $X \cap Y \vdash e$.  It is \emph{prime} if $X\vdash e$ and
  $Y \vdash e$ imply $X \cap Y \vdash e$.
\end{definition}

For stable {\esabbr}, given a configuration $C$ and an event $e \in C$,
there is a unique minimal configuration $C' \subseteq C$ such that
$C' \vdash_0 e$.  The set $C'$ can be seen as the set of causes of
the event $e$ in the configuration $C$. This gives a well-defined notion of
causality that is local to each configuration. In a prime {\esabbr}, for any event $e$
there is a unique minimal enabling $C \vdash_0 e$, thus providing a
global notion of causality.
In general, in possibly unstable {\esabbr}, due to the presence of consistent
\emph{or-enablings}, there might be distinct minimal enablings in the
same configuration.

\begin{example}
  \label{ex:event-structure}
  A simple example of unstable {\esabbr} is the one associated with
  the running example discussed in the introduction (see
  Fig.~\ref{fi:running}). The set of events is $\{ a, b, c \}$, the
  conflict relation $\#$ is the empty one and the minimal enablings
  are $\emptyset \vdash_0 a$, $\emptyset \vdash_0 b$,
  $\{ a\} \vdash_0 c$, and $\{ b\} \vdash_0 c$. Thus, event $c$ has
  two minimal enablings and these are consistent, hence
  $ \{ a, b \} \vdash c$.  The corresponding configurations are
  reported in Fig.~\ref{fi:running-configurations}.
\end{example}

The class of {\esabbr} can be turned into a category.

\begin{definition}[category of {event structures}]
  \label{de:es-morphism}
  A morphism of {\esabbr} $f : {E}_1 \to {E}_2$
  is a partial function $f : E_1 \to E_2$ such that for all
  $C_1 \in \conf{E_1}$ and $e_1, e_1' \in E_1$ with $f(e_1)$, $f(e_1')$ defined
  \begin{itemize}
  \item if $f(e_1) \# f(e_1')$ then $e_1 \# e_1'$;
  \item if $f(e_1) = f(e_1')$ then  $e_1 = e_1'$ or $e_1 \# e_1'$;
  \item if $C_1 \vdash_1 e_1$ then $f(C_1) \vdash_2 f(e_1)$.
  \end{itemize}
  We denote by $\es$ the category of {\esabbr} and their
  morphisms and by $\ses$ and $\pes$, respectively, the full subcategories of stable and prime {\esabbr}.
\end{definition}

\subsection{Domains}
\label{ss:domains}

A preordered or partially ordered set $\langle D, \sqsubseteq \rangle$
is often denoted simply as $D$, omitting the (pre)order
relation.  We denote by
$\preceq$ the immediate predecessor relation, i.e., for $x, y \in D$, 
we write $x \preceq y$ whenever
$x \sqsubseteq y$ and for all $z \in D$ if
$x \sqsubseteq z \sqsubseteq y$ then $z \in \{x,y\}$.
A subset $X \subseteq D$ is \emph{consistent}
if it has an upper bound $d \in D$  (i.e., $x \sqsubseteq d$ for all $x \in X$),
while it is \emph{pairwise consistent} if
every two elements subset 
of $X$ is consistent.
A subset $X \subseteq D$ is \emph{directed} if $X \neq \emptyset$ and
every pair of elements in $X$ has an upper bound in $X$. We say that
$D$ is \emph{complete} if every directed subset has a least upper
bound in $D$.

A subset $X \subseteq D$ is an \emph{ideal} if it is directed and
downward closed.
Given an element $x \in D$, we write $\principal{x}$ to
denote the \emph{principal ideal} $\{ y \in D \mid y \sqsubseteq x \}$ generated by $x$.
Given a partial order $D$, its \emph{ideal completion}, denoted by $\ideal{D}$, is the
set of ideals of $D$,  ordered by subset inclusion.
The \emph{least upper bound} and the \emph{greatest lower bound} of a
subset $X \subseteq D$ (if they exist) are denoted by $\bigsqcup X$
and $\bigsqcap X$, respectively.

\begin{definition}[domains]
  \label{c-de:domain}
  A partial order $D$ is \emph{coherent} if for all
  pairwise  consistent $X \subseteq D$ the least
  upper bound $\bigsqcup X$ exists. 
  An element $d \in D$ is \emph{compact} if for all directed
  $X \subseteq D$, $d \sqsubseteq \bigsqcup X$ implies
  $d \sqsubseteq x$ for some $x \in X$. The set of compact
  elements of $D$ is denoted by $\compact{D}$.
  A coherent partial order $D$ is \emph{algebraic}
  if for every $x \in D$ we have
  $x = \bigsqcup (\principal{x} \cap \compact{D})$.  We say that $D$
  is \emph{finitary} if for every element $a \in \compact{D}$ the set
  $\principal{a}$ is finite.
  We refer to 
  algebraic finitary coherent partially ordered sets as
  \emph{domains}.
\end{definition}

Note that every domain has a bottom element (indeed $\bot = \bigsqcup \emptyset$), and that
in a domain all non-empty subsets have a meet. In fact, if
$\emptyset \neq X \subseteq D$, then $\bigsqcap X = \bigsqcup L(X)$
where $L(X) = \{ y \mid \forall x \in X.\, y \sqsubseteq x\}$ is the
set of lowerbounds of $X$, which is pairwise consistent since it is
dominated by any $x \in X$. And it is easy to see that finite joins of 
compact elements are compact.

For a domain $D$ we can think of its elements as ``pieces of
information'' expressing the states of evolution of a process. Compact
elements represent states that are reached after a finite number of
steps.  Thus algebraicity essentially says that infinite
computations can be approximated with arbitrary precision by finite
ones. More formally, when $D$ is algebraic, it is determined by
$\compact{D}$, i.e., $D \simeq \ideal{\compact{D}}$.

For an {\esabbr}, the configurations ordered by subset inclusion form
a domain. When the {\esabbr} is stable, if a minimal
enabling is included in the join of different configurations, then it is necessarily included in one of the configurations. In
order-theoretic terms, minimal enablings are prime elements, and thus
they represent the building blocks of computations.

\begin{definition}[primes and prime algebraicity] Let $D$ be a domain.
  A \emph{complete prime} is an element $p \in D$
  such that for any pairwise consistent $X \subseteq D$, if
  $p \sqsubseteq \bigsqcup X$ then $p \sqsubseteq x$ for some
  $x \in X$.
  The set of complete prime elements of $D$ is denoted by $\pr{D}$. 
  The domain $D$ is \emph{prime algebraic} (or simply \emph{prime}) 
  if for all $x \in D$ we have
  $x = \bigsqcup (\principal{x} \cap \pr{D})$.
\end{definition}

Prime domains can be also characterised as coherent finitary distributive complete partial orders~\cite{Win:ES}.
Note that complete primes are compact (since each directed set is pairwise
consistent). Since we will only use complete primes, the qualification ``complete'' will be omitted.

Prime domains are the domain
theoretical counterpart of stable and prime {\esabbr}.
For a stable {\esabbr} $\langle E, \#, \vdash \rangle$, the
partial order $\langle \conf{{E}}, \subseteq \rangle$ is
a prime  domain, denoted $\pdom{{E}}$. Conversely,
given a prime  domain $D$, the triple
$\langle \pr{D}, \#, \vdash \rangle$, where $p \# p'$ if $\{ p, p' \}$
is not consistent and $X \vdash p$ when
$(\principal{p} \cap \pr{D}) \setminus \{ p \} \subseteq X$,
is a prime {\esabbr}, denoted $\sev{D}$.

This correspondence can be elegantly formulated at the categorical
level~\cite{Win:ES}.
We recall the notion of domain morphism.

\begin{definition}[category of prime domains]
  \label{de:pdomain-category}
  Let $D_1$, $D_2$ be prime domains. A morphism
  $f : D_1 \to D_2$ is a total function such that
  for all consistent $X_1 \subseteq D_1$ and
  $d_1, d_1' \in D_1$
  \begin{enumerate}
  \item if $d_1 \preceq d_1'$ then $f(d_1) \preceq f(d_1')$;
  \item $f(\bigsqcup X_1) = \bigsqcup f(X_1)$;
  \item if $X_1 \neq \emptyset$ then
    $f(\bigsqcap X_1) = \bigsqcap f(X_1)$;
  \end{enumerate}
  We denote by $\PDom$ the category of prime domains and
  their morphisms.
\end{definition}

The correspondence is then captured by the result below.

\begin{theorem}[duality]
  \label{th:duality}
  There are functors $\ppdom : \ses \to \PDom$ and
  $\zsev : \PDom \to \ses$ establishing a coreflection. It restricts
  to an equivalence of categories between $\PDom$ and $\pes$.
\end{theorem}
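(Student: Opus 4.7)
The plan is to extend $\ppdom$ and $\zsev$ to functors, exhibit the unit and counit of an adjunction $\zsev \dashv \ppdom$, verify that the unit is a natural isomorphism (yielding the coreflection), and finally show that the counit becomes an isomorphism on the subcategory $\pes$, thereby restricting the coreflection to an equivalence.

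For the morphism assignments: given $f: E_1 \to E_2$ in $\ses$, set $\pdom{f}(C) := f(C)$ and verify the three conditions in the definition of prime domain morphisms. Preservation of immediate predecessors follows because $f$ sends a one-event extension to at most a one-event extension; preservation of joins of consistent sets holds because set-theoretic images distribute over unions; and preservation of non-empty meets is where stability of $E_1$ is crucial, since the minimal history of an event present in $C_1 \sqcap C_2$ must be contained in both $C_1$ and $C_2$. Given $g: D_1 \to D_2$ in $\PDom$, define the partial function $\sev{g}$ on a prime $p \in \pr{D_1}$ with unique immediate predecessor $p^-$ as the unique prime $q \in \pr{D_2}$ with $q \sqsubseteq g(p)$ and $q \not\sqsubseteq g(p^-)$ whenever such $q$ exists, and undefined otherwise; uniqueness uses prime algebraicity of $D_2$ together with the preservation of immediate predecessors by $g$. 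One then checks that $\sev{g}$ reflects conflict and preserves minimal enablings and that both assignments are functorial.

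Next, define the unit $\eta_D: D \to \pdom{\sev{D}}$ by $\eta_D(d) := \principal{d} \cap \pr{D}$, which is indeed a configuration of $\sev{D}$, and the counit $\epsilon_E: \sev{\pdom{E}} \to E$ by sending a prime $p$ of $\pdom{E}$ (which, since $E$ is stable, has the form $\causes{e} \cup \{e\}$ for a unique top event $e$ with $\causes{e} \vdash_0 e$) to that top event. Verify naturality and the two triangle identities. By prime algebraicity of $D$, the map $\eta_D$ is an isomorphism with inverse $C \mapsto \bigsqcup C$, so $\zsev$ is fully faithful and the adjunction $\zsev \dashv \ppdom$ is a coreflection of $\ses$ onto $\PDom$.

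To restrict to an equivalence $\PDom \simeq \pes$, observe that $\sev{D}$ is always a prime {\esabbr}, since each event $p$ admits the unique minimal enabling given by the primes strictly below $p$; hence $\zsev$ factors through the inclusion $\pes \hookrightarrow \ses$. Conversely, when $E$ is prime, each event $e$ has a unique minimal enabling $\causes{e}$, so $e \mapsto \causes{e} \cup \{e\}$ is a bijection between events of $E$ and primes of $\pdom{E}$, and a routine check that this bijection preserves and reflects conflict and enablings shows that $\epsilon_E$ is then an iso. The main technical obstacle is the definition of $\sev{g}$ on morphisms: establishing existence and uniqueness of the chosen prime $q$, and verifying the {\esabbr} morphism axioms, requires a delicate simultaneous use of prime algebraicity of $D_2$, of the preservation of immediate predecessors by $g$, and of the preservation of consistent joins and non-empty meets. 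Once this is in place, functoriality, naturality, and the triangle identities reduce to bookkeeping at the level of primes and configurations.
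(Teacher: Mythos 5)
The paper does not prove this theorem itself — it recalls it as a classical result of Winskel (cf.\ \cite{NPW:PNES,Win:ES}) — but your proposal correctly reconstructs the standard argument, and it is exactly the template the paper later follows for its weak-prime generalisation (Theorem~\ref{th:es-dom-equivalence}): unit $d \mapsto \principal{d} \cap \pr{D}$ an iso by prime algebraicity, counit sending a prime $C \cup \{e\}$ of $\pdom{E}$ to its top event $e$, with stability guaranteeing that distinct consistent primes cannot share a top event and primality making the counit an iso. The only cosmetic point is that for a merely stable $E$ the notation $\causes{e}$ is not globally well defined (an event may have several pairwise-inconsistent minimal enablings), though the map ``prime $\mapsto$ top event'' you actually use is.
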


\section{Weak Prime Domains and Connected Event Structures}
\label{se:fes}

In this section we show that, relaxing the stability assumption, we
can generalise the duality result described in the previous
section, linking suitably defined classes of domains and 
{\esabbr}. These can be proven to properly capture the
semantics of computational formalisms with fusions.

\subsection{Weak Prime Algebraic Domains}
\label{ss:weakdomains}

We show that domains arising in absence of stability can be
characterised by resorting to a weakened notion of prime element. 
We start recalling the notion of irreducible element.

\begin{definition}[irreducibles]
  Let $D$ be a domain. A \emph{complete irreducible} of $D$ is an
  element $x \in D$ such that, for any pairwise consistent
  $X \subseteq D$, if $x = \bigsqcup X$ then $x \in X$.  The set of
  complete irreducibles of $D$ is denoted by $\ir{D}$ and, for
  $d \in D$, we define $\ir{d} = \principal{d} \cap \ir{D}$.
\end{definition}

Observe that complete irreducibles in a domain are compact. In fact,
if $i$ is a complete irreducible, by algebraicity,
$i = \bigsqcup \principal{i} \cap \compact{D}$ whence
$i \in \principal{i} \cap \compact{D}$. Conversely, we have the following.

\begin{lemma}[irreducibility and compactness]
  \label{le:comp-irr}
  Let $D$ be a domain. If $d \in \compact{D}$ then $d$ is a complete
  irreducible iff for all $x, y \in D$, consistent,
  $d = x \sqcup y$ implies $d = x$ or $d=y$.
\end{lemma}

\begin{proof}
  Let $d \in \compact{D}$. Assume that for all $x, y \in D$, consistent,
  $d = x \sqcup y$ implies $d = x$ or $d=y$.
  Assume that $d = \bigsqcup X$ for some pairwise consistent $X$. It
  is easy to see that $X' = \{ \bigsqcup Y \mid Y \in \Powfin{X} \}$
  is directed and {moreover} $\bigsqcup X' = \bigsqcup X = d$. Since $d$
  is compact, there is $x' \in X'$ such that $d \sqsubseteq x'$, hence
  $d = x'$. By definition of $X'$, this means that there exists
  $Y \in \Powfin{X}$ such that $d = \bigsqcup Y$. {Now, using the hypothesis,
  an inductive reasoning} allows us to conclude that
  $d \in Y \subseteq X$, as desired.

  The converse implication is trivial: if $d \in \ir{D}$ and
  $d \sqsubseteq x \sqcup y$ then, by definition of complete
  irreducible, $d \in \{x,y\}$, i.e., either $d=x$ or $d=y$, as
  desired.
\end{proof}

Since in this paper we will refer only to complete irreducibles, the qualification ``complete'' will be omitted.

Irreducibles in domains have a simple characterisation.

\begin{lemma}[unique predecessor for irreducibles]
  \label{le:unique-pred}
  Let $D$ be a domain and $i \in D$. Then $i \in \ir{D}$ 
  iff it has a unique immediate predecessor.
\end{lemma}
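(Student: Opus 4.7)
My plan is to prove both directions by exploiting the fact that, in a (finitary, coherent, algebraic) domain, any element is the least upper bound of the compact elements below it, and any pairwise consistent family of such compacts admits a join.

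For the forward direction, assume $i \in \ir{D}$. First I would observe that $i \neq \bot$: taking $X = \emptyset$ (pairwise consistent and contained in $\compact{D}$) with $\bigsqcup X = \bot$ would force $\bot \in \emptyset$, a contradiction; so $i \neq \bot$. Next, consider $X = (\principal{i} \cap \compact{D}) \setminus \{i\}$. This set is pairwise consistent, since every pair is bounded by $i$, so by coherence its join $p = \bigsqcup X$ exists. By algebraicity $i = \bigsqcup (\principal{i} \cap \compact{D})$; if $p$ were equal to $i$, then irreducibility (applied to $X$) would give $i \in X$, contradicting the definition of $X$. Hence $p \sqsubset i$. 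To see that $p$ is the \emph{unique} immediate predecessor, take any $y \in D$ with $y \sqsubset i$: by algebraicity $y = \bigsqcup (\principal{y} \cap \compact{D})$, and every compact $c \sqsubseteq y$ satisfies $c \sqsubset i$, hence $c \in X$ and $c \sqsubseteq p$. Therefore $y \sqsubseteq p$, which shows $p$ is the greatest element strictly below $i$, hence the unique immediate predecessor $\pred{i}$.

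For the backward direction, assume $i$ has a unique immediate predecessor $p$. I would first argue that $i$ is compact. By algebraicity, $i = \bigsqcup (\principal{i} \cap \compact{D})$; the hypothesis that every element strictly below $i$ lies under $p$ means that if $i \notin \compact{D}$ then $\principal{i} \cap \compact{D} \subseteq \principal{p}$, which would give $i \sqsubseteq p \sqsubset i$, a contradiction. So $i \in \compact{D}$. Now suppose $i = \bigsqcup X$ for some pairwise consistent $X \subseteq \compact{D}$. Every $x \in X$ satisfies $x \sqsubseteq i$; if none of them equals $i$, then each $x \sqsubset i$ and hence $x \sqsubseteq p$ by uniqueness of $\pred{i}$. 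This would yield $i = \bigsqcup X \sqsubseteq p \sqsubset i$, a contradiction. Thus $i \in X$, proving $i \in \ir{D}$.

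The only subtle point in the argument is the use of algebraicity together with coherence to ensure that $\bigsqcup((\principal{i}\cap\compact{D})\setminus\{i\})$ really exists and dominates every element strictly below $i$; once that is in place, both implications fall out by short contradiction arguments. I expect the main care to lie in correctly handling the compact versus general-element distinction (especially for the backward direction, where compactness of $i$ is not assumed but must be derived before applying the irreducibility condition).
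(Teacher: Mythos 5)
Your proof is correct in substance, and the two directions compare with the paper's as follows. For ``unique immediate predecessor $\Rightarrow$ irreducible'' you argue exactly as the paper does: if $i=\bigsqcup X$ with $i\notin X$, every $x\in X$ is strictly below $i$, hence below $\pred{i}$, forcing $i\sqsubseteq\pred{i}\sqsubset i$. For the converse the paper takes a different, more local route: given two immediate predecessors $d_1,d_2\prec i$ it forms $d=d_1\sqcup d_2$, uses irreducibility to rule out $d=i$, and squeezes $d_1\sqsubseteq d\sqsubseteq i$ to conclude $d_1=d=d_2$. You instead exhibit the predecessor explicitly as $p=\bigsqcup\bigl((\principal{i}\cap\compact{D})\setminus\{i\}\bigr)$ and show it is the greatest element strictly below $i$. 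This buys you slightly more: the paper's argument only shows there is \emph{at most} one immediate predecessor (existence being left to finitariness), whereas yours delivers existence, uniqueness and the identification of $\pred{i}$ as the maximum of $\principal{i}\setminus\{i\}$ in one stroke --- a fact the paper relies on tacitly elsewhere (e.g.\ when writing $\pred{i}=\bigsqcup\ir{\pred{i}}$).

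One step deserves a caveat, although it is shared with the paper's own proof. In the backward direction both you and the paper pass from $x\sqsubset i$ to $x\sqsubseteq\pred{i}$ using only uniqueness of the immediate predecessor; this inference really needs $i$ to be compact, so that $\principal{i}$ is finite and every proper lower bound sits under some immediate predecessor. Your attempt to derive compactness of $i$ from the hypothesis reads that hypothesis as ``every element strictly below $i$ lies under $p$'', which is precisely the stronger statement requiring justification: a non-compact element of a domain can have a unique immediate predecessor that does not dominate all of its proper lower bounds, so as written that part of your argument is circular rather than a proof. Since the definition of $\ir{D}$ already presupposes $i\in\compact{D}$ and the lemma is only ever invoked for compact elements, restricting the statement accordingly (as the paper implicitly does) removes the issue; but it is worth being explicit that compactness is an assumption here, not a consequence.
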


\begin{proof}
  Assume that $i \in D$ has a unique immediate predecessor
  $d \prec i$, and let $X \subseteq D$ be pairwise consistent and
  such that $ i = \bigsqcup X$. Hence for any $x \in X$ we have
  $x \sqsubseteq i$. Assume by contradiction that $i \not\in X$. This
  implies that for all elements $x \in X$ we have $x \sqsubseteq d$, and therefore
  $i= \bigsqcup X \sqsubseteq d \prec i$, which is a contradiction.
  Hence it must be $i \in X$, which means that $i$ is irreducible.

  Vice versa, let $i$ be irreducible and let $d_1, d_2 \prec i$ be
  immediate predecessors. Since $D$ is a domain and $\{ d_1, d_2 \}$
  is consistent, we can take $d = d_1 \sqcup d_2$ and we know
  $d_1 \sqsubseteq  d \sqsubseteq i$. Since $i$ is irreducible it cannot be
  $d = i$, therefore $d = d_1$ and thus $d_1 = d_2$. Thus we conclude that
  $i$ has a unique immediate predecessor.
\end{proof}

The unique predecessor of an irreducible will play an important role, hence we introduce a notation.

\begin{definition}[immediate predecessor]
  Let $D$ be a domain and $i \in \ir{D}$. We denote by $\pred{i}$ the
  (unique) immediate predecessor of $i$.
\end{definition}

We next observe that any domain is actually irreducible algebraic, namely it can be generated by the irreducibles.

\begin{proposition}[domains are irreducible algebraic]
  \label{pr:domains-irr-alg}
  Let $D$ be a domain. Then for any $d \in D$ it holds
  $d = \bigsqcup \ir{d}$.
\end{proposition}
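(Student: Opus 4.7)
The plan is to derive the identity $d = \bigsqcup \ir{d}$, with the bulk of the work going into the inequality $d \sqsubseteq \bigsqcup \ir{d}$. The reverse inequality is immediate: $\ir{d} \subseteq \principal{d}$ is pairwise consistent (being bounded by $d$), so by coherence $\bigsqcup \ir{d}$ exists and lies below $d$. By algebraicity, $d = \bigsqcup(\principal{d} \cap \compact{D})$, so it suffices to prove $c \sqsubseteq \bigsqcup \ir{d}$ for every compact $c \sqsubseteq d$.

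Since $D$ is finitary, $\principal{c}$ is finite, and I would argue by induction on $|\principal{c}|$. The base cases are straightforward: the bottom element $\bot = \bigsqcup \emptyset$, which exists by coherence applied to the (vacuously) pairwise consistent empty set, trivially satisfies the inclusion; and if $c$ itself is irreducible, then $c \in \ir{d}$ and the inclusion holds. For the inductive step, suppose $c$ is compact, different from $\bot$, and not irreducible. By Lemma~\ref{le:unique-pred}, $c$ does not have a unique immediate predecessor; finitariness together with $c \neq \bot$ guarantees at least one immediate predecessor exists, so there must be at least two, say $d_1, \ldots, d_k \prec c$ with $k \geq 2$.

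The key sub-claim is then that $c = d_1 \sqcup \cdots \sqcup d_k$. The join $d^{*}$ exists by coherence (the $d_i$ are pairwise consistent, being bounded by $c$) and satisfies $d^{*} \sqsubseteq c$; supposing instead $d^{*} \sqsubset c$, finiteness of $[d^{*}, c] \subseteq \principal{c}$ provides some $e \prec c$ with $d^{*} \sqsubseteq e$. But then $e = d_j$ for some $j$, forcing $d_i \sqsubseteq d^{*} \sqsubseteq d_j$ for every $i$, which contradicts the fact that distinct immediate predecessors of $c$ are incomparable (by the very definition of $\preceq$, if $d_i \sqsubseteq d_j \sqsubseteq c$ then $d_j \in \{d_i, c\}$, and $d_j \neq c$). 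Applying the inductive hypothesis to each $d_i$, which is possible since $|\principal{d_i}| < |\principal{c}|$, gives $d_i \sqsubseteq \bigsqcup \ir{d}$, whence $c = \bigsqcup_i d_i \sqsubseteq \bigsqcup \ir{d}$, completing the induction.

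The step I expect to be the main obstacle is precisely this key sub-claim, namely the recovery of a non-bottom, non-irreducible compact element as the join of its immediate predecessors. Once it is in place, the induction closes smoothly; however, pinning it down requires combining finitariness (to locate some $e \prec c$ above $d^{*}$) with the incomparability of distinct immediate predecessors, an observation already implicit in the proof of Lemma~\ref{le:unique-pred}.
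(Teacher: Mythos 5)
Your proof is correct and follows essentially the same route as the paper's: reduce to compact elements via algebraicity, then induct on $|\principal{c}|$ using the key fact that a non-bottom, non-irreducible compact is the join of its ($\geq 2$) immediate predecessors. The only cosmetic difference is in establishing that key sub-claim: where you use finitariness to find some $e \prec c$ above $d^{*}$, the paper argues more directly that $d_1 \sqsubseteq d^{*} \sqsubseteq c$ together with $d_1 \prec c$ forces $d^{*} \in \{d_1, c\}$, and $d^{*} = d_1$ is impossible when there is a second, incomparable predecessor.
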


\begin{proof}
  We first prove that for any compact element $d \in \compact{D}$ it
  holds that $d = \bigsqcup (\principal{d} \cap \ir{D})$. 
  The thesis then immediately follows from algebraicity of $D$. Since
  $D$ is a domain, $\principal{d}$ is finite, hence we can proceed by
  induction on $|\principal{d}|$. When $|\principal{d}| = 1$, we have
  that $d = \bot$, hence $\principal{d} \cap \ir{D} = \emptyset$ and
  indeed $\bot = \bigsqcup \emptyset$. When $|\principal{d}| = k > 1$
  consider the immediate predecessors of $d$ and denote them
  $d_1, \ldots, d_n \prec d$. Since $D$ is a domain and
  $\{ d_1, \ldots, d_n \}$ is consistent, there exists
  $\bigsqcup \{ d_1, \ldots, d_n \} = d'$ and
  $d_i \sqsubseteq d' \sqsubseteq d$. There are two cases
  \begin{itemize}
  \item $d' = d_i$, for all $i \in \interval{n}$,
    i.e., $d$ has a unique immediate predecessor, hence it is an
    irreducible and thus clearly $d = \bigsqcup (\principal{d} \cap \ir{D})$ or
  \item $d = d' = \bigsqcup \{ d_1, \ldots, d_n \}$. Since, in turn,
    by inductive hypothesis $d_i = \bigsqcup (\principal{d_i} \cap \ir{D})$ 
    and $\principal{d} \cap \ir{D}= \bigcup_{i=1}^n (\principal{d_i} \cap \ir{D})$,
    we immediately get the thesis.
  \end{itemize}
\end{proof}

We next observe that every prime is an irreducible and, if $D$ is a
prime domain, then also the converse holds, i.e., irreducibles
coincide with primes.

\begin{proposition}[irreducibles vs. primes]
  \label{pr:irr-prime-alg}
  Let $D$ be a domain. Then $\pr{D} \subseteq \ir{D}$. Moreover, $D$ is a
  prime domain iff $\pr{D} = \ir{D}$.
\end{proposition}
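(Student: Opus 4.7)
The plan is to prove both implications directly from the definitions, leaning on Proposition~\ref{pr:domains-irr-alg} which already guarantees that every domain is irreducible algebraic. The key observation is that one inclusion $\pr{D} \subseteq \ir{D}$ is a generic fact about domains, while the other inclusion is the one that genuinely needs prime algebraicity. No serious obstacle is anticipated; the argument should be little more than bookkeeping with definitions.

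For the forward direction, assume $D$ is prime. To show $\pr{D} \subseteq \ir{D}$, take $p \in \pr{D}$ and a pairwise consistent $X \subseteq \compact{D}$ with $p = \bigsqcup X$. Primality gives $p \sqsubseteq x$ for some $x \in X$; since $x \sqsubseteq \bigsqcup X = p$ as well, we get $p = x \in X$, so $p$ is irreducible. For $\ir{D} \subseteq \pr{D}$, take $i \in \ir{D}$. Prime algebraicity of $D$ yields $i = \bigsqcup (\principal{i} \cap \pr{D})$; the set $\principal{i} \cap \pr{D}$ is a subset of $\compact{D}$ (primes are compact by definition) and is pairwise consistent since it is bounded above by $i$. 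Irreducibility of $i$ then forces $i \in \principal{i} \cap \pr{D}$, i.e., $i \in \pr{D}$.

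For the reverse direction, assume $\pr{D} = \ir{D}$. By Proposition~\ref{pr:domains-irr-alg}, for every $d \in D$ we have $d = \bigsqcup \ir{d} = \bigsqcup (\principal{d} \cap \ir{D})$. Substituting $\ir{D} = \pr{D}$ gives $d = \bigsqcup (\principal{d} \cap \pr{D})$, which is precisely prime algebraicity of $D$.

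The whole proof is essentially two lines per direction once Proposition~\ref{pr:domains-irr-alg} is in hand; the mildly subtle point — worth stating explicitly — is that $\principal{i} \cap \pr{D}$ is pairwise consistent because it sits below $i$, so that irreducibility can be legitimately invoked in the second half of the forward direction.
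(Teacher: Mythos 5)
Your proof is correct and follows essentially the same route as the paper's: the inclusion $\ir{D}\subseteq\pr{D}$ via prime algebraicity plus irreducibility of $i$ applied to $\principal{i}\cap\pr{D}$, and the converse direction by substituting $\pr{D}=\ir{D}$ into Proposition~\ref{pr:domains-irr-alg}. The only difference is that you spell out the generic inclusion $\pr{D}\subseteq\ir{D}$ and the pairwise-consistency of $\principal{i}\cap\pr{D}$, which the paper leaves as prior observations.
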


\begin{proof}
  Let $D$ be a domain. We show that $\pr{D} \subseteq \ir{D}$. Let
  $d \in \pr{D}$. Assume that $d = \bigsqcup X$ for some pairwise
  consistent set $X$. By primality, since $d \sqsubseteq \bigsqcup X$
  there must be $x \in X$ such that $d \sqsubseteq x$. We have also
  $x \sqsubseteq \bigsqcup X = d$ and thus $d=x \in X$.

    \medskip
    
    For the second part, let us assume that $D$ is a prime domain. We have to prove that $\pr{D} = \ir{D}$. We already know that $\pr{D} \subseteq \ir{D}$.
  For the converse inclusion, let $i \in \ir{D}$. By
  prime algebraicity $i =\bigsqcup (\principal{i} \cap
  \pr{D})$. Since $i$ is irreducible, there exists
  $p \in \principal{i} \cap \pr{D}$ such that $i = p$, hence $i$ is
  a prime.

  Vice versa, if $D$ is a domain, by Proposition~\ref{pr:domains-irr-alg} we
  know that $D$ is irreducible algebraic. Hence, if $\pr{D} = \ir{D}$,
  we immediately conclude that $D$ is prime.
\end{proof}

Quite intuitively, in the domain of configurations of an {\esabbr} the
irreducibles are minimal enablings of events. For instance, in the
domain depicted in Fig.~\ref{fi:running-configurations} the
irreducibles are $\{a\}$, $\{b\}$, $\{ a, c \}$, and $\{ b, c \}$.
For stable {\esabbr}, the domain is prime and thus, as observed
above, irreducibles coincide with primes. This fails in unstable {\esabbr},
as we can see in our running example: while $\{a\}$ and
$\{b\}$ are primes, the two minimal enablings of $c$, namely
$\{ a, c \}$ and $\{ b, c \}$, are not. In fact,
$\{ a, c \} \subseteq \{ a \} \sqcup \{ b, c\}$, but neither
$\{ a, c \} \subseteq \{ a \}$ nor $\{ a, c\} \subseteq \{ b, c\}$.

The key observation is that in general an event corresponds to a class of
irreducibles, like $\{ a,c\}$ and $\{ b, c\}$ in our
example. Additionally, two irreducibles corresponding to the same
event can be used, to a certain extent, interchangeably for building
the same configuration. For instance,
$\{ a, b, c \} = \{a, b \} \sqcup \{ a, c \} = \{a, b \} \sqcup \{ b, c
\}$.
We next formalise this intuition, i.e., we interpret
irreducibles in a domain as minimal enablings of some event and we
identify classes of irreducibles corresponding to the same event.

We start by observing that, in a prime domain, any element admits a unique decomposition in terms of downward closed sets of irreducibles (or, equivalently, of primes).

\begin{lemma}[unique decomposition in prime domains]
  \label{le:unique-decomposition}
  Let $D$ be a prime domain and let $X, X' \subseteq \ir{D}$ be
  downward closed sets of irreducibles.
  If $\bigsqcup X = \bigsqcup X'$ then $X=X'$.
\end{lemma}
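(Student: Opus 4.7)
The plan is to exploit the characterisation of prime domains given in Proposition~\ref{pr:irr-prime-alg}, which tells us that in a prime domain every irreducible is actually a prime. This upgrades the merely ``atomic'' behaviour of irreducibles (cannot be decomposed as a join of strictly smaller elements) to the much stronger property of being preserved under bounded joins.

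Concretely, I would argue by double inclusion. Take any $i \in X$. Since $\bigsqcup X$ exists, the set $X$ is consistent, and in particular pairwise consistent; moreover $X \subseteq \ir{D} \subseteq \compact{D}$, and similarly for $X'$. Now, $i \in \ir{D}$ and by Proposition~\ref{pr:irr-prime-alg}, in a prime domain $\ir{D} = \pr{D}$, so $i$ is a prime. From $i \sqsubseteq \bigsqcup X = \bigsqcup X'$ together with the pairwise consistency of $X' \subseteq \compact{D}$, the primality of $i$ yields some $x' \in X'$ with $i \sqsubseteq x'$. Since $i$ is an irreducible and $X'$ is downward closed in $\ir{D}$, we conclude $i \in X'$. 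This shows $X \subseteq X'$; the reverse inclusion is obtained by swapping the roles of $X$ and $X'$.

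There is essentially no obstacle: the only thing worth checking is the hypothesis of the primality definition, namely pairwise consistency of $X'$ inside $\compact{D}$, which is immediate from the existence of $\bigsqcup X'$ and from $\ir{D} \subseteq \compact{D}$. The downward closure assumption is used in exactly one place, to upgrade the conclusion $i \sqsubseteq x'$ for some $x' \in X'$ to the membership $i \in X'$.
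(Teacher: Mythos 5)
Your proof is correct and follows essentially the same route as the paper's: both reduce to the fact that in a prime domain irreducibles are primes, apply primality to get $i \sqsubseteq x'$ for some $x' \in X'$, and use downward closure to conclude $i \in X'$, finishing by symmetry. Your explicit check of the pairwise-consistency hypothesis is a welcome extra detail that the paper leaves implicit.
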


\begin{proof}
  Let $X, X' \subseteq \ir{D}$ be downward closed sets of irreducibles
  such that $\bigsqcup X = \bigsqcup X'$.  Take any $i' \in X'$. Then
  $i' \sqsubseteq \bigsqcup X$. Since the domain is prime
  algebraic, and thus $i'$ is prime, there must exist $i \in X$ such
  that $i' \sqsubseteq i$ and thus $i' \in X$.  Therefore
  $X' \subseteq X$. By symmetry also the converse inclusion holds,
  whence equality.
\end{proof}

The result above no longer holds in domains arising in the presence
of fusions.  For instance, in the domain in
Fig.~\ref{fi:running-configurations},
$X = \{ \{a\}, \{b\}, \{a,c\} \}$, $X'= \{ \{a\}, \{b\}, \{b,c\} \}$
and $X''= \{ \{a\}, \{b\}, \{b,c\}, \{a,c\} \}$ are all decompositions
for $\{a,b,c\}$.
The idea is to identify irreducibles
that can be used interchangeably in a decomposition.

\begin{definition}[interchangeability]
  \label{de:interchangeable}
  Let $D$ be a domain and $i, i' \in \ir{D}$. We write
  $i \leftrightarrow i'$ if $\{i, i'\}$ consistent and for all
  $X \subseteq \ir{D}$ such that $X \cup \{ i \}$ and
  $ X \cup \{ i' \}$ are downward closed and consistent we have
  $\bigsqcup (X \cup \{ i \}) = \bigsqcup (X \cup \{ i' \})$.
\end{definition}

In words, $i \leftrightarrow i'$ means that $i$ and $i'$ produce the
same effect when added to a decomposition that already includes their
predecessors.
Hence, intuitively, $i$ and $i'$ correspond to the execution of the
same event with different and consistent enablings.

We first observe that distinct irreducibles related by the
interchangeability relation are necessarily incomparable.

\begin{lemma}[$\leftrightarrow$ vs $\sqsubseteq$]
  \label{le:inter-ord}
  Let $D$ be a domain and let $i, i' \in \ir{D}$.
  If $i \leftrightarrow i'$ and $i \sqsubseteq i'$ then $i=i'$.
\end{lemma}

\begin{proof}
  Let $i \leftrightarrow i'$ and $i \sqsubseteq i'$. 
  If $i \neq i'$ and we let
  $X = \ir{\pred{i'}}$, it turns out that $X \cup \{i\} = X$ and
  $X \cup \{i'\}$ are consistent and downward closed. Moreover
  $\bigsqcup X \cup \{i\} = \bigsqcup X = \pred{i'} \neq \bigsqcup X
  \cup \{i'\} = i'$, contradicting $i \leftrightarrow i'$.
\end{proof}

We next give some characterisations of interchangeability.

\begin{lemma}[characterising $\leftrightarrow$]
  \label{le:eq-char}
  Let $D$ be a domain and $i, i' \in \ir{D}$.  Then the
  following are equivalent
  \begin{enumerate}
  \item 
    \label{le:eq-char:1}
    $i \leftrightarrow i'$;

  \item 
    \label{le:eq-char:3}
    $\{i, i'\}$ consistent and for all $d \in \compact{D}$
    if $\pred{i}, \pred{i'} \sqsubseteq d$ then
    $d \sqcup i = d \sqcup i'$;

  \item 
    \label{le:eq-char:4}
    $\{i, i'\}$ consistent and $i \sqcup \pred{i'} = \pred{i} \sqcup i'$.
  \end{enumerate}
\end{lemma}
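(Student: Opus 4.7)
The plan is to prove the cycle $(\ref{le:eq-char:1}) \Rightarrow (\ref{le:eq-char:3}) \Rightarrow (\ref{le:eq-char:4}) \Rightarrow (\ref{le:eq-char:1})$. Consistency of $i$ and $i'$ demanded in items~(\ref{le:eq-char:3}) and~(\ref{le:eq-char:4}) is immediate from the definition of $\leftrightarrow$: any witnessing $X$ makes $\{i, i'\}$ share the upper bound $\bigsqcup(X \cup \{i\}) = \bigsqcup(X \cup \{i'\})$. Thus the substantive work lies in the join identities.

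For $(\ref{le:eq-char:1}) \Rightarrow (\ref{le:eq-char:3})$, given a compact $d$ with $\pred{i}, \pred{i'} \sqsubseteq d$ on which $d \sqcup i$ is defined, I would instantiate the hypothesis at $X = \ir{d}$: by Proposition~\ref{pr:domains-irr-alg} this set satisfies $\bigsqcup X = d$, it is downward closed, and since $\pred{i}, \pred{i'} \in X$ both $X \cup \{i\}$ and $X \cup \{i'\}$ are downward closed and consistent, so interchangeability yields $d \sqcup i = d \sqcup i'$. To exhibit a $d$ with $d \neq d \sqcup i$ I take $d = \pred{i} \sqcup \pred{i'}$: the definition of $\leftrightarrow$ furnishes some $X_0$ with $i \not\sqsubseteq \bigsqcup X_0$, and downward closure forces $\pred{i}, \pred{i'} \in X_0$, hence $\pred{i} \sqcup \pred{i'} \sqsubseteq \bigsqcup X_0$ and therefore $i \not\sqsubseteq \pred{i} \sqcup \pred{i'}$.

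The step $(\ref{le:eq-char:3}) \Rightarrow (\ref{le:eq-char:4})$ comes from instantiating~(\ref{le:eq-char:3}) at $d = \pred{i} \sqcup \pred{i'}$, giving $\pred{i'} \sqcup i = \pred{i} \sqcup i'$; the disequality follows because the non-trivial witness $d'$ from~(\ref{le:eq-char:3}) satisfies $i \not\sqsubseteq d'$ while $\pred{i} \sqcup \pred{i'} \sqsubseteq d'$, so $\pred{i} \sqcup \pred{i'} = \pred{i} \sqcup i'$ would force $i \sqsubseteq d'$, a contradiction. For $(\ref{le:eq-char:4}) \Rightarrow (\ref{le:eq-char:1})$, take any $X$ with $X \cup \{i\}, X \cup \{i'\}$ downward closed and consistent; downward closure gives $\pred{i}, \pred{i'} \sqsubseteq \bigsqcup X$, so
\[
  \bigsqcup X \sqcup i \;=\; \bigsqcup X \sqcup (\pred{i'} \sqcup i) \;=\; \bigsqcup X \sqcup (\pred{i} \sqcup i') \;=\; \bigsqcup X \sqcup i',
\]
using~(\ref{le:eq-char:4}) in the middle. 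The non-trivial witness is $X = \ir{\pred{i} \sqcup \pred{i'}}$, whose join is $\pred{i} \sqcup \pred{i'}$ while adjoining $i$ produces $\pred{i'} \sqcup i$, and these differ by the disequality in~(\ref{le:eq-char:4}).

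The main obstacle throughout is the book-keeping of when joins are actually defined: each time an expression $d \sqcup i$ or $d \sqcup i'$ appears, one must verify that the pair is consistent before taking the join. This is deduced from coherence of the domain together with the pairwise consistency of $i, i'$ and their compatibility with the elements of the downward-closed set at hand; once existence is established, the rest of the argument is a purely order-theoretic rearrangement.
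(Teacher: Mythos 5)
Your proof is correct and follows essentially the same route as the paper: the same cycle $(\ref{le:eq-char:1})\Rightarrow(\ref{le:eq-char:3})\Rightarrow(\ref{le:eq-char:4})\Rightarrow(\ref{le:eq-char:1})$ with the same key instantiations ($X=\ir{d}$ for the first step, $d=\pred{i}\sqcup\pred{i'}$ for the second, and a downward-closed set of irreducibles with join $\pred{i}\sqcup\pred{i'}$ as the non-trivial witness for the third). The only slip is cosmetic: since $\pred{i}$ need not itself be irreducible, "downward closure forces $\pred{i},\pred{i'}\in X_0$" should read $\ir{\pred{i}}\cup\ir{\pred{i'}}\subseteq X_0$, whence $\pred{i}\sqcup\pred{i'}\sqsubseteq\bigsqcup X_0$ by irreducible algebraicity -- the conclusion you draw is unaffected.
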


\begin{proof}

  (\ref{le:eq-char:1} $\to$ \ref{le:eq-char:3})
  Assume that $i \leftrightarrow i'$. By definition, $\{i, i'\}$ is
  consistent. Let $d \in \compact{D}$ be such that
  $\pred{i}, \pred{i'} \sqsubseteq d$. If we let $X = \ir{d}$ we have
  that $\ir{i} \setminus \{i\} \subseteq X$ and similarly
  $\ir{i'} \setminus \{i'\} \subseteq X$.  This implies that
  $X \cup \{ i \}$ and $X \cup \{ i' \}$ are downward closed and
  consistent. Hence
  $d \sqcup i = \bigsqcup X \sqcup i = \bigsqcup (X \cup \{ i \}) =
  \bigsqcup (X \cup \{ i' \}) = \bigsqcup X \sqcup i' = d \sqcup i'$.

  \bigskip

  (\ref{le:eq-char:3} $\to$ \ref{le:eq-char:4})
  Assume (\ref{le:eq-char:3}). Let $p = \pred{i} \sqcup \pred{i'}$,
  which is in $\compact{D}$ since
  $\pred{i}, \pred{i'} \in \compact{D}$. Clearly,
  $\pred{i}, \pred{i'} \sqsubseteq p$.
  Therefore
  $i \sqcup \pred{i'} = i \sqcup \pred{i} \sqcup \pred{i'} = i \sqcup
  p = p \sqcup i' = \pred{i} \sqcup \pred{i'} \sqcup i' = \pred{i}
  \sqcup i'$.

  \bigskip

  (\ref{le:eq-char:4} $\to$ \ref{le:eq-char:1}) Assume
  (\ref{le:eq-char:4}).  Let $X \subseteq \ir{D}$
  be such that $X \cup \{ i\}$ and $X \cup \{ i'\}$ are downward
  closed and consistent sets of irreducibles. This implies that
  $\ir{\pred{i}} \subseteq X$ and similarly
  $\ir{\pred{i'}} \subseteq X$.  Hence, if we let
  $P = \ir{\pred{i}} \cup \ir{\pred{i'}}$, we have
  \begin{center}
    $P  \subseteq X$ \quad and \quad $\bigsqcup P = \pred{i} \sqcup \pred{i'}$.
  \end{center}
  Therefore
  \begin{center}
    $
    \begin{array}{ll}
      \bigsqcup (X \cup \{ i \}) =\\
      \quad = (\bigsqcup X \setminus P) \sqcup \bigsqcup P \sqcup i =\\
      \quad = (\bigsqcup X \setminus P) \sqcup \pred{i} \sqcup \pred{i'} \sqcup i =\\
      \quad = (\bigsqcup X \setminus P) \sqcup i \sqcup \pred{i'} =\\
      \quad = (\bigsqcup X \setminus P) \sqcup \pred{i} \sqcup i' =\\
      \quad = (\bigsqcup X \setminus P) \sqcup \pred{i} \sqcup \pred{i'}  \sqcup i' =\\
      \quad = (\bigsqcup X \setminus P) \sqcup \bigsqcup P \sqcup i' =\\
      \quad = \bigsqcup (X \cup \{ i' \})
    \end{array}
    $
  \end{center}
\end{proof}

The interchangeability relation is clearly reflexive and symmetric, but not
transitive in general: in the domain of Fig.~\ref{fi:not-trans}, using the characterisation in Lemma~\ref{le:eq-char}(\ref{le:eq-char:4}) one can easily see that
$i \leftrightarrow i_1$ and $i_1 \leftrightarrow i'$ but not
$i \leftrightarrow i'$, simply because
$\{i, i'\}$ is not consistent.
More interestingly, in the domain of Fig.~\ref{fi:not-trans-cons}, we have $i \leftrightarrow i_1$, $i_1 \leftrightarrow i_2$, $i_2 \leftrightarrow i'$, hence $i \leftrightarrow^* i'$. However, despite the fact that $\{i, i'\}$ is consistent, it does not hold $i \leftrightarrow i'$, since $\pred{i} \sqcup i' \neq i \sqcup \pred{i'}$. 
This shows that the intuition that
interchangeable irreducibles correspond to the execution of the
same event with different and consistent enablings is still not properly captured. Since $i$ and $i'$ represent the execution of the same event and they are consistent, one would expect that they are interchangeable.

\begin{figure}
  \begin{center}
    \begin{tikzpicture}[node distance=5mm, >=stealth',x=20mm,y=8mm, scale=.8]
      \node[lab] at (2,-0.4) (bot)  {$\bot$};

      \node[lab] at (0,0.8) (pi1)  {$\pred{i}$};
      \node[lab] at (2,0.8) (pi2)  {$\pred{i_1}$};
      \node[lab] at (4,0.8) (pi3)  {$\pred{i'}$};

      \node[lab] at (0,1.9) (i1)   {$i$};
      \node[lab] at (1,1.7) (pi12) {$\bullet$};
      \node[lab] at (2,1.9) (i2)   {$i_1$};
      \node[lab] at (3,1.7) (pi23) {$\bullet$};
      \node[lab] at (4,1.9) (i3)   {$i'$};

      \node[lab] at (1,2.7) (i12) {$\bullet$};
      \node[lab] at (3,2.7) (i23) {$\bullet$};

      \draw [->]  (bot)  -- (pi1);
      \draw [->]  (bot)  -- (pi2);
      \draw [->]  (bot)  -- (pi3);

      \draw [->]  (pi1)  -- (i1);
      \draw [->]  (pi1)  -- (pi12);
      \draw [->]  (pi2)  -- (i2);
      \draw [->]  (pi2)  -- (pi12);
      \draw [->]  (pi2)  -- (pi23);
      \draw [->]  (pi3)  -- (i3);
      \draw [->]  (pi3)  -- (pi23);

      \draw [->]  (i1)  -- (i12);
      \draw [->]  (i2)  -- (i12);
      \draw [->]  (pi12)  -- (i12);
      \draw [->]  (i2)  -- (i23);
      \draw [->]  (i3)  -- (i23);

      \draw [->]  (pi12)  -- (i12);
      \draw [->]  (pi23)  -- (i23);
    \end{tikzpicture}
  \end{center}
  
\caption{Interchangeability need not be transitive.}
\label{fi:not-trans}
\end{figure}

\begin{figure}
  \begin{center}
    \begin{tikzpicture}[node distance=5mm, >=stealth',x=20mm,y=12mm, scale=.7]     
      \node[lab] at (3,0) (bot)  {$\bot$};

      \node[lab] at (0,0.8) (pi1)  {$\pred{i}$};
      \node[lab] at (2,0.8) (pi2)  {$\pred{i_1}$};
      \node[lab] at (4,0.8) (pi3)  {$\pred{i_2}$};
      \node[lab] at (6,0.8) (pi4)  {$\pred{i'}$};

      \node[lab] at (0,1.9) (i1)   {$i$};
      \node[lab] at (1,1.4) (pi12) {$\pred{i} \sqcup \pred{i_1}$};
      \node[lab] at (2,1.9) (i2)   {$i_1$};
      \node[lab] at (3,1.4) (pi23) {$\pred{i_1} \sqcup \pred{i_2}$};
      \node[lab] at (4,1.9) (i3)   {$i_2$};
      \node[lab] at (5,1.4) (pi34) {$\pred{i_2} \sqcup \pred{i'}$};
      \node[lab] at (6,1.9) (i4)   {$i'$};

      \node[lab] at (1,2.9) (i12) {$i \sqcup i_1$};
      \node[lab] at (3,2.7) (i23) {$i_1 \sqcup i_2$};
      \node[lab] at (5,2.9) (i34) {$i_2 \sqcup i'$};

      \node[lab] at (2,4)   (pi141) {$i \sqcup \pred{i'}$};
      \node[lab] at (4,4)   (pi144) {$\pred{i}  \sqcup i'$};

      \node[lab] at (3,3.2) (pi14)  {$\pred{i} \sqcup \pred{i'}$};

      \node[lab] at (3,4.5) (i14)  {$i \sqcup i'$};

      \draw [->]  (bot)  -- (pi1);
      \draw [->]  (bot)  -- (pi2);
      \draw [->]  (bot)  -- (pi3);
      \draw [->]  (bot)  -- (pi4);

      \draw [->]  (pi1)  -- (i1);
      \draw [->]  (pi1)  -- (pi12);
      \draw [->]  (pi2)  -- (i2);
      \draw [->]  (pi2)  -- (pi12);
      \draw [->]  (pi2)  -- (pi23);
      \draw [->]  (pi3)  -- (i3);
      \draw [->]  (pi3)  -- (pi23);
      \draw [->]  (pi3)  -- (pi34);
      \draw [->]  (pi4)  -- (i4);
      \draw [->]  (pi4)  -- (pi34);

      \draw [->]  (i1)  -- (i12);
      \draw [->]  (i2)  -- (i12);
      \draw [->]  (pi12)  -- (i12);
      \draw [->]  (i2)  -- (i23);
      \draw [->]  (i3)  -- (i23);
      \draw [->]  (i3)  -- (i34);
      \draw [->]  (i4)  -- (i34);

      \draw [->]  (pi12)  -- (i12);
      \draw [->]  (pi23)  -- (i23);
      \draw [->]  (pi34)  -- (i34);

      \draw [-,color=white,line width=1.5mm]  (pi1)  to [bend left=17] (pi14);	
      \draw [->]  (pi1)  to [bend left=17] (pi14);
      \draw [-,color=white,line width=1.5mm]   (pi4)  to [bend right=17] (pi14);
      \draw [->]   (pi4)  to [bend right=17] (pi14);
      \draw [->]  (pi14)   -- (pi141);
      \draw [->]  (pi14)   -- (pi144);
      \draw [->]  (pi144)   -- (i14);
      \draw [->]  (pi141)   -- (i14);
      \draw [->] (i1)   to [bend left=17]  (pi141);
      \draw [->]  (i4)   to [bend right=17]  (pi144);
    \end{tikzpicture}
  \end{center}
  
\caption{A domain which is not {\wi} , since Definition~\ref{de:well-interchange}(\ref{de:well-interchange:1}) is violated.}
\label{fi:not-trans-cons}
\end{figure}

\begin{figure}
  \begin{center}
    \begin{tikzpicture}[node distance=5mm, >=stealth',x=20mm,y=12mm, scale=.7]     
      \node[lab] at (3,0) (bot)  {$\bot$};

      \node[lab] at (0,0.8) (pi1)  {$\pred{i'}$};
      \node[lab] at (2,0.8) (pi2)  {$\pred{i}$};
      \node[lab] at (4,0.8) (pi3)  {$\pred{j}$};
      \node[lab] at (6,0.8) (pi4)  {$\pred{j'}$};

      \node[lab] at (0,1.9) (i1)   {$i'$};
      \node[lab] at (1,1.4) (pi12) {$\pred{i'} \sqcup \pred{i}$};
      \node[lab] at (2,1.9) (i2)   {$i$};
      \node[lab] at (3,1.4) (pi23) {$\pred{i} \sqcup \pred{j}$};
      \node[lab] at (4,1.9) (i3)   {$j$};
      \node[lab] at (5,1.4) (pi34) {$\pred{j} \sqcup \pred{j'}$};
      \node[lab] at (6,1.9) (i4)   {$j'$};

      \node[lab] at (1,2.9) (i12) {$i' \sqcup i$};
      \node[lab] at (2.5,2.6) (i2p3) {$i \sqcup \pred{j}$};
      \node[lab] at (3.5,2.6) (i3p2) {$j \sqcup \pred{i}$};
      \node[lab] at (5,2.9) (i34) {$j \sqcup j'$};

      \node[lab] at (2,4)   (pi141) {$i' \sqcup \pred{j'}$};
      \node[lab] at (4,4)   (pi144) {$\pred{i'}  \sqcup j'$};

      \node[lab] at (3,3.2) (pi14)  {$\pred{i'} \sqcup \pred{j'}$};

      \node[lab] at (3,4.5) (i14)  {$i' \sqcup j'$};

      \draw [->]  (bot)  -- (pi1);
      \draw [->]  (bot)  -- (pi2);
      \draw [->]  (bot)  -- (pi3);
      \draw [->]  (bot)  -- (pi4);

      \draw [->]  (pi1)  -- (i1);
      \draw [->]  (pi1)  -- (pi12);
      \draw [->]  (pi2)  -- (i2);
      \draw [->]  (pi2)  -- (pi12);
      \draw [->]  (pi2)  -- (pi23);
      \draw [->]  (pi3)  -- (i3);
      \draw [->]  (pi3)  -- (pi23);
      \draw [->]  (pi3)  -- (pi34);
      \draw [->]  (pi4)  -- (i4);
      \draw [->]  (pi4)  -- (pi34);

      \draw [->]  (i1)  -- (i12);
      \draw [->]  (i2)  -- (i12);
      \draw [->]  (pi12)  -- (i12);
      \draw [->]  (i2)  -- (i2p3); 
      \draw [->]  (i3)  -- (i3p2);
      \draw [->]  (i3)  -- (i34);
      \draw [->]  (i4)  -- (i34);
      
      \draw [->]  (pi12)  -- (i12);
      \draw [->]  (pi23)  -- (i2p3);
      \draw [->]  (pi23)  -- (i3p2);
      \draw [->]  (pi34)  -- (i34);

      \draw [-,color=white,line width=1.5mm]  (pi1)  to [bend left=17] (pi14);	
      \draw [->]  (pi1)  to [bend left=17] (pi14);
      \draw [-,color=white,line width=1.5mm]   (pi4)  to [bend right=17] (pi14);
      \draw [->]   (pi4)  to [bend right=17] (pi14);
      \draw [->]  (pi14)   -- (pi141);
      \draw [->]  (pi14)   -- (pi144);
      \draw [->]  (pi144)   -- (i14);
      \draw [->]  (pi141)   -- (i14);
      \draw [->] (i1)   to [bend left=17]  (pi141);
      \draw [->]  (i4)   to [bend right=17]  (pi144);
    \end{tikzpicture}
  \end{center}
  
\caption{A domain which is not {\wi}, since Definition~\ref{de:well-interchange}(\ref{de:well-interchange:2}) is violated.}
\label{fi:not-trans-cons-bis}
\end{figure}

The next definition formalises two additional properties that a domain must enjoy to provide $\leftrightarrow$ the intended meaning.

\begin{definition}[{\wi} domain]
  \label{de:well-interchange}
  Let $D$ be a domain.
  We say that $D$ is \emph{\wi} when
  \begin{enumerate}
  \item
    \label{de:well-interchange:1}
    for all $i,i' \in \ir{D}$, if $\{\pred{i}, \pred{i'}\}$ consistent
    and $i \leftrightarrow^* i'$ then $i \leftrightarrow i'$;

  \item
    \label{de:well-interchange:2} for all $i,i',j, j' \in \ir{D}$, if
    $i \leftrightarrow^* i'$, $j \leftrightarrow^* j'$,  and $\{i', j'\}$,
    $\{\pred{i}, j\}$,
    $\{\pred{j}, i\}$ consistent then $\{i, j\}$ consistent.
  \end{enumerate}
\end{definition}

Property (\ref{de:well-interchange:1}) is motivated by the discussion above. It intuitively asks that whenever $i$ and $i'$ represents the execution of the same event and they are consistent, then they are interchangeable. Property (\ref{de:well-interchange:2}) can
be read as follows: if $i, i'$ and $j, j'$ represent the same events
and $i', j'$ are consistent, the only source of inconsistency
between $i$ and $j$ is in their enablings. In other words, either $i$ and $j$ are consistent or it must be that $\pred{i}$ is inconsistent with $j$, or $i$ is inconsistent with $\pred{j}$.  A situation in which this property fails is illustrated in Fig.~\ref{fi:not-trans-cons-bis}.

We now introduce weak primes: they weaken the property
of prime elements, requiring that it holds up to
interchangeability.

\begin{definition}[weak prime]
  \label{de:weak-prime}
  Let $D$ be a domain. A \emph{weak prime} of $D$ is an element
  $i \in \ir{D}$ such that for all pairwise consistent
  $X \subseteq D$, if $i \sqsubseteq \bigsqcup X$ then there exist
  $i' \in \ir{D}$ and $d \in X$ such that $i \leftrightarrow i'$ and
  $i' \sqsubseteq d$. We denote by $\wpr{D}$ the set of weak primes of
  $D$.
\end{definition}

Clearly, since interchangeability is reflexive, any prime is a weak
prime. Moreover, in prime domains interchangeability turns out to be
the identity and thus also the converse holds.

\begin{lemma}[weak primes in prime domains]
  \label{le:interchange-id-pad}
  Let $D$ be a prime domain. Then $\leftrightarrow$ is the
  identity and $\wpr{D} = \pr{D}$.
\end{lemma}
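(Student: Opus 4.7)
The plan is to first show that $\leftrightarrow$ collapses to the identity on $\ir{D}$, and then deduce $\wpr{D} = \pr{D}$ by unfolding the definition of weak prime. Proposition~\ref{pr:irr-prime-alg}, which gives $\ir{D} = \pr{D}$ in a prime domain, does most of the work.

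For the identity of $\leftrightarrow$, given $i \leftrightarrow i'$ with $i, i' \in \ir{D} = \pr{D}$, condition~(\ref{le:eq-char:4}) of Lemma~\ref{le:eq-char} yields $i \sqsubseteq \pred{i} \sqcup i'$. By prime algebraicity, $\pred{i} \sqcup i'$ is the join of the pairwise consistent family $X = (\principal{\pred{i}} \cap \pr{D}) \cup \{i'\} \subseteq \compact{D}$ (bounded above by $\pred{i} \sqcup i'$). Primality of $i$ then forces $i \sqsubseteq p$ for some $p \in X$: either $p = i'$, giving $i \sqsubseteq i'$, or $p$ is a prime below $\pred{i}$, which would yield $i \sqsubseteq \pred{i}$, contradicting $\pred{i} \prec i$. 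Thus $i \sqsubseteq i'$, and symmetry yields $i = i'$.

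For $\wpr{D} = \pr{D}$, once $\leftrightarrow$ is the identity, membership of $i$ in $\wpr{D}$ reads simply as ``for every consistent $X \subseteq D$ with $i \sqsubseteq \bigsqcup X$ some $d \in X$ satisfies $i \sqsubseteq d$''. The inclusion $\wpr{D} \subseteq \pr{D}$ is then immediate: any pairwise consistent $X \subseteq \compact{D}$ is consistent by coherence, so the weak prime clause specialises to the prime clause. The reverse inclusion $\pr{D} \subseteq \wpr{D}$ reduces a general consistent $X \subseteq D$ to the pairwise consistent compact family $\bigcup_{x \in X}(\principal{x} \cap \compact{D})$, which by algebraicity has the same join, and then applies primality; the witness $i' = p$ is supplied for free by reflexivity of $\leftrightarrow$.

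The main delicate point throughout is the mismatch between the ranges of the quantifiers in the two definitions: primality is formulated for pairwise consistent subsets of $\compact{D}$, whereas weak primality ranges over arbitrary consistent subsets of $D$. Coherence lifts pairwise consistency to consistency in one direction, and algebraicity reduces a general consistent set to its pairwise consistent compact approximants in the other, so both translations are routine once stated.
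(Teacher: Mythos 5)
Your proof is correct, and it takes a somewhat different route from the paper's for the first claim. The paper argues that $\leftrightarrow$ is the identity by splitting into cases: if $i,i'$ are comparable it invokes Lemma~\ref{le:inter-cons-ord}, and if they are incomparable it forms the downward-closed set $X=(\ir{i}\setminus\{i\})\cup(\ir{i'}\setminus\{i'\})$ and applies the unique-decomposition property of prime domains (Lemma~\ref{le:unique-decomposition}) to $\bigsqcup(X\cup\{i\})=\bigsqcup(X\cup\{i'\})$, concluding $X\cup\{i\}=X\cup\{i'\}$ and hence $i=i'$. You instead go through Lemma~\ref{le:eq-char}(\ref{le:eq-char:4}) to get $i\sqsubseteq \pred{i}\sqcup i'$, exhibit $\pred{i}\sqcup i'$ as the join of the pairwise consistent compact family $(\principal{\pred{i}}\cap\pr{D})\cup\{i'\}$, and use primality of $i$ (via Proposition~\ref{pr:irr-prime-alg}) to force $i\sqsubseteq i'$, finishing by symmetry; this treats the comparable and incomparable cases uniformly and replaces the unique-decomposition lemma with a direct appeal to primality. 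Both arguments are about equally short and both lean on Proposition~\ref{pr:irr-prime-alg}. For the second claim ($\wpr{D}=\pr{D}$) the paper's displayed proof is silent, whereas you spell out the two inclusions, correctly handling the quantifier mismatch (pairwise consistent subsets of $\compact{D}$ versus arbitrary consistent subsets of $D$) via coherence in one direction and algebraicity in the other; this is a welcome addition rather than a deviation.
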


\begin{proof}
  Let $i, i' \in \ir{D}$ be such that $i \leftrightarrow i'$. 

  If $i$ and $i'$ are comparable, i.e., $i \sqsubseteq i'$ or
  $i' \sqsubseteq i$, by Lemma~\ref{le:inter-ord} we deduce
  $i = i'$ and we are done.

  Otherwise,
  let
  $X = (\ir{i} \setminus \{i\}) \cup (\ir{i'} \setminus \{i'\})$.
  Note that $X \cup \{ i \}$ and $X \cup \{ i' \}$ are consistent,
  since, by definition of $\leftrightarrow$, $i$ and $i'$ are so.
  Moreover $X \cup \{ i \}$ and $X \cup \{ i' \}$ are downward closed,
  and thus, from $i \leftrightarrow i'$, we deduce
  $\bigsqcup(X \cup \{ i \})= \bigsqcup(X \cup \{ i' \})$.  Since $D$
  is prime, by Lemma~\ref{le:unique-decomposition}, this
  implies that $X \cup \{ i \} = X \cup \{ i' \}$.
  Since $i$ and $i'$ are uncomparable,
  $i, i' \not\in X$ and we
  conclude $i=i'$.
\end{proof}

We argue that the domain of configurations arising in the presence of
fusions can be characterised domain-theoretically as {\wi} domains
  where all irreducibles are weak primes, i.e., that the domain is
algebraic with respect to weak primes.

\begin{definition}[weak prime algebraic domains]
  \label{de:fusion-domain-new}
  Let $D$ be an {\wi} domain.  It is \emph{weak prime algebraic} (or simply
  \emph{weak prime}) if for all $d \in D$ it holds
  $d = \bigsqcup (\principal{d} \cap \wpr{D})$.
\end{definition}

Observe that weak prime domains are assumed to be interchangeable. This hypothesis will actually play a role only when relating weak prime domains and event structures in Section~\ref{ss:domain-to-es}. However, in order to simplify the presentation we preferred to assume it since the beginning.

In the same way as prime domains are domains where all
irreducibles are primes (see Proposition~\ref{pr:irr-prime-alg}), we
can provide a characterisation of weak prime domains in terms of
coincidence between irreducibles and weak primes.

\begin{proposition}[weak prime domains, again]
  \label{pr:fusion-domains}
  Let $D$ be an {\wi} domain.
  It is weak prime iff  all irreducibles are weak primes.
\end{proposition}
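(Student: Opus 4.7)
The plan is to prove both directions by reducing everything to Proposition~\ref{pr:domains-irr-alg} (any domain is irreducible algebraic) and the observation that, by Definition~\ref{de:weak-prime}, weak primes are a priori irreducibles, so $\wpr{D} \subseteq \ir{D}$ always holds.

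For the $(\Leftarrow)$ direction I would start from the assumption $\ir{D} \subseteq \wpr{D}$, which combined with the preceding observation gives $\ir{D} = \wpr{D}$. Then for any $d \in D$, Proposition~\ref{pr:domains-irr-alg} yields
\[
d \;=\; \bigsqcup \ir{d} \;=\; \bigsqcup(\principal{d} \cap \ir{D}) \;=\; \bigsqcup(\principal{d} \cap \wpr{D}),
\]
so $D$ is weak prime algebraic.

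For the $(\Rightarrow)$ direction, assume $D$ is weak prime algebraic and pick an arbitrary $i \in \ir{D}$; I want to show $i \in \wpr{D}$. By hypothesis, $i = \bigsqcup(\principal{i} \cap \wpr{D})$. The set $X = \principal{i} \cap \wpr{D}$ is a subset of $\compact{D}$ (since weak primes are irreducibles, hence compact) and it is pairwise consistent, since all its elements are bounded above by $i$. Applying the irreducibility of $i$ to the decomposition $i = \bigsqcup X$, we conclude $i \in X$, hence $i \in \wpr{D}$.

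The two steps are essentially bookkeeping once the right observations are in place. The only subtle point is checking in the $(\Rightarrow)$ direction that the set $\principal{i} \cap \wpr{D}$ satisfies the hypotheses required to apply the definition of irreducible (pairwise consistency and compactness of members), but both follow immediately: consistency from having $i$ as an upper bound, compactness from the fact that $\wpr{D} \subseteq \ir{D} \subseteq \compact{D}$. There is no real obstacle; the result is an almost direct corollary of Proposition~\ref{pr:domains-irr-alg} and the definitional inclusion $\wpr{D} \subseteq \ir{D}$.
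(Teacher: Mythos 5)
Your proof is correct and is essentially the argument the paper intends: it mirrors, word for word, the paper's proof of the analogous Proposition~\ref{pr:irr-prime-alg} for primes, using Proposition~\ref{pr:domains-irr-alg} for the ($\Leftarrow$) direction and applying the definition of irreducible to the decomposition $i = \bigsqcup(\principal{i} \cap \wpr{D})$ for ($\Rightarrow$). The side conditions you check (pairwise consistency via the upper bound $i$, compactness via $\wpr{D} \subseteq \ir{D} \subseteq \compact{D}$) are exactly the ones needed, so there is nothing to add.
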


\begin{proof}
  Let $D$ be an {\wi} domain. We know, by
  Lemma~\ref{pr:domains-irr-alg}, that $D$ is irreducible
  algebraic. If all irreducibles are weak primes, then clearly $D$ is
  also weak prime algebraic. Conversely, if it is weak prime
  algebraic, then for any irreducible $i \in \ir{D}$, we have that
  $i = \bigsqcup (\principal{i} \cap \wpr{D})$. Since $i$ is
  irreducible, this implies
  $i \in \principal{i} \cap \wpr{D} \subseteq \wpr{D}$, as desired.
\end{proof}

A domain is often built as the ideal completion of its compact
elements. We next provide a characterisation of domains and weak prime
domains based on the generators.

\begin{lemma}[weak prime domains from generators]
\label{le:generators}
  Let $(P,\sqsubseteq)$ be a finitary partial order such that 
 for all $d, d', d'' \in P$, if
  $\{ d, d', d''\}$ is pairwise consistent then $d \sqcup d'$ exists
  and is consistent with $d''$. Then $\ideal{P}$ is a domain with
  $\compact{\ideal{P}} = \{ \principal{d} \mid d \in P \} \simeq P$.

  Additionally, let $P$ be {\wi} and for all $i \in \ir{P}$,
  $d, d' \in P$ consistent, if $i \sqsubseteq d \sqcup d'$ then there
  is $i' \in \ir{P}$, $i \leftrightarrow i'$ such that
  $i' \sqsubseteq d$ or $i' \sqsubseteq d'$. Then $\ideal{P}$ is a
  weak prime domain.
\end{lemma}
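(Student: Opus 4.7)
The plan is to treat the two parts of the statement separately. For Part~1, showing that $\ideal{P}$ is a domain with $\compact{\ideal{P}} = \{\principal{d} \mid d \in P\} \simeq P$, I would first lift the triple-consistency hypothesis on $P$ to arbitrary finite pairwise consistent subsets by a straightforward induction on cardinality: any such subset has a join in $P$. This yields coherence of $\ideal{P}$, since a pairwise consistent family of ideals $\mathcal{X}$ has as least upper bound the downward closure of the directed set of finite joins taken in $P$ over finite consistent selections from the ideals in $\mathcal{X}$. Algebraicity and the identification of compact elements follow standard reasoning: each $\principal{d}$ is compact because a directed union containing $d$ must contain $d$ in some member, while every ideal $I$ equals the directed union $\bigcup_{d \in I} \principal{d}$, so a compact ideal must be principal. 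Finitariness of $\ideal{P}$ is inherited from $P$ because $\principal{\principal{d}}$ in $\ideal{P}$ is in bijection with $\principal{d}$ in $P$.

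For Part~2, by Proposition~\ref{pr:fusion-domains} it suffices to show that every irreducible of $\ideal{P}$ is a weak prime. Through the isomorphism $\compact{\ideal{P}} \simeq P$ of Part~1 and Lemma~\ref{le:unique-pred}, irreducibles of $\ideal{P}$ correspond exactly to principal ideals of irreducibles of $P$. I would also verify, using the characterisations in Lemma~\ref{le:eq-char}, that the interchangeability relation on $\ir{\ideal{P}}$ transports along this bijection to the relation $\leftrightarrow$ on $\ir{P}$, so the hypotheses of the lemma become directly available at the level of irreducibles of $\ideal{P}$.

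Given $\principal{i} \in \ir{\ideal{P}}$ and consistent $\mathcal{X} \subseteq \ideal{P}$ with $\principal{i} \subseteq \bigsqcup \mathcal{X}$, compactness of $\principal{i}$ combined with the description of joins in $\ideal{P}$ yields a finite selection $d_1, \dots, d_n \in P$ with $d_k \in I_k$ for some $I_k \in \mathcal{X}$ and $i \sqsubseteq d_1 \sqcup \cdots \sqcup d_n$. I would then induct on $n$: the base case $n \le 2$ is exactly the or-enabling hypothesis, yielding $i' \in \ir{P}$ with $i \leftrightarrow i'$ and $i' \sqsubseteq d_k$ for some $k$, whence $\principal{i'} \subseteq I_k \in \mathcal{X}$; for $n > 2$, group as $(d_1 \sqcup \cdots \sqcup d_{n-1}) \sqcup d_n$, apply the base case to obtain an interchangeable $i'$ below one of the two summands, and recurse on the smaller sum if $i'$ lands in it, chaining the interchangeabilities by transitivity of $\leftrightarrow$ on consistent irreducibles. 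The main obstacle is precisely this chaining: the hypothesis only produces \emph{some} interchangeable witness at each step, and we must ensure that a single $\leftrightarrow$-class persists across all iterations. Transitivity on consistent irreducibles is exactly what is assumed, and consistency is preserved throughout because all the intermediate $i'$'s stay bounded above by any upper bound of $\mathcal{X}$.
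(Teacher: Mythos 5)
Your proposal is correct and follows essentially the same route as the paper's proof: Part~1 by iterating the ternary consistency hypothesis to obtain joins of finite pairwise consistent subsets of $P$ (hence coherence of $\ideal{P}$), with algebraicity and the identification of compacts as principal ideals being standard; Part~2 by identifying $\ir{\ideal{P}}$ with $\{\principal{i} \mid i \in \ir{P}\}$, reducing via compactness to a finite join $d_1 \sqcup \cdots \sqcup d_n$, and repeatedly applying the binary or-enabling hypothesis while chaining witnesses through transitivity of $\leftrightarrow$ on consistent irreducibles (all witnesses being consistent since they lie below the common join). The only cosmetic difference is that you peel the join from the back rather than the front.
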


\begin{proof}
  Let $(P,\sqsubseteq)$ be a 
  finitary partial order such that for all $d, d', d' \in P'$,
  if $\{ d, d', d''\}$ is pairwise consistent then
  $d \sqcup d'$ exists and is consistent with $d''$.

  The fact that $\ideal{P}$ is a
  complete algebraic finitary
  partial order with
  $\compact{\ideal{P}} = \{ \principal{d} \mid d \in P \} \simeq P$ is
  a standard result. 
  Moreover, let $X \subseteq \ideal{P}$ pairwise consistent. Consider
  $A = \bigcup \{ I \mid I \in X\}$. Observe that for any finite
  $Y \subseteq A$ there exists $\bigsqcup Y$ in $P$. In fact, let
  $Y = \{ y_1, \ldots, y_n \}$. This means that there are
  $I_1, \ldots, I_n$ such that $y_i \in I_i$ for each
  $i \in \interval{n}$. Since $X$ is pairwise consistent in
  $\ideal{P}$, we deduce that $Y$ is pairwise consistent in $P$.
  Since $y_1, y_2$ are consistent and both are consistent with
  $y_3, \ldots, y_n$, by {hypothesis} there exists $y_1 \sqcup y_2$ and it is
  consistent with $y_3, \ldots, y_n$, i.e.,
  $\{ y_1 \sqcup y_2, y_3, \ldots, y_n \}$ is again pairwise
  consistent. Iterating the reasoning we get the existence of
  $y_1 \sqcup y_2 \sqcup \ldots \sqcup y_n = \bigsqcup Y$, as desired.
  Now, if we define
  $I' = \{ \bigsqcup Y \mid Y \subseteq_{\mathit{fin}} A \}$, then
  $I'$ is an ideal and $I' = \bigsqcup X$.

  \bigskip

  Let us consider the second part. It is easy to see that
  $\ir{\ideal{P}} = \{ \principal{i} \mid i \in \ir{P} \}$.
  Moreover, for $i, i' \in \ir{P}$ we have $i \leftrightarrow i'$
  in $P$ if and only if 
  $\principal{i} \leftrightarrow \principal{i'}$ in $\ideal{P}$.
  This immediately implies that $\ideal{P}$ is {\wi}, because so is
  $P$ by assumption.

  We need to show that, under the hypotheses, if
  $I \in \ir{\ideal{P}}$ and $X \subseteq \ideal{P}$ pairwise
  consistent and $I \subseteq \bigsqcup X$ then there exists
  $I' \leftrightarrow I$ and $A \in X$ such that $I' \subseteq
  A$. Thus let $I = \principal{i}$ for some $i \in \ir{P}$. The fact
  that
  $I = \principal{i} \subseteq \bigsqcup X = \bigsqcup \{
  \principal{d} \mid d \in \bigcup X \}$, since $\principal{i}$ is
  finite, means that
  $\principal{i} \subseteq \principal{d_1} \cup \ldots \cup
  \principal{d_n}$ for some finite subset
  $\{ d_1, \ldots, d_n \} \subseteq \bigcup X$ and thus
  $i \sqsubseteq \bigsqcup \{ d_1, \ldots, d_n \}$. Since
  $i \sqsubseteq d_1 \sqcup \bigsqcup \{ d_2, \ldots, d_n\}$, by the
  hypothesis there is $i_1 \leftrightarrow i$ such that
  $i_1 \sqsubseteq d_1$ or
  $i_1 \sqsubseteq \bigsqcup \{ d_2, \ldots, d_n\} = d_2 \sqcup
  \bigsqcup \{ d_3, \ldots, d_n\}$.
  In the second case, again by the hypotheses, there are two
  possibilities. The first is that there is $i_2 \leftrightarrow i_1$,
  such that $i_2 \sqsubseteq d_2$. Note that, since $\pred{i}$ and
  $\pred{i_2}$ are clearly consistent (they are dominated by
  $\bigsqcup \{ d_1, \ldots, d_n \}$), by property
  (\ref{de:well-interchange:1}) of well-interchangeability
  (Definition~\ref{de:well-interchange}), we get
  $i_2 \leftrightarrow i$.
  The second possibility is that
  ${i_2} \sqsubseteq \bigsqcup \{ d_3, \ldots, d_n\}$, and we can iterate the reasoning. 
  In the end, we get the existence of some
  $i' \leftrightarrow i$ and $j \in \interval{n}$ such that
  $i' \sqsubseteq d_j$. Recalling that $d_j \in \bigcup X$, there is
  ${A} \in X$ such that $d_j \in {A}$, hence
  $\principal{{i'}} \subseteq \principal{d_j} \subseteq
  {A}$. Since $i \leftrightarrow {i'}$ in $P$ implies
    $\principal{i} \leftrightarrow \principal{{i'}}$ in $\ideal{P}$,
    we conclude.
\end{proof}

We finally introduce a category of weak prime domains by defining a notion of morphism.

\begin{definition}[category of weak prime domains]
  \label{de:domain-category}
  Let $D_1$, $D_2$ be weak prime domains.
  A  weak prime domain morphism $f : D_1 \to D_2$ is a total function such that
  for all consistent $X_1 \subseteq D_1$ and $d_1, d_1' \in D_1$
  \begin{enumerate}
  \item if $d_1 \preceq d_1'$ then $f(d_1) \preceq f(d_1')$;
  \item $f(\bigsqcup X_1) = \bigsqcup f(X_1)$;
  \item if $d_1, d_1'$ consistent and
    $d_1 \sqcap d_1' \preceq d_1$ then
    $f(d_1 \sqcap d_1') = f(d_1) \sqcap f(d_1')$;
  \end{enumerate}
  We denote by $\WDom$ the category of weak prime domains
  and their morphisms.
\end{definition}

Compared with the notion of morphism for prime domains in
Definition~\ref{de:pdomain-category} (from~\cite{Win:ES}), we still
require the preservation of $\preceq$ and $\sqcup$ of consistent sets
(conditions (1) and (2)). However, the third condition, i.e.,
preservation of $\sqcap$, is weakened to preservation in some
cases. General preservation of meets is indeed not expected in the
presence of fusions. Consider, e.g., the {\esabbr} in
Example~\ref{ex:event-structure}. Take another {\esabbr} $E' = \{c \}$
with $\emptyset \vdash c$ and the morphism $f : E \to E'$ that forgets
$a$ and $b$, i.e., $f(c) = c$ and $f(a)$, $f(b)$ undefined.
Then $f(\{a,c\}) \sqcap f(\{b,c\}) = \{c\} \sqcap \{c\} = \{ c \} \neq
f(\{a,c\} \sqcap \{b,c\}) = f(\emptyset) = \emptyset$. 
Intuitively, the condition $d_1 \sqcap d_1' \prec d_1$ means that
$d_1'$ includes the computation modelled by $d_1$ apart from a final
step, hence $d_1 \sqcap d_1'$ coincides with $d_1$ when such step is
removed. Since domain morphisms preserve immediate precedence (i.e.,
single steps), also $f(d_1)$ differs from $f(d_1')$ for the
execution of a final step and the meet $f(d_1) \sqcap f(d_1')$ is
$f(d_1)$ without such step, and thus it coincides with
$f(d_1 \sqcap d_1')$.

In general we only have
\begin{center}
  $f(\bigsqcap X_1) \sqsubseteq \bigsqcap f(X_1)$
\end{center}
In fact, for all $x_1 \in X_1$, we have
$\bigsqcap X_1 \sqsubseteq x_1$, hence
$f(\bigsqcap X_1) \sqsubseteq f(x_1)$ and thus
$f(\bigsqcap X_1) \sqsubseteq \bigsqcap f(X_1)$.
Still, when restricted to prime domains, also the converse inequality
holds and our notion of morphism boils down to the original one, i.e.,
the full subcategory of $\WDom$ having prime domains as objects is
$\PDom$.

\begin{theorem}[$\PDom$ as a subcategory of $\WDom$]
    \label{th:PdomFullSub}
  The category of prime domains $\PDom$ is the full
  subcategory of $\WDom$ having prime domains as objects.
\end{theorem}

\begin{proof}
  We just need to show that weak prime domain morphisms preserve meets
  on prime domains, i.e., that if $D_1$, $D_2$ are prime domains and
  $f : D_1 \to D_2$ is a weak prime domain morphism then
  $f(\bigsqcap X_1) = \bigsqcap f(X_1)$ for all $X_1 \subseteq D_1$
  pairwise consistent.
  
  We first show that for $d_1, d_1' \in \compact{D_1}$, consistent, it
  holds that $f(d_1 \sqcap d_1') = f(d_1) \sqcap f(d_1')$. We proceed
  by induction on $k = |\principal{d_1} \cap \pr{D}|$.

  When $k=0$ we have $d_1 = \bot$. Since $f$ preserves joins, we have
  that
  $f(\bot) = f(\bigsqcup \emptyset) = \bigsqcup f(\emptyset) =
  \bigsqcup \emptyset = \bot$.  Hence
  \begin{center}
    $f(d_1 \sqcap d_1') = f(\bot \sqcap d_1') = f(\bot)= \bot = \bot \sqcap
    f(d_1') = f(\bot) \sqcap f(d_1') = f(d_1) \sqcap f(d_1')$.
  \end{center}
  
  Suppose now $k > 0$. We distinguish two subcases. If $d_1$ is not
  prime then, recalling that in a prime domain, primes and
  irreducibles coincide, $d_1$ is not irreducible and thus
  $d_1 = e_1 \sqcup f_1$ with $d_1 \neq e_1, f_1 \neq \bot$. It is
  immediate to see that $|\principal{e_1} \cap \pr{D}| < k$ and
  $|\principal{f_1} \cap \pr{D}| < k$. Moreover, since any prime
  algebraic domain is distributive we have
  $d_1 \sqcap d_1' = (e_1 \sqcup f_1) \sqcap d_1' = (e_1 \sqcap d_1')
  \sqcup (f_1 \sqcap d_1')$. Summing up
  \begin{align*}
    & f(d_1 \sqcap d_1') = \\[1mm]
    & \quad =f((e_1 \sqcap d_1') \sqcup (f_1 \sqcap d_1'))\\
    & \qquad \mbox{[Preservation of $\sqcup$]}\\[1mm]
    & \quad =f(e_1 \sqcap d_1') \sqcup f(f_1 \sqcap d_1')\\
    & \qquad \mbox{[Inductive hypothesis]}\\[1mm]
    & \quad =(f(e_1) \sqcap f(d_1')) \sqcup (f(f_1) \sqcap f(d_1'))\\
    & \qquad \mbox{[Distributivity]}\\[1mm]
    & \quad =(f(e_1) \sqcup f(f_1)) \sqcap f(d_1')\\
    & \qquad \mbox{[Preservation of $\sqcup$]}\\[1mm]
    & \quad =f(e_1 \sqcup f_1) \sqcap f(d_1') =\\[1mm]
    & \quad =f(d_1) \sqcap f(d_1')
  \end{align*}

  If instead $d_1$ is prime then note that if $d_1 \sqsubseteq d_1'$
  the thesis is immediate: by monotonicity
  $f(d_1) \sqsubseteq f(d_1')$. Thus
  $f(d_1 \sqcap d_1') = f(d_1) = f(d_1)\sqcap f(d_1')$ as
  desired. Therefore, let us assume that $d_1 \not\sqsubseteq
  d_1'$.
  In this case $d_1 \sqcap d_1' = \pred{d_1} \sqcap d_1'$, since the set
  of lower bounds of $\{ d_1, d_1'\}$ and of $\{ \pred{d_1}, d_1' \}$ is
  the same. Observe that
  \begin{equation}
    \label{eq:pad1}
    \pred{d_1} = d_1 \sqcap (\pred{d_1} \sqcup d_1')
  \end{equation}
  In fact, the join exists since $d_1$, $d_1'$ are consistent. Moreover, by distributivity,
  $d_1 \sqcap (\pred{d_1} \sqcup d_1') = (d_1 \sqcap \pred{d_1})
  \sqcup (d_1 \sqcap d_1') = \pred{d_1} \sqcup (\pred{d_1} \sqcap
  d_1') = \pred{d_1}$.
  
  Therefore
  \begin{align*}
    & f(d_1 \sqcap d_1') =\\[1mm]
    & \quad = f(\pred{d_1} \sqcap d_1')\\
    & \qquad \mbox{[Inductive hypothesis]}\\[1mm]
    & \quad =f(\pred{d_1}) \sqcap f(d_1')\\
    & \qquad \mbox{[Using~(\ref{eq:pad1})]}\\[1mm]
    & \quad = f(d_1 \sqcap (\pred{d_1} \sqcup d_1')) \sqcap f(d_1')\\
    & \qquad \mbox{[By Definition~\ref{de:domain-category}(3)]}\\[1mm]
    & \quad = f(d_1) \sqcap f(\pred{d_1} \sqcup d_1')) \sqcap f(d_1')\\
    & \qquad \mbox{[Preservation of $\sqcup$]}\\[1mm]
    & \quad = f(d_1) \sqcap f(d_1')
  \end{align*}
  as desired. This extends to the meet of finite sets of compact elements, by
  associativity of $\sqcap$, and to 
  infinite sets of compacts by observing that, given an infinite set
  $X$, by finitariness we can identify a finite subset $F \subseteq X$
  such that $\bigsqcap X = \bigsqcap F$.
  The last assertion can be proved by induction on
  $k = \min \{ |\principal{d}| : d \in X \}$. In fact, let $d \in X$
  be an element such that $|\principal{d}| = k$. If {$k=1$} then
  $d = \bot$ and thus $\bigsqcap X = \bot = \bigsqcap \{ d \}$, as
  desired. If {$k >1$}, then we distinguish two possibilities. If for
  all $d' \in X$ it holds $d \sqcap d' = d$ then
  $\bigsqcap X = d = \bigsqcap \{d\}$. If instead, there is $d' \in X$
  such that $d \sqcap d' \sqsubset d$ then recall that the meet of
  compact elements is compact and consider
  $X' = X \cup \{ d \sqcap d' \}$. We have that
  $\bigsqcap X = \bigsqcap X'$. Moreover
  $|\principal{d \sqcap d'}| < k$, hence we can apply the inductive
  hypothesis to $X'$ and get a finite subset $F' \subseteq X'$ such
  that $\bigsqcap X' = \bigsqcap F'$. We conclude observing that
  $\bigsqcap X = \bigsqcap X' = \bigsqcap F' = \bigsqcap ((F'
  \setminus \{d \sqcap d'\}) \cup \{ d, d'\})$. Therefore we can take
  $F = (F' \setminus \{d \sqcap d'\}) \cup \{ d, d'\}$ and we
  conclude.
\end{proof}

\subsection{{From Event Structures to Weak Prime Domains}}
\label{ss:es-to-domain}

We show that the set of configurations of an {\esabbr}, ordered by
subset inclusion, is a weak prime domain where the compact elements
are the finite configurations.
Moreover, the
correspondence can be lifted to a functor. We also identify a subclass
of {\esabbr} that we call \emph{connected {\esabbr}} and that are the exact counterpart of
weak prime domains (in the same way as prime {\esabbr} correspond to prime
algebraic domains).

\begin{definition}[configurations of an {event structure}, ordered]
  Let ${E}$ be an {\esabbr}. We define
  $\dom{{E}} = \langle \conf{{E}}, \subseteq \rangle$.
  Given an {\esabbr} morphism $f : {E}_1 \to {E}_2$, its image
  $\dom{f} : \dom{{E}_1} \to \dom{{E}_2}$ is defined as
  $\dom{f}(C_1) =   \{ f(e_1) \mid e_1 \in C_1 \}$.
\end{definition}

We need some technical facts, collected in the following lemma.
Recall that in the setting of unstable {\esabbr} we can have distinct
consistent minimal enablings for an event. The following notation will be useful.

\begin{definition}[connected enablings]
  Let $E$ be an {\esabbr}, $C, C' \in \conf{E}$ and $e \in E$.
  When $C \vdash_0 e$, $C' \vdash_0 e$, and $C \cup C' \cup \{ e \}$
  is consistent, we write $C \conn{e} C'$. We denote by $\conn{e}^*$ the
  transitive closure of the relation $\conn{e}$.
\end{definition}

Note that, whenever $C \vdash_0 e$ and $C' \vdash_0 e$, requiring 
$C \cup C' \cup \{ e \}$ consistent amounts to require  $C \cup C'$ 
consistent, since conflict is binary.

\begin{lemma}[properties of the domain of configurations]
  \label{le:es-to-fusion-domain}
  Let $\langle E, \vdash, Con \rangle$ be an {\esabbr}.
  Then
  \begin{enumerate}

  \item 
    \label{le:es-to-fusion-domain:1}
    $\dom{{E}}$ is a domain,
    $\compact{\dom{{E}}} = \conff{{E}}$, join is union and
    $C \prec C'$ iff {$C' = C \cup \{ e \}$
    for some $e \in E \setminus C$};

  \item
    \label{le:es-to-fusion-domain:2}
    $C \in \conf{{E}}$ is irreducible iff $C = C' \cup \{ e \}$ and
    $C' \vdash_0 e$; in this case we denote $C$ as $\esir{C'}{e}$;

  \item 
    \label{le:es-to-fusion-domain:3}
    for $C \in \conf{{E}}$, we have
    $\ir{C} = \{ \esir{C'}{e'} \mid e' \in C\ \land\ C' \subseteq C\
    \land\ C' \vdash_0 e' \}$; moreover $\pred{\esir{C'}{e'}} = C'$;

  \item 
    \label{le:es-to-fusion-domain:4}
    for $\esir{C_1}{e_1}, \esir{C_2}{e_2} \in \ir{\dom{{E}}}$, we have
    $\esir{C_1}{e_1} \leftrightarrow \esir{C_2}{e_2}$ iff $e = e_1 = e_2$
    and $C_1 \conn{e} C_2$;
    
  \item
    \label{le:es-to-fusion-domain:5}
    $\dom{E}$ is {\wi}.

  \end{enumerate}
\end{lemma}

\begin{proof}
  \begin{enumerate}
  \item
    We first observe that, given a pairwise consistent set of
    configurations $X \subseteq \conf{E}$, the join is the union
    $\bigsqcup X = \bigcup X$. The fact that $\bigcup X$ is a
    configuration, i.e., that it is consistent and secured immediately
    follows from the fact that each $C \in X$ is.

    Let $C \in \conf{E}$ be a configuration.  For every event $e \in E$,
    since $C$ is secured, we can consider a set
    $C_e = \{ e_1, \ldots, e_n \} \subseteq C$ such that $e_n =e$ and
    $\{ e_1, \ldots, e_{k-1} \} \vdash e_k$ for all
    $k \in \interval{n}$. It is immediate to see that
    $C_e \in \conff{E}$ and clearly $C = \bigsqcup_{e \in C} C_e$.
    
    From the above it is almost immediate to conclude that the
    compact elements of $\dom{{E}}$ are the finite configurations
    $\compact{\dom{E}} =\conff{E}$ and that $\dom{{E}}$ is algebraic.
    Moreover, $\dom{E}$ is finitary, since the number of subsets of a
    finite configurations is clearly finite. Hence $\dom{{E}}$ is a
    domain.

    Concerning immediate precedence, let $C, C' \in \conff{E}$. If
    $C' = C \cup \{ e \}$ with $e \not\in C$ then clearly
    $C \prec C'$, since the order is subset inclusion. Conversely,
      if $C \prec C'$ by definition $C \subseteq C'$ and it must be
      $|C' \setminus C| = 1$. In fact, $C \subseteq C'$ and
      $C \neq C'$, hence $C' \setminus C \neq \emptyset$. Let
    $e, e' \in C' \setminus C$. Let us prove that $e=e'$. Since $C'$
    is secured there is a set of events
    $D = \{ e_1, \ldots, e_n \} \subseteq C'$, such that $e_n = e$ and
    $\{ e_1, \ldots, e_{k-1} \} \vdash e_k$ for all
    $k \in \interval{n}$. Now, if $e' \not\in D$, observe that
    $C \cup D$ is a configuration and $C \subset C \cup D \subset C'$,
    contradicting $C \prec C'$.
    Assume that, instead, $e' \in D$. If $e' = e_k$ for $k< n$ we would
    have that $D' = \{ e_1, \ldots, e_k \}$ is a configuration and we
    could replace $D$ by $D'$ in the contradiction above. Hence it
    must be $e =e'$, as desired.

    \medskip

  \item Let $C \in \conf{{E}}$ be a configuration and assume
    that $C = C' \cup \{ e \}$ with $C' \vdash_0 e$. Then $C$ is a finite configuration, and thus a compact element. Moreover, if
    $C = C_1 \cup C_2$ for $C_1, C_2 \in \conf{{E}}$, 
    then $e$ must occur either in $C_1$ or in
    $C_2$. If $e \in C_1$, since $C_1$ is secured,
    there exists $C_1' \subseteq C_1 \setminus \{ e \}$ such that
    $C_1' \vdash e$. Hence, by monotonicity of enabling,
    $C_1 \setminus \{ e \} \vdash e$. Since $C' \vdash_0 e$ and
    $C_1 \setminus \{ e \} \subseteq C'$ it follows that
    $C_1 \setminus \{ e \} = C'$ and thus $C_1 = C$.
    Therefore, by Lemma~\ref{le:comp-irr}, $C$ is an irreducible.
    
    Vice versa, let $C \in \conf{{E}}$ be an irreducible. It is
    compact, hence finite. Hence we can consider a secured execution
    $\langle e_1, \ldots, e_{n} \rangle$
    of configuration $C$. Note that for any $k \in \interval{n-1}$ it
    must be $\{ e_1, \ldots, e_{k-1} \} \not\vdash e_n$.
    Otherwise, if it were $\{ e_1, \ldots, e_{k-1} \} \vdash e_n$ for
    some $k \in \interval{n-1}$, we would have that $C' = \{
    e_1, \ldots, e_{k}, e_n \}$ and $C'' = \{ e_1, \ldots, e_{n-1} \}$
    are two proper subconfigurations of $C$ such that $C = C'
    \cup C''$, violating the fact that $C$ is irreducible. But this
    means exactly that $\{ e_1, \ldots, e_{n-1} \} \vdash_0 e_n$, as
    desired.

  \medskip
  
\item Immediate.

  \medskip

\item Let $I_j = \esir{C_j}{e_j} \in \ir{\dom{{E}}}$ for
  $j \in \{1,2\}$ be irreducibles. Assume $I_1 \leftrightarrow I_2$.
  By Lemma~\ref{le:eq-char}(\ref{le:eq-char:4}), observing that
  $\pred{I_j} = C_j$, we must have $I_1 \cup C_2 = C_1 \cup I_2$,
  namely $C_1 \cup \{e_1\} \cup C_2 = C_1 \cup C_2 \cup \{ e_2\}$,
  from which we conclude that it must be $e_1 = e_2$, i.e., as desired
  $I_j = \esir{C_j}{e}$, where $e = e_1 = e_2$ for $j \in \{ 1,2 \}$.
  Additionally, $I_1$ and $I_2$ are consistent, by definition of
  $\leftrightarrow$,
  meaning that $C_1 \conn{e} C_2$.

  For the converse, consider two irreducibles $I_1 = \esir{C_1}{e}$
  and $I_2 = \esir{C_2}{e}$, such that $C_1 \conn{e} C_2$. Hence
  $C_1 \vdash_0 e$, $C_2 \vdash_0 e$ and
  $C = C_1 \cup C_2 \cup \{ e \}$ is consistent. Since
  $I_1, I_2 \subseteq C$, they are consistent in
  $\dom{{E}}$. Moreover, $\pred{I_1} = C_1$, $\pred{I_2} = C_2$
  and $I_1 \cup C_2 = I_2 \cup C_1 = C$.
  Hence by
  Lemma~\ref{le:eq-char}(\ref{le:eq-char:4}) we have
  $I_1 \leftrightarrow I_2$, as desired.

\item 
  We have to show that $\dom{E}$ satisfies the conditions of Definition~\ref{de:well-interchange}. Concerning condition (\ref{de:well-interchange:1}), let
  $I_1 = \esir{C_1}{e_1}$ and $I_2 = \esir{C_2}{e_2}$ such that
  $I_1 \leftrightarrow^* I_2$ and $\pred{I_1}=C_1, \pred{I_2}=C_2$ consistent. From
  $I_1 \leftrightarrow^* I_2$, by the above result, we deduce $e_1 = e_2$.
  Since, $C_1, C_2$ consistent, we deduce $C_1 \conn{e} C_2$ and thus,
  again by the same result, $I_1 \leftrightarrow I_2$.

  As for Condition~(\ref{de:well-interchange:2}), consider the
  irreducibles $I$, $I'$, $J$ and $J'$ such that
  $I \leftrightarrow^* I'$, $J \leftrightarrow^* J'$, and
  $\{I', J'\}$, $\{\pred{I}, J\}$ and $\{\pred{J}, I\}$
  consistent. From $I \leftrightarrow^* I'$ and
  $J \leftrightarrow^* J'$ we deduce that $I = \esir{C}{e}$,
  $I' = \esir{C'}{e}$, $J = \esir{D}{f}$ and $J' =
  \esir{D'}{f}$. Moreover, we have $\pred{I}=C$ and $\pred{J}=D$,
  hence the hypotheses say $\{C, J\}$ and $\{D, I\}$ consistent. From
  the consistency of $\{I', J'\}$ we deduce that $\{e,f\}$
  consistent. Therefore we have that $I = \esir{C}{e}$ and
  $J = \esir{D}{f}$ are consistent.
\end{enumerate}
\end{proof}

Concerning point 1, observe that the meet in the domain of
configurations is
$C \sqcap C' = \bigcup \{ C'' \in \conf{{E}} \mid C'' \subseteq C\
\land\ C'' \subseteq C'\}$, which is usually smaller than the
intersection. For instance, in Fig.~\ref{fi:running},
$\{a,c\} \sqcap \{b, c\} = \emptyset \neq \{c\}$.
Point 2 says that irreducibles are configurations of the form
$C \cup \{ e \}$ that admits a secured execution in which the event
$e$ appears as the last one and cannot be switched with any other. In
other words, irreducibles are minimal enablings of events.
Point 3 characterises the irreducibles in a configuration.  According
to point 4, two irreducibles are interchangeable when they are
different minimal enablings for the same event.

\begin{proposition}[{the domain of configurations is weak prime}]
  \label{pr:es-to-dom}
  Let ${E}$ be an {\esabbr}.
  Then $\dom{{E}}$
  is a weak prime domain. Moreover, given two {\esabbr} $E_1$ and $E_2$, and  a morphism
  $f : {E}_1 \to {E}_2$, its image
  $\dom{f} : \dom{{E}_1} \to \dom{{E}_2}$ is a weak prime domain morphism.
\end{proposition}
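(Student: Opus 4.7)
The plan splits naturally into two halves: proving that $\dom{E}$ is a weak prime domain, and then checking that $\dom{f}$ is a morphism of such domains.

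For the first half, I would start from Lemma~\ref{le:es-to-fusion-domain}(\ref{le:es-to-fusion-domain:1}), which already gives that $\dom{E}$ is a domain with finite configurations as compact elements and unions as joins. By Proposition~\ref{pr:fusion-domains}, it then suffices to show that every irreducible of $\dom{E}$ is a weak prime. So fix an irreducible, which by Lemma~\ref{le:es-to-fusion-domain}(\ref{le:es-to-fusion-domain:2}) has the form $I = \esir{C'}{e}$ with $C' \vdash_0 e$, and suppose $I \subseteq \bigsqcup X$ for some consistent $X \subseteq \dom{E}$. Since $\bigsqcup X = \bigcup X$, there exists $D \in X$ with $e \in D$. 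Securedness of $D$ yields a minimal enabling $C'' \subseteq D$ with $C'' \vdash_0 e$, so $I'' = \esir{C''}{e}$ is itself an irreducible contained in $D$. Finally $I \cup I'' = C' \cup C'' \cup \{e\} \subseteq \bigsqcup X$, which is a configuration and hence consistent, so $C' \conn{e} C''$; Lemma~\ref{le:es-to-fusion-domain}(\ref{le:es-to-fusion-domain:4}) then gives $I \leftrightarrow I''$, as required.

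For the second half I would first verify that $\dom{f}(C_1) = \{f(e_1) : e_1 \in C_1, f(e_1) \text{ defined}\}$ is indeed a configuration of $E_2$: consistency follows from the first two morphism clauses of Definition~\ref{de:es-morphism} (which prevent conflicting or merged images inside $C_1$), and securedness follows from the clause about minimal enablings applied step by step along a securing sequence of $C_1$. Conditions (1) and (2) of Definition~\ref{de:domain-category} are then essentially immediate: by Lemma~\ref{le:es-to-fusion-domain}(\ref{le:es-to-fusion-domain:1}), $C \preceq C'$ means $C' = C \cup \{e\}$, and consistency of $C'$ combined with the conflict/merge clauses ensures $f(e) \notin \dom{f}(C)$ whenever $f(e)$ is defined, so $\dom{f}(C) \preceq \dom{f}(C')$; and preservation of joins is tautological because $\dom{f}$ is defined pointwise and joins are unions.

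The main obstacle is clause (3), preservation of meets in the restricted case $d_1 \sqcap d_1' \preceq d_1$. Here I would argue as follows. By Lemma~\ref{le:es-to-fusion-domain}(\ref{le:es-to-fusion-domain:1}), $d_1 = (d_1 \sqcap d_1') \cup \{e\}$ for a single event $e$, and $e \notin d_1'$ (otherwise $d_1 \subseteq d_1'$ and the meet would equal $d_1$ itself). Let $C = d_1 \sqcap d_1' = d_1 \setminus \{e\}$, so $C \subseteq d_1'$ and thus $\dom{f}(C) \subseteq \dom{f}(d_1) \cap \dom{f}(d_1')$. If $f(e)$ is undefined then $\dom{f}(d_1) = \dom{f}(C) \subseteq \dom{f}(d_1')$ and the meet is $\dom{f}(C)$. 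If $f(e)$ is defined, I claim $f(e) \notin \dom{f}(d_1')$: indeed, any preimage $e' \in d_1'$ of $f(e)$ would either equal $e$ (contradicting $e \notin d_1'$) or satisfy $e \# e'$ by the morphism clauses, contradicting consistency of $d_1 \cup d_1' = d_1 \sqcup d_1'$. Hence $\dom{f}(d_1) \cap \dom{f}(d_1') = \dom{f}(C)$, and since $\dom{f}(C)$ is itself a configuration below both, it coincides with $\dom{f}(d_1) \sqcap \dom{f}(d_1')$, as required.
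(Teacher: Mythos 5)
Your proposal is correct and follows essentially the same route as the paper: part one reduces to showing irreducibles are weak primes via Proposition~\ref{pr:fusion-domains} and Lemma~\ref{le:es-to-fusion-domain}, finding a second minimal enabling of $e$ inside some element of $X$ and invoking the interchangeability characterisation; part two handles clause (3) by observing that the meet coincides with the intersection when $e \notin d_1'$ and that injectivity of $f$ on the consistent set $d_1 \cup d_1'$ forces $f(e) \notin \dom{f}(d_1')$. The only cosmetic differences are that you explicitly check $\dom{f}(C_1)$ is a configuration and close the meet argument by noting $\dom{f}(C)$ is a configuration below both images, where the paper instead cites the general inequality $f(\bigsqcap X) \sqsubseteq \bigsqcap f(X)$.
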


\begin{proof}
  We know that
  $\dom{{E}}$ is a domain
  (Lemma~\ref{le:es-to-fusion-domain}(\ref{le:es-to-fusion-domain:1}))
  and that it is {\wi}
  (Lemma~\ref{le:es-to-fusion-domain}(\ref{le:es-to-fusion-domain:5})).

  In order to show that $\dom{{E}}$ is a weak prime domain, we
  exploit the characterisation in Proposition~\ref{pr:fusion-domains},
  i.e., we prove that all irreducibles are weak primes.
  Consider an irreducible $I$, which by
  Lemma~\ref{le:es-to-fusion-domain}(\ref{le:es-to-fusion-domain:2})
  is of the shape $I = \esir{C}{e}$ with $C \vdash_0 e$, and suppose that
  $I \subseteq \bigsqcup X$ for some
  $X \subseteq \dom{{E}}$.
  In particular, this means that $e \in \bigsqcup X$ and thus there is
  $C' \in X$ such that $e \in C'$.  In turn, we can consider a minimal
  enabling of $e$ in $C'$, i.e., a minimal $C'' \subseteq C'$ such
  that $C'' \vdash_0 e$, and we have that $I'' = \esir{C''}{e}$ is an
  irreducible $I'' \subseteq C'$. Since $I$ and $I''$ are consistent,
  as they are both included in $\bigsqcup X$, then $C \conn{e} C''$
  and by
  Lemma~\ref{le:es-to-fusion-domain}(\ref{le:es-to-fusion-domain:4})
  $I \leftrightarrow I''$.

  \bigskip
  
  We next prove that given an {\esabbr} morphism
  $f : {E}_1 \to {E}_2$, its image
  $\dom{f} : \dom{{E}_1} \to \dom{{E}_2}$ is a weak prime
  domain morphism.
  \begin{itemize}
  \item $C_1 \preceq C_1'$ implies $\dom{f}(C_1) \preceq \dom{f}(C_1')$\\  
    Since $\dom{f}(C_i) = \{ f(d_i) \mid d_i \in C_i \}$ and by
    Lemma~\ref{le:es-to-fusion-domain}(\ref{le:es-to-fusion-domain:1})
    $C_1 \preceq C_1'$ iff $C_1' = C_1 \cup \{ e_1 \}$ for some event
    $e_1$, the result
    follows immediately.\\

  \item for $X_1 \subseteq \dom{{E}_1}$ consistent, $\dom{f}(\bigsqcup X_1) = \bigsqcup \dom{f}(X_1)$\\
    Since $\dom{f}$ takes the image as set and $\bigsqcup$ on
    consistent sets is union, the result follows.\\

  \item for $C_1, C_1' \in \dom{{E}_1}$ consistent such that
    $C_1 \sqcap C_1' \prec C_1$ it holds
    $f(C_1 \sqcap C_1') = f(C_1) \sqcap f(C_1')$\\
    Since $C_1 \sqcap C_1' \prec C_1$, by
      Lemma~\ref{le:es-to-fusion-domain}(\ref{le:es-to-fusion-domain:1})
      we have that $C_ 1 = (C_1 \sqcap C_1') \cup \{ e_1 \}$ for some
      $e_1 \not\in C_1 \sqcap C_1'$.
    Clearly $e_1 \not\in C_1'$, otherwise we would have $C_1 \subseteq C_1'$
    and thus $C_1 \sqcap C_1' = C_1$. Therefore in this case, the meet
    coincides with intersection,
    $C_1 \sqcap C_1' = C_1 \cap C_1' = C_1 \setminus \{ e_1 \}$.
    Since for the events in $C_1 \cup C_1'$, by definition of event
    structure morphism, $f$ is injective, we have that
    $f(C_1) \cap f(C_1') = f(C_1 \cap C_1')$.
    As a general fact,
    $f(C_1) \sqcap f(C_1') \subseteq f(C_1) \cap f(C_1')$. Therefore,
    putting things together, we conclude
    \begin{center}
      $f(C_1) \sqcap f(C_1') \subseteq f(C_1) \cap f(C_1') = f(C_1
      \cap C_1') = f(C_1 \sqcap C_1')$
    \end{center}
    The converse inequality holds in any domain (as observed after
    Definition~\ref{de:domain-category}) and thus the result follows.
  \end{itemize}
\end{proof}

A special role is played by the subclass of \emph{connected} {\esabbr}
which will be shown to be exact counterpart of weak prime domains.

\begin{definition}[connected {event structure}]
  \label{de:connected-es}
  An {\esabbr} is \emph{connected} if whenever
  $C \vdash_0 e$ and $C' \vdash_0 e$ then $C \conn{e}^* C'$.  We
  denote by $\ces$ the full subcategory of $\es$ having connected {\esabbr}
  as objects.
\end{definition}

In words, different minimal enablings for the same event must be
pairwise connected by a chain of consistency. Equivalently, for each
event $e$ the set of minimal enablings, say
$M_e = \{ C \mid C \vdash_0 e \}$, endowed with the relation
$\conn{e}$ is a connected graph. Intuitively, as discussed in more
detail below, if $M_e$ were not connected, then we could split event
$e$ into different instances, one for each connected component,
without changing the associated domain.

For instance, the {\esabbr} in
Example~\ref{ex:event-structure} is a connected {\esabbr}. Only event $c$ has
two minimal enablings $\{ a \} \vdash_0 c$ and $\{ b \} \vdash_0 c$
and obviously $\{ a \} \conn{c} \{ b \}$.
Clearly, prime {\esabbr} are also connected {\esabbr}. More precisely, we have the following.

\begin{proposition}[{primality = stability + connectedness}]
  \label{le:constapri}
  Let ${E}$ be an {\esabbr}. Then $E$ is prime iff it is stable and connected.
\end{proposition}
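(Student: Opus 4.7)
The forward direction is the easier one. Stability follows at once from the definitions, since the hypothesis of primality ($X\vdash e$ and $Y\vdash e$) is strictly weaker than that of stability (the latter additionally demands $X\cup Y\cup\{e\}$ consistent), while the two conclusions coincide. To obtain connectedness, I would show that a prime {\esabbr} has a unique minimal enabling configuration per event: given $C, C' \vdash_0 e$, primality yields $C \cap C' \vdash e$; combined with the fact that a bounded intersection of configurations of a stable {\esabbr} is again a configuration, and with the minimality of $C$ and $C'$, this forces $C\cap C' = C = C'$. Once minimal enablings coincide, the connectedness condition is trivially satisfied.

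For the backward direction, the core of the argument is again to derive uniqueness of minimal enabling configurations, now from stability and connectedness. Given $C, C' \vdash_0 e$, connectedness supplies a chain $C = Z_0 \conn{e} Z_1 \conn{e} \cdots \conn{e} Z_n = C'$ in which each consecutive pair $Z_i, Z_{i+1}$ has $Z_i \cup Z_{i+1} \cup \{e\}$ consistent. Stability applied to each step gives $Z_i \cap Z_{i+1} \vdash e$; the minimality of $Z_i$ and $Z_{i+1}$ as configurations enabling $e$ (together with the intersection-of-configurations step mentioned above) then forces $Z_i = Z_{i+1}$, and iterating gives $C = C'$. Denoting the unique minimal enabling of $e$ by $C_e$, for any $X \vdash e$ one can locate a subconfiguration $X_0 \subseteq X$ with $X_0 \vdash_0 e$ (taking a minimal configuration enabling $e$ inside $X$), whence $X_0 = C_e \subseteq X$. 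Thus $X \vdash e$ and $Y \vdash e$ jointly give $C_e \subseteq X \cap Y$, and primality follows by monotonicity of $\vdash$.

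The main obstacle I foresee is precisely the intersection-is-a-configuration step used repeatedly in the minimality arguments: verifying that when $Z_i \cup Z_{i+1} \cup \{e\}$ is consistent, $Z_i \cap Z_{i+1}$ is itself a configuration, so that we can legitimately invoke the minimality of $Z_i$ and $Z_{i+1}$ among configurations enabling $e$. This is the classical stable-family closure property, which in the present binary-conflict, live setting should follow by a straightforward induction on the securing sequences, using the stability axiom to reconcile the two witnesses. A secondary subtlety is that one must be sure that an arbitrary finite $X$ with $X \vdash e$ actually admits a subconfiguration enabling $e$; this is a standard consequence of working with live {\esabbr} where enabling is generated by secured executions.
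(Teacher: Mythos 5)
Your proof is correct and follows essentially the same route as the paper's: both reduce primality to uniqueness of the minimal enabling configuration, with the forward direction reading stability off the definitions and the backward direction collapsing the connectedness chain $C \conn{e} C_1 \conn{e} \cdots \conn{e} C'$ link by link via stability and minimality. The paper's argument is simply a terser version of yours, leaving implicit exactly the two auxiliary facts you flag (that intersections of bounded configurations of a stable {\esabbr} are again configurations, and that any set enabling $e$ contains a minimal enabling configuration of $e$), on both of which it also silently relies.
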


\begin{proof}
  The fact that a prime {\esabbr} is stable and connected follows immediately
  from the definitions. Conversely, let ${E}$ be a stable and
  connected {\esabbr}. We show that $E$ is prime, i.e., each $e \in E$ has a
  unique minimal enabling. Let $C, C' \in \conf{E}$ be minimal
  enablings for $e$, i.e., $C \vdash_0 e$ and $C' \vdash_0 e$. Since
  $E$ is connected $C \conn{e}^* C'$. Let
  $C \conn{e} C_1 \conn{e} \ldots \conn{e} C_n \conn{e} C'$. Then
  by stability we get that $C=C_1=\ldots = C_n =C'$. 
\end{proof}

The defining property of connected {\esabbr} allows one to recognise that two
minimal enablings are relative to the same event by only looking at
the partially ordered structure and thus, as we will see, from the
domain of configurations of a connected {\esabbr} we can recover an {\esabbr}
isomorphic to the original one and vice versa (see Theorem~\ref{th:es-dom-equivalence}). 
In general, this is not possible.
For instance, consider the {\esabbr} ${E}'$ with events
$E' = \{ a, b, c \}$, and where $a \# b$ and the minimal enablings are
again $\emptyset \vdash_0 a$, $\emptyset \vdash_0 b$,
$\{ a\} \vdash_0 c$, and $\{ b\} \vdash_0 c$. Namely, event $c$ has two
minimal enablings, but differently from what happens in the
running example, these are not consistent, hence
$ \{ a, b \} \not\vdash c$. The resulting domain of configurations is depicted on
the left of Fig.~\ref{fi:non-fusion}. Intuitively, it is not possible
to recognise that $\{ a,c\}$ and $\{ b, c \}$ are different minimal enablings
of the same event. In fact, we would get an isomorphic
domain of configurations by considering the {\esabbr} ${E}''$ with events
$E'' = \{ a, b, c_1, c_2 \}$ such that $a \# b$ and the minimal enablings
are again $\emptyset \vdash_0 a$, $\emptyset \vdash_0 b$,
$\{ a\} \vdash_0 c_1$, and $\{ b\} \vdash_0 c_2$.

\begin{figure}
  \hfill
  \begin{tikzpicture}[node distance=5mm, >=stealth',x=10mm,y=6mm]
    \node at (1,0) (Gs)  {$\emptyset$};
    \node at (0,1) (Ga)  {$\{a\}$};
    \node at (2,1) (Gb)  {$\{b\}$};
    \node at (0,2.7) (Gac)  {$\{a,c\}$};
    \node at (2,2.7) (Gbc)  {$\{b,c\}$};
    \draw [->] (Gs) -- (Ga);
    \draw [->] (Gs) -- (Gb);
    \draw [->] (Ga) -- (Gac);
    \draw [->] (Gb) -- (Gbc);
  \end{tikzpicture}
  \hfill
  \begin{tikzpicture}[node distance=5mm, >=stealth',x=10mm,y=6mm]
    \node at (1,0) (Gs)  {$\emptyset$};
    \node at (0,1) (Ga)  {$\{a\}$};
    \node at (2,1) (Gb)  {$\{b\}$};
    \node at (0,2.7) (Gac)  {$\{a,c_1\}$};
    \node at (2,2.7) (Gbc)  {$\{b,c_2\}$};
    \draw [->] (Gs) -- (Ga);
    \draw [->] (Gs) -- (Gb);
    \draw [->] (Ga) -- (Gac);
    \draw [->] (Gb) -- (Gbc);
  \end{tikzpicture}
  \hfill$\mbox{}$
  \caption{Non-connected {\esabbr} do not uniquely determine a domain.}
  \label{fi:non-fusion}
\end{figure}

\subsection{From {Weak Prime Domains} to {Connected} Event Structures}
\label{ss:domain-to-es}
We show how to get an {\esabbr} from a weak prime domain. As {anticipated},
events are equivalence classes of irreducibles, where the equivalence
is (the transitive closure of) interchangeability.

In order to properly relate domains to the corresponding {\esabbr} we need to prove some properties of irreducibles and of the interchangeability relation in weak prime domains.

Domains are irreducible algebraic
(see Proposition~\ref{pr:domains-irr-alg}), hence any element is
determined by the irreducibles under it. The difference between two
elements is thus somehow captured by the irreducibles that are under
one element and not under the other.
This motivates the following definition.

\begin{definition}[irreducible difference]
  Let $D$ be a domain and $d, d' \in \compact{D}$ such that
  $d \sqsubseteq d'$. Then we define
  $\diff{d'}{d} = \ir{d'} \setminus \ir{d}$.
\end{definition}

The immediate precedence relation intuitively relates domain elements
corresponding to configurations that differ for the
execution of a single event.
In order to formalise this fact we first need a preliminary technical lemma.

\begin{lemma}[immediate precedence and irreducibles/1]
  \label{le:prec-irr-a}
  Let $D$ be a weak prime domain, $d \in \compact{D}$, and $i \in \ir{D}$
  such that $d$, $i$ are consistent and $\pred{i} \sqsubseteq d$. Then
  \begin{enumerate}
  \item 
    \label{le:prec-irr-a:1}
    for all $i' \in \diff{d \sqcup i}{d}$ minimal, it holds
    $i \leftrightarrow i'$;
  \item 
    \label{le:prec-irr-a:2}
    $d \preceq d \sqcup i$.
  \end{enumerate}
\end{lemma}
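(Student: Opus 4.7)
The key tool is the weak prime property applied to the irreducibles of $d \sqcup i$, combined with the characterisation of interchangeability in Lemma~\ref{le:eq-char}(\ref{le:eq-char:4}).

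For part (\ref{le:prec-irr-a:1}), if $i \sqsubseteq d$ then $d \sqcup i = d$ and $\diff{d \sqcup i}{d} = \emptyset$, making the statement vacuous; so assume $i \not\sqsubseteq d$ and note that $i$ itself lies in $\diff{d \sqcup i}{d}$. Fix a minimal $i' \in \diff{d \sqcup i}{d}$; by minimality $\pred{i'} \sqsubseteq d$. Since $D$ is weak prime, $i'$ is a weak prime, and from $i' \sqsubseteq \bigsqcup \{d,i\}$ we obtain some $i'' \in \ir{D}$ with $i'' \leftrightarrow i'$ and either $i'' \sqsubseteq d$ or $i'' \sqsubseteq i$. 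I would rule out $i'' \sqsubseteq d$ as follows: by Lemma~\ref{le:eq-char}(\ref{le:eq-char:4}), $i'' \sqcup \pred{i'} = \pred{i''} \sqcup i'$, and the left-hand side lies below $d$ (using $i'' \sqsubseteq d$ and $\pred{i'} \sqsubseteq d$), whereas the right-hand side sits above $i'$, contradicting $i' \notin \ir{d}$. Hence $i'' \sqsubseteq i$, so by irreducibility of $i$ either $i'' = i$ or $i'' \sqsubseteq \pred{i} \sqsubseteq d$; the latter case has just been excluded, so $i'' = i$ and $i \leftrightarrow i'$.

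For part (\ref{le:prec-irr-a:2}), suppose for contradiction that there exists $z$ with $d \sqsubset z \sqsubset d \sqcup i$. Then $\diff{z}{d} \neq \emptyset$ and any minimal element $i'$ of this set is also a minimal element of $\diff{d \sqcup i}{d}$, since $\pred{i'} \sqsubseteq d$ and an even smaller element of $\diff{d \sqcup i}{d}$ would force $i'$ itself not to be minimal there. By part (\ref{le:prec-irr-a:1}), $i \leftrightarrow i'$, so Lemma~\ref{le:eq-char}(\ref{le:eq-char:4}) gives $\pred{i} \sqcup i' = i \sqcup \pred{i'}$. Joining both sides with $d$ and using $\pred{i},\pred{i'} \sqsubseteq d$ yields $d \sqcup i' = d \sqcup i$. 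But $d \sqcup i' \sqsubseteq z \sqsubset d \sqcup i$, a contradiction.

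The only delicate step is the case analysis in part (\ref{le:prec-irr-a:1}) where one eliminates the possibility $i'' \sqsubseteq d$; this is where minimality of $i'$ (hence $\pred{i'} \sqsubseteq d$) is essential, and where Lemma~\ref{le:eq-char}(\ref{le:eq-char:4}) does the real work by converting an equality of joins into a statement about membership in $\ir{d}$. Part (\ref{le:prec-irr-a:2}) is then a short consequence, its clean derivation being exactly the payoff of having phrased part (\ref{le:prec-irr-a:1}) in terms of \emph{minimal} irreducibles in the difference.
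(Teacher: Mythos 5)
Your proof is correct and follows essentially the same route as the paper's: part (1) applies the weak-prime property to a minimal $i' \in \diff{d \sqcup i}{d}$ (whose minimality gives $\pred{i'} \sqsubseteq d$) and excludes the case $i'' \sqsubseteq d$ via Lemma~\ref{le:eq-char} --- you invoke characterisation (\ref{le:eq-char:4}) where the paper invokes (\ref{le:eq-char:3}), an immaterial difference --- before concluding $i''=i$ from irreducibility of $i$. Part (2) likewise matches the paper's argument, deriving immediate precedence by showing that any intermediate element would already contain a minimal element of the difference and hence, by part (1) and Lemma~\ref{le:eq-char}(\ref{le:eq-char:4}), absorb all of $d \sqcup i$.
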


\begin{proof}
  \begin{enumerate}
  \item Clearly, if $d = d \sqcup i$ {then $\diff{d \sqcup i}{d} = \emptyset$ and 
   the property trivially
    holds.} Assume $d \neq d \sqcup i$ and take
    $i' \in \diff{d \sqcup i}{d}$ minimal. Note that minimality
    implies that $\pred{i'} \sqsubseteq d$. In fact, for all
    {$i_1' \in \ir{\pred{i'}}$} we have
    $i_1' \sqsubset i' \sqsubseteq d \sqcup i$. Hence
    $i_1' \sqsubseteq d$, otherwise ${i_1'} \in \diff{d \sqcup i}{d}$,
    violating minimality of $i'$. Therefore
    $\pred{i'} = \bigsqcup \ir{\pred{i'}} \sqsubseteq d$.

    Now, from $i' \sqsubseteq d \sqcup i$, since $D$ is a weak prime
    domain and thus irreducibles are weak primes, there must be
    $i'' \in \ir{D}$, $i'' \leftrightarrow i'$ such that
    $i'' \sqsubseteq d$ or $i'' \sqsubseteq i$. We first note that it
    cannot be $i'' \sqsubseteq d$, otherwise
    $d = d \sqcup i'' = d \sqcup i'$, the last equality motivated by
    Lemma~\ref{le:eq-char}(\ref{le:eq-char:3}), {which implies that 
    $i' \sqsubseteq d$,} contradicting the hypothesis. 
    {Hence it must be $i'' \sqsubseteq i$, 
    which by Lemma~\ref{le:unique-pred} means that either $i'' = i$ or $i'' \sqsubseteq \pred{i}$. Since $\pred{i} \sqsubseteq d$ by hypothesis, the latter 
    case would contradict $i'' \not \sqsubseteq d$, hence $i'' = i$ which means
    that $i' \leftrightarrow i$, as desired.}

  \item {Let us assume  that $d \not = d \sqcup i$ (otherwise the property is trivial), and} consider $d'$ such that $d \prec d' \sqsubseteq d \sqcup i$:
   we prove that $d' = d \sqcup i$. Since $d \prec d'$, hence
    $d \neq d'$, we know that $\diff{d'}{d}$ is not empty. Take a
    minimal $i' \in \diff{d'}{d}$. Thus $i'$ is minimal also in
    $\diff{d \sqcup i}{d}$, and thus, by point~(\ref{le:prec-irr-a:1}),
    $i \leftrightarrow i'$. By minimality of $i'$ we deduce also that
    $\pred{i'} \sqsubseteq d$. Since also $\pred{i} \sqsubseteq d$ by
    hypothesis, using Lemma~\ref{le:eq-char}(\ref{le:eq-char:3}), we have
    $d \sqcup i = d \sqcup i'$.  Observing that
    $d \sqcup i' \sqsubseteq d' \sqsubseteq d \sqcup i$ we conclude
    that $d' = d \sqcup i$, as desired.
  \end{enumerate}
  
\end{proof}

We can now  show that whenever $d \prec d'$ the
irreducible difference of $d'$ and $d$ consists of a set of
irreducibles which are pairwise interchangeble, hence, intuitively corresponding to the same event.

\begin{lemma}[immediate precedence and irreducibles/2]
  \label{le:prec-irr-b}
  Let $D$ be a weak prime domain and $d, d' \in D$ such that $d \preceq d'$.
  Then for all $i,i' \in \diff{d'}{d}$ 
  \begin{enumerate}
  \item  
    \label{le:prec-irr-b:1}
    $d' = d \sqcup i$;

  \item  
    \label{le:prec-irr-b:1bis}
    if $i \sqsubseteq i'$ then $i=i'$;

  \item  
    \label{le:prec-irr-b:3}
    $i \leftrightarrow i'$.
  \end{enumerate}
\end{lemma}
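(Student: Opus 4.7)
The plan is to address (1), then establish an auxiliary claim that every $i\in\diff{d'}{d}$ has $\pred{i}\sqsubseteq d$, and finally derive (2) and (3) from this claim.

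For (1), observe that since $i\sqsubseteq d'$ and $d\sqsubseteq d'$, the join $d\sqcup i$ exists and satisfies $d\sqsubseteq d\sqcup i\sqsubseteq d'$. As $d\prec d'$, we have $d\sqcup i\in\{d,d'\}$. The option $d\sqcup i=d$ would mean $i\sqsubseteq d$, contradicting $i\in\ir{d'}\setminus\ir{d}$. Hence $d\sqcup i=d'$.

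The key technical step is the auxiliary claim that $\pred{i}\sqsubseteq d$ for every $i\in\diff{d'}{d}$. Assume by contradiction that $\pred{i}\not\sqsubseteq d$. Then $d\sqcup\pred{i}\neq d$, while $d\sqcup\pred{i}\sqsubseteq d\sqcup i=d'$ by (1), so $d\prec d'$ forces $d\sqcup\pred{i}=d'$. Applying the weak prime property of $i$ to the consistent set $X=\{d,\pred{i}\}$, there exists $i''\in\ir{D}$ with $i\leftrightarrow i''$ and either $i''\sqsubseteq d$ or $i''\sqsubseteq\pred{i}$. In the latter case, $i''\sqsubseteq i$ together with $i\leftrightarrow i''$ forces $i=i''$ by Lemma~\ref{le:inter-cons-ord}, hence $i\sqsubseteq\pred{i}$, absurd. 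In the former case, Lemma~\ref{le:eq-char}(\ref{le:eq-char:4}) gives $i\sqcup\pred{i''}=\pred{i}\sqcup i''$. Since $\pred{i''}\sqsubseteq i''\sqsubseteq d$, the left-hand side is $\sqsubseteq d\sqcup i=d'$; one then applies weak primality of $i$ once more, now to the consistent pair $\{\pred{i},i''\}$, and uses Lemma~\ref{le:transitive} (transitivity of $\leftrightarrow$ on consistent irreducibles) to rule out both resulting subcases, ultimately forcing $i\sqsubseteq d$ — the desired contradiction to $i\in\diff{d'}{d}$.

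With the auxiliary claim in hand, (2) is immediate: if $i\sqsubseteq i'$ with $i\neq i'$, then $i\sqsubseteq\pred{i'}\sqsubseteq d$ by the claim, contradicting $i\not\sqsubseteq d$. Thus $\diff{d'}{d}$ is an antichain. For (3), fix any $i\in\diff{d'}{d}$; by the claim $\pred{i}\sqsubseteq d$, and $d\sqcup i=d'$ by (1). Since (2) shows every element of $\diff{d'}{d}=\diff{d\sqcup i}{d}$ is minimal, Lemma~\ref{le:prec-irr-a}(\ref{le:prec-irr-a:1}) directly yields $i\leftrightarrow i'$ for every $i'\in\diff{d'}{d}$.

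The main obstacle is the second case of the auxiliary claim, where one has $i\leftrightarrow i''$ and $i''\sqsubseteq d$ but no a priori bound of the form $\pred{i}\sqsubseteq d$ to feed into Lemma~\ref{le:eq-char}(\ref{le:eq-char:3}). Getting the contradiction to close out requires iterating the weak prime property along the interchangeability class and exploiting transitivity of $\leftrightarrow$ on consistent irreducibles to transfer $i''\sqsubseteq d$ back into a statement about $i$ itself; this is where the specific structure of weak prime domains (as opposed to general unstable {\esabbr}s) is genuinely used.
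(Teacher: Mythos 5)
Your proof of part~(1) is correct and coincides with the paper's. The problem is the auxiliary claim that every $i \in \diff{d'}{d}$ satisfies $\pred{i} \sqsubseteq d$: it is false, and the difficulty you flag at the end as ``the main obstacle'' is the symptom, not a technical wrinkle. Take the domain of the running example (Example~\ref{ex:event-structure}) and set $d = \{b,c\}$, $d' = \{a,b,c\}$, so that $d \prec d'$. Then $\diff{d'}{d} = \{\{a\},\{a,c\}\}$, and for $i = \{a,c\}$ one has $\pred{i} = \{a\} \not\sqsubseteq \{b,c\} = d$. Tracing your argument on this instance: weak primality of $i$ applied to $\{d,\pred{i}\}$ yields the witness $i'' = \{b,c\}$, which is interchangeable with $\{a,c\}$ and satisfies $i'' \sqsubseteq d$; in your second application the only available witness is $i''' = i''$ (exactly as forced by Lemma~\ref{le:not-eq-below}), and no contradiction arises --- correctly so, since nothing is contradictory about this configuration. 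Iterating along the interchangeability class cannot help: the class of $\{a,c\}$ is $\{\{a,c\},\{b,c\}\}$ and $\{b,c\} \sqsubseteq d$ simply holds.

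Since you derive (2) and (3) from the auxiliary claim, they inherit the problem --- and in fact the same instance falsifies them as literally stated: $\{a\},\{a,c\} \in \diff{d'}{d}$ with $\{a\} \sqsubseteq \{a,c\}$, $\{a\} \neq \{a,c\}$, and $\{a\} \not\leftrightarrow \{a,c\}$ (they are minimal histories of distinct events). The point is that $\diff{d'}{d}$ contains, besides the mutually interchangeable minimal histories of the event added in the step $d \prec d'$, possibly also new \emph{non-minimal} irreducibles that are alternative histories of events already present in $d$. The paper's own proof of (2) stumbles on the same spot: the asserted dichotomy ``$i' \sqsubseteq d \sqcup i$ and $i'$ irreducible imply $i' \sqsubseteq i$ or $i' \sqsubseteq d$'' is primality, not irreducibility, and fails here. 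The statement that is actually true, and that the subsequent development needs, is the one carrying the minimality restriction already present in Lemma~\ref{le:prec-irr-a}(\ref{le:prec-irr-a:1}): conclusions (2) and (3) hold for the \emph{minimal} elements of $\diff{d'}{d}$ (for those, $\pred{i} \sqsubseteq d$ does hold and your derivation of (2) and (3) from it is fine). So the key step of your proposal genuinely fails, but the failure points at an issue with the statement itself rather than at a fixable slip in your argument.
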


\begin{proof}
If $ d = d'$ all properties hold trivially.
\begin{enumerate}
\item[\ref{le:prec-irr-b:1})] Let $i \in \diff{d'}{d}$. Then
  $d \sqsubseteq d \sqcup i \sqsubseteq d'$. It follows that either
  $d \sqcup i = d$ or $d \sqcup i = d'$. The first possibility can be
  excluded for the fact that it would imply $i \sqsubseteq d$, while we
  know that $i \not\in \ir{d}$. Hence we get the thesis.

\item[\ref{le:prec-irr-b:1bis})] Let $i, i' \in \diff{d'}{d}$, with
  $i \sqsubseteq i'$. Let us first assume $i$ minimal  in
  $\diff{d'}{d}$, hence $\pred{i} \sqsubseteq d$. Then
  $i' \sqsubseteq d' = d \sqcup i$. Since $i'$ is a weak prime, there
  exists $i'' \in \ir{D}$ such that {$i' \leftrightarrow i''$} 
  and either $i'' \sqsubseteq i$ or $i'' \sqsubseteq d$. The second
  possibility is excluded. In fact, if $i'' \sqsubseteq d$, then we
  would have $\pred{i}, \pred{i''} \sqsubseteq d$ and thus, by
  Lemma~\ref{le:eq-char}(\ref{le:eq-char:3}),
  $d' = d \sqcup i = d \sqcup i'' = d$, contradicting $d \neq
  d'$. Hence it must be $i'' \sqsubseteq i$. Since
  $i \sqsubseteq i'$, by transitivity $i'' \sqsubseteq i'$ and since
  {$i' \leftrightarrow i''$}, 
  by Lemma~\ref{le:inter-ord},  $i''=i'$ and thus $i''=i=i'$.

  If instead, $i$ is not minimal in $\diff{d'}{d}$, take
  $i_1 \sqsubseteq i$ minimal. By the argument above, we have that
  $i_1 \leftrightarrow i'$, and thus, by Lemma~\ref{le:inter-ord},
  $i_1=i'$. Recalling that $i_1 \sqsubseteq i \sqsubseteq i'$ we
  conclude $i=i'$, as desired.

\item[\ref{le:prec-irr-b:3})] Let $i, i' \in \diff{d'}{d}$ be
  irreducibles. By (\ref{le:prec-irr-b:1}) we have $d' = d \sqcup i$,
  hence $i' \in \diff{d \sqcup i}{d}$. By (\ref{le:prec-irr-b:1bis})
  $i'$ is minimal in $\diff{d \sqcup i}{d}$. Therefore, by
  Lemma~\ref{le:prec-irr-a}(\ref{le:prec-irr-a:1}), we conclude
  $i \leftrightarrow i'$ .
\end{enumerate}
\end{proof}

We next show another technical result, i.e., that chains of immediate
precedence are generated in essentially a unique way by sequences of
irreducibles. Given a domain $D$ and an irreducible $i \in \ir{D}$, we
denote by $\eqclassir{i}$ the corresponding equivalence class. For
$X \subseteq \ir{D}$ we define
$\eqclassir{X} = \{ \eqclassir{i} \mid i \in X \}$.

\begin{lemma}[chains of immediate precedence]
  \label{le:chains}
  Let $D$ be a weak prime domain, $d \in \compact{D}$ and
  $\ir{d} = \{ i_1, \ldots, i_n \}$ such that the sequence
  $i_1, \ldots, i_n$ is compatible with the order (i.e., for all
  $h, k$ if $i_h \sqsubseteq i_k$ then $h \leq k$). If we let
  $d_k = \bigsqcup_{h = 1}^k i_h$ for $k \in \{ 1, \ldots, n\}$ we
  have
  \begin{center}
    $\bot = d_0 \preceq  d_1 \preceq \ldots \preceq d_n =d$
  \end{center}
  
  Vice versa, given a chain
  $\bot = d_0\prec d_1\prec \ldots\prec d_n$ and taking
  $i_h \in \diff{d_h}{d_{h-1}}$ for
  $h \in \{ 1, \ldots, n\}$ we have
  \begin{center}
    $d_n = \bigsqcup_{h=1}^n i_h$ \ \ and \ \
    $\forall i \in \ir{d_n}.\ \exists h \in \interval{n}.\ i
    \leftrightarrow i_h$.
  \end{center}
  Therefore
  $\eqclassir{\{i_1, \ldots, i_n \}} = \eqclassir{\ir{d_n}}$.
\end{lemma}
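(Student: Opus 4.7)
The plan is to handle the two directions separately, reducing each to the technical lemmas already established that link irreducibles to the immediate precedence relation, and then combining them to obtain the equality of equivalence classes.

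For the forward direction, I would verify the chain $\bot = d_0 \preceq d_1 \preceq \ldots \preceq d_n = d$ by showing $d_{k-1} \preceq d_k$ for each $k \in \interval{n}$, since $d_0 = \bigsqcup \emptyset = \bot$ and $d_n = \bigsqcup \ir{d} = d$ by irreducible algebraicity (Proposition~\ref{pr:domains-irr-alg}). Fix $k$ and note that $d_k = d_{k-1} \sqcup i_k$. To invoke Lemma~\ref{le:prec-irr-a}(\ref{le:prec-irr-a:2}) I need $\pred{i_k} \sqsubseteq d_{k-1}$. By Proposition~\ref{pr:domains-irr-alg} applied to $\pred{i_k}$, we have $\pred{i_k} = \bigsqcup \ir{\pred{i_k}}$, and each element of $\ir{\pred{i_k}}$ is an irreducible strictly below $i_k$, hence of the form $i_j$ with $i_j \sqsubset i_k$. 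By order-compatibility of the enumeration this forces $j < k$, so $\ir{\pred{i_k}} \subseteq \{i_1,\ldots,i_{k-1}\}$ and $\pred{i_k} \sqsubseteq d_{k-1}$. Lemma~\ref{le:prec-irr-a}(\ref{le:prec-irr-a:2}) then delivers $d_{k-1} \preceq d_k$.

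For the reverse direction, starting from a chain $\bot = d_0 \prec d_1 \prec \ldots \prec d_n$ and a choice $i_h \in \diff{d_h}{d_{h-1}}$, I would first apply Lemma~\ref{le:prec-irr-b}(\ref{le:prec-irr-b:1}) to obtain $d_h = d_{h-1} \sqcup i_h$, from which a trivial induction on $h$ gives $d_n = \bigsqcup_{h=1}^n i_h$. For the interchangeability claim, pick any $i \in \ir{d_n}$ and let $h$ be the smallest index with $i \sqsubseteq d_h$; by minimality $i \notin \ir{d_{h-1}}$, so $i \in \diff{d_h}{d_{h-1}}$, and Lemma~\ref{le:prec-irr-b}(\ref{le:prec-irr-b:3}) yields $i \leftrightarrow i_h$. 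The equality $\eqclassir{\{i_1,\ldots,i_n\}} = \eqclassir{\ir{d_n}}$ follows: the inclusion $\subseteq$ holds because each $i_h \sqsubseteq d_h \sqsubseteq d_n$, hence $i_h \in \ir{d_n}$; the inclusion $\supseteq$ holds because for every $i \in \ir{d_n}$ we have $i \leftrightarrow i_h$ for some $h$, and since both irreducibles lie below $d_n$ they are consistent, so $\eqclassir{i} = \eqclassir{i_h}$ under the transitive closure.

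The only real subtlety is the order-compatibility bookkeeping in the forward direction: one must be careful that accumulating the $i_h$'s in order genuinely ensures $\pred{i_k} \sqsubseteq d_{k-1}$, because otherwise Lemma~\ref{le:prec-irr-a}(\ref{le:prec-irr-a:2}) is not applicable and $d_{k-1} \sqcup i_k$ need not be an immediate successor of $d_{k-1}$. Everything else is a direct and essentially mechanical application of Lemmas~\ref{le:prec-irr-a} and~\ref{le:prec-irr-b}, so I do not expect any further obstacles.
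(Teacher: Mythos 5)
Your proof is correct. The forward direction coincides with the paper's argument: both establish $\pred{i_k} \sqsubseteq d_{k-1}$ from order-compatibility of the enumeration via $\ir{\pred{i_k}} = \ir{i_k}\setminus\{i_k\} \subseteq \{i_1,\ldots,i_{k-1}\}$ and then invoke Lemma~\ref{le:prec-irr-a}(\ref{le:prec-irr-a:2}). For the reverse direction, however, you take a genuinely different and arguably cleaner route. The paper proceeds by induction on $n$: it handles the irreducibles of $d_{n-1}$ by the inductive hypothesis and, for a new $i \in \diff{d_n}{d_{n-1}}$, appeals directly to weak primality of $i$ (to find $i' \leftrightarrow i$ below $d_{n-1}$ or below $i_n$) and then to transitivity of $\leftrightarrow$ on consistent irreducibles (Lemma~\ref{le:transitive}) to chain back to some $i_h$. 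Your argument instead locates, for each $i \in \ir{d_n}$, the least index $h$ with $i \sqsubseteq d_h$; minimality places $i$ in $\diff{d_h}{d_{h-1}}$ alongside $i_h$, and Lemma~\ref{le:prec-irr-b}(\ref{le:prec-irr-b:3}) immediately gives $i \leftrightarrow i_h$. This bypasses both the explicit use of the weak prime axiom and Lemma~\ref{le:transitive}, at the cost of nothing: the weak primality is already baked into Lemma~\ref{le:prec-irr-b}. The only point worth making explicit is that the least such $h$ is at least $1$, which holds because $\bot$ is not irreducible (it is the join of the empty set), so $i \not\sqsubseteq d_0$; your "by minimality $i \notin \ir{d_{h-1}}$" then correctly yields $i \in \diff{d_h}{d_{h-1}}$.
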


\begin{proof}
  For the first part, observe that for $k \in \{ 1, \ldots, n\}$ we have that
  \begin{center}
    $\pred{i_k} \sqsubseteq d_{k-1}$
  \end{center}
  In fact, recalling that $\ir{i_k} \subseteq \ir{d}$, we have that
  irreducibles in $\ir{\pred{i_k}} = \ir{i_k} \setminus \{ i_k \}$, which are
  smaller than $i_k$, must occur before in the list hence
  \begin{center}
    $\ir{\pred{i_k}} = \ir{i_k} \setminus \{ i_k \} \subseteq \{ i_1, \ldots,
    i_{k-1}\}$.
  \end{center}
  Therefore
  $\pred{i_k} = \bigsqcup \ir{\pred{i_k}} \sqsubseteq \bigsqcup \{ i_1,
  \ldots, i_{k-1}\} = d_{k-1}$. Thus we use
  Lemma~\ref{le:prec-irr-a}(\ref{le:prec-irr-a:2}) to infer
  $d_{k-1} \preceq d_{k-1} \sqcup i_k = d_k$.
  
  \bigskip

  For the second part, we proceed by induction on $n$. 
  
  \begin{itemize}
  \item ($n=0$) Note that $d_0 = \bigsqcup \emptyset = \bot$ and
    $\ir{\bot} = \emptyset$, hence the thesis trivially holds.

  \item ($n>0$) By induction hypothesis
    \begin{center}
      $d_{n-1} = \bigsqcup_{h=1}^{n-1} i_h$ \ \ and \ \
    $\forall i \in \ir{d_{n-1}}.\ \exists h \in \interval{n-1}.\ i
    \leftrightarrow i_h$.
    \end{center}

    Since by construction $i_n \in \diff{d_n}{d_{n-1}}$,
    by Lemma~\ref{le:prec-irr-b}(\ref{le:prec-irr-b:1}) we deduce
    \begin{center}
      $d_n = i_n \sqcup d_{n-1} = {\bigsqcup_{h=1}^{n} i_h}$
    \end{center}
    
    Moreover, for all $i \in \diff{d_n}{d_{n-1}}$, we have $i
    \sqsubseteq d_n = i_n \sqcup
    d_{n-1}$. By definition of weak prime domain, there exists $i'
    \leftrightarrow i$ such that $i' \sqsubseteq d_{n-1}$ or $i'
    \sqsubseteq i_n$. In the first case, since $i' \sqsubseteq
    d_{n-1}$, by the inductive hypothesis there is $h \in
    \interval{n-1}$ such that $i' \leftrightarrow i_h$. Since $i
    \leftrightarrow i' \leftrightarrow i_h$, and $i, i_h \sqsubseteq
    d_n$ are consistent, by using the fact that
      $D$ is {\wi} we deduce $i \leftrightarrow
  i_h$, as desired. If, instead, we are in the second case, namely $i'
  \sqsubseteq
  i_n$, by Lemma~\ref{le:prec-irr-b}(\ref{le:prec-irr-b:1bis}) it
  follows that $i_n = i' \leftrightarrow i$, as desired.
  \end{itemize}
\end{proof}

In a prime domain, an element admits a unique decomposition in
terms of primes (see Lemma~\ref{le:unique-decomposition}). Here the
same holds for irreducibles but only up to interchangeability. 

\begin{proposition}[unique decomposition up to $\leftrightarrow$]
  \label{pr:unique-dec}
  Let $D$ be a weak prime domain, let $d \in \compact{D}$, and let
  $X \subseteq D$ be a downward closed and consistent set such that
  $\eqclassir{X} \subseteq \eqclassir{\ir{d}}$. Then $d = \bigsqcup X$
  iff $\eqclassir{X} = \eqclassir{\ir{d}}$.
\end{proposition}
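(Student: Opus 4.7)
The plan is to prove the two directions separately, with the forward direction a direct application of weak primality and the reverse the substantive step. For $(\Rightarrow)$, assume $d = \bigsqcup X$. The inclusion $\eqclassir{X} \subseteq \eqclassir{\ir{d}}$ is immediate from $X \subseteq \ir{d}$. For the reverse inclusion, take $i \in \ir{d}$: since $i$ is a weak prime by Proposition~\ref{pr:fusion-domains} and $i \sqsubseteq d = \bigsqcup X$ with $X$ consistent (being bounded by $d$), there exist $i' \in \ir{D}$ with $i \leftrightarrow i'$ and $x \in X$ with $i' \sqsubseteq x$. Then $i' \sqsubseteq x \sqsubseteq d$ places $i'$ in $\ir{d}$, so the downward closure of $X$ in $\ir{d}$ forces $i' \in X$, whence $\eqclassir{i} = \eqclassir{i'} \in \eqclassir{X}$.

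For $(\Leftarrow)$, assume $\eqclassir{X} = \eqclassir{\ir{d}}$ and set $d^* = \bigsqcup X \sqsubseteq d$. I would argue by contradiction: suppose $d^* \sqsubsetneq d$. Since $\principal{d}$ is finite, some $c$ satisfies $d^* \prec c \sqsubseteq d$; I choose $i$ minimal in $\diff{c}{d^*}$. By minimality no irreducible strictly below $i$ lies in $\diff{c}{d^*}$, so $\ir{\pred{i}} \subseteq \ir{d^*}$ and consequently $\pred{i} \sqsubseteq d^*$. Since $i \in \ir{d}$, the hypothesis supplies $x \in X \subseteq \ir{d^*}$ with $i \leftrightarrow^* x$. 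Once this transitive chain is reduced to a direct step $i \leftrightarrow x$, Lemma~\ref{le:eq-char}(\ref{le:eq-char:3}) applied at $c = d^*$, with $\pred{i}, \pred{x} \sqsubseteq d^*$, yields $d^* \sqcup i = d^* \sqcup x = d^*$; hence $i \sqsubseteq d^*$, contradicting $i \in \diff{c}{d^*}$.

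The main obstacle is precisely the chain-reduction step: showing that if $i, x \in \ir{d}$ satisfy $i \leftrightarrow^* x$ then $i \leftrightarrow x$ holds directly. The strategy exploits that $\ir{d}$ is a consistent set of irreducibles, so Lemma~\ref{le:transitive} makes $\leftrightarrow$ transitive there, and it suffices to produce a $\leftrightarrow$-chain from $i$ to $x$ whose intermediates all lie in $\ir{d}$. Given any chain $i = j_0 \leftrightarrow j_1 \leftrightarrow \cdots \leftrightarrow j_m = x$, each intermediate $j_k \notin \ir{d}$ will be replaced by applying the weak prime condition to $j_k \sqsubseteq d \sqcup j_k = \bigsqcup\{d, j_k\}$: this either produces a non-trivial $j_k' \leftrightarrow j_k$ with $j_k' \sqsubseteq d$ (a suitable replacement in $\ir{d}$) or the trivial alternative $j_k' \sqsubseteq j_k$, which Lemma~\ref{le:inter-cons-ord} forces to $j_k' = j_k$; the trivial case is then closed by induction on the chain length, exploiting that the residual sub-chain $j_k \leftrightarrow^* x$ already lands at $x \in \ir{d^*}$. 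Neighbouring $\leftrightarrow$-links are re-established via Lemma~\ref{le:transitive} applied inside the consistent upper set, and iterating yields a chain wholly inside $\ir{d}$, which collapses by transitivity to the desired direct $i \leftrightarrow x$.
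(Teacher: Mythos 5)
Your forward direction is the same as the paper's and is fine. For the reverse direction your skeleton is genuinely different from the paper's: you argue by contradiction, picking $c$ with $d^*=\bigsqcup X \prec c \sqsubseteq d$ and a minimal $i \in \diff{c}{d^*}$ so that $\pred{i}\sqsubseteq d^*$, then apply Lemma~\ref{le:eq-char}(\ref{le:eq-char:3}) at $d^*$; the paper instead inducts on $|\ir{d}\setminus X|$, adding one minimal missing irreducible at a time and using Definition~\ref{de:interchangeable} directly. Both routes are sound \emph{modulo the same fact}: that two $\leftrightarrow^*$-equivalent irreducibles lying under the common bound $d$ are directly interchangeable. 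You are right to single this out as the crux (the paper simply asserts ``there is $i'\in X$ such that $i\leftrightarrow i'$'' at the corresponding point), but your proposed chain-reduction does not establish it.

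Concretely, three steps fail. (i) An intermediate $j_k$ of a $\leftrightarrow$-chain from $i$ to $x$ need not be consistent with $d$, so $\bigsqcup\{d,j_k\}$ need not exist and weak primality cannot be applied to that set. (ii) Even when it can, the weak-prime condition is an existential that is \emph{always} trivially witnessed: $j_k \sqsubseteq j_k \in \{d,j_k\}$, so the witness $j_k'=j_k$ with $j_k'\sqsubseteq j_k$ satisfies the definition and nothing forces the existence of a witness with $j_k'\sqsubseteq d$. Saying the trivial case ``is closed by induction on the chain length'' is not an argument — no parameter decreases, and the residual chain $j_k\leftrightarrow^* x$ is exactly the problem you started with. (iii) Even granting a replacement $j_k'\sqsubseteq d$ with $j_k'\leftrightarrow j_k$, re-linking it to its neighbours via Lemma~\ref{le:transitive} requires $j_{k-1}$ and $j_k'$ (resp.\ $j_k'$ and $j_{k+1}$) to be consistent, which you have not shown: $j_{k-1}$ may itself lie outside $\ir{d}$ and be inconsistent with $j_k'$. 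So the collapse of $\leftrightarrow^*$ to $\leftrightarrow$ on consistent irreducibles has to be proved as a separate lemma (an extension of Lemma~\ref{le:transitive} beyond a single common neighbour); it is not recoverable by the local weak-primality trick you describe, and until it is supplied your proof of the $(\Leftarrow)$ direction has a genuine gap.
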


\begin{proof}
  ($\Rightarrow$) Let $d = \bigsqcup X$. By hypothesis
  $\eqclassir{X} \subseteq \eqclassir{\ir{d}}$. Hence we only need to
  prove that $\eqclassir{\ir{d}} \subseteq \eqclassir{X}$. Let $i \in
  \ir{d}$. Hence $i \sqsubseteq d = \bigsqcup X$. By definition of
  weak prime domain, this implies that there exists
  $i' \leftrightarrow i$ and $x \in X$ such that $i' \sqsubseteq
  x$. Since $X$ is downward closed, necessarily $i' \in X$ and thus
  $\eqclassir{i} \in \eqclassir{X}$, as desired.

  \bigskip

  ($\Leftarrow$) Let $\eqclassir{X} = \eqclassir{\ir{d}}$. We can
  prove that $\bigsqcup X = d$ by induction on
  $k(X) = |(\ir{d} \setminus X) \cup (X \setminus \ir{d}|$. If $k(X)=0$ then $X = \ir{d}$ and thus, by
  Proposition~\ref{pr:domains-irr-alg}, we conclude that
  $d = \bigsqcup X$.
  If $k(X) > 0$ we distinguish two subcases.
  \begin{itemize}
  \item First assume
  $\ir{d} \setminus X \neq \emptyset$. Then take a minimal element
  $i \in \ir{d} \setminus X$. Therefore $X' = X \cup \{ i \}$ is
  downward closed and, by minimality of $i$, we have
  $\pred{i} \sqsubseteq \bigsqcup X$. Since
  $\eqclassir{X} = \eqclassir{\ir{d}}$, there is $i' \in X$ such that
  $i \leftrightarrow^* i'$ and thus, since
  $\pred{i}, \pred{i'} \sqsubseteq \bigsqcup X$ are consistent and $D$
  is {\wi}, $i \leftrightarrow i'$.  
  Therefore
  \begin{equation}
    \label{eq:irr-b}
    \bigsqcup X' = \bigsqcup X \cup \{ i \} = \bigsqcup X \cup \{ i' \}
    = \bigsqcup X.
  \end{equation}
  
  Since $\eqclassir{X'} =  \eqclassir{X}  =  \eqclassir{\ir{d}}$ and
  $|\ir{d} \setminus X'| = |\ir{d} \setminus X| -1$, we have
  $k(X') < k(X)$, and thus by inductive hypothesis $\bigsqcup X' =
  d$. Hence, using (\ref{eq:irr-b}), we get $\bigsqcup X = d$, as
  desired.
  
\item If instead $\ir{d} \setminus X = \emptyset$, i.e.,
  $\ir{d} \subseteq X$, since $k(X)>0$, it must be
  $X \setminus \ir{d} \neq \emptyset$. Consider a maximal element
  $i \in X \setminus \ir{d}$, and let $X' = X \setminus \{i\}$. Clearly, $X'$ is downward closed because so are $X$ and $\ir{d}$.  Since
  $\eqclassir{X} = \eqclassir{\ir{d}}$, there is
  $i' \in \ir{d} \subseteq X$ such that $i \leftrightarrow^* i'$. Since $X$ is consistent and $D$ is {\wi}, $i \leftrightarrow i'$. Therefore
    \begin{equation}
    \label{eq:irr-c}
    \bigsqcup X = \bigsqcup X' \cup \{ i \} = \bigsqcup X' \cup \{ i' \}
    = \bigsqcup X'.
  \end{equation}
  Since by construction $k(X') = k(X)-1$, the inductive hypothesis
  gives us $\bigsqcup X'=d$. Hence, using (\ref{eq:irr-c}), we get
  $\bigsqcup X = d$, as desired.
  \end{itemize}
\end{proof}

We explicitly observe that, by the above result, whenever $X = \eqclassir{\ir{d}}$ for some $d \in \compact{D}$ then $d$ is uniquely determined by $X$.

We now have all the tools needed for mapping our domains to an {\esabbr}.

\begin{definition}[{event structure} for a weak prime domain]
\label{de:esfusdom}
  Let $D$ be a weak prime domain. The {\esabbr}
  $\ev{D} = \langle E, \#, \vdash \rangle$ is defined as
  follows
  \begin{itemize}

  \item $E = \eqclassir{\ir{D}}$;

  \item $e \# e'$ if there is no $d \in \compact{D}$ such that
    $e, e' \in \eqclassir{\ir{d}}$;

  \item $X \vdash e$ if there is $i \in e$ such that
    $\eqclassir{\ir{i} \setminus \{ i \}} \subseteq X$.

  \end{itemize}

  Given a morphism $f : D_1 \to D_2$, its image
  $\ev{f} : \ev{D_1} \to \ev{D_2}$ is defined  for
  $\eqclassir{i_1} \in {E_1}$ as 
  $\ev{f}(\eqclassir{i_1})= \eqclassir{i_2}$,
  where $i_2 \in \diff{f(i_1)}{f(\pred{i_1})}$, and
  $\ev{f}(\eqclassir{i_1})$ is undefined if $f(\pred{i_1}) = f(i_1)$.
\end{definition}

The events in $\ev{D}$ are equivalence classes of irreducibles. Two events $e$, $e'$ are consistent (not in conflict) when there is some compact element $d$ such that $e, e' \in \eqclassir{\ir{d}}$. Spelled out, this means that there are irreducibles $i \in e$ and $i' \in e'$ such that $i, i' \sqsubseteq d$, i.e., there are minimal enablings of the events $e$ and $e'$ in the same configuration. Finally, an event $e$ is enabled by a set $X$ when $X$ includes, up to intechangeability, all the predecessors of $e$.

Note that the definition above is well-given: in particular, there is no
ambiguity in the definition of the image of a morphism, since by
Lemma~\ref{le:prec-irr-b}(\ref{le:prec-irr-b:3}) we easily conclude
that for all $i_2, i_2' \in \diff{f(i_1)}{f(\pred{i_1})}$, it
holds $i_2 \leftrightarrow i_2'$ (this is argued in detail in the proof of Lemma~\ref{le:domain-to-es-wd}).

In the following we often use a technical lemma that holds in any domain.

\begin{lemma}
  \label{le:inf}
  Let $D$ be a domain and $a, b, c \in D$ such that
  $c \sqsubseteq a$ and $c \preceq b$. Then
  either $b \sqsubseteq a$ or $c = a \sqcap b$.
\end{lemma}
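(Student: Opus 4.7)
The plan is to introduce the meet $m = a \sqcap b$, which exists in $D$ because any non-empty subset of a domain admits a greatest lower bound (as observed right after Definition~\ref{c-de:domain}). The whole argument will then be a case analysis on where $m$ sits between $c$ and $b$.

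First I would verify that $c \sqsubseteq m$: by hypothesis $c \sqsubseteq a$, and from $c \prec b$ we in particular have $c \sqsubseteq b$, so $c$ is a lower bound of $\{a,b\}$ and hence below the greatest lower bound $m$. On the other hand, by definition of meet $m \sqsubseteq b$. Combining these gives the chain
\begin{center}
  $c \sqsubseteq m \sqsubseteq b$.
\end{center}

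At this point I would invoke the hypothesis that $c \prec b$ is an immediate precedence, which by definition forces $m \in \{c,b\}$. In the case $m = b$, the equality $a \sqcap b = b$ is equivalent to $b \sqsubseteq a$, which is the first alternative in the conclusion. In the case $m = c$, we directly obtain $c = a \sqcap b$, which is the second alternative. Hence in either case the disjunction holds.

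There is no real obstacle here: the statement is essentially a pigeonhole argument on the two-element interval $[c,b]$ provided by immediate precedence, and the only background fact needed beyond Definition~\ref{c-de:domain} is the existence of meets of non-empty subsets in a domain. No appeal to irreducibles, weak primes, or interchangeability is required, which is why this lemma is placed as a generic tool to be reused in the sequel.
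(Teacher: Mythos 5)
Your proof is correct and is essentially identical to the paper's: both observe that $c$ is a lower bound of $\{a,b\}$, so $c \sqsubseteq a \sqcap b \sqsubseteq b$, and then use the immediate-precedence hypothesis $c \prec b$ to force $a \sqcap b \in \{c,b\}$, reading off the two alternatives. The only difference is presentational (you name the meet $m$ and spell out both cases, while the paper argues by contraposition from $c \neq a \sqcap b$).
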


\begin{proof}
  Recall that in a domain the meet of non-empty  sets
  exists. Since $c$ is a lower bound for $a$ and $b$, necessarily
  $c \sqsubseteq a\sqcap b \sqsubseteq b$. If it were
  $c \neq a\sqcap b$ then we would have $a \sqcap b = b$, hence
  $b \sqsubseteq a$, as desired.
\end{proof}

\begin{lemma}[{from weak prime domains to event structures}]
  \label{le:domain-to-es-wd}
  Let  $D$ be a weak prime domain. Then $\ev{D}$ is an {\esabbr}. Moreover, given two weak prime domains $D_1$, $D_2$ and a
  morphism $f : D_1 \to D_2$, its image
  $\ev{f} : \ev{D_1} \to \ev{D_2}$ is an {\esabbr} morphism.
\end{lemma}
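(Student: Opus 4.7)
The plan is to verify in turn the three constituents of Definition~\ref{de:es} for $\ev{D}$, and then the three clauses of Definition~\ref{de:es-morphism} for $\ev{f}$. For the first part, the events $E = \eqclassir{\ir{D}}$ form a set by construction, and conflict as defined depends only on equivalence classes so no representative check is needed; symmetry is evident. The enabling relation $\vdash$ is monotone because $X \vdash e$ requires the \emph{existential} witness of some $i \in e$ with $\eqclassir{\ir{i} \setminus \{i\}} \subseteq X$, and this inclusion is preserved under enlarging~$X$. Hence $\ev{D}$ is an {\esabbr}.

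For the morphism part, the first step is well-definedness of $\ev{f}$. Given $i_1 \in \ir{D_1}$, any two $i_2, i_2' \in \diff{f(i_1)}{f(\pred{i_1})}$ satisfy $i_2 \leftrightarrow i_2'$ by Lemma~\ref{le:prec-irr-b}(\ref{le:prec-irr-b:3}), since $f(\pred{i_1}) \preceq f(i_1)$ by preservation of $\preceq$. Next, if $i_1 \leftrightarrow i_1'$, then by Lemma~\ref{le:eq-char}(\ref{le:eq-char:4}) we have $\pred{i_1} \sqcup i_1' = i_1 \sqcup \pred{i_1'}$; applying $f$ and using preservation of $\sqcup$ of consistent sets yields $f(\pred{i_1}) \sqcup f(i_1') = f(i_1) \sqcup f(\pred{i_1'})$, from which one concludes that either $f$ collapses both sides (so $\ev{f}$ is undefined on both classes) or the resulting irreducibles in the difference sets are interchangeable, hence in the same $\eqclassir{\cdot}$-class. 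Extending along $\leftrightarrow^*$ is done by transitivity of equality of classes.

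The third step checks the three clauses of {\esabbr}-morphism. Conflict preservation and the ``equal image forces conflict'' clause both follow from the definition of $\#$ in $\ev{D_1}$ and $\ev{D_2}$: if $\ev{f}(\eqclass{i_1})$ and $\ev{f}(\eqclass{i_1'})$ coexist in some $\eqclassir{\ir{d_2}}$, one lifts $d_2$ through the chain construction of Lemma~\ref{le:chains} so that the preimage chain produces a compact $d_1 \in \compact{D_1}$ witnessing consistency of $\eqclass{i_1}, \eqclass{i_1'}$ in $\ev{D_1}$; analogously, if $\ev{f}(\eqclass{i_1}) = \ev{f}(\eqclass{i_1'})$ with distinct classes upstairs, they cannot both lie in a common $\eqclassir{\ir{d_1}}$, else by Lemma~\ref{le:chains} and Lemma~\ref{le:prec-irr-b}(\ref{le:prec-irr-b:1bis}) they would be $\leftrightarrow^*$-equivalent. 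For the minimal enabling clause, given $C_1 \vdash_{0,1} \eqclass{e_1}$ in $\ev{D_1}$ with witness $i_1 \in \eqclass{e_1}$ so that $C_1 = \eqclassir{\ir{i_1} \setminus \{i_1\}}$, applying Lemma~\ref{le:chains} to a chain from $\bot$ to $i_1$ and then its image under $f$ (which preserves $\preceq$ and $\sqcup$, and hence sends the chain to a chain in $D_2$, possibly with repetitions where $f$ collapses a step) yields that $\ev{f}(C_1)$ is exactly $\eqclassir{\ir{i_2} \setminus \{i_2\}}$ for some $i_2 \in \ev{f}(\eqclass{e_1})$, which is the required minimal enabling.

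The main obstacle is the second step: tracking how the equivalence $\leftrightarrow$ behaves under $f$, and in particular reconciling the two sources of potential discrepancy, namely $f$-collapse of an immediate precedence step (where $\ev{f}$ is undefined) and the passage from $\leftrightarrow$ to its transitive closure $\leftrightarrow^*$ defining events. Both obstacles are controlled by iterated application of Lemma~\ref{le:prec-irr-b}, Lemma~\ref{le:eq-char}(\ref{le:eq-char:4}), and Lemma~\ref{le:transitive}, which guarantees that the equivalences we accumulate along consistent chains compose as expected.
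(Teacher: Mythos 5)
Your overall architecture matches the paper's (verify the ES axioms, then well-definedness of $\ev{f}$, then the three morphism clauses), but two of the three morphism clauses are not actually established. First, the conflict clause is argued in the wrong direction: the requirement is that $\ev{f}(e_1) \# \ev{f}(e_1')$ implies $e_1 \# e_1'$, i.e.\ consistency of $e_1, e_1'$ \emph{upstairs} must yield consistency of their images \emph{downstairs}. You instead start from a compact $d_2 \in \compact{D_2}$ witnessing consistency of the images and ``lift'' it to a preimage $d_1 \in \compact{D_1}$; besides proving the converse implication, such a lift does not exist in general since $f$ need not be surjective. The correct argument pushes the witness forward: from $e_1, e_1' \in \eqclassir{\ir{d_1}}$ one shows $\ev{f}(e_1), \ev{f}(e_1') \in \eqclassir{\ir{f(d_1)}}$, using that $i_1 \sqsubseteq d_1$ gives $i_2 \sqsubseteq f(i_1) \sqsubseteq f(d_1)$ for any $i_2 \in \diff{f(i_1)}{f(\pred{i_1})}$.

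Second, and more seriously, the clause ``$\ev{f}(e_1) = \ev{f}(e_1')$ with $e_1, e_1'$ consistent implies $e_1 = e_1'$'' is the crux of the whole lemma, and citing Lemma~\ref{le:chains} and Lemma~\ref{le:prec-irr-b}(\ref{le:prec-irr-b:1bis}) does not close it: one must show that if $i_1, i_1' \sqsubseteq d_1$ are irreducibles whose images yield interchangeable irreducibles downstairs, then $i_1 \leftrightarrow^* i_1'$ upstairs. The paper's argument splits on whether $i_1, i_1'$ are comparable, sets $p = \pred{i_1} \sqcup \pred{i_1'}$, proves $f(p \sqcup i_1) = f(p \sqcup i_1')$ via Lemma~\ref{le:eq-char}(\ref{le:eq-char:3}) and preservation of joins, and then derives a contradiction from $(p \sqcup i_1) \sqcap (p \sqcup i_1') = p$ using clause (3) of Definition~\ref{de:domain-category}, i.e.\ preservation of meets that are immediate predecessors. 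That weakened meet-preservation condition is indispensable here and appears nowhere in your proposal, so the step would not go through as sketched. Two smaller points: since the paper works with live ES throughout, the first part should also check liveness of $\ev{D}$ (saturation and irreflexivity of $\#$), which you omit; on the other hand, your explicit check that $\ev{f}$ is independent of the representative $i_1$ within its $\leftrightarrow^*$-class is a reasonable addition that the paper leaves implicit, though your justification of it (``either $f$ collapses both sides or the differences are interchangeable'') would itself need the same machinery you are missing above.
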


\begin{proof}

  We first show that $\ev{D}$ is a live {\esabbr}. In fact, it is an
  {\esabbr}: if $X \vdash e$ and $X \subseteq Y$ then
  $Y \vdash e$. In fact, by definition, if $X \vdash e$ then there
  exists $i \in e$ such that
  $\eqclassir{\ir{i} \setminus \{i\}} \subseteq X$. Hence if
  $X \subseteq Y$ it immediately follows that $Y \vdash e$.
  Moreover $\ev{D}$ is live. The fact that conflict is saturated
  follows immediately by the definition of conflict and the
  characterisation of configurations provided later in
  Lemma~\ref{le:comp-conf}. Conflict is irreflexive since for any
  $e \in \ev{D}$, if $e = \eqclassir{i}$ then
  $e \in \eqclassir{\ir{i}}$, which is a configuration again by
  Lemma~\ref{le:comp-conf}.\footnote{This forward reference is only
    useful to simplify the structure of the presentation and to avoid
    breaking the statement in two parts, but it
    introduces no cyclic dependency.}

  \bigskip

  Given a morphism $f : D_1 \to D_2$, its image
  $\ev{f} : \ev{D_1} \to \ev{D_2}$ is defined for
  $\eqclassir{i_1} \in {E_1}$ as
  $\ev{f}(\eqclassir{i_1})= \eqclassir{i_2}$,
  where $i_2 \in \diff{f(i_1)}{f(\pred{i_1})}$, and
  $\ev{f}(\eqclassir{i_1})$ is undefined if $f(\pred{i_1}) = f(i_1)$.
  First observe that $\ev{f}(\eqclassir{i_1})$ does
  not depend on the choice of the representative. In fact, let
  $i_2, i_2' \in \diff{f(i_1)}{f(\pred{i_1})}$.  Since
  $\pred{i_1} \prec i_1$, by definition of domain morphism,
  $f(\pred{i_1}) \prec f(i_1)$. Thus, by
  Lemma~\ref{le:prec-irr-b}(\ref{le:prec-irr-b:3}),
  $i_2 \leftrightarrow i_2'$.  

  We next show that $\ev{f}$ is an {\esabbr} morphism.
  \begin{itemize}
  \item If $\ev{f}(e_1) \# \ev{f}(e_1')$ then $e_1 \# e_1'$.\\
    We prove the contronominal, namely if $e_1,e_1'$ consistent
    then $\ev{f}(e_1),\ev{f}(e_1')$ consistent.

    The fact that $e_1, e_1'$ consistent means that there exists
    $d_1 \in \compact{D_1}$ such that
    $e_1, e_1' \in \eqclassir{\ir{d_1}}$. We show that
    $\ev{f}(e_1), \ev{f}(e_1') \in \eqclassir{\ir{f(d_1)}}$ (note
    that $f(d_1)$ is a compact, since $f$ is a domain morphism). 

    Let us show, for instance, that
    $\ev{f}(e_1) \in \eqclassir{\ir{f(d_1)}}$. The fact that
    $e_1 \in \eqclassir{\ir{d_1}}$ means that $e_1 = \eqclassir{i_1}$
    for some $i_1 \sqsubseteq d_1$. By definition
    $\ev{f}(e_1) = \eqclassir{i_2}$, where
    $i_2 \in \diff{f(i_1)}{f(\pred{i_1})}$ (since $\ev{f}(e_1)$ is
    defined the irreducible difference cannot be empty). Now, since
    $i_1 \sqsubseteq d_1$ we have that $f(i_1) \sqsubseteq f(d_1)$,
    whence $i_2 \sqsubseteq f(i_1) \sqsubseteq f(d_1)$ and
    $\ev{f}(\eqclassir{i_1}) = \eqclassir{i_2} \in
    \eqclassir{\ir{f(d_1)}}$, as desired.

  \item If $\ev{f}(e_1) = \ev{f}(e_1')$ and $e_1 \neq e_1'$ 
    then $e_1 \# e_1'$.\\
    We prove the contronominal, namely if $e_1, e_1'$ consistent and
    $\ev{f}(e_1) = \ev{f}(e_1')$ then $e_1=e_1'$.

    Assume $e_1, e_1'$ consistent and $\ev{f}(e_1) = \ev{f}(e_1')$. By
    the first condition and the definition of conflict, there must be $d_1 \in \compact{D_1}$ such
    that $e_1, e_1' \in \eqclassir{\ir{d_1}}$, namely
    $e_1=\eqclassir{i_1}$ and ${e_1'}=\eqclassir{i_1'}$ with
    $i_1, i_1' \sqsubseteq d_1$.

    Moreover, $\ev{f}(\eqclassir{i_1}) = \eqclassir{i_2}$ and
    $\ev{f}(\eqclassir{i_1'}) = \eqclassir{i_2'}$ where $i_2$ and
    $i_2'$ are in $\diff{f(i_1)}{f(\pred{i_1})}$ and
    $\diff{f(i_1')}{f(\pred{i_1'})}$, respectively, and
    $\eqclassir{i_2} = \eqclassir{i_2'}$, which means
    $i_2 \leftrightarrow^* i_2'$, and in turn, being $i_2$ and $i_2'$
    consistent, by the fact that $D$ is {\wi},
    implies
    $i_2 \leftrightarrow i_2'$. 

    We distinguish two cases.

    \begin{enumerate}[A.]
    \item If $i_1$ and $i_1'$ are comparable, e.g.,
      if $i_1 \sqsubseteq i_1'$, then $i_1 = i_1'$ and we are
      done. In fact, otherwise, if $i_1 \neq i_1'$ we have
      $\pred{i_1} \prec i_1 \sqsubseteq \pred{i_1'} \prec i_1'$.  By
      monotonicity of $f$ we have
      $f(\pred{i_1}) \prec f(i_1) \sqsubseteq f(\pred{i_1'}) \prec
      f(i_1')$
      (where strict inequalities $\prec$  are motivated by the 
      definition of $\ev{f}$, since
      both $\ev{f}(\eqclassir{i_1})$ and $\ev{f}(\eqclassir{i_1'})$
      are defined).
      Now notice that
      $\pred{i_2} \sqsubseteq i_2 \sqsubseteq f(i_1) \sqsubseteq
      f(\pred{i_1'})$. Moreover,
      $i_2' \in \diff{f(i_1')}{f(\pred{i_1'})}$, therefore
      $\pred{i_2'} \sqsubseteq i_2' \sqsubseteq f(\pred{i_1'})$.
      Hence, using the fact that $i_2 \leftrightarrow i_2'$, by
      Lemma~\ref{le:eq-char}(\ref{le:eq-char:3}) we have
      \begin{center}
        $f(\pred{i_1'}) = f(\pred{i_1'}) \sqcup i_2 = f(\pred{i_1'})
        \sqcup i_2' = f(i_1')$
      \end{center}
      contradicting  the fact that $f(\pred{i_1'}) \prec f(i_1')$.
    
      \smallskip
      
    \item Assume now that $i_1$ and $i_1'$ are uncomparable{: we show that this leads to a contradiction.}   Let 
      $p = \pred{i_1} \sqcup \pred{i_1'}$. We can also assume
      $i_1, i_1' \not\sqsubseteq p$. In fact, otherwise, e.g., if
      $i_1 \sqsubseteq p$, then, by the defining property of weak
      prime domains, we derive the existence of
      $i_1'' \leftrightarrow i_1$ such that
      $i_1'' \sqsubseteq \pred{i_1}$ or
      $i_1'' \sqsubseteq \pred{i'_1}$. The first possibility can be
      excluded because it would imply $i_1'' \sqsubseteq i_1$. Hence,
      since $i_1'' \leftrightarrow i_1$, by
      Lemma~\ref{le:inter-ord}, we would get $i_1=i_1''$,
      contradicting $i_1'' \sqsubseteq \pred{i_1}$. Then it should be
      $i_1'' \sqsubseteq \pred{i_1'} \sqsubseteq i_1'$. Therefore, if
      we take $i_1''$ as representative of the equivalence class we
      are back to case A above.

      Using the fact that $i_1, i_1' \not\sqsubseteq p$ and
      $\pred{i_1}, \pred{i_1'} \sqsubseteq p$, by
      Lemma~\ref{le:prec-irr-a}(\ref{le:prec-irr-a:2}) we deduce that
      $p \prec p \sqcup i_1$ and $p \prec p \sqcup i_1'$.
      Hence $f(p) \prec f(p \sqcup i_1)$ with strict inequality again
      motivated by the definition of $\ev{f}$, since
      $\ev{f}(\eqclassir{i_1})$ is defined.

      By Lemma~\ref{le:prec-irr-b}(\ref{le:prec-irr-b:1}), since
      $i_2 \in \diff{f(i_1)}{f(\pred{i_1})}$ and
      $i_2' \in \diff{f(i_1')}{f(\pred{i_1'})}$, we have
      \begin{equation}
        \label{eq:adding}
        f(\pred{i_1}) \sqcup i_2 = f(i_1) 
        \qquad f(\pred{i_1'}) \sqcup i_2' = f(i_1')
      \end{equation}
          
      Now, observe that
      \begin{center}
      $\begin{array}{lll}
         f(p \sqcup i_1)  =  \\
         = f(\pred{i_1} \sqcup \pred{i_1'} \sqcup i_1)\\
         =  f(\pred{i_1'} \sqcup i_1)\\
         =  f(\pred{i_1'}) \sqcup f(i_1) 
      & \mbox{[preservation of $\sqcup$]}\\
         =  f(\pred{i_1'}) \sqcup f(\pred{i_1}) \sqcup i_2 
      & \mbox{[by~(\ref{eq:adding})]}\\
         =   f(\pred{i_1'}) \sqcup f(\pred{i_1}) \sqcup i_2' 
      & \mbox{[by Lemma~\ref{le:eq-char}(\ref{le:eq-char:3}),} \\
         & \mbox{\ since $i_2 \leftrightarrow i_2'$]}\\
                          =  f(i_1') \sqcup f(\pred{i_1}) 
         & \mbox{[by~(\ref{eq:adding})]}\\
                          =  f(\pred{i_1} \sqcup i_1')
         & \mbox{[preservation of $\sqcup$]}\\
                          =  f(\pred{i_1} \sqcup \pred{i_1'} \sqcup  i_1')\\
                          =  f(p \sqcup  i_1')\\
       \end{array}
       $
     \end{center}
     Since $p \prec p \sqcup i_1$ and $p \prec p \sqcup i_1'$, by
     Lemma~\ref{le:inf}, we have
     $(p \sqcup i_1) \sqcap (p \sqcup i_1') = p$.
     Therefore, on the one hand
     $f((p \sqcup i_1) \sqcap (p \sqcup i_1')) = f(p)$. On the other
     hand, since the meet is an immediate predecessor, by definition
     of weak domain morphism (Definition~\ref{de:domain-category}), it
     is preserved:
     $f((p \sqcup i_1) \sqcap (p \sqcup i_1')) = f(p \sqcup i_1)
     \sqcap f(p \sqcup i_1') = f(p \sqcup i_1) = f(p \sqcup i_1')$.
     Putting things together,
     $f(p) = f(p \sqcup i_1) = f(p \sqcup i_1')$, contradicting  the
     fact that $f(p) \prec f(p \sqcup i_1)$.
   \end{enumerate}

   \item if $C_1 \vdash_1 \eqclassir{i_1}$ and
     $\ev{f}(\eqclassir{i_1})$ is defined then
     $\ev{f}(C_1) \vdash_2 \ev{f}(\eqclassir{i_1})$

     Recall that $C_1 \vdash_1 \eqclassir{i_1}$ means that 
     $\eqclassir{\ir{i_1'} \setminus \{ i_1'\}} =
     \eqclassir{\ir{\pred{i_1'}}} \subseteq C_1$ for some
     $i_1' \leftrightarrow i_1$.

     By definition, $\ev{f}(\eqclassir{i_1}) = \eqclassir{i_2}$ where
     $i_2 \in \diff{f(i_1')}{f(\pred{i_1'})}$. We show that
     $\ev{f}(C_1) \vdash_2 \eqclassir{i_2}$, namely that
     \begin{equation}
       \label{eq:new}
       \eqclassir{\ir{i_2}\setminus \{ i_2\} } =
       \eqclassir{\ir{\pred{i_2}}} \subseteq \ev{f}(C_1)
     \end{equation}
     Observe that since $i_2 \in \diff{f(i_1')}{f(\pred{i_1'})} $ and
     distinct elements in $\diff{f(i_1')}{f(\pred{i_1'})}$ are
       incomparable by
       Lemma~\ref{le:prec-irr-b}(\ref{le:prec-irr-b:1bis}), it holds
       $\pred{i_2} \sqsubseteq f(\pred{i_1'})$. Therefore, we have
     \begin{center}
       $\ir{\pred{i_2}} \subseteq \ir{f(\pred{i_1'})}$
     \end{center}
     Hence, in order to conclude~(\ref{eq:new}), it suffices to show that
     \begin{equation}
       \label{eq:obj}
       \eqclassir{\ir{f(\pred{i_1'})}} \subseteq \ev{f}(C_1)
     \end{equation}
   
     In order to reach this result, first note that, by
     Lemma~\ref{le:chains}, if
     $\ir{\pred{i_1'}} = \{ j_1^1, \ldots, j_1^n \}$ is a sequence of
     irreducibles compatible with the order, we can obtain a
     $\preceq$-chain
     \begin{center}
       $\bot = d_1^0 \preceq d_1^1 \preceq \ldots \preceq d_1^n =
       \pred{i_1'} \prec i_1'$
     \end{center}
     We can extract a strictly increasing subsequence
     \begin{center}
       $\bot = d_1'^0 \prec d_1'^1 \prec \ldots \prec d_1'^m =
       \pred{i_1'} \prec i_1'$
     \end{center}
     and, if we take irreducibles $j_1'^1, \ldots, j_1'^m$ in
     $\diff{d_1'^{i}}{d_1'^{i-1}}$, again by Lemma~\ref{le:chains} we
     know that
     \begin{equation}
       \label{eq:chain}
       \eqclassir{\ir{\pred{i_1'}}} = \eqclassir{\{j_1'^1, \ldots,
         j_1'^m\}}
     \end{equation}
   
     Since $f$ is a domain morphism, it preserves $\preceq$, namely
     \begin{center}
       $\bot = f(d_1'^0) \preceq f(d_1'^1) \preceq \ldots \preceq f(d_1'^m) =
       f(\pred{i_1'}) \prec f(i_1')$
     \end{center}
     where the last inequality is strict since
     $\ev{f}(\eqclassir{i_1'}) = \eqclassir{i_2}$ is
     defined. Moreover, whenever $f(d_1'^{h-1}) \prec f(d_1'^h)$, then
     $\ev{f}(\eqclassir{j_1'^h}) = \eqclassir{\ell_2^h}$ where $\ell_2^h$
     is any irreducible in $\diff{f(d_1'^h)}{f(d_1'^{h-1})}$, otherwise
     $\ev{f}(\eqclassir{j_1'^h})$ is undefined.

     Once more by Lemma~\ref{le:chains} we know that
     \begin{center}
       $\eqclassir{\ir{f(\pred{i_1'})}} = \eqclassir{\{\ell_2^1, \ldots,
         \ell_2^m\}} = \ev{f}(\eqclassir{\{j_1'^1, \ldots, j_1'^m\}})$,
     \end{center}
     thus, using (\ref{eq:chain})
     \begin{equation}
       \label{eq:chain1}
       \eqclassir{\ir{f(\pred{i_1'})}} =
       \ev{f}(\eqclassir{\ir{\pred{i_1'}}}).
     \end{equation}
     Hence, recalling that, by hypothesis,
     $\eqclassir{\ir{\pred{i_1'}}} \subseteq C_1$, we conclude the desired
     inclusion~(\ref{eq:obj}).
  \end{itemize}
\end{proof}

Since in a prime domain irreducibles coincide with
primes (Proposition~\ref{pr:irr-prime-alg}), $\leftrightarrow$ is the
identity (Lemma~\ref{le:interchange-id-pad}) and $\diff{d'}{d}$ is a
singleton when $d \prec d'$, the construction above produces the prime
{\esabbr} $\pes(D)$ as defined in Section~\ref{se:background}.

Given a weak prime domain $D$, the finite configurations of the {\esabbr}
$\ev{D}$ exactly correspond to the elements in
$\compact{D}$. Moreover, in such {\esabbr} we have a minimal enabling
$C \vdash_0 e$ when there is an irreducible in $e$ (recall that events
are equivalence classes of irreducibles) such that $C$ contains all and only
(the equivalence classes of) its predecessors.

\begin{lemma}[compacts vs. configurations]
  \label{le:comp-conf}
  Let $D$ be a weak prime domain and $C \subseteq \ev{D}$ a finite set
  of events.  Then $C$ is a configuration in the {\esabbr} $\ev{D}$
  iff there exists a unique $d \in \compact{D}$ such that $C =
  \eqclassir{\ir{d}}$. Moreover, for any $e \in \ev{D}$ we have that
  $C \vdash_0 e$ iff $C = \eqclassir{\ir{i} \setminus \{ i \}}$ for
  some $i \in e$.
\end{lemma}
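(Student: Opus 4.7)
The plan is to prove the two biconditionals in tandem, relying on the chain decomposition of Lemma~\ref{le:chains}, the unique-decomposition-up-to-$\leftrightarrow$ of Proposition~\ref{pr:unique-dec}, and the characterisations of interchangeability in Lemma~\ref{le:eq-char}. The first biconditional establishes the bijection between $\compact{D}$ and $\conff{\ev{D}}$; the second refines it to identify minimal enablings. The main obstacle will be ruling out strictly smaller enablings in the converse of the second biconditional.

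For ``compact $\Rightarrow$ configuration'', given $d \in \compact{D}$ I would enumerate $\ir{d} = \{i_1, \ldots, i_n\}$ compatibly with $\sqsubseteq$ and apply Lemma~\ref{le:chains} to obtain $\bot = d_0 \preceq \ldots \preceq d_n = d$ with $d_k = \bigsqcup_{h \le k} i_h$ and $\ir{\pred{i_k}} \subseteq \{i_1, \ldots, i_{k-1}\}$; Definition~\ref{de:esfusdom} then yields the enabling $\{\eqclassir{i_1}, \ldots, \eqclassir{i_{k-1}}\} \vdash \eqclassir{i_k}$, while pairwise consistency of the events is witnessed by $d$ through the conflict clause. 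For the converse I would induct on a secured enumeration $e_1, \ldots, e_n$ of $C$, building compacts $d_k$ with $\eqclassir{\ir{d_k}} = \{e_1, \ldots, e_k\}$. At step $k$ the enabling $\{e_1, \ldots, e_{k-1}\} \vdash e_k$ yields some $i_k \in e_k$ with $\eqclassir{\ir{\pred{i_k}}} \subseteq \eqclassir{\ir{d_{k-1}}}$. A useful auxiliary step, proved by a short induction on $|\ir{x}|$ using Lemma~\ref{le:eq-char}(\ref{le:eq-char:3}), is that for compacts $x,y$ the class-level inclusion $\eqclassir{\ir{x}} \subseteq \eqclassir{\ir{y}}$ entails the domain-level inequality $x \sqsubseteq y$; applied here it gives $\pred{i_k} \sqsubseteq d_{k-1}$. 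Then Lemma~\ref{le:prec-irr-a}(\ref{le:prec-irr-a:2}) justifies $d_k := d_{k-1} \sqcup i_k$ with $d_{k-1} \preceq d_k$, and Lemma~\ref{le:prec-irr-b} ensures $\eqclassir{\ir{d_k}} = \{e_1, \ldots, e_k\}$. Uniqueness follows from Proposition~\ref{pr:unique-dec}: $\eqclassir{\ir{d}} = \eqclassir{\ir{d'}}$ gives $d = \bigsqcup \ir{d} = \bigsqcup \ir{d'} = d'$.

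For the minimal-enabling biconditional, the forward direction is routine: $C \vdash_0 e$ yields some $i \in e$ with $\eqclassir{\ir{\pred{i}}} \subseteq C$, the latter is a configuration enabling $e$ by the first part, and minimality forces equality. The converse is the hard step. Given $C = \eqclassir{\ir{\pred{i}}}$, consider $C' \subseteq C$ with $C' \vdash e$; then $C' \cup \{e\}$ is a configuration (securing $e$ via $C' \vdash e$), and the first biconditional yields compacts $d' \sqsubseteq d_*$ with $\eqclassir{\ir{d'}} = C'$ and $\eqclassir{\ir{d_*}} = C' \cup \{e\}$. Since $C' \cup \{e\} \subseteq C \cup \{e\} = \eqclassir{\ir{i}}$, the class-to-domain lifting above gives $d_* \sqsubseteq i$. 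Because $e \in \eqclassir{\ir{d_*}}$, some $j \in e$ satisfies $j \sqsubseteq d_* \sqsubseteq i$; this consistency of $j$ and $i$ lets Lemma~\ref{le:transitive} collapse the $\leftrightarrow^*$-chain linking $j$ to $i$ to a single $\leftrightarrow$-step, and Lemma~\ref{le:inter-cons-ord} forces $j = i$. Hence $i \sqsubseteq d_* \sqsubseteq i$, so $d_* = i$, whence $C' \cup \{e\} = C \cup \{e\}$ and (since $e$ is not in either $C'$ or $C$) $C' = C$, completing the minimality argument.
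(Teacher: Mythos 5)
Your overall architecture is sound and your treatment of the ``compact $\Rightarrow$ configuration'' direction, of uniqueness, and of the second biconditional is close in spirit to the paper (which, for the record, proves the hard direction differently: it first uses consistency of $C$ to obtain a single compact $d$ with $C \subseteq \eqclassir{\ir{d}}$, carves out the downward-closed set $X = \{ j \in \ir{D} \mid \eqclassir{j} \in C \wedge j \sqsubseteq d\}$ by induction on $|C|$, and takes $\bigsqcup X$; your elaboration of the minimality claim, which the paper dismisses as immediate, is a genuine plus). However, your incremental construction of $d_k = d_{k-1} \sqcup i_k$ has a gap. Lemma~\ref{le:prec-irr-a}(\ref{le:prec-irr-a:2}) has \emph{consistency of $d$ and $i$} as an explicit hypothesis, and in a coherent domain the join $d_{k-1} \sqcup i_k$ only exists if $\{d_{k-1}, i_k\}$ has an upper bound. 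You never establish this. It does not follow from pairwise consistency of $e_1,\ldots,e_k$: non-conflict of $e_k$ with $e_j$ only guarantees that \emph{some} representatives of the two classes sit below a common compact, whereas the representative $i_k$ handed to you by the enabling clause of Definition~\ref{de:esfusdom} is an arbitrary member of $e_k$ whose predecessors' classes lie in $\{e_1,\ldots,e_{k-1}\}$; nothing forces this particular $i_k$ to be compatible with the particular compact $d_{k-1}$ you have already built. (The statement is in fact true, but one cannot see that without the very equivalence this lemma is used to prove.) The paper's route avoids the problem precisely by fixing one dominating compact up front and choosing all representatives inside its principal ideal, where consistency is automatic.

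The same consistency issue is hidden inside your auxiliary claim that $\eqclassir{\ir{x}} \subseteq \eqclassir{\ir{y}}$ implies $x \sqsubseteq y$: the natural induction (take $i \in \ir{x}\setminus\ir{y}$ minimal, find $i' \in \ir{y}$ with $i \leftrightarrow^* i'$, apply Lemma~\ref{le:eq-char}(\ref{le:eq-char:3}) with $d = y$ to get $i \sqsubseteq y$) needs $i$ and $i'$ to be consistent, hence $x$ and $y$ consistent, and needs the collapse of $\leftrightarrow^*$ to $\leftrightarrow$ on consistent irreducibles. The collapse itself is only licensed by Lemma~\ref{le:transitive} one link at a time, so for chains of length greater than two it requires consistency of intermediate elements that you do not have; the paper does use this step elsewhere at the same level of informality, so I would not count it alone as fatal, but combined with the missing consistency of $d_{k-1}$ and $i_k$ the inductive construction as written does not go through. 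To repair the proof, either establish first (as the paper does) that a consistent finite $C$ admits a single compact $d$ with $C \subseteq \eqclassir{\ir{d}}$ and restrict all representative choices to $\principal{d}$, or add an explicit argument at each step that the chosen $i_k$ is consistent with $d_{k-1}$.
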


\begin{proof}
  The left to right implication {of the first part} follows by proving that, given a
  configuration $C \in \conff{\ev{D}}$, there exists
  $X \subseteq \ir{D}$ downward closed and consistent such that
  $\eqclassir{X} = C$. Hence, if we let $d = \bigsqcup X$, by
  Proposition~\ref{pr:unique-dec}, we have that
  $C = \eqclassir{X} = \eqclassir{\ir{d}}$. Moreover, $d$ is uniquely
  determined, since, by the same proposition we have that for any
  other $X'$ such that $\eqclassir{X'} = C$, since
  $\eqclassir{X'} = C = \eqclassir{X} = \eqclassir{\ir{d}}$,
  necessarily $d = \bigsqcup X'$.

  Let us thus prove the existence of $X \subseteq \ir{D}$ consistent
  and downward closed such that $\eqclassir{X} = C$.  
  We proceed by induction on the cardinality of $C$.

  \begin{itemize}
  \item if $|C| = 0$, namely $C = \emptyset$ then we can take $X= \emptyset$,
    and trivially conclude.

  \item if $|C|>0$, since $C$ is secured, there is
    $\eqclassir{i} \in C$ such that
    $C' = C \setminus \{ \eqclassir{i} \} \vdash \eqclassir{i}$.\
    By inductive hypothesis there is $X' \subseteq \ir{D}$, downward
    closed and consistent such that $\eqclassir{X'} = C'$.

    The fact that
    $C' = C \setminus \{ \eqclassir{i} \} \vdash \eqclassir{i}$ means
    that for some $i' \in \ir{D}$ such that $i' \leftrightarrow^* i$, it holds
    $\eqclassir{\ir{i'} \setminus \{i'\}} = \eqclassir{\ir{\pred{i'}}}
    \subseteq C'$.
    Therefore, there is $X'' \subseteq X'$ such that
    $\eqclassir{X''} = \eqclassir{\ir{\pred{i}}}$ and thus, by
    Proposition~\ref{pr:unique-dec},
    $\pred{i'} \sqsubseteq \bigsqcup X'$.
    We can assume, without loss of generality that
    $\ir{\pred{i'}} \subseteq X'$. If not, we can replace $X'$ by
    $X' \cup \ir{\pred{i'}}$. By the consideration above, it is
    consistent and it has the same join of $X'$.

    Now, an induction on the cardinality $k$ of
    $X' \setminus \ir{\pred{i'}}$ allows us to show that $\{i', j\}$
    consistent for all $j \in X'$. If $k=0$ then
    $X' \setminus \ir{\pred{i'}} =\emptyset$ and the thesis is
    trivial. Otherwise, consider $j' \in X' \setminus \ir{\pred{i'}}$
    maximal and $X'' = X' \setminus \{j'\}$. Since
    $|X' \setminus \ir{\pred{i'}}|=k-1$, by inductive hypothesis, for
    all $j \in X''$, we have $\{j,i'\}$ consistent. Now, since
    $j, \pred{i} \sqsubseteq \bigsqcup X'$, we have that
    $\{j, \pred{i}\}$ is consistent. Moreover, since
    $\ir{j'} \setminus\{j'\} = \ir{\pred{j'}} \subseteq X''$, we have
    that $\{i, \pred{j'}\}$ is consistent. 

    Finally, recalling that, since $C$ is
    consistent, we have that
    $\neg (\eqclassir{j'} \# \eqclassir{i'})$, i.e., there is
    $d \in \compact{D}$ such that
    $\{ \eqclassir{j'}, \eqclassir{i'} \} \subseteq
    \eqclassir{\ir{d}}$. More explicitly, this means that there are
    $j'', i'' \in \ir{D}$ such that $j'' \leftrightarrow^* j'$,
    $i'' \leftrightarrow^* i'$ and $j', i''$ consistent.
    Since $D$ is {\wi}, by condition (\ref{de:well-interchange:2}) of
    Definition~\ref{de:well-interchange}, we conlcude $j', i'$
    consistent.

    We can thus conclude that $X = X' \cup \{ i' \}$ is consistent,
    and downward closed since
    $\ir{\pred{i'}} \subseteq X'$. Hence we conclude.
  \end{itemize}

  \bigskip

  For the converse, let $C = \eqclassir{\ir{d}}$. Let
  $\bot = d_0 \prec d_1 \prec \ldots d_{n-1} \prec d_n = d$ be a chain
  of immediate precedence and for each $h \in \{1, \ldots, n\}$ take
  $i_h \in \diff{d_h}{d_{h-1}}$. By Lemma~\ref{le:chains},
  $d = \bigsqcup \{ i_1, \ldots, i_n\}$ and
  $\eqclassir{\ir{d}} = \eqclassir{\{ i_1, \ldots, i_n\}}$. Moreover,
  for all $h \in \{1, \ldots, n\}$, we have
  $\eqclassir{\ir{i_h} \setminus \{ i_h \}} \subseteq
  \eqclassir{\ir{d_{h-1}}}$, hence
  $\eqclassir{\ir{d_{h-1}}} \vdash \eqclassir{i_h}$. Therefore $C$ is
  secured. Moreover, it is clearly consistent and thus
  $C \in \conf{\ev{D}}$.

  \bigskip

  The second part follows immediately by Definition~\ref{de:esfusdom}.
\end{proof}

Given the lemma above, it is now possible to state how weak prime 
domains relate to connected {\esabbr}.

\begin{proposition}[{from weak prime domains to connected ES}]
  \label{pr:domain-to-fes}
  Let  $D$ be a weak prime domain. Then $\ev{D}$ is a connected {\esabbr}.
\end{proposition}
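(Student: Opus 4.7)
The plan is to combine the characterisation of minimal enablings provided by Lemma~\ref{le:comp-conf} with the definition of events as equivalence classes under $\leftrightarrow^*$, reducing the connectedness requirement to a single-step argument for each link in a $\leftrightarrow$-chain.

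First, I would fix two minimal enablings $C \vdash_0 e$ and $C' \vdash_0 e$ in $\ev{D}$. By Lemma~\ref{le:comp-conf}, these are of the form $C = \eqclassir{\ir{i} \setminus \{i\}}$ and $C' = \eqclassir{\ir{i'} \setminus \{i'\}}$ for some $i, i' \in e$. Since events are equivalence classes under the transitive closure of $\leftrightarrow$, there exists a finite chain $i = j_0 \leftrightarrow j_1 \leftrightarrow \cdots \leftrightarrow j_n = i'$ of irreducibles, all lying in $e$. For each $k \in \interval{0}{n}$, set $C_k = \eqclassir{\ir{j_k} \setminus \{j_k\}}$; by Lemma~\ref{le:comp-conf} we have $C_k \vdash_0 e$, with $C_0 = C$ and $C_n = C'$.

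The core step is showing $C_k \conn{e} C_{k+1}$ for each $k \in \interval{0}{n-1}$, i.e., that $C_k \cup C_{k+1} \cup \{e\}$ is consistent in $\ev{D}$. Since $j_k \leftrightarrow j_{k+1}$, Lemma~\ref{le:inter-cons-ord} gives that $j_k$ and $j_{k+1}$ are consistent in $D$, hence the compact element $d_k = j_k \sqcup j_{k+1}$ exists. By Lemma~\ref{le:comp-conf}, $\eqclassir{\ir{d_k}}$ is a configuration of $\ev{D}$. Since $j_k, j_{k+1} \sqsubseteq d_k$ we have $\ir{j_k} \cup \ir{j_{k+1}} \subseteq \ir{d_k}$; hence $C_k, C_{k+1} \subseteq \eqclassir{\ir{d_k}}$ and also $e = \eqclassir{j_k} \in \eqclassir{\ir{d_k}}$. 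Therefore $C_k \cup C_{k+1} \cup \{e\}$ is included in a configuration, hence consistent, yielding $C_k \conn{e} C_{k+1}$.

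Chaining these single steps gives $C = C_0 \conn{e}^* C_n = C'$, which establishes the connectedness property and concludes the proof. I do not anticipate any real obstacle: the only subtlety is that $\leftrightarrow$ itself need not be transitive on all irreducibles (as shown earlier via Fig.~\ref{fi:not-trans}), so one must genuinely process one link of the $\leftrightarrow$-chain at a time rather than trying to connect $C$ and $C'$ directly; but Lemma~\ref{le:inter-cons-ord} plus Lemma~\ref{le:comp-conf} handle each individual link cleanly.
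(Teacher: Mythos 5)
Your proposal is correct and follows essentially the same route as the paper's proof: both reduce connectedness to the links of a $\leftrightarrow$-chain between the two irreducibles given by Lemma~\ref{le:comp-conf}, and both use Lemma~\ref{le:inter-cons-ord} to obtain consistency of adjacent irreducibles, hence consistency of $C_k \cup C_{k+1} \cup \{e\}$ by the definition of conflict in $\ev{D}$. The paper phrases the chaining as an induction on the chain length, whereas you process the links directly and make the witnessing compact $d_k = j_k \sqcup j_{k+1}$ explicit, but this is only a presentational difference.
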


\begin{proof}
  We have to show that
  if $X \vdash_0 e$ and $X' \vdash_0 e$, then $X \conn{e}^* X'$.  Note
  that, by Lemma~\ref{le:comp-conf}, 
  from $X \vdash_0 e$ and
  $X' \vdash_0 e$, we deduce that there exists $i, i' \in e$ such that
  $\eqclassir{\ir{i} \setminus \{i\}} = X$ and
  $\eqclassir{\ir{i'} \setminus \{i'\}} = X'$. Since $i, i' \in e$ we
  deduce that $i \leftrightarrow^* i'$, namely
  $i = i_0 \leftrightarrow i_1 \leftrightarrow \ldots \leftrightarrow
  i_n = i'$. We proceed by induction on $n$. The base case $n=0$ is
  trivial. If $n >0$ then from
  $i \leftrightarrow i_1 \leftrightarrow^* i'$ we have that
  $i_1 \in e$ and, if we let
  $X_1 = \eqclassir{\ir{i_1} \setminus \{i_1\}}$, then
  $X_1 \vdash_0 e$. By inductive hypothesis, we know that
  $X_1 \conn{e}^* X'$. Moreover, since $i \leftrightarrow i_1$,
  the irreducibles $i$ and $i_1$ are consistent. Hence, by definition
  of conflict in $\ev{D}$, also $X \cup X_1 \cup \{e\}$ is consistent
  and hence $X \conn{e} X_1$. Therefore $X \conn{e}^* X'$, as desired.
\end{proof}

\subsection{Relating Categories of Models}
\label{ss:relating-cats}
We show that, at a categorical level, the constructions taking a weak prime domain to an
{\esabbr} and an {\esabbr} to a domain (the domain of its configurations) establish
a coreflection between the corresponding categories.
This becomes an equivalence when it is restricted to the full subcategory of connected {\esabbr}.

\begin{theorem}[coreflection of $\es$ and $\WDom$]
 \label{th:es-dom-equivalence}
 The functors $\zdom : \es \to \WDom$ and $\zev: \WDom \to \es$ form
 a coreflection $\zev \dashv \zdom$.  It restricts to an equivalence between $\WDom$ and
 $\ces$.
\end{theorem}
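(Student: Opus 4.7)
The plan is to construct a unit $\eta : \mathrm{Id}_{\WDom} \to \zdom \circ \zev$ that is a natural isomorphism and a counit $\epsilon : \zev \circ \zdom \to \mathrm{Id}_{\es}$ whose components are isomorphisms precisely on the objects of $\ces$. Functoriality of $\zdom$ and $\zev$ is already packaged in Propositions~\ref{pr:es-to-dom} and~\ref{pr:domain-to-fes} together with Lemma~\ref{le:domain-to-es-wd}, and preservation of identities and composition is routine.

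For the unit, set $\eta_D(d) = \eqclassir{\ir{d}}$ for $d \in \compact{D}$ and extend to non-compact elements by ideal completion. Lemma~\ref{le:comp-conf} gives surjectivity onto $\conff{\zev(D)}$, and Proposition~\ref{pr:unique-dec} gives injectivity; order preservation in both directions follows from a short argument: if $\eqclassir{\ir{d}} \subseteq \eqclassir{\ir{d'}}$, then applying the weak prime property to each $i \in \ir{d \sqcup d'}$ yields $\eqclassir{\ir{d \sqcup d'}} = \eqclassir{\ir{d'}}$, whence $d \sqcup d' = d'$ by injectivity. Thus $\eta_D$ is an order isomorphism, hence a $\WDom$-morphism with a $\WDom$-inverse, and naturality reduces to a routine calculation.

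For the counit, by Lemma~\ref{le:es-to-fusion-domain}(2,4) the events of $\zev(\zdom(E))$ are classes $\eqclassir{\esir{C}{e}}$ with $C \vdash_0 e$, two such being equal iff they share the event $e$ and the underlying minimal enablings are $\conn{e}^*$-related. Define $\epsilon_E(\eqclassir{\esir{C}{e}}) = e$. Of the three \esabbr-morphism conditions: conflict reflection holds because if $e_1 \# e_2$ in $E$ then no configuration of $E$ contains both, hence no compact of $\dom{E}$ carries irreducibles above both, so the classes are in conflict in $\zev(\zdom(E))$; the injectivity-on-consistent-pairs clause holds because any two minimal enablings of the same event lying in a common configuration $D$ are automatically $\conn{e}$-connected through $D$, forcing their classes to coincide; preservation of enablings is an unfolding of Definition~\ref{de:esfusdom} combined with the characterisation of finite configurations given by Lemma~\ref{le:comp-conf}. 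Naturality of $\epsilon$ and the triangle identities then reduce to straightforward verifications, since on compacts and events both $\eta$ and $\epsilon$ are essentially re-indexings through the bijections of Lemmas~\ref{le:comp-conf} and~\ref{le:es-to-fusion-domain}.

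Finally, for the restriction to an equivalence, $\epsilon_E$ is injective on events precisely when any two minimal enablings of each event lie in the same $\conn{e}^*$-class, i.e., when $E$ is connected; in that case $\epsilon_E$ is a bijection preserving and reflecting all the \esabbr{} structure, hence an isomorphism, which combined with the unit isomorphism yields the equivalence between $\WDom$ and $\ces$. The main obstacle is the \esabbr-morphism check for $\epsilon_E$: the conflict in $\zev(\zdom(E))$ is defined only indirectly, through the non-existence of a witnessing compact in $\dom{E}$, and matching it with the conflict in $E$ is where the combinatorics of equivalence classes of irreducibles, minimal enablings, and coherent configurations interact most tightly.
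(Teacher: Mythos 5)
Your proposal is correct and follows essentially the same route as the paper: the counit $\epsilon_E(\eqclassir{\esir{C}{e}}) = e$ is exactly the paper's morphism $\theta_E$, the unit is the (inverse of the) paper's isomorphism $\eta_D$ built from Lemma~\ref{le:comp-conf} and Proposition~\ref{pr:unique-dec}, and the restriction to an equivalence rests on the same observation that $\epsilon_E$ is injective precisely when the minimal enablings of each event form a single $\conn{e}^*$-class, via Lemma~\ref{le:es-to-fusion-domain}. The only step worth tightening is the order-reflection argument for $\eta_D$, where you form $d \sqcup d'$ before knowing that $\{d,d'\}$ is consistent; that consistency should first be extracted from the fact that $\eqclassir{\ir{d'}}$ is a consistent set of events containing $\eqclassir{\ir{d}}$, exactly as in the construction inside the proof of Lemma~\ref{le:comp-conf}.
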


\begin{proof}
  Let ${E}$ be an {\esabbr}. Recall that the corresponding domain of
  configurations is
  $\dom{{E}} = \langle \conf{{E}}, \subseteq \rangle$.  Then,
  $\ev{\dom{{E}}} = \langle E', \#', \vdash'\rangle$, where the set of
  events $E'$ is defined as
  \begin{center}
    $E' = \eqclassir{\ir{\dom{{E}}}} = \{ \eqclassir{\esir{C}{e}} \mid C
    \vdash_0 e \}$
  \end{center}
  
  By
  Lemma~\ref{le:es-to-fusion-domain}(\ref{le:es-to-fusion-domain:4}),
  the equivalence class of an irreducible $\esir{C}{e}$ consists of
  all minimal enablings of event $e$ which are connected. Therefore we
  can define a morphism, which is the counit of the adjunction, as follows:
  \begin{center}
    $\begin{array}{lccc}
       \theta_{{E}} : & \ev{\dom{{E}}} &  \to & {E}\\
                             & \eqclassir{\esir{C}{e}}
                             & \mapsto & e
     \end{array}
     $
  \end{center}
  Observe that $\theta_{{E}}$ is surjective. In fact ${E}$ is live and
  thus any event $e \in E$ has at least a minimal enabling
  $C \vdash_0 e$. If we let $I = \esir{C}{e}$, then
  $\eqclassir{I} \in \ev{\dom{{E}}}$ and $\theta_{{E}}(\eqclassir{I}) = e$.
  The mapping $\theta_E$ is clearly
  a  morphism of event structures. In fact, observe that
  \begin{itemize}
  
  \item For 
  $I_1, I_2 \in \ir{\dom{{E}}}$, if
    $\theta_{{E}}(\eqclassir{I_1}) \#
    \theta_{{E}}(\eqclassir{I_2})$
    then $\eqclassir{I_1} \#' \eqclassir{I_2}$.\\
    Let $I_1 = \esir{C_1}{e_1}$ and $I_2 = \esir{C_2}{e_2}$. If
    $\theta_{{E}}(\eqclassir{I_1}) = e_1 \# e_2 =
    \theta_{{E}}(\eqclassir{I_2})$, then there cannot be any
    configuration $C \in \conf{{E}}$ such that $I_1, I_2 \subseteq
    C$. Hence, by definition of conflict in $\ev{\dom{{E}}}$, we have
    $\eqclassir{I_1} \#' \eqclassir{I_2}$.

  \smallskip
  
  \item For  $I_1, I_2 \in \ir{\dom{{E}}}$, with
    $\eqclassir{I_1} \neq \eqclassir{I_2}$, we have that
    $\theta_{{E}}(\eqclassir{I_1}) =
    \theta_{{E}}(\eqclassir{I_2})$
    implies $\eqclassir{I_1} \#' \eqclassir{I_2}$.
    
    In fact, by
    Lemma~\ref{le:es-to-fusion-domain}(\ref{le:es-to-fusion-domain:2}),
    the irreducibles $I_1$ and $I_2$ are of the kind
    $I_1 = \esir{C_1}{e_1}$ and $I_2 = \esir{C_2}{e_2}$. We show that
    if $\eqclassir{I_1}$ and $\eqclassir{I_2}$ are consistent and
    $\theta_{{E}}(\eqclassir{I_1}) = \theta_{{E}}(\eqclassir{I_2})$
    then $\eqclassir{I_1} = \eqclassir{I_2}$.

    Assume
    $\theta_{{E}}(\eqclassir{I_1}) =
    \theta_{{E}}(\eqclassir{I_2})$,
    hence $e_1=e_2$. Since $\eqclassir{I_1}$ and
    $\eqclassir{I_2}$ are consistent, there exists
    $k \in \compact{\dom{{E}}}$ such that
    $\eqclassir{I_1}, \eqclassir{I_2} \in \eqclassir{\ir{k}}$.  
    Compacts in $\dom{{E}}$ are finite configurations, hence the condition
    amounts to the existence of $C \in \conff{{E}}$ such that
    $\eqclassir{I_1}, \eqclassir{I_2} \in \eqclassir{\ir{C}}$, i.e.,
    there  are $I_1', I_2'$ with $I_i \leftrightarrow^* I_i'$ for
    $i \in \{ 1,2\}$, such that $I_1', I_2' \subseteq C$. Since the
    choice of the representatives is irrelevant, we can assume
    that $I_1 = I_1'$ and $I_2 = I_2'$. Summing up, $I_1$ and $I_2$
    are consistent minimal enablings of the same event, hence by
    Lemma~\ref{le:es-to-fusion-domain}(\ref{le:es-to-fusion-domain:4}), $I_1 \leftrightarrow I_2$,
    i.e., $\eqclassir{I_1} = \eqclassir{I_2}$, as desired.

    \smallskip

  \item
    For the enabling relation, we have to show that if
    $X \vdash' \eqclassir{\esir{C}{e}}$ then
    $\theta_E(X) \vdash \theta(\eqclassir{\esir{C}{e}}) = e$.
    Assume $X \vdash' \eqclassir{\esir{C}{e}}$. According to the
    definition of the functor ${\ensuremath{\mathcal{E}}}$, this means
    that there exists
    $i \in \eqclassir{\esir{C}{e}}$ such that
    ${\eqclassir{\ir{i} \setminus \{i \}} \subseteq X}$.
    Let such $i \in \eqclassir{\esir{C}{e}}$ be $i = \esir{C'}{e}$
    with $C' \vdash_0 {e}$. We have
    \begin{center}
      $\ir{\esir{C'}{e}} \setminus \{ \esir{C'}{e} \} = \ir{C'} = \{
      \eqclassir{\esir{ C''}{ e''}} \mid \esir{ C''}{ e''} \subseteq C'\}$.
    \end{center}    
    Therefore from
    $\eqclassir{\ir{\esir{C'}{e'}} \setminus \{ \esir{C'}{e'} \}} \subseteq X$
    we deduce
    \begin{center}
      $\theta_E(\eqclassir{\ir{\esir{C'}{e'}} \setminus \{ \esir{C'}{e'} \}}) = C' \subseteq \theta_E(X)$.
    \end{center}
    Since $C' \vdash_0 e$, by monotonicity of enabling, we conclude
    $\theta_E(X) \vdash e$, as desired.
  \end{itemize}

  We prove the naturality of $\theta$ by showing that the
  diagram below commutes.
  \begin{center}
    \begin{tikzcd}[column sep=large]
      \ev{\dom{{E}_1}} 
      \arrow[d, below, "\ev{\dom{f}}" left]
      \arrow[r, "\theta_{{E}_1}"]
      & {{E}_1} \arrow[d, "f"]\\
      \ev{\dom{{E}_2}} 
      \arrow[r, "\theta_{{E}_2}" below]
      &
      {{E}_2}
    \end{tikzcd}
  \end{center}
  Consider $\eqclassir{\esir{C_1}{e_1}} \in \ev{\dom{{E}_1}}$.
  Recall that $\ev{\dom{f}}(\eqclassir{\esir{C_1}{e_1}})$ is computed
  by considering the image of the irreducible
  $\esir{C_1}{e_1}$ and of its predecessor, namely
  \begin{center}
    $\dom{f}(C_1) = f(C_1)$ and $\dom{f}(\esir{C_1}{e_1}) = f(C_1 \cup\{e_1\})$
  \end{center}
  If $f(e_1)$ is defined, then $f(C_1) \prec f(C_1 \cup\{e_1\})$ and
  $\ev{\dom{f}}(\eqclassir{\esir{C_1}{e_1}}) = f(e_1)$, otherwise
  $\ev{\dom{f}}(\eqclassir{\esir{C_1}{e_1}})$ is undefined. This means
  that in all cases, as desired
  \begin{center}
    $\ev{\dom{f}}(\eqclassir{\esir{C_1}{e_1}}) = f(e_1) =
    f(\theta_{{E}_1}(\eqclassir{\esir{C_1}{e_1}}))$.
  \end{center}
  
  \bigskip

  Vice versa, let $D$ be a weak prime domain. Recall from Definition~\ref{de:esfusdom} that 
  $\ev{D} = \langle E, \#, \vdash \rangle$ is defined as:
  \begin{itemize}
  \item $E =  \eqclassir{\ir{D}}$
  \item $e \# e'$ if there is no $d \in \compact{D}$ such that
    $e, e' \in \eqclassir{\ir{d}}$;
  \item $X \vdash e$ if there exists $i \in e$ such that
    $\eqclassir{\ir{i} \setminus \{ i \}} \subseteq X$.
  \end{itemize}
  and consider $\dom{\ev{D}}$.  Elements of $\compact{\dom{\ev{D}}}$
  are configurations of $C \in \conff{\ev{D}}$.  
  We can define the {unit} of the adjunction as
  \begin{center}
    $\begin{array}{lccc}
       \eta_{D} : & \compact{D} &  \to & \compact{\dom{\ev{D}}}\\
                  & d & \mapsto & \eqclassir{\ir{d}}
     \end{array}
     $
  \end{center}
  Observe that it is well defined, since by Lemma~\ref{le:comp-conf},
  $\eqclassir{\ir{d}}$ is a finite configuration of $\ev{D}$ and thus
  a compact element in $\compact{\dom{\ev{D}}}$.  The function is
  clearly monotone and bijective with inverse
  $\eta_D^{-1} : \compact{\dom{\ev{D}}} \to \compact{D}$ defined, for
  $C \in \compact{\dom{\ev{D}}} = \conff{\ev{D}}$ by letting
  $\eta_D^{-1}(C) = d$, where $d$ is the unique element, given by
  Lemma~\ref{le:comp-conf}, such that $C = \eqclassir{\ir{d}}$.
  By
  algebraicity of the domains, this function thus uniquely extends to
  an isomorphism $\eta_D : D \to \dom{\ev{D}}$.

  Finally, we prove the naturality of $\eta_D$. It is convenient to
  prove the naturality of the inverse, i.e., to show that the diagram
  below commutes.
  \begin{center}
    \begin{tikzcd}[column sep=large]
      \dom{\ev{D_1}}
      \arrow[d, "\dom{\ev{f}}" left]
      \arrow[r, "\eta_{D_1}^{-1}"]
      &
      D_1
      \arrow[d, "f"]\\
      \dom{\ev{D_2}}
      \arrow[r, "\eta_{D_2}^{-1}"]
      &
      D_2
    \end{tikzcd}
  \end{center}
  
  Let $C_1 \in \compact{\dom{\ev{D_1}}}$, namely
  $C_1 \in \conff{\ev{D_1}}$, and let $\eta_{D_1}^{-1}(C_1) = d_1$ be the
  element such that $C_1 = \eqclassir{\ir{d_1}}$.

  The construction offered by Lemma~\ref{le:chains} provides a chain
  \begin{center}
    ${d^0_1} = \bot \prec d_1^1 \prec d_1^2 \prec \ldots \prec d_1^n = d_1$
  \end{center}
  and, by the same lemma, if we take an irreducible $i_1^h \in
  \diff{d_1^h}{d_1^{h-1}}$ for $1 \leq h \leq n$ we have that
  $C_1 = \eqclassir{\ir{d_1}} = \eqclassir{\{i_1^1, \ldots, i_1^n \}}$.
  Therefore the image
  \begin{center}
    $\dom{\ev{f}}(C_1) = \{ \ev{f}(\eqclassir{j_1}) \mid
    \eqclassir{j_1} \in C_1 \} = \{ \ev{f}(\eqclassir{i_1^h}) \mid h
    \in \interval{n}\}$
  \end{center}
  is the set of equivalence classes of irreducibles
  $i_2^1, \ldots, i_2^k$ corresponding to 
  \begin{center}
    $f(d_1^0) = \bot \prec f(d_1^1) \prec f(d_1^2) \prec \ldots \prec
    f(d_1^n) = f(d_1)$
  \end{center}
  namely $i_2^j \in \diff{f(d_1^j)}{f(d_1^{j-1})}$, and,
  again, by Lemma~\ref{le:chains},
  $\eqclassir{\{i_2^1, \ldots, i_2^k\}} = \eqclassir{\ir{f(d_1)}}$.
  Summing up
  \begin{center}
    $\eta_{D_2}^{-1}(\dom{\ev{f}}(C_1)) = \eta_{D_2}^{-1}(\{ \eqclassir{i_2^h}
    \mid 1 \leq h \leq k \} \}) = f(d_1) = f(\eta_{D_1}^{-1}(C_1))$
  \end{center}
  as desired.
  We finally show that the above coreflection restricts to an
  equivalence between $\WDom$ and $\ces$.
  For this, just observe that, in the proof above, when ${E}$ is a connected
  {\esabbr}, then the morphism $\theta_{E}$ defined as
  \begin{center}
    $\begin{array}{lccc}
       \theta_{{E}} : & \ev{\dom{{E}}} &  \to & {E}\\
                             & \eqclassir{\esir{C}{e}} & \mapsto & e
     \end{array}
     $
  \end{center}
  is an isomorphism. We already know that
  it is surjective. We next show that it is also injective. In fact, if
  $\theta_{{E}}(\eqclassir{I}) =
  \theta_{{E}}(\eqclassir{I'})$
  then $I$ and $I'$ are minimal enablings of the same event, i.e.,
  $I = \eqclassir{\esir{C}{e}}$ and $I' = \eqclassir{\esir{C'}{e}}$. Since
  ${E}$ is a weak prime domain, $C \conn{e}^* C'$ and thus, by
  Lemma~\ref{le:es-to-fusion-domain}(\ref{le:es-to-fusion-domain:4}), $I \leftrightarrow^* I'$,
  i.e., $\eqclassir{I} =\eqclassir{I'}$.
  Proving that also the inverse is an {\esabbr} morphism is immediate, by
  exploiting the fact that the {\esabbr} is live.
\end{proof}

The above result indirectly provides a way of turning a general
{\esabbr} into a connected {\esabbr}.

\begin{corollary}[from general to connected {\esabbr}]
  The functors $\zconnes : \es \to \ces$ defined by
  $\zconnes = \zev \circ \zdom$ and the inclusion
  $\zinces : \ces \to \es$ form a coreflection.
\end{corollary}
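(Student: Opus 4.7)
The plan is to derive this corollary formally from Theorem~\ref{th:es-dom-equivalence} by transporting the coreflection $\zev \dashv \zdom$ along the equivalence $\WDom \simeq \ces$ that it restricts to. Concretely, I will exhibit the adjunction $\zinces \dashv \zconnes$ using, as counit, the very same natural transformation $\theta_E : \ev{\dom{E}} \to E$ already constructed in the proof of Theorem~\ref{th:es-dom-equivalence}.

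First, $\zconnes$ is well-defined as a functor from $\es$ to $\ces$: for every $E \in \es$, Proposition~\ref{pr:domain-to-fes} guarantees that $\ev{\dom{E}} \in \ces$, and functoriality is inherited from $\zev$ and $\zdom$. The family $\theta_E$ is natural in $E$ by the proof of Theorem~\ref{th:es-dom-equivalence}, and, crucially, its component $\theta_{E'}$ is an isomorphism whenever $E' \in \ces$ by the equivalence part of that same theorem.

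The substantive step is verifying the universal property of $\theta_E$ with respect to morphisms from objects of $\ces$. Given $E' \in \ces$, $E \in \es$ and $f : E' \to E$ in $\es$, I would set $\tilde f = \zconnes(f) \circ \theta_{E'}^{-1} : E' \to \zconnes E$; naturality of $\theta$ at $f$ then gives $\theta_E \circ \zconnes(f) = f \circ \theta_{E'}$, so that $\theta_E \circ \tilde f = f$. For uniqueness, suppose $g : E' \to \zconnes E$ also satisfies $\theta_E \circ g = f$. Applying $\zconnes$ yields $\zconnes(\theta_E) \circ \zconnes(g) = \zconnes(f) = \zconnes(\theta_E) \circ \zconnes(\tilde f)$; a standard triangle-identity argument shows that $\zdom(\theta_E)$ is inverse to the (iso) unit $\eta_{\dom{E}}$ of $\zev \dashv \zdom$, hence $\zconnes(\theta_E) = \zev(\zdom(\theta_E))$ is itself an iso, and so $\zconnes(g) = \zconnes(\tilde f)$. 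Combining this with the naturality squares $g \circ \theta_{E'} = \theta_{\zconnes E} \circ \zconnes(g)$ and $\tilde f \circ \theta_{E'} = \theta_{\zconnes E} \circ \zconnes(\tilde f)$, and using invertibility of $\theta_{E'}$, yields $g = \tilde f$.

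The only real obstacle is chasing the naturality squares cleanly in the uniqueness argument; everything else is bookkeeping. Since the key input — that $\theta_{E'}$ is iso for $E' \in \ces$ — comes directly from Theorem~\ref{th:es-dom-equivalence}, no new domain-theoretic lemma should be needed.
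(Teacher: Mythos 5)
Your proof is correct and is essentially the paper's argument: the paper dismisses the corollary as an ``immediate consequence'' of Theorem~\ref{th:es-dom-equivalence}, and what you have written is precisely the standard unfolding of that consequence, reusing the counit $\theta$ and the fact that $\theta_{E'}$ is an isomorphism on connected {\esabbr}. The existence and uniqueness chases are sound (in particular, $\zdom(\theta_E)=\eta_{\dom{E}}^{-1}$ by the triangle identity, so $\zconnes(\theta_E)$ is indeed iso), so no gap remains.
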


\begin{proof}
  Immediate consequence of Theorem~\ref{th:es-dom-equivalence}.
\end{proof}

Explicitly, for any event structure $E$ the corresponding connected {\esabbr} 
$\zconnes(E) = \langle E', \vdash', \#' \rangle$ is defined as follows. 
The set of events is $E' = \{ \eqclass[\sim]{\esir{C}{e}} \mid C \vdash_0 e \}$, 
where $\sim$ is the least equivalence such that $\esir{C}{e} \sim \esir{C'}{e}$ 
if $\esir{C}{e}$ and $\esir{C'}{e}$ are consistent. 
Moreover $\eqclass[\sim]{\esir{C}{e}} \#' \eqclass[\sim]{\esir{C'}{e'}}$ if for all 
$\esir{C_1}{e} \sim \esir{C}{e}$ and $\esir{C_1'}{e'} \sim \esir{C'}{e'}$ the 
minimal enablings $\esir{C_1}{e}$ and $\esir{C_1'}{e_1'}$ are not consistent. 
Finally, for $X \subseteq E'$, $X \vdash' \eqclass[\sim]{\esir{C}{e}}$ if there 
exists $\esir{C'}{e} \sim \esir{C}{e}$ such that 
$C' \subseteq \{ e'' \mid \eqclass[\sim]{\esir{C''}{e''}} \in X\}$.

\begin{figure}
\subcaptionbox{
  \label{fi:summary-diag}
}{
  \begin{tikzcd}[row sep=small,ampersand replacement=\&]      
  \&
  {}
  \&
  \PDom \arrow[dd, hook]
  \arrow[dl, shift right=1.2ex]
  \arrow[dll, bend right=20, shift right=2ex]
  \\
  \ses \arrow[dd, hook]
  \arrow[urr, bend left=20, sloped, "\bot"]
  \&
  \pes \arrow[dd, hook]
  \arrow[l, hook]
  \arrow[ur, shift right=1.2ex, sloped, near end, "\sim"]
  \&
  {}\\
  {}
  \&
  {}
  \&
  \WDom
  \arrow[dl, shift right=1.2ex]
  \arrow[dll, bend right=20, shift right=2ex]\\
  \es
  \arrow[urr, bend left=20, sloped, "\bot"]
  \&
  \ces
  \arrow[l, hook]
  \arrow[ur, shift right=1.2ex, sloped, near end, "\sim"]
  \&
  {}
\end{tikzcd}
}
\subcaptionbox{
  \label{fi:summary-pict}
}{
\begin{tikzpicture}[x=8mm,y=10mm]
  \def\esel{(0,1)  ellipse [x radius=2, y radius=4]}
  \def\cesel{[black, pattern=north west lines, pattern color=gray] (0,0) ellipse [x radius=1.5, y radius=2.2]}
  \def\sesel{[black, pattern=north east lines, pattern color=gray](0,2) ellipse [x radius=1.5, y radius=2.2]}

 \def\wdomel {(6,0) ellipse [x radius=1.5, y radius=2.5]}
 \def\pdomel {[fill=gray!10] (6,2) ellipse [x radius=1.5, y radius=2.5]}

 \begin{scope}
   \clip (6,0) ellipse [x radius=1.5, y radius=2.5];
   \draw (6,2) ellipse [x radius=1.5, y radius=2.5];
 \end{scope}
 \begin{scope}
   \clip (6,2) ellipse [x radius=1.5, y radius=2.5];
   \draw [fill=gray!10] (6,0) ellipse [x radius=1.5, y radius=2.5];      
 \end{scope}

  \draw \esel;
  \draw \cesel;
  \draw \sesel;
  \draw \wdomel;

  \node [rectangle, rounded corners, fill=white] (ses) at (0,2.7)   {$\ses$};
  \node [rectangle, rounded corners, fill=white] (pes) at (0, 1)   {$\pes$};
  \node [rectangle, rounded corners, fill=white] (ces) at (0,-0.7)  {$\ces$};
  \node [rectangle, rounded corners, fill=white] (es)  at (0,-2.5)  {$\es$};

  \node (pesh) at (0,1.8)  {\phantom{$xx$}};
  \node (cesl) at (0,-1.5)  {\phantom{$xx$}};

  \node (seq)   at (3,2)    {$\top$};
  \node (pdomh) at (6,2)    {\phantom{$x$}};
  \node (pdoml) at (6,1.4)  {\phantom{$xxxx$}};

  \node [rectangle, rounded corners, fill=white] (pdom) at  (6,0.8)   {$\PDom$};
  \node (sadj)  at (3,1.2)   {$\sim$};
  \node (wdom)  at (6,-0.9)  {$\WDom$};
  \node (weq)   at (3,-1.2)  {$\sim$};
  \node (wdomh)  at (6,-1.5)  {\phantom{$xxxx$}};
  \node (wadj)  at (3,-1.9)  {$\bot$};
  \node (wdoml) at (6,-2.1)  {\phantom{$x$}};

  \draw [->] (ses) --  (pdomh);
  \draw [->] (pes) --  (pdom);
  \draw [->] (pdoml) -- (pesh);  

  \draw [->] (ces) --  (wdom);
  \draw [->] (es)  --  (wdoml);
  \draw [->] (wdomh) -- (cesl);  
\end{tikzpicture}
}

\caption{A summary of the relations among classes of {\esabbr} and domains.}
\label{fi:summary}
\end{figure}
 
An overall picture of the results discussed up to now can be found in Fig.~\ref{fi:summary}. The arrows from classes of event structures to domains are restrictions of the functor $\dom{\cdot}$, while the converse arrows are restrictions of the functor  $\ev{\cdot}$. The Venn diagram stresses the fact that prime {\esabbr} are exactly the {\esabbr} which are stable and connected (see Lemma~\ref{le:constapri}) showing how the notion of connectedness naturally emerges in the framework.

\section{Related Characterisations}
\label{se:characterisations}

In this section we present a characterisation of our proposal in terms
of a formalism reminiscent of the prime event structures with
equivalence of~\cite{win2017,VismeW19}. Moreover, we discuss and
formalise the relation of our work with alternative characterisations
of the domains of (prime) event structures proposed in the literature,
based on intervals and on asynchronous graphs.

\subsection{Prime Event Structures with Equivalence}
\label{ss:pes-equiv}

The previous sections showed that the domains of configurations of
unstable {\esabbr} are weak prime domains, i.e., they satisfy the same
conditions as those of prime domains but only up to the equivalence
induced by interchangeability.
Symmetrically, this suggests the possibility of viewing unstable
{\esabbr} as stable ones up to some equivalence on events.
In this section we consider a formalisation for such a view,
leading to a set up that is closely related to the framework devised
in~\cite{win2017,VismeW19},
which we also call \emph{prime event structures with equivalence}
for the space of this article, since no confusion can arise.

In Section~\ref{ss:es} we mentioned that in prime {\esabbr} a global
notion of causality can be used in place of the enabling. We next
recall the formal definition. We also introduce a notation for direct
(i.e., non-inherited) conflict that will play a role later.

\begin{definition}[causality/direct conflict in prime {event
    structures}]
  Let $P = \langle E, \vdash, \# \rangle$ be a prime {\esabbr}.  Given
  an event $e \in E$, the unique $C \in \conf{P}$ such that
  $C \vdash_0 e$ is called the set of \emph{strict causes} of $e$ and
  denoted by $\scauses{e}$, while the set of causes is
  $\causes{e} = \scauses{e} \cup \{e\}$. The \emph{strict causality
    relation} $<$ is defined by $e' < e$ if $e' \in\, \scauses{e}$,
  and, as usual, we denote by $\leq$ the reflexive closure of $<$. We
  say that $e, e' \in E$ are in \emph{direct conflict}, written
  $e \#_d e'$, when $e \# e'$ and $\scauses{e} \cup \{e'\}$,
  $\scauses{e'} \cup \{e\}$ are consistent.
\end{definition}

We next introduce our notion of prime {\esabbr} with
equivalence. Given a prime {\esabbr} $P$ with an equivalence over the
set of events ${\sim}\subseteq E \times E$, we say that a subset
$X \subseteq E$ is \emph{$\sim$-saturated} if for all $e \in X$ and
$e' \in E$, if $e \sim e'$ and $\scauses{e'} \subseteq X$  then
$e' \in X$. Since the intersection of saturated sets is saturated, given a set $X$ we can always consider the smallest saturated superset of $X$, called the  \emph{saturation} of $X$ and denoted $\sat{X}$.

\begin{definition}[prime {event structures} with equivalence]
  \label{de:pes-eq}
  A \emph{prime {\esabbr} with equivalence} ({\eseqabbr} for short) is
  a pair $\langle P, \sim \rangle$ where
  $P = \langle E, \vdash, \# \rangle$ is a prime {\esabbr} and $\sim$
  is an equivalence on $E$ such that for all $e, e', e_1, e_1' \in E$
  \begin{enumerate}
    
  \item
    \label{de:pes-eq:1}
    if
    $\eqclass[\sim]{\causes{e}} \subseteq \eqclass[\sim]{\causes{e'}}$
    then $e \leq e'$; if in addition $e \sim e'$ then $e=e'$;

  \item
    \label{de:pes-eq:2}
    if $e \sim e'$ and $\scauses{e} \cup \scauses{e'}$ consistent then
    $\neg (e \# e')$.
  
  \item
    \label{de:pes-eq:3}
    if $e \sim e'$, $e_1 \sim e_1'$, and $e \#_d e_1$ then $e' \# e_1'$.
  \end{enumerate}
  We say that $\langle P, \sim \rangle$ is {\emph{connected}}
  if ${\sim} = (\sim \setminus \#)^*$. A morphism of {\eseqabbr}
  $f : \langle P_1, \sim_1 \rangle \to \langle P_{{2}}, \sim_2 \rangle$ is
  an {\esabbr} morphism $f : P_1 \to P_2$ such that for all
  $e_1, e_1' \in P_1$, $e_1 \sim_1 e_1'$ iff
  $f(e_1) \sim_2 f(e_1')$. We denote by $\epes$ the corresponding
  category.
\end{definition}

An {\esabbr} with equivalence is thus just an {\esabbr} equipped with
an equivalence on events. Condition (\ref{de:pes-eq:1}) essentially
says that an event is determined by the equivalence classes of events
in its causal history. In particular, as a consequence, if
$\scauses{e} \subseteq\, \scauses{e'}$ and $e \sim e'$ then $e=e'$,
which intuitively means that distinct equivalent events must
correspond to different enablings of the same event. Moreover, it
implies that the set $\causes{e}$ is $\sim$-saturated and thus it is a
configuration (see Definition~\ref{de:epes-conf} and
Lemma~\ref{le:hist-conf}).
Conditions (\ref{de:pes-eq:2}) and (\ref{de:pes-eq:3}) essentially say that equivalent events can have different conflicts only for the fact that their minimal enablings have different conflicts.
Connectedness 
amounts to the fact that equivalent events must be connected by a 
chain of equivalences going through consistent events. 
We next introduce a notion of configuration.

\begin{definition}[configurations]
  \label{de:epes-conf}
  Let $\langle P, \sim \rangle$ be an {\eseqabbr}. Then
  $\conf{\langle P, \sim \rangle} = \{ C \mid C \in \conf{P}\ \land\
  C\mbox{ $\sim$-saturated}\}$.
\end{definition}

In words, a configuration of a prime {\esabbr} with equivalence is a
configuration $C$ of the underlying event structure, where all events
enabled in $C$ that are equivalent to some event already in $C$ are also in
$C$. Thus equivalent events may have different minimal enablings,
but whenever a configuration contains the causes of two equivalent
events, their executions cannot be taken apart.

\begin{lemma}[histories are configurations]
  \label{le:hist-conf}
  Let $\langle P, \sim \rangle$ be a {\eseqabbr}. For all $e \in E$,
  $\causes{e}$ is a configuration.
\end{lemma}

\begin{proof}
  Let $e \in E$ be any event. We have to show that
  $\causes{e}$ is saturated. If there are
  $e' \in \causes{e}$ and $e'' \sim e'$ such that
  $\scauses{e''} \subseteq \causes{e}$ then
  $\eqclass[\sim]{\causes{e''}} \subseteq
  \eqclass[\sim]{\causes{e}}$ and hence, by condition (\ref{de:pes-eq:1}) in
  Definition~\ref{de:pes-eq}, $e'' \leq e$ which means
  $e'' \in \causes{e}$.
\end{proof}
   
As an example, the connected {\esabbr} of our running example (see
Fig.~\ref{fi:running}), corresponds to the prime {\esabbr} with
equivalence in {Fig.~\ref{fi:running-pes-equivalence:a}}, where we have two
distinct copies of event $c$, namely $c_a \sim c_b$, corresponding to
the possibile minimal enablings. Graphically, causality is represented
by a straight directed line. The corresponding domain of configurations is depicted in {Fig.~\ref{fi:running-pes-equivalence:b}}. Note that $C=\{ a, b, c_a\}$ is not a configuration despite the fact that it is {downward} closed, since it is not $\sim$-saturated: event $c_b$ is missing, but its causes {$\{b\}$} are in $C$.%

\begin{figure}[h!]
  \begin{center}
  \subcaptionbox{
    \label{fi:running-pes-equivalence:a}
  }{
    \begin{tikzpicture}[node distance=5mm, >=stealth',x=10mm,y=6mm]
      \node at (1,0) (Gs)  {};
      \node at (0,1) (Ga)  {$a$};
      \node at (2,1) (Gb)  {$b$};
      \node at (1,2.1) (Gab) {};
      \node at (0,2.7) (Gac)  {$c_a$};
      \node at (1,2.7) (sim)  {$\sim$};
      \node at (2,2.7) (Gbc)  {$c_b$};
      \node at (1,4) (Gc)  {};
      \draw [->] (Ga) -- (Gac);
      \draw [->] (Gb) -- (Gbc);
    \end{tikzpicture}
  }
  \hspace{15mm}
  \subcaptionbox{
    \label{fi:running-pes-equivalence:b}
  }
  {  
    \begin{tikzpicture}[node distance=5mm, >=stealth',x=23mm,y=6mm]
      \node at (1,0) (Gs)  {$\emptyset$};
      \node at (0,1) (Ga)  {$\{a\}$};
      \node at (2,1) (Gb)  {$\{b\}$};
      \node at (1,2.1) (Gab) {$\{a,b\}$};
      \node at (0,2.7) (Gac)  {$\{a,c_a\}$};
      \node at (2,2.7) (Gbc)  {$\{b,c_b\}$};
      \node at (1,4) (Gc)  {$\{a,b,c_a,c_b\}$};
      \draw [->] (Gs) -- (Ga);
      \draw [->] (Gs) -- (Gb);
      \draw [->] (Ga) -- (Gab);
      \draw [->] (Gb) -- (Gab);
      \draw [->] (Ga) -- (Gac);
      \draw [->] (Gb) -- (Gbc);
      \draw [->] (Gac) -- (Gc);
      \draw [->] (Gbc) -- (Gc);
      \draw [->] (Gab) -- (Gc);
    \end{tikzpicture}
    }
  \end{center}
  \caption{A prime {\esabbr} with equivalence and its domain of configurations.}
  \label{fi:running-pes-equivalence}
\end{figure}

Our definition of {\eseqabbr} is similar to that
in~\cite{win2017,VismeW19}.
Concerning {configurations}, while~\cite{win2017,VismeW19} identifies unambiguous
configurations where there is a unique representative for each
equivalence class, here instead we saturate including \emph{all}
equivalent events that are not in conflict.

We finally observe that the constructions above can be ``translated'' 
into constructions that relate directly {\eseqabbr} and weak prime domains.

\begin{proposition}[weak prime domain for {\eseqabbr}]
  \label{pr:wpd-eseq}
  Let $\langle P, \sim \rangle$ be a {\eseqabbr}. Then
  $\domeq{\langle P, \sim \rangle} = \langle  \conf{\langle P, \sim \rangle}, \subseteq \rangle$ 
  is a weak prime domain.
  Conversely, if $D$ is a weak prime domain then
  $\eveq{D} = \langle \langle \ir{D}, \#, \vdash \rangle,
  \leftrightarrow^* \rangle$ with conflict and enabling defined by
  \begin{itemize}
  \item $i_1 \# i_2$ if $\{ i_1, i_2\}$ not consistent;
  \item $X \vdash i$ if
    $X \supseteq \ir{i}\setminus\{i\}$.
  \end{itemize}
  is an {\eseqabbr}.
\end{proposition}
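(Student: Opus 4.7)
The proposition has two directions: (i) $\domeq{\langle P, \sim \rangle}$ is a weak prime domain, and (ii) $\eveq{D}$ is an \eseqabbr. I would prove each direction by directly verifying the required axioms, leveraging the lemmas already established for interchangeability in weak prime domains.

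For direction (i), I would first observe that $\conf{\langle P, \sim \rangle} \subseteq \conf{P}$, and since $P$ is a prime \esabbr, $\dom{P}$ is a prime algebraic domain. The goal is to show that the sub-poset of $\sim$-saturated configurations forms a weak prime domain. The essential steps are: (a) $\sim$-saturation is preserved under unions of pairwise consistent families (giving coherence), which follows because in a prime \esabbr\ a union of consistent configurations remains a configuration and each witness of saturation (an event $e'$ with $e \sim e'$ needed by the definition) is forced by data contained within finitely many members of the family, and (b) every $\sim$-saturated configuration is the directed join of its finite $\sim$-saturated sub-configurations (giving algebraicity and finitariness), using that $\sim$-saturating the causally closed span of a finite set produces a finite set. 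I would then identify the irreducibles of $\domeq{\langle P, \sim \rangle}$ as the $\sim$-saturated hulls of the primes $\principal{e}$ of $\dom{P}$ and verify that each such hull $I$ is a weak prime: if $I \subseteq \bigsqcup X$ for a pairwise consistent family $X$, then some $\sim$-equivalent event $e'$ of $e$ lies in an element $D \in X$, and the corresponding hull $I'$ satisfies $I' \subseteq D$ and is interchangeable with $I$, so the weak primality condition of Definition~\ref{de:weak-prime} is met.

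For direction (ii), I would first check that $\langle \ir{D}, \#, \vdash \rangle$ is a prime \esabbr: the enabling relation has unique minimum $\ir{\pred{i}}$ for each $i \in \ir{D}$, which ensures primality by Definition~\ref{de:stable-prime-es}; conflict is irreflexive (each $i$ is trivially consistent with itself) and saturated by construction. The relation $\leftrightarrow^*$ is an equivalence by definition. The remaining \eseqabbr\ axioms are the non-causality condition ($i < i'$ implies $i \not\leftrightarrow^* i'$) and the saturation of $\causes{i} \cup \{i\} = \ir{i}$. For non-causality, $i < i'$ means $i \sqsubseteq \pred{i'} \sqsubset i'$, so $i, i'$ are consistent; by Lemma~\ref{le:transitive} any $\leftrightarrow^*$-chain between consistent irreducibles collapses to a single $\leftrightarrow$-step, and by Lemma~\ref{le:inter-cons-ord}, $i \leftrightarrow i'$ with $i \sqsubseteq i'$ forces $i = i'$, contradicting $i \sqsubset i'$. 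For saturation, the direct case $j \leftrightarrow j'$ follows from Lemma~\ref{le:eq-char}(\ref{le:eq-char:4}): if $j \sqsubseteq i$ and $\pred{j'} \sqsubseteq i$, then $\pred{j} \sqcup j' = j \sqcup \pred{j'} \sqsubseteq i$, whence $j' \sqsubseteq i$.

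The main obstacle will be propagating the saturation argument along $\leftrightarrow^*$-chains that may traverse irreducibles lying outside $\principal{i}$. The plan is to proceed by induction on chain length: given $j = j_0 \leftrightarrow j_1 \leftrightarrow \cdots \leftrightarrow j_n = j'$ with $j \sqsubseteq i$ and $\pred{j'} \sqsubseteq i$, I would use the weak primality of $D$ together with Lemma~\ref{le:not-eq-below} and Lemma~\ref{le:transitive} to show that each intermediate $j_k$ can be replaced by an interchangeable irreducible lying in $\principal{i}$, reducing to the already established direct case. This routing argument, together with a symmetric treatment of the converse saturation needed for the forward direction, is the delicate technical core of the proof.
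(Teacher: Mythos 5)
Your proposal takes essentially the same route as the paper's own (quite terse) proof: direction (i) rests on identifying the irreducibles of $\domeq{\langle P,\sim\rangle}$ with the saturated histories $\causes{e}\cup\{e\}$ and joins with saturated unions, and direction (ii) reduces the {\eseqabbr} axioms for $\eveq{D}$ to Lemma~\ref{le:inter-cons-ord}, Lemma~\ref{le:not-eq-below} and Lemma~\ref{le:eq-char}(\ref{le:eq-char:4}); if anything, you are more explicit than the paper about verifying the domain axioms and about the gap between $\leftrightarrow$ and $\leftrightarrow^*$. The one step to treat with care is your claim that Lemma~\ref{le:transitive} collapses an arbitrary $\leftrightarrow^*$-chain between consistent irreducibles to a single $\leftrightarrow$-step: that lemma only handles two interchangeabilities sharing a common middle element, so it collapses chains of length two, and the naive induction on chain length fails because the consistency of the intermediate elements is unknown; this is exactly the ``routing'' difficulty you correctly flag for the saturation axiom, and it needs the same treatment in the non-causality argument (one way out is to pass through the representation of $D$ as $\dom{E}$ for a connected {\esabbr} $E$, where $\leftrightarrow^*$-classes are events and Lemma~\ref{le:es-to-fusion-domain}(\ref{le:es-to-fusion-domain:4}) gives the collapse directly). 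Since the paper's own proof is no more detailed on this point, this is a caveat rather than a fatal gap.
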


\begin{proof}
  Let $\langle P, \sim \rangle$ be a {\eseqabbr}.
  Then it is easy to see that the irreducibles of
  $\domeq{\langle P, \sim \rangle}$ are the minimal enablings
  $\causes{e}$ for $e \in E$.
  Moreover, given a set of pairwise consistent configurations
  $X \subseteq \conf{\langle P, \sim \rangle}$, the join $\bigsqcup X$
  is the saturation of their union.
  Interchangeability is given by
  $\causes{e} \leftrightarrow \causes{e'}$
  if $e \sim e'$ and $\neg (e \# e')$.
  Using these fact it is almost immediate to conclude that
    $\domeq{\langle P, \sim \rangle}$ is a weak prime domain. Let us
    first observe that $\domeq{\langle P, \sim \rangle}$ is
    {\wi} (Definition~\ref{de:well-interchange}):
    \begin{itemize}
    \item Condition (\ref{de:well-interchange:1}) requires that for
      all $e, e' \in P$ if $\causes{e} \leftrightarrow^* \causes{e'}$
      and $\scauses{e} \cup \scauses{e'}$ consistent then
      $\causes{e} \leftrightarrow \causes{e'}$. Observe that
      $\causes{e} \leftrightarrow^* \causes{e'}$ implies $e \sim
      e'$. Moreover, by condition (\ref{de:pes-eq:2}) in
      Definition~\ref{de:pes-eq}, $\scauses{e} \cup \scauses{e'}$
      consistent implies $\neg (e \# e')$. Hence we conclude
      $\causes{e} \leftrightarrow \causes{e'}$.

    \item Condition~(\ref{de:well-interchange:2}) is an easy
      consequence of condition~(\ref{de:pes-eq:3}) of
      Definition~\ref{de:pes-eq}. In fact, let
      $e, e', e_1, e_1' \in E$ such that
      $\causes{e} \leftrightarrow^* \causes{e'}$,
      $\causes{e_1} \leftrightarrow^* \causes{e_1'}$, i.e.,
      $e \sim e'$ and $e_1 \sim e_1'$. Assume moreover that the sets
      $\{ \causes{e'}, \causes{e_1'}\}$,
      $\{ \causes{e}, \scauses{e_1}\}$,
      $\{ \scauses{e}, \causes{e_1}\}$ are consistent, meaning that $\causes{e'} \cup \causes{e_1'}$, $\causes{e} \cup \scauses{e_1}$, 
      $\scauses{e} \cup \causes{e_1}$ are so. From the consistency of $\causes{e'} \cup \causes{e_1'}$ we have $\neg (e' \# e_1')$. Moreover, the consistency of $\causes{e} \cup \scauses{e_1}$, 
      $\scauses{e} \cup \causes{e_1}$ implies that if $e \# e_1$ then the conflict would be direct and  this would violate condition~(\ref{de:pes-eq:3}) of
      Definition~\ref{de:pes-eq}. Hence we must have $\neg (e \# e_1)$, i.e., $\{  \causes{e}, \causes{e_1}\}$ consistent, as desired.
    \end{itemize}

    Finally, we show that all irreducibles are weak prime. Let $e \in P$, consider the irreducible $\causes{e}$ and a consistent set of configurations $X \subseteq \conf{\langle P, \sim \rangle}$. Assume that $\causes{e} \subseteq \bigsqcup X$. This means that $e$ is in the saturation of $\bigcup X$, which in turn means that there is $C \in X$ and $e' \in C$, whence $\causes{e'} \subseteq C$, such that $e' \sim e$. Since $e, e' \in  \bigsqcup X$, they are consistent, hence $\causes{e} \leftrightarrow \causes{e'}$. Summing up $\causes{e'} \subseteq C$ and $e \sim e'$, as desired.

  Conversely, let $D$ be a weak  prime domain. Observe that the causal
  order  in $\eveq{D}$  is  the  restriction of  the  domain order  to
  irreducibles. Condition (\ref{de:pes-eq:1}) in Definition~\ref{de:pes-eq} is an immediate consequence of Proposition~\ref{pr:unique-dec}.

    Condition (\ref{de:pes-eq:2}) is immediately implied by condition
    (\ref{de:well-interchange:1}) in the definition of {\wi} domain
    (Definition~\ref{de:well-interchange}).
    
    Concerning condition (\ref{de:pes-eq:3}), observe that it becomes:
    for $i, i', i_1, i_1' \in \ir{D}$, if $i \leftrightarrow^* i'$,
    $i_1 \leftrightarrow^* i_1'$, $i, i_1$ not consistent and
    $i', i_1'$ consistent then either $\ir{\pred{i}} \cup \{i_1\}$ or
    $\ir{\pred{i_1}} \cup \{i\}$ not consistent. In turn this is
    easily seen to be equivalent to condition
    (\ref{de:well-interchange:2}) in the definition of {\wi} domain
    (Definition~\ref{de:well-interchange}).
\end{proof}

The correspondence above can be translated to an analogous
correspondence between {\eseqabbr} and unstable {\esabbr}. It is
however impossible to make such correspondence functorial essentially for
the same reason why~\cite{win2017,VismeW19} resorts to a pseudo-adjunction. 
We try to enucleate the problem by showing a correspondence between
(unstable) event structures and {\eseqabbr}.

\begin{definition}[from {\esabbr} to {\eseqabbr} and back]
  Let $\langle P, \sim \rangle$ be an {\eseqabbr}, where
  $P = \langle E, \vdash, \# \rangle$. The corresponding {\esabbr} is
  $\fuse{\langle P, \sim\rangle} = \langle \quotient{E}{\sim},
  \quotient{\vdash}{\sim}, \quotient{\#}{\sim} \rangle$, with
  $\quotient{\vdash}{\sim}$ and $\quotient{\#}{\sim}$ defined by
  \begin{itemize}
  \item $\eqclass[\sim]{X} \quotient{\vdash}{\sim} \eqclass[\sim]{e}$ when $X \vdash e$;
  \item $\eqclass[\sim]{e} \quotient{\#}{\sim} \eqclass[\sim]{e'}$ 
           when $e_1 \# e_1'$ for all $e_1 \in \eqclass[\sim]{e}$ and $e_1' \in \eqclass[\sim]{e'}$.
  \end{itemize}
  Conversely, given an {\esabbr} $P = \langle E, \vdash, \# \rangle$ the
  corresponding {\eseqabbr} is $\unf{P} = \langle Q, \sim \rangle$, with
  $Q = \langle E', \vdash', \#' \rangle$ defined by
  \begin{itemize}
  \item
    $E' = \{ \esir{C}{e} \mid C \in \conf{E}\ \land\ e\in E\ \land\ C
    \vdash_0 e \}$;
  \item $X \vdash' \esir{C}{e}$ if
    $C \subseteq \bigcup \{ C' \cup \{e'\} \mid \langle C', e' \rangle
    \in X\} $;
  \item $\esir{C}{e} \#' \esir{C'}{e'}$ if
    $C \cup C' \cup \{e, e'\}$ is not consistent.
  \end{itemize}
  and the equivalence is defined by $\esir{C}{e} \sim \esir{C'}{e}$ for all $C, C'$ such that $C \vdash_0 e$ and $C' \vdash_0 e$.
\end{definition}

We can easily show, exploiting Proposition~\ref{pr:wpd-eseq},
that the constructions above produce well-defined structures and map
connected structures to connected structures.
Moreover, the two constructions are inverse of each other.

\begin{proposition}
  \label{pr:epes-inverse}
  Let $\langle P, \sim \rangle$ be an {\eseqabbr}.
  Then $\langle P, \sim \rangle$ and
  $\unf{\fuse{\langle P, \sim\rangle}}$ are isomorphic. Dually, 
  let $P = \langle E, \vdash, \# \rangle$ be an
  {\esabbr}. Then
  $\fuse{\unf{P}}$ and $P$ are isomorphic.
\end{proposition}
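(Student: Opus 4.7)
The plan is to establish each of the two isomorphisms independently by exhibiting an explicit bijection on events and verifying that it preserves $\vdash$, $\#$, and (for the first statement) the equivalence $\sim$. Both constructions act pointwise on events, so all structural checks reduce to pointwise verifications.

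For the first direction, $\langle P, \sim\rangle \cong \unf{\fuse{\langle P, \sim\rangle}}$, I would define the map $\varphi$ sending $e \in E$ to $\esir{\{\eqclass[\sim]{e'} \mid e' \in \causes{e}\}}{\eqclass[\sim]{e}}$. Well-definedness requires that $\{\eqclass[\sim]{e'} \mid e' \in \causes{e}\}$ is a minimal enabling of $\eqclass[\sim]{e}$ in $\fuse{\langle P, \sim\rangle}$: enabling is immediate from $\causes{e} \vdash_0 e$ combined with the definition of $\quotient{\vdash}{\sim}$, while minimality uses primeness of $P$ together with the saturation clause of Definition~\ref{de:pes-eq} (which forbids shrinking an enabling by swapping across $\sim$-classes). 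Injectivity follows since $\varphi(e) = \varphi(e')$ forces $\eqclass[\sim]{e} = \eqclass[\sim]{e'}$ together with equality of the quotiented causes, which by saturation pins down $e = e'$. Surjectivity holds because every event of $\unf{\fuse{\langle P, \sim\rangle}}$ is a minimal enabling in the fused structure, and every such enabling arises from $\causes{e'}$ for some $e' \sim e$. Preservation of $\vdash$, $\#$, and $\sim$ is then a direct unwinding of the definitions, using that $\unf{-}$ identifies two events precisely when they are minimal enablings of the same event of $\fuse{\langle P, \sim\rangle}$.

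For the second direction, $P \cong \fuse{\unf{P}}$, define $\psi(e) = \eqclass[\sim]{\esir{C}{e}}$ for any $C$ with $C \vdash_0 e$ in $P$; this is independent of the chosen $C$ since $\sim$ in $\unf{P}$ equates all such enablings, and bijectivity is then immediate. For conflict, $\esir{C}{e} \#' \esir{C'}{e'}$ in $\unf{P}$ iff $C \cup C' \cup \{e, e'\}$ is inconsistent in $P$, which by liveness and upward closure of conflict is equivalent to $e \# e'$, and this lifts cleanly to the quotient. For enabling, unfold: $\eqclass[\sim]{X} \,\quotient{\vdash'}{\sim}\, \psi(e)$ iff $X \vdash' \esir{C}{e}$ iff $C \subseteq \bigcup \{C' \cup \{e''\} \mid \esir{C'}{e''} \in X\}$; using security of configurations of $P$ together with monotonicity of $\vdash$, this folds down to $\{e'' \mid \esir{C'}{e''} \in X\} \vdash e$, which is the enabling of $e$ in $P$ via $\psi$.

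The principal delicate point is the enabling relation, which in $\unf{-}$ is defined by a covering condition that does not syntactically resemble either $\vdash$ or $\quotient{\vdash}{\sim}$. The key reduction uses primeness (unique minimal enablings in the prime underlying structure) together with the monotonicity clause of Definition~\ref{de:es} to collapse the covering condition back to ordinary enabling. A secondary technical point is verifying that $\unf{P}$ indeed lies in $\epes$ and $\fuse{\langle P, \sim\rangle}$ in $\es$; most notably, that $\causes{e} \cup \{e\}$ is $\sim$-saturated for every event of $\unf{P}$, which rests on the interplay between the causal structure of the prime {\esabbr} obtained from $P$ and the equivalence identifying minimal enablings of the same underlying event.
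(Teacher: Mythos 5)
Your proposal is correct and follows essentially the same route as the paper: the paper's proof exhibits exactly the two maps you define, namely $c(e) = \esir{\eqclass[\sim]{\causes{e}}}{\eqclass[\sim]{e}}$ for the first isomorphism and $u(e) = \{\esir{C}{e} \mid C \vdash_0 e\}$ (which is precisely your $\eqclass[\sim]{\esir{C}{e}}$) for the second, and then asserts they are isomorphisms. Your write-up actually supplies more of the verification than the paper does, and correctly flags the enabling condition of $\unf{-}$ (which must be matched only on configurations, where securedness makes the covering condition collapse) as the one genuinely delicate point.
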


\begin{proof}
  Let $\langle P, \sim \rangle$ be an {\eseqabbr}.
  Recall that events in
  $\unf{\fuse{\langle P, \sim\rangle}}$ are minimal
  enablings in $\fuse{\langle P, \sim\rangle}$.
  By definitions of $\fuse{\langle P, \sim\rangle}$, for all $e \in P$
  we have
  $\eqclass[\sim]{X} \vdash_{\fuse{\langle P, \sim\rangle}}
  \eqclass[\sim]{e}$ when $X \vdash e$. Therefore
  $\eqclass[\sim]{\scauses{e}} \vdash_{\fuse{\langle P, \sim\rangle}}
  \eqclass[\sim]{e}$, and this enabling is minimal since, by
  Definition~\ref{de:pes-eq}(\ref{de:pes-eq:1}), whenever $e' \sim e$
  and $\eqclassir{\scauses{e'}} \subseteq \eqclassir{\scauses{e}}$ we
  have $e=e'$. And, again relying on the definition of enabling, one
  sees that all minimal enablings are of this shape.
  Therefore we can define
  $c : \langle P, \sim \rangle \to \unf{\fuse{\langle P,
      \sim\rangle}}$ by
  $c(e) =
  \esir{\eqclass[\sim]{\scauses{e}}}{\eqclass[\sim]{e}}$.
  By the previous arguments it is a bijection and it can be shown $c$
  to be an isomorphism of {\eseqabbr}.

  \smallskip

  Conversely, let $\langle E, \vdash, \# \rangle$ be an an
  {\esabbr}.
  According to the definition, events in $\unf{E}$ are minimal
  enablings $\esir{C}{e}$ in $E$, and they are equivalent when they are minimal
  enablings of the same event.
  Then events in $\fuse{\unf{E}}$ are just equivalence classes of events in $\unf{E}$. Therefore we can define $u : E \to \fuse{\unf{E}}$ by
  $u(e) = \{ \esir{C}{e} \mid C \in \conf{E}\ \land\ C \vdash_0 e \}$. It is
  immediate to see that it is a bijection and 
  an isomorphism of {\esabbr}.
\end{proof}

Observe that the construction from {\eseqabbr} to {\esabbr} can be
easily turned into a functor $\zfuse : \epes \to \es$. In fact, given a
morphism
$f : \langle P_1, \sim_1 \rangle \to \langle P_2, \sim_2\rangle$ we
can let $\fuse{f}(\eqclass[\sim_1]{e_1}) = \eqclass[\sim_2]{f(e_1)}$.

Instead, making the converse construction from {\esabbr} to
{\eseqabbr} functorial is problematic. In fact, consider the {\esabbr} of the running example
$E = \{ a, b, c \}$, with $\emptyset \vdash_0 a$,
$\emptyset \vdash_0 b$ and $\{ a, b \} \vdash_0 c$ and the
{\esabbr} with events $E' = \{ a', b', c' \}$ with
$\emptyset \vdash_0 a'$, $\emptyset \vdash_0 b'$ and
$\{ a' \} \vdash_0 c'$ and $\{ b' \} \vdash_0 c'$ and the morphism
$f : E \to E'$ with $f(x) = x'$ for $x \in \{ a, b, c\}$. Then
$\unf{E} = \{ \esir{\emptyset}{a}, \esir{\emptyset}{b},
\esir{\{a, b\}}{c} \}$ and
$\unf{E'} = \{ \esir{\emptyset}{a'}, \esir{\emptyset}{b'},
\esir{\{a'\}}{c'}, \esir{\{b'\}}{c'} \}$. Observe that, while clearly
$\unf{f}(\esir{\emptyset}{a}) = \esir{\emptyset}{a'}$ and
$\unf{f}(\esir{\emptyset}{b}) = \esir{\emptyset}{b'}$, when we come
to $\unf{f}(\esir{\{a,b\}}{c})$ we can define it as one of the
two equivalent events $\esir{\{a'\}}{c'}$ and $\esir{\{b'\}}{c'}$.

The solution offered by~\cite{win2017,VismeW19} is to move towards 
pseudo-functors,
i.e., considering two {\eseqabbr} morphisms $g, g' : P_1 \to P_2$ equivalent
if $g(e_1) \sim_2 g'(e_1)$ for all $e_1\in P_1$ and requiring that functors 
are defined only up-to morphism equivalence. 
Indeed, it is easy to see that the two possible choices for $f$ above lead to
equivalent morphisms.

\subsection{Relation with Interval Based Characterisations}
\label{ss:intervals}

The correspondence between event structures and domains has been often
studied in the literature by relying on the notion of
interval~\cite{Winskel:phd,NPW:PNES,Win:ES,Dro:ESD}.

\begin{definition}[interval]
  \label{de:interval}
  Let $D$ be a domain. An \emph{interval} is a pair $\dint{d}{d'}$ of
  elements of $D$ such that $d \prec d'$. The set of intervals of $D$
  is denoted by $\IntSet{D}$.
  Given two intervals $\dint{c}{c'}, \dint{d}{d'} \in \IntSet{D}$ we define
  \begin{center}
    $\dint{c}{c'} \leq \dint{d}{d'}$ \quad if $(c = c' \sqcap d)\ \wedge \ (c'
    \sqcup d = d')$, 
  \end{center}
  and we let $\sim$ be the equivalence obtained as the
  symmetric and transitive closure of $\leq$.
\end{definition}
It can be shown that $\leq$ is a partial order on intervals and thus $\sim$
is indeed an equivalence. An interval represents a pair of elements
differing only for a ``quantum'' of information, intuitively the
execution of an event. The equivalence $\sim$ is intended to identify
intervals corresponding to the execution of the same event in diffent states.
Indeed, in~\cite{NPW:PNES} it is shown that for prime domains there is a
bijective correspondence between $\sim$-classes of intervals and
complete primes.
In weak prime domains we can establish a similar correspondence, with $\leftrightarrow^*$-classes of irreducibles 
playing the role of the primes.

\begin{lemma}[intervals vs. irreducibles]
  \label{le:int-vs-irr}
  Let $D$ be a weak prime domain. Define $\inir : \quotient{\IntSet{D}}{\sim} \to \quotient{\ir{D}}{\leftrightarrow^*}$ by
  \begin{center}
    $\inir(\dint{d}{d'}_\sim) = \eqclassir{i}$,
  \end{center}
  where $i$ is any element in $\diff{d'}{d}$. Then $\inir$ is a bijection, whose inverse is  $\irin :  \quotient{\ir{D}}{\leftrightarrow^*} \to \quotient{\IntSet{D}}{\sim}$
  defined by
  \begin{center}
    $\irin(\eqclassir{i}) = \dint{\pred{i}}{i}_\sim$.
  \end{center}
\end{lemma}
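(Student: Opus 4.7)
The plan is to verify that $\inir$ and $\irin$ are well-defined on equivalence classes and that they compose to the identity on both sides. First, for $\inir$: fixing an interval $\dint{d}{d'}$, any two $i, i' \in \diff{d'}{d}$ are interchangeable by Lemma~\ref{le:prec-irr-b}(\ref{le:prec-irr-b:3}), so $\eqclassir{i} = \eqclassir{i'}$, meaning $\inir$ does not depend on the choice of irreducible. To obtain invariance under $\sim$, it suffices to handle the generating relation $\leq$. If $\dint{c}{c'} \leq \dint{d}{d'}$, i.e., $c = c' \sqcap d$ and $c' \sqcup d = d'$, then for any $i \in \diff{c'}{c}$ one has $i \sqsubseteq c' \sqsubseteq d'$; and if $i \sqsubseteq d$ then $i \sqsubseteq c' \sqcap d = c$, a contradiction. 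Hence $\diff{c'}{c} \subseteq \diff{d'}{d}$, so both intervals determine the same $\leftrightarrow^*$-class.

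For $\irin$, I first reduce to the one-step case: if $i \leftrightarrow i'$ then $\dint{\pred{i}}{i} \sim \dint{\pred{i'}}{i'}$, extending by transitivity afterwards. Let $p = \pred{i} \sqcup \pred{i'}$ and $q = i \sqcup \pred{i'} = \pred{i} \sqcup i'$, where the last equality and the inequality $q \neq p$ are provided by Lemma~\ref{le:eq-char}(\ref{le:eq-char:4}). Since $\pred{i} \sqsubseteq p$, Lemma~\ref{le:prec-irr-a}(\ref{le:prec-irr-a:2}) gives $p \preceq p \sqcup i = q$, and strictness follows from $q \neq p$. For $\dint{\pred{i}}{i} \leq \dint{p}{q}$ the only non-trivial condition is $\pred{i} = i \sqcap p$: from $\pred{i} \sqsubseteq i \sqcap p \sqsubseteq i$ and uniqueness of the immediate predecessor (Lemma~\ref{le:unique-pred}) one has $i \sqcap p \in \{\pred{i}, i\}$, and the case $i \sqcap p = i$ would give $i \sqsubseteq p$ hence $q = p$, a contradiction. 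Symmetrically $\dint{\pred{i'}}{i'} \leq \dint{p}{q}$, so both intervals are $\sim$-equivalent.

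For the mutual inversion, $\inir(\irin(\eqclassir{i})) = \eqclassir{i}$ is immediate since $\diff{i}{\pred{i}} = \{i\}$. For $\irin(\inir(\dint{d}{d'}_\sim))$, pick $i \in \diff{d'}{d}$; I claim $\pred{i} \sqsubseteq d$. Every $j \in \ir{\pred{i}} = \ir{i} \setminus \{i\}$ lies in $\ir{d'}$ since $j \sqsubset i \sqsubseteq d'$; if $j \not\sqsubseteq d$ then $j \in \diff{d'}{d}$, and Lemma~\ref{le:prec-irr-b}(\ref{le:prec-irr-b:1bis}) forces $j$ and $i$ to be incomparable, contradicting $j \sqsubset i$. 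Hence $\pred{i} = \bigsqcup \ir{\pred{i}} \sqsubseteq d$ by Proposition~\ref{pr:domains-irr-alg}. Now $i \sqcap d$ sits between $\pred{i}$ and $i$ and cannot equal $i$ (else $i \in \ir{d}$), so $i \sqcap d = \pred{i}$; moreover $i \sqcup d = d'$ by Lemma~\ref{le:prec-irr-b}(\ref{le:prec-irr-b:1}). Therefore $\dint{\pred{i}}{i} \leq \dint{d}{d'}$, giving $\irin(\inir(\dint{d}{d'}_\sim)) = \dint{d}{d'}_\sim$.

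The hardest step will be this last one, namely establishing $\pred{i} \sqsubseteq d$ for an arbitrary $i \in \diff{d'}{d}$. It is the place where the order-theoretic content of $\leq$ on intervals really meets $\leftrightarrow^*$ on irreducibles, and it crucially combines Lemma~\ref{le:unique-pred}, the incomparability inside $\diff{d'}{d}$ from Lemma~\ref{le:prec-irr-b}(\ref{le:prec-irr-b:1bis}), and the fact that $\pred{i}$ is the join of its own irreducibles. Once this is in hand, everything else reduces to routine computations with $\sqcap$ and $\sqcup$ using Lemma~\ref{le:eq-char} and Lemma~\ref{le:prec-irr-b}.
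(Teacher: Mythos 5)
Your proof is correct and follows essentially the same route as the paper's: well-definedness of $\inir$ via the inclusion $\diff{c'}{c} \subseteq \diff{d'}{d}$ for $\leq$-related intervals together with Lemma~\ref{le:prec-irr-b}(\ref{le:prec-irr-b:3}), well-definedness of $\irin$ by bounding both intervals by $\dint{\pred{i}\sqcup\pred{i'}}{i\sqcup i'}$ using Lemma~\ref{le:eq-char}(\ref{le:eq-char:4}), and the same flatness argument for $\pred{i}\sqsubseteq d$ in the inversion step. The only cosmetic difference is that you argue $i\sqcap p\in\{\pred{i},i\}$ directly from the definition of immediate predecessor where the paper invokes Lemma~\ref{le:inf}, and you spell out in more detail the step the paper compresses into ``$\diff{d'}{d}$ is flat, hence $\pred{i}\sqsubseteq d$''.
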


\begin{proof}
  We first observe that $\inir$ is well-defined, i.e., if
  $\dint{c}{c'} \sim \dint{d}{d'}$ are equivalent intervals then for
  all $i \in \diff{c'}{c}$, $i' \in \diff{d'}{d}$ it holds
  $i \leftrightarrow i'$. This follows by noting that if
  $\dint{c}{c'} \leq \dint{d}{d'}$, $i \in \diff{c'}{c}$ and
  $i' \in \diff{d'}{d}$ then $i \leftrightarrow i'$.
  In order to prove the last assertion, observe that since
  $i \in \ir{c'}$ we have $i \sqsubseteq c' \sqsubseteq d'$, thus
  $i \in \ir{d'}$.  Moreover, $i \not\in \ir{d}$, otherwise, by
  $i \sqsubseteq d$, $i \sqsubseteq c'$ and $c = d \sqcap c'$, we
  would get $i \sqsubseteq c$, contradicting the assumption that
  $i \in \diff{c'}{c}$. Hence $i \in \diff{d'}{d}$ and by
  Lemma~\ref{le:prec-irr-b}(\ref{le:prec-irr-b:3}) we conclude.

  Also the converse map $\irin$ is well-defined. This follows from the
  observation that for all irreducibles $i, i' \in \ir{D}$ if
  $i \leftrightarrow i'$ then
  $\dint{\pred{i}}{i}, \dint{\pred{i'}}{i'} \leq \dint{\pred{i} \sqcup
    \pred{i'}}{i \sqcup i'}$ and thus
  $\dint{\pred{i}}{i} \sim \dint{\pred{i'}}{i'}$. Let us prove, for
  instance, that
  \begin{center}
    $\dint{\pred{i}}{i} \leq \dint{\pred{i} \sqcup \pred{i'}}{i \sqcup
      i'}$.
  \end{center}
  Since $i \leftrightarrow i'$, surely
  $\pred{i} \sqsubseteq \pred{i} \sqcup \pred{i'}$ and
  $\pred{i} \prec i$, hence by Lemma~\ref{le:inf}, we deduce
  $i \sqsubseteq \pred{i} \sqcup \pred{i'}$ or
  $\pred{i} = i \sqcap (\pred{i} \sqcup \pred{i'})$. The first
  possibility, $i \sqsubseteq \pred{i} \sqcup \pred{i'}$, by the fact
  that $i$ is irreducible leads to $i \sqsubseteq \pred{i'}$ (since
  $i \sqsubseteq \pred{i}$ is clearly false). Thus
  $i \sqcup \pred{i'} = \pred{i'} \prec i' \sqsubseteq \pred{i} \sqcup
  i'$, that, by Lemma~\ref{le:eq-char}(\ref{le:eq-char:4}),
  contradicts $i \leftrightarrow i'$.
  Hence the second possibility must hold, i.e.,
  $\pred{i} = i \sqcap (\pred{i} \sqcup \pred{i'})$. Moreover, again
  by Lemma~\ref{le:eq-char}(\ref{le:eq-char:4}), we have
  $i \sqcup (\pred{i} \sqcup \pred{i'}) = i \sqcup i'$. Hence
  $\dint{\pred{i}}{i} \leq \dint{\pred{i} \sqcup \pred{i'}}{i \sqcup
    i'}$ as desired.

  \bigskip
  
  The two maps are inverse each other.
  \begin{itemize}
  \item If $\dint{d}{d'} \in \IntSet{D}$ and $i \in \diff{d'}{d}$ then
    $\dint{d}{d'} \sim \dint{\pred{i}}{i}$.\\
    Observe that $d \sqcup i = d'$ by
    Lemma~\ref{le:prec-irr-b}(\ref{le:prec-irr-b:1}). Moreover, in
    order to show that $d \sqcap i = \pred{i}$, note that, since
    $i \in \diff{d'}{d}$ and, by
    Lemma~\ref{le:prec-irr-b}(\ref{le:prec-irr-b:1bis}), the set
    $\diff{d'}{d}$ is flat, we have that $\pred{i} \sqsubseteq d$.
    Moreover $\pred{i} \prec i$, therefore by Lemma~\ref{le:inf},
    $\pred{i} = d \sqcap i$, as desired.

  \item If $i \in \ir{D}$ and $i' \in \dint{\pred{i}}{i}$ then
    $i \leftrightarrow i'$.\\
    Just observe that $i \in \dint{\pred{i}}{i}$ and then use
    Lemma~\ref{le:prec-irr-b}(\ref{le:prec-irr-b:3}).
  \end{itemize}
  
\end{proof}

In ~\cite{Winskel:phd,Dro:ESD} the domain of
configurations of general event structures with binary conflict is characterised in terms of intervals. It is shown (see, e.g.,~\cite[Theorem 3.3.3]{Winskel:phd}), that given a an event structure with binary conflict, the domain of configuration is an 
algebraic complete partial order where the
following axioms hold

\begin{description}
\item[(F)] for all $d \in \compact{D}$ the set $\principal{d}$ is finite;

\item[(C)] for all $x, y, z \in \compact{D}$, if $x \prec y$,
  $x \prec z$, $\{y, z\}$ consistent, and $y \neq z$ then there exists
  $y \sqcup z$ and $y \prec y \sqcup z$ and $z \prec y \sqcup z$;

\item[(R)] for all intervals $\dint{x}{y}$, $\dint{x}{z}$ if
  $\dint{x}{y} \sim \dint{x}{z}$ then $y=z$;

\item[(V)] for all $x,x',y,y',x'',y'' \in \compact{D}$ if
  $\dint{x}{x'} \sim \dint{y}{y'}$, $\dint{x}{x''} \sim \dint{y}{y''}$, and $\{x',x''\}$
  consistent then $y',y''$ consistent.
\end{description}

Conversely, in~\cite{Dro:ESD} an explicit construction of the {\esabbr} corresponding to
a domain is provided. Given
$d \in \compact{D}$, let
$s(d) = \{ \dint{c}{c'}_{\sim} \mid c' \sqsubseteq d \}$.

\begin{definition}[{event structure} from a domain~\cite{Dro:ESD}]
  \label{de:evwd}
  Given a domain $D$ satisfying the axioms (F), (C), (R), (V), the
  corresponding {\esabbr} with binary conflict is defined as
  $\evwd{D} = (E, \#, \vdash)$ where
  \begin{itemize}
    
  \item
    $E = \quotient{\IntSet{D}}{\sim}$;
  \item $\dint{c}{c'}_\sim \# \dint{d}{d'}_\sim$ if for all
    $\dint{c_1}{c_1'}$, $\dint{d_1}{d_1'}$ such that
    $\dint{c_1}{c_1'} \sim \dint{c}{c'}$ and
    $\dint{d_1}{d_1'} \sim \dint{d}{d'}$ the set $\{c_1',d_1'\}$ is
    not consistent;
    
  \item 
    for $X \subseteq E$, $X \vdash \dint{c}{c'}_\sim$ if
    $s(c_1) \subseteq X$ for some interval
    $\dint{c_1}{c_1'} \sim \dint{c}{c'}$.
  \end{itemize}
\end{definition}

The above construction produces an event structure with binary conflict that is mapped back to the original domain (see, e.g.,~\cite[Corollary 2.10]{Dro:ESD}).

\begin{theorem}
  \label{th:evwd}
  Let $D$ be a domain satisfying axioms (F), (C), (R), (V). Then
  $\dom{\evwd{D}}$ is isomorphic to $D$.
\end{theorem}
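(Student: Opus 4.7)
The plan is to exhibit a pair of mutually inverse maps between $D$ and $\dom{\evwd{D}}$. The forward map $\phi : D \to \dom{\evwd{D}}$ should send $d \in D$ to $s(d)$ (with the understanding that for non-compact $d$ one takes the union of $s(x)$ over compact $x \sqsubseteq d$, so $\phi$ is continuous by construction). The inverse $\psi : \dom{\evwd{D}} \to D$ should send a finite configuration $C$ to $\bigsqcup \{ c' \mid \dint{c}{c'}_\sim \in C \}$ and then extend by algebraicity. Both maps are clearly monotone once well-defined, so the real work is (i) checking that $\phi(d)$ is a configuration, (ii) checking that the join defining $\psi(C)$ exists, and (iii) verifying mutual inverseness on compact elements.

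For (i), consistency of $\phi(d)$ is easy: two classes $\dint{c}{c'}_\sim, \dint{e}{e'}_\sim \in s(d)$ have representatives with top endpoints bounded by $d$, so they cannot satisfy the defining condition of $\#$ in $\evwd{D}$; crucially, axiom (V) guarantees that this noncompatibility property is invariant under the choice of representative. Securing follows from axiom (F) by induction on the finite height of $c$: axiom (C) yields a chain $\bot = c_0 \prec c_1 \prec \ldots \prec c_n = c$ and each $\dint{c_{i-1}}{c_i}_\sim$ is enabled in $\evwd{D}$ by $s(c_{i-1}) \subseteq \phi(d)$.

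For (ii), the set $\{c' \mid \dint{c}{c'}_\sim \in C\}$ is pairwise consistent: if two of its elements were not, then by (V) the corresponding interval classes would be in conflict in $\evwd{D}$, contradicting consistency of $C$; coherence of $D$ then delivers the join. The identity $\psi(\phi(d)) = d$ for compact $d$ is a direct consequence of algebraicity combined with axiom (C), which expresses $d$ as the join of tops of prime intervals below it.

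The main obstacle is the other direction $\phi(\psi(C)) = C$. The inclusion $C \subseteq \phi(\psi(C))$ is immediate. The reverse inclusion requires showing that any prime interval class $\dint{e}{e'}_\sim$ with $e' \sqsubseteq \psi(C)$ is already an element of $C$; i.e., that taking the join in $D$ does not introduce ``new'' events not witnessed by $C$. My plan here is to fix a secured enumeration of $C$, use axiom (C) to lift it to a $\prec$-chain $\bot \prec x_1 \prec \ldots \prec x_n = \psi(C)$ with $\dint{x_{i-1}}{x_i}_\sim \in C$, and then argue by a diamond-filling induction: any prime interval $\dint{e}{e'}$ with $e' \sqsubseteq x_i$ but $e' \not\sqsubseteq x_{i-1}$ must, by repeated application of axiom (C) to fill squares and axiom (R) to collapse coincident tops, be $\sim$-equivalent to $\dint{x_{i-1}}{x_i}$; axiom (V) plays the background role of transporting consistency across these squares. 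This is the technically delicate part, since without (V) one cannot guarantee the squares are consistently reassembled, and without (R) distinct prime intervals above the same element could appear to give different events.
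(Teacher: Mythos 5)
First, a point of reference: the paper does not actually prove this statement --- it imports it from Droste (\cite[Corollary 2.10]{Dro:ESD}) and only uses it as a black box. So your proposal is necessarily ``a different route''; what you are sketching is essentially a reconstruction of Droste's original argument, and the overall architecture (map $d$ to the configuration of interval classes realised below $d$, map a configuration back to a domain element via a secured chain, and prove mutual inverseness by a diamond-filling induction driven by (C), (R) and (V)) is the right one.

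That said, as written the sketch has a genuine gap in the definition of $\psi$. An event of $\evwd{D}$ is a $\sim$-class, so the set $\{ c' \mid \dint{c}{c'}_\sim \in C \}$ is ambiguous: if $c'$ ranges over the tops of \emph{all} representatives of all classes in $C$, the resulting set need not be consistent and its join (when it exists) is in general strictly above the intended element --- equivalent intervals can be realised at many incomparable places in $D$. The correct definition is the one you switch to in the last paragraph: pick a secured enumeration of $C$, lift it step by step to a $\prec$-chain $\bot \prec x_1 \prec \ldots \prec x_n$ with $\dint{x_{i-1}}{x_i}_\sim$ the $i$-th event, and set $\psi(C) = x_n$. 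But then two things you treat as given must themselves be proved: (a) that the lifting exists at each step, i.e.\ that the enabling condition $s(c_1) \subseteq C$ lets you transport a representative $\dint{c_1}{c_1'}$ along the $\leq$-order on intervals up to one of the form $\dint{x_{i-1}}{x_i}$ (this uses (C) to keep the transported interval prime and (V) to keep the required joins consistent); and (b) that $x_n$ does not depend on the chosen securing sequence, which is where (R) does its real work. Relatedly, your argument for (ii) inverts a quantifier: $\#$ in Definition~\ref{de:evwd} requires \emph{all} pairs of representatives to have inconsistent tops, so inconsistency of one particular pair of tops does not put the two classes in conflict, and (V) as stated only transports consistency between \emph{co-initial} pairs of intervals. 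None of this is fatal --- it is exactly the bookkeeping Droste's proof carries out --- but it is the technical core of the theorem rather than a detail, so the proposal as it stands is a correct plan with its hardest lemma still open.
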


We can build on the above results to show that the domains satisfying
axioms (F), (C), (R) and (V) are exactly the weak prime domains.

\begin{proposition}[weak prime domains and intervals]
  Let $D$ be a domain. Then $D$ is a weak prime domain iff $D$
  satisfies axioms (F), (C), (R) and (V).
\end{proposition}

\begin{proof}
  Let $D$ be a domain satisfying axioms (F), (C), (R) and (V).  By
  Theorem~\ref{th:evwd}, $\dom{\evwd{D}} \simeq D$. Since, by
  Proposition~\ref{pr:es-to-dom}, the set of configurations of any
  event structure forms a weak prime domain, we conclude that $D$ is
  weak prime.

  For the converse, let $D$ be a weak prime domain. By
  Theorem~\ref{th:es-dom-equivalence}, we have that
  $\dom{\ev{D}} \simeq D$ and thus, since by~\cite{Winskel:phd,Dro:ESD},
  the domain of configuration of an event structure with binary
  conflict satisfies axioms (F), (C), (R) and (V), we conclude.
\end{proof}

Moreover, relying on Lemma~\ref{le:int-vs-irr}, we can
show that the event structures associated with a domain
in~\cite{Dro:ESD} (Definition~\ref{de:evwd}) and in our work (Definition~\ref{de:esfusdom})
coincide.

\begin{proposition}
  \label{pr:es-int}
  Let $D$ be a weak prime domain. Then $\ev{D}$ and $\evwd{D}$ are isomorphic.
\end{proposition}

\begin{proof}
  By Lemma~\ref{le:int-vs-irr}, the function
  $\inir : \quotient{\IntSet{D}}{\sim} \to \quotient{\ir{D}}{\leftrightarrow^*}$ is a
  bijection. Note that $\quotient{\IntSet{D}}{\sim}$ and
  $\quotient{\ir{D}}{\leftrightarrow^*}$ are the sets of events respectively of
  $\ev{D}$ and $\evwd{D}$. We next show that $\inir$ is an isomorphism
  of event structures.

  \medskip

  Let $e_1, e_2$ be events in $\evwd{D}$. We show that $e_1 \# e_2$
  iff $\inir(e_1) \# \inir(e_2)$.
  
  If $\neg (e_1 \# e_2)$, from Definition~\ref{de:evwd}, we get that
  there exist $\dint{c_1}{c_1'} \in e_1$ and
  $\dint{c_2}{c_2'} \in e_2$ such that $\{ c_1', c_2' \}$ is
  consistent. Let $d \in D$ be an upper bound, i.e.,
  $c_1', c_2' \sqsubseteq d$. Now, $\inir(e_j) = \eqclassir{i_j}$ for
  $i_j \in \diff{c_j}{c_j'}$, for $j \in \{1,2\}$. Clearly,
  $i_1, i_2 \in \ir{d}$ whence
  $\eqclassir{i_1}, \eqclassir{i_2} \subseteq \eqclassir{\ir{d}}$ and
  thus, according to Definition~\ref{de:esfusdom}, we have
  $\neg (\eqclassir{i_1} \# \eqclassir{i_2})$, as desired.
  The argument can be reversed to prove that if
  $\neg( \inir(e_1) \# \inir(e_2) )$ then $\neg (e_1 \# e_2)$.
  
  \medskip
  
  Concerning the enabling relation, we show that $X \vdash e$ in
  $\evwd{D}$ iff $\inir(X) \vdash \inir(e)$ in $\ev{D}$.
  Assume that $X \vdash e$ in $\evwd{D}$. This means that there exists
  $\dint{c}{c'} \in e$ such that
  $s(c) = \{ \dint{d}{d'}_{\sim} \mid d' \sqsubseteq c \} \subseteq
  X$.
  Now, recall that $\inir(e) = \eqclassir{i}$ with
  $i \in \diff{c'}{c}$.
  In order to show that $\inir(X) \vdash \inir(e)$, according to
  Definition~\ref{de:esfusdom}, we prove that
  $\eqclassir{\ir{i} \setminus \{ i\}} \subseteq \inir(X)$.
  Let $j \in \ir{i} \setminus \{ i \}$. Clearly
  $j \in \diff{j}{\pred{j}}$ and thus
  $\eqclassir{j} = \inir(\dint{\pred{j}}{j}_{\sim})$.
  Moreover, by
  Lemma~\ref{le:prec-irr-b}(\ref{le:prec-irr-b:1bis}) the set
  $\diff{c'}{c}$ is flat and thus, since $j \sqsubset i$
  necessarily $j \not\in \diff{c'}{c}$. Since
  $j \in \ir{c'}$ we conclude that $j \in \ir{c}$, namely
  $j \sqsubseteq c$.
  This implies that $\dint{\pred{j}}{j}_{\sim} \in s(c)$ and thus
  \begin{center}
    $
    \begin{array}{lll}
      \eqclassir{j} & = \inir(\dint{\pred{j}}{j}_{\sim}) & \\
      & \subseteq \inir(s(c)) & \\
      & \subseteq \inir(X) & \mbox{[since $s(c) \subseteq X$]}
    \end{array}
    $
  \end{center}
  We thus conclude that
  $\eqclassir{\ir{i} \setminus \{ i\}} \subseteq \inir(X)$ as desired.

  Also in this case, the argument can be easily reversed to prove the
  converse implication.
\end{proof}

The paper by Droste~\cite{Dro:ESD} considers also the case of event
structures with a general consistency relation (rather than a binary
conflict). The correspondence with our approach can be extended to this setting, as further detailed in Appendix~\ref{app:consistency}.

\subsection{Relation with Asynchronous Graphs}
\label{ss:async-graphs}
 
A characterisation  of prime {\esabbr} in terms of their transition
graph has been given in~\cite{PU:RMC}. A slightly different, yet
equivalent formalisation has been rediscovered in~\cite{Mel:hab}, in the
context of the work on the abstract theory of rewriting and concurrent
games.
Here we show that an analogous characterisation can be obtained for (connected)
event structures. For our development we refer to the formalisation in~\cite{Mel:hab}.
Given a graph $G = \langle N, U, s, t \rangle$, a sequence of edges
$w = u_1; \ldots; u_n \in U^*$ is a path whenever each edge has a
target that coincide with the source of the subsequent edge, i.e., for
all $i \in \interval{n-1}$, $t(u_i) = s(u_{i+1})$.
Let us denote by $P_2(G)$ the set of paths of length $2$, i.e.,
$P_2(G) = \{ u_1; u_2 \mid u_1, u_2 \in E \}$. Note that two paths of
length 2 with the same source and target can be seen as a ``square''
in the graph. An asynchronous graph is then a transition system where
some squares are declared to commute.

\begin{definition}[asynchronous graph]
  \label{de:async-graph}
  An \emph{asynchronous graph} 
  is a tuple
  $A = \langle G, n_0, \simeq \rangle$ where
  $G = \langle N, U, s, t \rangle$ is a directed graph,
  $n_0 \in N$ is the origin and
  ${\simeq} \subseteq P_2(G) \times P_2(G)$ is an equivalence
  relation
  on coinitial and cofinal paths of length $2$ (i.e., if
  $u_1;u_2 \simeq v_1;v_2$ then $s(u_1)=s(v_1)$ and $t(u_2)=t(v_2)$)
  such that the following axioms hold (in pictures, all squares
  depicted are assumed to commute)
  \begin{enumerate}

  \item if $u_1;u_2 \simeq v_1;v_2$ and $u_2 \neq v_2$ then $u_1 \neq v_1$;
    \label{de:async-graph:1}
    \begin{center}
      \begin{tikzpicture}[node distance=6mm, >=stealth',baseline=(current bounding box.center), x=12mm, y=6mm]
        \node at (1,2) (T) {$\bullet$};    
        \node at (0,1) (L) {$\bullet$};
        \node at (2,1) (R) {$\bullet$};
        \node at (1,0) (B) {$\bullet$};    
        \path (B) edge[->] node[trans, below] {$u_1$} (L);
        \path (B) edge[->] node[trans, below] {$v_1$} (R);
        \path (L) edge[->] node[trans, above] {$u_2$} (T);
        \path (R) edge[->] node[trans, above] {$v_2$} (T);
        \pgfBox
      \end{tikzpicture}
    \end{center}
    
  \item if $u;u_1 \simeq v_1;v_2$ and $u;u_1' \simeq v_1';v_2'$ then
    ($u_1=u_1'$ iff $v_1=v_1'$);
    \label{de:async-graph:2}
    \begin{center}
      \begin{tikzpicture}[node distance=6mm, >=stealth',baseline=(current bounding box.center), x=12mm, y=6mm]
        \node at (1,2) (T) {$\bullet$};    
        \node at (0,1) (L) {$\bullet$};
        \node at (2,1) (R) {$\bullet$};
        \node at (1,0) (B) {$\bullet$};    
        \path (B) edge[->] node[trans, below] {$u$} (L);
        \path (B) edge[->] node[trans, below] {$v_1$} (R);
        \path (L) edge[->] node[trans, above] {$u_1$} (T);
        \path (R) edge[->] node[trans, above] {$v_2$} (T);
        \pgfBox
      \end{tikzpicture}
      and 
      \begin{tikzpicture}[node distance=6mm, >=stealth',baseline=(current bounding box.center), x=12mm, y=6mm]
        \node at (1,2) (T) {$\bullet$};
        \node at (0,1) (L) {$\bullet$};
        \node at (2,1) (R) {$\bullet$};
        \node at (1,0) (B) {$\bullet$};    
        \path (B) edge[->] node[trans, below] {$u$} (L);
        \path (B) edge[->] node[trans, below] {$v_1'$} (R);
        \path (L) edge[->] node[trans, above] {$u_1'$} (T);
        \path (R) edge[->] node[trans, above] {$v_2'$} (T);
        \pgfBox
      \end{tikzpicture}
    \end{center}

  \item Cube 
    \label{de:async-graph:3}
    \begin{center}
      \begin{tikzpicture}[node distance=6mm, >=stealth',baseline=(current bounding box.center), x=12mm, y=7mm]
        \node at (1,3) (T) {$\bullet$};
        \node at (0,2) (TL) {$\bullet$};
        \node at (2,2) (TR) {$\bullet$};
        \node at (1,1) (C) {$\bullet$};    
        \node at (0,1) (L) {$\bullet$};
        \node at (2,1) (R) {$\bullet$};
        \node at (1,0) (B) {$\bullet$};    
        \path (B) edge[->] node[trans, below] {$u_1$} (L);
        \path (B) edge[->] node[trans, below] {$v_1$} (R);
        \path (C) edge[->] node[trans, above] {} (TL);
        \path (C) edge[->] node[trans, above] {} (TR);
        \path (L) edge[->] node[trans, left] {$u_2$} (TL);
        \path (R) edge[->] node[trans, right] {$v_2$} (TR);
        \path (TL) edge[->] node[trans, above] {$u_3$} (T);
        \path (TR) edge[->] node[trans, above] {$v_3$} (T);
        \path (B) edge[->] node[trans] {} (C);

        \pgfBox
      \end{tikzpicture}
      $
      \begin{array}{c}
        \stackrel{(a)}{\Rightarrow}\\
        \stackrel{(b)}{\Leftarrow}
      \end{array}
      $        
      \begin{tikzpicture}[node distance=6mm, >=stealth',baseline=(current bounding box.center), x=12mm, y=7mm]
        \node at (1,3) (T) {$\bullet$};
        \node at (0,2) (TL) {$\bullet$};
        \node at (2,2) (TR) {$\bullet$};
        \node at (1,2) (C) {$\bullet$};    
        \node at (0,1) (L) {$\bullet$};
        \node at (2,1) (R) {$\bullet$};
        \node at (1,0) (B) {$\bullet$};    
        \path (B) edge[->] node[trans, below] {$u_1$} (L);
        \path (B) edge[->] node[trans, below] {$v_1$} (R);
        \path (L) edge[->] node[trans, above] {} (C);
        \path (R) edge[->] node[trans, above] {} (C);
        \path (L) edge[->] node[trans, left] {$u_2$} (TL);
        \path (R) edge[->] node[trans, right] {$v_2$} (TR);
        \path (TL) edge[->] node[trans, above] {$u_3$} (T);
        \path (TR) edge[->] node[trans, above] {$v_3$} (T);
        \path (C) edge[->] node[trans] {} (T);

        \pgfBox
      \end{tikzpicture}
    \end{center}

  \item Coherence axiom
    \label{de:async-graph:4}
      \begin{center}
      \begin{tikzpicture}[node distance=6mm, >=stealth',baseline=(current bounding box.center), x=12mm, y=7mm]
        \node at (1,3) (NIL) {$ $};
        \node at (1,2) (CT) {$\bullet$};
        \node at (0,2) (TL) {$\bullet$};
        \node at (2,2) (TR) {$\bullet$};
        \node at (1,1) (C) {$\bullet$};    
        \node at (0,1) (L) {$\bullet$};
        \node at (2,1) (R) {$\bullet$};
        \node at (1,0) (B) {$\bullet$};    
        \path (B) edge[->] node[trans, below] {$u_1$} (L);
        \path (B) edge[->] node[trans, below] {$v_1$} (R);
        \path (C) edge[->] node[trans, above] {} (TL);
        \path (C) edge[->] node[trans, above] {} (TR);
        \path (L) edge[->] node[trans, left] {$u_2$} (TL);
        \path (R) edge[->] node[trans, right] {$v_2$} (TR);
        \path (L) edge[->] node[trans, above] {} (CT);
        \path (R) edge[->] node[trans, above] {} (CT);
        \path (B) edge[->] node[trans] {} (C);

        \pgfBox
      \end{tikzpicture}
      $\Rightarrow$
      \begin{tikzpicture}[node distance=6mm, >=stealth',baseline=(current bounding box.center), x=12mm, y=7mm]
        \node at (1,3) (T) {$\bullet$};
        \node at (0,2) (TL) {$\bullet$};
        \node at (2,2) (TR) {$\bullet$};
        \node at (1,2) (C) {$\bullet$};    
        \node at (0,1) (L) {$\bullet$};
        \node at (2,1) (R) {$\bullet$};
        \node at (1,0) (B) {$\bullet$};    
        \path (B) edge[->] node[trans, below] {$u_1$} (L);
        \path (B) edge[->] node[trans, below] {$v_1$} (R);
        \path (L) edge[->] node[trans, above] {} (C);
        \path (R) edge[->] node[trans, above] {} (C);
        \path (L) edge[->] node[trans, left] {$u_2$} (TL);
        \path (R) edge[->] node[trans, right] {$v_2$} (TR);
        \path (TL) edge[->] node[trans, above] {} (T);
        \path (TR) edge[->] node[trans, above] {} (T);
        \path (C) edge[->] node[trans] {} (T);

        \pgfBox
      \end{tikzpicture}
    \end{center}

  \end{enumerate}
\end{definition}

Given an asynchronous graph, we denote by the same symbol $\simeq$ the
extension of the equivalence to all paths by contextual closure, i.e.,
$w_1;w;w_2 \simeq w_1;w';w_2$ for all $w_1, w_2, w, w' \in U^*$ with $w \simeq w'$.
The equivalence
classes of paths from the origin can be ordered by prefix, leading to
a partial order $P(A)$.
Then it can be shown that the partial orders of finite configurations
of prime {\esabbr} exactly correspond to asynchronous graphs such
that all cofinal paths from the origin are %
equivalent.

\begin{definition}[prime asynchronous graph]
  An asynchronous graph 
  $A = \langle G, n_0, \simeq \rangle$ is called  \emph{prime} if 
  all cofinal paths from the origin $n_0$ are equivalent.
\end{definition}

It can be seen that the requirement of having all cofinal paths equivalent amounts to that of having all {coinitial and cofinal} paths of length $2$ (squares) equivalent. This is indeed how the condition is formalised in~\cite{PU:RMC}.

\begin{theorem}[asynchronous graphs/prime {\esabbr}~\cite{Mel:hab}]
  \label{th:async-pes}
  Let $A$ be a prime asynchronous graph. The ideal completion
  $\ideal{P(A)}$ is a prime domain. Conversely, each
  prime domain is isomorphic to $\ideal{P(A)}$ for some prime
  asynchronous graph $A$.
\end{theorem}

With respect to~\cite{Mel:hab},
we added the coherence
axiom~(\ref{de:async-graph:4}) in the definition of
asynchronous graph, which
is going to be pivotal in our later characterisation of weak prime domains (Proposition~\ref{pr:was-ces}). This is actually necessary already for having a correspondence with
prime domains and {\esabbr}.\footnote{In a personal communication, Paul Andr\'ee Melli\`es agreed that condition (4) is necessary for the correctness of Theorem 3 of Section 2.6 of~\cite{Mel:hab},
rephrased here as Theorem~\ref{th:async-pes}.
}

The correspondence established by Theorem~\ref{th:async-pes}
generalises to connected {\esabbr} and what we call weak
asynchronous graphs, i.e., asynchronous graphs where only the forward
part of the cube axiom holds, while the converse implication (indeed
sometimes referred to as stability axiom) may fail.

\begin{definition}[weak asynchronous graphs]
  A \emph{weak asynchronous graph} is defined as in
  Definition~\ref{de:async-graph}, but omitting the stability
  axiom~(\ref{de:async-graph:3}a). It is called \emph{weak prime} if
  additionally all cofinal paths from the origin are equivalent.
\end{definition}

Then we can prove that weak prime domains are exactly the partial
orders generated by weak prime asynchronous graphs (which in turn
correspond to connected {\esabbr}).

\begin{proposition}[weak asynchronous graphs and  domains]
  \label{pr:was-ces}
  Let $A$ be a weak prime asynchronous graph. The ideal completion
  $\ideal{P(A)}$ is a weak prime domain. Conversely, each weak
  prime domain is isomorphic to $\ideal{P(A)}$ for some weak prime
  asynchronous graph $A$.
\end{proposition}

\begin{proof}
  First observe that in a weak asynchronous graph
  $A = \langle G, n_0, \simeq \rangle$ with
  $G = \langle N, U, s, t \rangle$ such that all the {cofinal} paths from the
  origin are equivalent we have that all the squares are commuting. Thus
  axioms~(\ref{de:async-graph:1}) and ~(\ref{de:async-graph:2}) imply that the graph
  is simple, that there are at most two different paths of length $2$ with
  the same source and target, and that there is at most one way of
  closing a square.

  Now, let $D$ be a weak prime domain and consider the subset of compact
  elements $\compact{D}$. It can be seen as an (acyclic) graph by taking compact
  elements as nodes and intervals as edges, with source and target
  functions being the obvious ones $s(\dint{c}{c'}{)} = c$ and
  $t(\dint{c}{c'}) = c'$.
  Then taking $\emptyset$ as origin and letting all the squares commute,
  we get a weak asynchronous graph where all the paths are equivalent.
  In detail, as observed above, axiom~(\ref{de:async-graph:1}) follows
  from the fact that the graph is simple. 
  Axiom (\ref{de:async-graph:2}) says that there are at most two
  paths of length $2$ between the same source and target. Assume that this
  is not the case, i.e., $\compact{D}$ contains a substructure as below, with $y_1, y_2, y_3$ pairwise distinct.
  \begin{center}
    \begin{tikzpicture}[node distance=6mm, >=stealth',baseline=(current bounding box.center), x=12mm, y=8mm]
      \node at (1,2) (z) {$z$};
      \node at (0,1) (y1) {$y_1$};    
      \node at (1,1) (y2) {$y_2$};
      \node at (2,1) (y3) {$y_3$};
      \node at (1,0) (x) {$x$};    
      \path (x) edge[->] (y1);
      \path (x) edge[->] (y2);
      \path (x) edge[->] (y3);
      \path (y1) edge[->] (z);
      \path (y2) edge[->] (z);
      \path (y3) edge[->] (z);
    \end{tikzpicture}
  \end{center}
  Then we would have that $y_1$ is an irreducible which is not a weak
  prime. In fact $y_1 \sqsubseteq y_2 \sqcup y_3$, but it is not the
  case that either $y_1 \leftrightarrow y_2$ or $y_1 \leftrightarrow y_3$.

  Axiom (\ref{de:async-graph:3}a) follows from bounded completeness
  and the fact that if $x \prec y_1$ and $x \prec y_2$, with
  $y_1 \neq y_2$ then $y_1 \prec y_1 \sqcup y_2$ and
  $y_2 \prec y_1 \sqcup y_2$.

  Axiom (\ref{de:async-graph:4}) is an immediate
  consequence of coherence.

  Finally, we have to prove that all the paths from $\emptyset$ to the
  same target are equivalent. We prove more generally that all
  coinitial and cofinal paths are equivalent. First 
  notice that given
  two paths $w = y_1 \ldots y_n$ and $w' = y_1' \ldots y_m'$ with
  $y_1=y_1'$ and $y_n=y_m'$ then  $n=m=|\eqclassir{\ir{y_n}} \setminus \eqclassir{\ir{y_1}}|$, by Lemma~\ref{le:chains}. We prove
  by induction on $n=m$ that the two paths are equivalent. The base
  cases $n=1$ and $n=2$ are obvious.
  In the inductive case, consider
  $z = y_2 \sqcup y_2'$.
  \begin{center}
    \begin{tikzpicture}[node distance=6mm, >=stealth',baseline=(current bounding box.center), x=18mm, y=7mm]
      \node at (1,4) (T) {$y_n=y_n'$};
      \node at (0,2.5) (TL) {$y_3$};
      \node at (2,2.5) (TR) {$y_3'$};
      \node at (1,2) (C) {$z$};    
      \node at (0,1) (L) {$y_2$};
      \node at (2,1) (R) {$y_2'$};
      \node at (1,0) (B) {$y_1=y_1'$};    
      \path (B) edge[->]  (L);
      \path (B) edge[->]  (R);
      \path (L) edge[->]  (C);
      \path (R) edge[->]  (C);
      \path (L) edge[->]  (TL);
      \path (R) edge[->]  (TR);
      \path (TL) edge[->, dotted, bend left]  (T);
      \path (TR) edge[->, dotted, bend right]  (T);
      \path (C) edge[->, dotted] (T);
     \end{tikzpicture}
  \end{center}
  
  Then, as already observed, $y_2 \prec z$ and
  $y_2' \prec z$. Then
  \begin{equation}
    \label{eq:equiv}
    y_1 y_2 z \simeq y_1' y_2' z
  \end{equation}
  Moreover, since $z \sqsubseteq y_n=y_n'$ there is a path
  $y_2 z \ldots y_n$ of length $n-1$ in a way that we can apply the
  inductive hypothesis to prove that
  $y_2 y_3 \ldots y_n \simeq y_2 z \ldots y_n$.  Similarly, on the
  left side, we get $y_2' y_3' \ldots y_n \simeq y_2' z \ldots
  y_n'$. Therefore, together with (\ref{eq:equiv}), we conclude that
  $w= y_1 y_2 y_3 \ldots y_n \simeq y_1 y_2 z \ldots y_n \simeq y_1'
  y_2' z \ldots y_n' \simeq y_1' y_2' y_3' \ldots y_n' = w'$.

  \bigskip
  
  Conversely, let $A = \langle G, n_0, \simeq \rangle$ where
  $G = \langle N, U, s, t \rangle$ is a weak asynchronous graph such that
  all the paths from the origin are equivalent. Then, in particular, all
  the squares are commuting and, by axiom~(\ref{de:async-graph:1}), the
  graph is simple, i.e., we can think of edges as a relation on
  nodes.
  This allows us to view $A$ as a concurrent automata
  $(Q,\Sigma, T, (\parallel_q)_{q \in Q})$ in the sense
  of~\cite{Dro:CAD} as follows. Define an equivalence on edges by
  $u \equiv u'$ if there are $v, v' \in U$ such that $u v \sim v'u'$
  (namely, $u, u'$ are the opposite edges of a square). Then take
  nodes as states $Q = N$, equivalence classes of edges as labels
  $\Sigma = \quotient{U}{\equiv}$, transition relation
  $T = \{ (s(u), u, t(u)) \mid u \in U \}$ and local concurrency given
  by $\eqclass[\equiv]{u} \parallel_n \eqclass[\equiv]{v}$ when $u, v$ are such that
  $s(u)=s(v)=n$ and there are $u', v' \in E$ such that $uu' \sim v v'$.
  The fact that $\parallel_n$ is well-defined uses in an essential way
  axioms~(\ref{de:async-graph:3}a) and~(\ref{de:async-graph:4}).  Then an immediate adaptation
  of~\cite[Theorem 10]{Dro:CAD} to asynchronous graphs shows that
  $P(A)$ is a domain that satisfies axioms (F), (C), and (R) in
  Subsection~\ref{ss:intervals}. Finally, observe that axiom (V) is a
  ``global'' version of the axiom (\ref{de:async-graph:1}). The fact
  that the latter implies the former can be proved by exploiting the
  fact that each bounded subset of $P(A)$ is a semimodular lattice~\cite[Theorem 3.1]{DK:ACRS}. 
  Hence $D$ is a weak prime domain.
\end{proof}

\section{Domain and event structure semantics for graph rewriting}
\label{se:graphs}

In this section we consider graph rewriting systems where rules are
left-linear but possibly not right-linear and thus, as an effect of a
rewriting step, some items can be merged. We argue that weak prime
domains and connected {\esabbr} are the right tool for providing a
concurrent semantics to this class of rewriting systems. More
precisely, in Subsection~\ref{ss:back-graph} we review the basics of
graph rewriting and we generalise the notion of independence between
rule applications to graph rewriting with left-linear rules.  Then in
Subsections~\ref{ss:graph-dom} and \ref{ss:es-graph} we show that the
domain associated with a graph rewriting system by a generalisation of
a classical construction is a weak prime domain and vice versa that
each connected {\esabbr} and thus each weak prime domain arise as the
semantics of some graph rewriting system. Finally, in
Subsection~\ref{ss:prime-gg} we show how a prime event structure
semantics for a graph rewriting system can be recovered by imposing
suitable restriction on rule application.

\subsection{Graph rewriting and concatenable traces}
\label{ss:back-graph}

We first review the basic definitions about graph rewriting in the double-pushout approach~\cite{Ehr:TIAA}.
We recall graph grammars and then introduce a notion of trace, which provides a representation of a sequence of 
rewriting steps that abstracts from the order of independent rewrites. This requires an original generalisation of the notion of independence between rewriting steps to the case of left-linear rules.
Traces are then turned into a category 
$\tr{\mathcal{G}}$ of concatenable derivation traces~\cite{CELMR:ESSG}.

\begin{definition}
A \emph{(directed) graph} is a tuple
$G = \langle N, E, s, t \rangle$, where $N$ and $E$ are 
sets of
nodes and edges, and $s, t: E \rightarrow N$ are the source and target
functions.  The components of a graph $G$ are often denoted by
$N_G$, $E_G$, $s_G$, $t_G$.  A \emph{graph morphism}
$f: G \rightarrow H$ is a pair of functions
$\langle f_N: N_G \rightarrow N_H, f_E: E_G \rightarrow E_H\rangle$ such
that $f_N \circ s = s' \circ f_E$ and $f_N \circ t = t' \circ f_E$.
We denote by $\graph$ the category of graphs and graph morphisms

\end{definition}

An \emph{abstract graph} $[G]$ is an isomorphism class of
graphs.
We work with typed graphs, i.e., graphs which are
``labelled'' over some fixed graph. Formally, given a graph $T$, the category of
\emph{graphs typed over $T$}, as introduced in~\cite{CMR:GP}, is the
slice category $\slice{\graph}{T}$, also denoted $\tgraph{T}$.

\begin{definition} [graph grammar]
  A \emph{($T$-typed graph) rule} is a span
  $(L \stackrel{l}{\leftarrow} K \stackrel{r}{\rightarrow} R)$ in
  $\tgraph{T}$ where $l$ is mono and not epi.
  The typed graphs $L$, $K$, and $R$ are called the \emph{left-hand side},
  the \emph{interface}, and the \emph{right-hand side} of the rule, respectively. 
  A \emph{($T$-typed) graph grammar} is a tuple
  $\mathcal{G} = \langle T, G_s, P, \pi \rangle$, where $G_s$ is the
  \emph{start (typed) graph}, $P$ is a 
  set of \emph{rule names}, and
  $\pi$ maps each rule name in $P$ into a rule.
\end{definition}
Sometimes we write
$p:(L \stackrel{l}{\leftarrow} K \stackrel{r}{\rightarrow} R)$ for
denoting the rule $\pi(p)$.  When clear from the context we omit the
word ``typed'' and the typing morphisms.  Note that we consider only
\emph{consuming} grammars, i.e., grammars where for every rule
$\pi(p)$
the morphism $l$ is not epi.  
Also note that rules are, by default, \emph{left-linear}, i.e., the morphism
$l$ is mono. When also the morphism $r$ is mono, the rule is called
\emph{right-linear}.

An example of graph grammar has been discussed in the introduction
(see Fig.~\ref{fi:running-gg}). The type graph was left implicit: it
can be found in the top part of Fig.~\ref{fi:rules-grammars}. The
typing morphisms for the start graph and the rules are implicitly
represented by the labelling. Also observe that for the rules only the
left-hand side $L$ and the right-hand side $R$ were reported. The same
rules with the interface graph explicitly represented are in
Fig.~\ref{fi:rules-grammars}.

\begin{figure}

  \begin{center}
    \subcaptionbox*{$T$}{
        \begin{tikzpicture}[node distance=8mm, >=stealth']
          \node at (0,0) [node, label=below:${c,v}$] (nc) {} 
          edge [in=200, out=170, loop]  node [lab,above] {$\bar{a}$} ()
          edge [in=150, out=120, loop]  node [lab,above] {$\bar{b}$} ()
          edge [in=95, out=65, loop]  node [lab,above] {$\bar{\nu}$} ()
          edge [in=30, out=0, loop]  node [lab,above] {$\mathit{in}$} ();
          \pgfBox
        \end{tikzpicture}
      }\\
    \hspace{-4mm}
    \subcaptionbox*{$p_y$ ($y \in \{a,b\}$)}{
    \begin{tikzpicture}[node distance=2mm, baseline=(current bounding box.center)]
      \node (l) {
      \begin{tikzpicture}[node distance=8mm, >=stealth']
      \node at (0,0) [node, label=below:$c$] (nc) {} 
      edge [in=105, out=75, loop]  node [lab,above] {$\bar{y}$} ();
      \node at (.5,0) [node, label=below:$\nu$] (nu) {} 
      edge [in=105, out=75, loop]  node [lab,above] {$\bar{\nu}$} ();
      \pgfBox
      \end{tikzpicture} 
    };
    \node [right=of l] (k) {
      \begin{tikzpicture}[node distance=8mm, >=stealth']
      \node at (0,0) [node, label=below:$c$] (nc) {};
      \node at (.5,0) [node, label=below:$\nu$] (nu) {} 
      edge [in=105, out=75, loop]  node [lab,above] {$\bar{\nu}$} ();
      \pgfBox
      \end{tikzpicture} 
    };
    \node  [right=of k] (r) {
      \begin{tikzpicture}[node distance=8mm, >=stealth']
        \node at (0,0) [node, label=below:${c,\nu}$] (nc) {} 
        edge [in=105, out=75, loop]  node [lab,above] {$\bar{\nu}$} ();
        \pgfBox
      \end{tikzpicture}
    };
    \path (k) edge[->] node[trans, above] {} (l);
    \path (k) edge[->] node[trans, above] {} (r);
    \end{tikzpicture}
    }
    \ \hfill\ 
    \subcaptionbox*{$p_c$}{
    \begin{tikzpicture}[node distance=2mm, baseline=(current bounding box.center)]
      \node (l) {
      \begin{tikzpicture}[node distance=8mm, >=stealth']
        \node at (0,0) [node, label=below:${c,\nu}$] (nc) {} 
        edge [in=160, out=130, loop]  node [lab,above] {$\mathit{in}$} ()
        edge [in=105, out=75, loop]  node [lab,above] {$\bar{\nu}$} ();
        \pgfBox
      \end{tikzpicture} 
    };
    \node [right=of l] (k) {
      \begin{tikzpicture}[node distance=8mm, >=stealth']
        \node at (0,0) [node, label=below:${c,\nu}$] (nc) {} 
         edge [in=105, out=75, loop]  node [lab,above] {$\bar{\nu}$} ();
        \pgfBox
      \end{tikzpicture} 
    };
    \node  [right=of k] (r) {
      \begin{tikzpicture}[node distance=8mm, >=stealth']
        \node at (0,0) [node, label=below:${c,\nu}$] (nc) {} 
        edge [in=105, out=75, loop]  node [lab,above] {$\bar{\nu}$} ();
        \pgfBox
      \end{tikzpicture}
    };
    \path (k) edge[->] node[trans, above] {} (l);
    \path (k) edge[->] node[trans, above] {} (r);
  \end{tikzpicture}  
  }
\end{center}
\caption{The type graph  and the rules of the grammar in Fig.~\ref{fi:running-gg}.}
\label{fi:rules-grammars}
\end{figure}

\begin{figure}[t]
\[
\xymatrix@R=4mm{ 
  {L} \ar[d]_{m^L} & {K} \ar[l]_{l} \ar[r]^{r} 
  \ar[d]^{m^K} & {R} \ar[d]^{m^R}\\
  {G} & {D} \ar[l]^{l^*} \ar[r]_{r^*} & {H} }
\]
\caption{A direct derivation.}
\label{fi:deriv}
\end{figure}

\begin{definition}[direct derivation]
  \label{de:direct derivation}
  Let $G$ be a typed graph,
  let $p : (L \stackrel{l}{\leftarrow} K \stackrel{r}{\rightarrow} R)$
  be a rule, and let $m^L$ be a \emph{match}, i.e., a typed graph
  morphism $m^L: L \rightarrow G$.  A \emph{direct derivation}
  $\delta$ from $G$ to $H$ via $p$ (based on $m^L$) is a diagram as in
  Fig.~\ref{fi:deriv}, where both squares are required to be pushouts in
  $\tgraph{T}$.
  We write $\delta : G \Rrel{p/m} H$, where $m = \langle
  m^L, m^K, m^R\rangle$, or simply $\delta : G \Rrel{p} H$.
\end{definition}

Since pushouts are defined only up to isomorphism, given isomorphisms
$\kappa : G' \to G$ and $\nu : H \to H'$, also $G' \Rrel{p/m'} H$
with $m' = \langle \kappa^{-1} \circ m^L, m^K, m^R \rangle$ and
$G \Rrel{p/m''} H'$ with $m'' = \langle  m^L, m^K, \nu \circ m^R \rangle$  are direct derivations, 
denoted by $\kappa \cdot \delta$ and $\delta \cdot \nu$, respectively.
Informally, the rewriting step removes (the image
of) the left-hand side from $G$ and replaces it by (the
image of) the right-hand side $R$. The interface $K$ (the common part 
of $L$ and $R$) specifies what is preserved.
For example, the transitions in Fig.~\ref{fi:running-rewriting} are
all direct derivations.
When rules are not right-linear, some items in the (image of the) interface are merged. This happens, e.g., for  $p_a$ and $p_b$.

\begin{definition}[derivations]
  \label{de:derivations}
  Let $\mathcal{G} = \langle T, G_s, P, \pi \rangle$ be a graph
  grammar. A \emph{derivation}
  $\rho: G_0 \Rrel{}^{*}_{\mathcal{G}} G_n$ over ${\mathcal{G}}$ is a
  (possibly empty) sequence of direct derivations
  $\{G_{i-1} \Rrel{p_{i}} G_i\}_{i \in \interval{n}}$ where
  $p_i \in P$ for $i \in \interval{n}$.  The graphs $G_0$ and $G_n$
  are called the \emph{source} and the \emph{target} of $\rho$, and
  denoted by $\source{\rho}$ and $\target{\rho}$, respectively. The
  \emph{length} of $\rho$ is $|\rho| = n$.
  The derivation is called {\em
    proper} if $|\rho| > 0$.
  Given two derivations $\rho$
  and $\rho'$ such that $\target{\rho} = \source{\rho'}$, their
  sequential composition
  $\rho\,;\,\rho': \source{\rho} \Rrel{}^{*} \target{\rho'}$ is
  defined in the obvious way.
\end{definition}

When irrelevant/clear from the context, the subscript
$\mathcal{G}$ is omitted.  
If
$\rho : G \Rrel{}^{*} H$ is a proper derivation 
and
$\kappa : G' \to G$, $\nu : H \to H'$ are graph isomorphisms, then
$\kappa \cdot \rho : G' \Rrel{}^{*} H$ and
$\rho \cdot \nu : G \Rrel{}^{*} H'$ are defined as expected.

In the double pushout approach to graph rewriting, it is natural to
consider graphs and derivations up to isomorphism.
However some structure must be imposed on the start and end graphs
of an abstract derivation in order to have a meaningful notion of
sequential composition.
In fact, isomorphic graphs are, in general, related by several
isomorphisms, while in order to concatenate derivations keeping track
of the flow of causality one must specify how the items of the source
and target isomorphic graphs should be identified.
We follow~\cite{Handbook}, inspired by the theory of
Petri nets~\cite{DMM:AANCP}: we choose for each class of isomorphic
typed graphs a specific graph, named the \emph{canonical graph}, and we
decorate the source and target graphs of a derivation with
isomorphisms from the corresponding canonical graphs to such
graphs.

Let $\mathsf{C}$ denote the operation that associates with each ($T$-typed)
graph its \emph{canonical graph}, thus satisfying $\can{G} \simeq G$
and if $G \simeq G'$ then $\can{G} = \can{G'}$.

\begin{definition}[decorated derivation]
  A \emph{decorated derivation} $\psi : G_0 \Rrel{}^* G_n$ is a
  triple $\langle \alpha, \rho, \omega \rangle$, where $\rho : G_0 \Rrel{}^*
  G_n$ is a derivation and $\alpha: \can{G_0} \to G_0$, $\omega: \can{G_n} \to
  G_n$ are isomorphisms. 
  We define $\source{\psi} = \can{\source{\rho}}$, $\target{\psi} =
  \can{\target{\rho}}$ and $|\psi| =|\rho|$.
\end{definition}

\begin{definition}[sequential composition]
\label{def:seq_com_decorated}
  Let $\psi = \langle \alpha, \rho, \omega \rangle$, $\psi' = \langle \alpha',
  \rho', \omega' \rangle$ be decorated derivations such that
  $\target{\psi} = \source{\psi'}$. Their \emph{sequential
  composition} $\psi ; \psi'$ is defined, if $\psi$ and
  $\psi'$ are proper, as 
    $\langle \alpha, (\rho \cdot \omega^{-1}); (\alpha' \cdot \rho'), \omega' \rangle$.
  Otherwise, if $|\psi|=0$ then $\psi ; \psi' =
  \langle \alpha' \circ \omega^{-1} \circ \alpha, \rho', \omega' \rangle$, and similarly,
  if $|\psi'| = 0$ then $\psi ; \psi' =
  \langle \alpha, \rho, \omega \circ {\alpha'}^{-1} \circ \omega' \rangle$.
\end{definition}

We next define an abstraction equivalence that identifies derivations that differ
only in representation details.

\begin{definition}[abstraction equivalence]
  \label{de:der-equiv1}
  Let $\psi = \langle \alpha, \rho, \omega \rangle$, $\psi' = \langle \alpha',
  \rho', \omega' \rangle$ be decorated derivations with $\rho: G_0
  \Rrel{}^{*} G_n$ and $\rho': G'_0 \Rrel{}^{*} G'_{n'}$
  (whose $i^{th}$ step is depicted in the lower rows of
  Fig.~\ref{fi:der-iso}).  They are \emph{abstraction equivalent}, 
  written  $\psi \equiv^{a} \psi'$, 
  if  $n = n'$,  $p_{i} = p_{i}'$ for all $i \in \interval{n}$,
  and there exists a family of isomorphisms
  $\{\theta_{X_i}: X_i \rightarrow X'_i \mid X \in \{G, D\},$
  $i \in \interval{n} \} \cup\{\theta_{G_0}\}$
  between corresponding graphs in the two derivations such that
  (1)
    the isomorphisms relating the source and target
    commute with the decorations, i.e., $\theta_{G_0} \circ \alpha = \alpha'$
    and $\theta_{G_n} \circ \omega = \omega'$; and
  (2) 
    the resulting diagram (whose $i^{th}$ step is represented in
    Fig.~\ref{fi:der-iso}) commutes.
\end{definition}

  Equivalence classes of decorated derivations 
  with respect to $\equiv^{a}$ 
  are called \emph{abstract derivations} and 
  denoted by
  $[\psi]_{a}$, where $\psi$ is an element of the class.

\begin{figure}
\vspace{-1.1cm}

\[
\scalebox{.8}{
\xymatrix@R=1.4mm@C=2mm{ 
  & & & {p_i:} & {L_i} \ar[ddddl]|/-6mm/{\bx{m_i^L}} \ar[ddr]|{\bx{m_i'^L}} & & {K_i}
  \ar[ll]_{l_i} \ar[rr]^{r_i} \ar[ddddl]|/-6mm/{\bx{m_i^K}} \ar[ddr]|{\bx{m_i'^K}} & &
  {R_i} \ar[ddddl]|/-6mm/{\bx{m_i^R}} \ar[ddr]|{\bx{m_i'^R}}& & & &\\
  & & & & & & & & & & &\\  
  & & & {G_0'} & & {G_{i-1}'} & &
  {D_i'} \ar[ll]|{\bx{l_i'^*}} \ar[rr]|{\bx{r_i'^*}} & & {G_{i}'} & & {G_n'} &\\
  {\can{G_0}} \ar[urrr]^{\alpha'} \ar[dr]_{\alpha}  & & & & & & & & & & & &
  {\can{G_n}} \ar[ul]_{\omega'} \ar[dlll]^{\omega} \\
  & {G_0} \ar[uurr]|{\theta_{G_0}} &  & {G_{i-1}} \ar[uurr]|{\theta_{G_{i-1}}} & &
  {D_i} \ar[ll]^{l_i^*} \ar[rr]_{r_i^*} \ar[uurr]|{\theta_{D_i}} & &
  G_{i} \ar[uurr]|{\theta_{G_{i}}} & & {G_n} \ar[uurr]|{\theta_{G_n}} & & }
}
\]

\caption{Abstraction equivalence of decorated derivations.}
\label{fi:der-iso}
\end{figure}

From a concurrent perspective, also derivations that only differ for
the order in which two independent direct derivations are applied
should not be distinguished. This is classically formalised by a
notion of sequential independence between rewrites inducing the
so-called shift equivalence on derivations. When working with rules
which are only left-linear, we need to refine the notion of
independence as discussed below.

\begin{definition}[sequential independence]
  \label{de:seq-ind}
  Consider a derivation $G \Rrel{p_1/m_1} H \Rrel{p_2/m_2} M$ as in
  Fig.~\ref{fi:strongseq}. Then, its components are \emph{weakly sequentially
    independent} if there exists an \emph{independence pair} among
  them, i.e., two graph morphisms $i_1: R_1 \rightarrow D_2$ and
  $i_2: L_2 \rightarrow D_1$ such that $l_2^* \circ i_1 = m_{R_1}$,
  $r_1^* \circ i_2 = m_{L_2}$.
  They are \emph{sequentially
    independent} if the independence pair is unique.
\end{definition}

Intuitively, when the independence pair is not unique, we can think
that the first rewrite has performed some fusions that the second
rewrite is using in an essential way. Hence the steps should not
considered independent.
Note that when dealing with linear rules, the independence pair, if it exists,
is always unique. Hence the two notions independence coincide and
reduce to the classical one in~\cite{CMREHL:AAGT}.

\begin{figure}[t]
\[
\def\objectstyle{\scriptstyle}\def\labelstyle{\scriptstyle}
\xymatrix@C=0.7mm@R=7mm{
  {L_1} \ar[d]_{m_{L_1}}
        & & {K_1} \ar[ll]_{l_1} \ar[rr]^{r_1} \ar[d]|{m_{K_1}}
        & & {R_1} \ar[drrr]|{m_{R_1}}  \ar@(r,ul)@{.>}|(.75){i_1}[drrrrrrrr]
        & & & & & &
  {L_2}\ar[dlll]|{m_{L_2}} \ar@(l,ur)@{.>}|(.75){i_2}[dllllllll]
        & & {K_2} \ar[ll]_{l_2} \ar[rr]^{r_2} \ar[d]|{m_{K_2}}
        & & {R_2} \ar[d]^{m_{R_2}} \\
    {G}
     & & {D_1} \ar[ll]^{l^*_1} \ar[rrrrr]_{r^*_1}
     & & & &  & {H} & &
     & & & {D_2} \ar[lllll]^{l^*_2} \ar[rr]_{r^*_2}
     & & {M}  }
\]
\caption{Sequential independence for
  $\rho = G \Rrel{p_1/m_1} H \Rrel{p_2/m_2} M$.}
\label{fi:strongseq}
\end{figure}

\begin{proposition}[interchange operator]
  \label{pr:interchangeEG}
  Let $\rho = {G} \Rrel{p_1/m_1} {H} \Rrel{p_2/m_2} {M}$ be a
  derivation whose components are sequentially independent via
  an independence pair $\xi$.
  Then, a derivation $IC_\xi(\rho) = {G} \Rrel{p_2/m^*_2}
  {H^*} \Rrel{p_1/m^*_1} {M}$ can be uniquely constructed.
  The interchange is called \emph{proper} when it produces a
  derivation that is again sequentially independent.
\end{proposition}

We explicitly observe that the result of the interchange of two
sequentially independent rewrites is still weak sequentially
independent, but, differently from what happens for linear rules, it
could fail to be sequentially independent due to non-uniqueness of the
independence pair. This motivates the notion of proper interchange.

The interchange operator is used to introduce a notion of shift
equivalence~\cite{CMREHL:AAGT}, identifying (as for the analogous
permutation equivalence of $\lambda$-calculus) those derivations which
differ only for the scheduling of independent steps. Due to the fact
that the interchange of a sequential independence derivation is not
necessarily sequential independent some care must be put for making the
relation symmetric.

\begin{definition}[shift equivalence]
  \label{de:shift-equivalence}
  The derivations $\rho$ and $\rho'$ are \emph{shift equivalent},
  written $\rho \equiv^{sh} \rho'$, if $\rho'$ can be obtained from
  $\rho$ by repeated proper interchanges of pairs of
  sequentially independent rewrite steps.
\end{definition}
If we are
interested in the way $\rho'$ is obtained from $\rho$, we write
$\rho \equiv^{sh}_\perm \rho'$, for
$\perm : \interval{n} \to \interval{n}$ the associated permutation.
It is easy to see that, due to the requirement that interchanges are proper, the relation $\equiv^{sh}$ is indeed symmetric.

For instance, in Fig.~\ref{fi:running-rewriting} it is easy to see
that the derivation $G_s \Rrel{p_a} G_b \Rrel{p_b} G_{ab}$
consists of sequentially independent direct derivations. It is shift
equivalent to $G_s \Rrel{p_b} G_a \Rrel{p_a} G_{ab}$,
via the permutation $\perm = \{(1,2),(2,1)\}$.

Two decorated derivations are said to be shift equivalent when the
underlying derivations are, i.e.,
$\langle \alpha, \rho, \omega \rangle \equiv^{sh} \langle \alpha,
\rho', \omega \rangle$ if $\rho \equiv^{sh} \rho'$. Then the
equivalence of interest arises by joining abstraction and shift equivalence.

\begin{definition}[concatenable traces]
  We denote by $\equiv^c$ the equivalence on decorated derivations
  arising as the transitive closure of the union of the relations
  $\equiv^{a}$ and $\equiv^{sh}$.
  Equivalence classes of decorated derivations with respect to
  $\equiv^c$ are denoted as $[\psi]_c$ and are called
  \emph{concatenable (derivation) traces}.
\end{definition}
We will sometimes annotate $\equiv^c$ with the permutation relating
the equivalent permutations. Formally, $\equiv^c_\sigma$ can be
defined inductively as: if $\psi \equiv^{a} \psi'$ then
$\psi \equiv^c_{id} \psi'$ , if $\psi \equiv^{sh}_{\sigma} \psi'$ then
$\psi \equiv^c_{\sigma} \psi'$, and if
$\psi \equiv^{c}_{\sigma} \psi'$ and
$\psi' \equiv^{c}_{\sigma'} \psi''$ then
$\psi \equiv^{c}_{\sigma' \circ \sigma} \psi''$.

{ %

Several proofs concerning concatenable traces exploit a 
property of equivalence $\equiv^c$ presented in \cite[Sec. 3.5]{Handbook}, 
that we summarize and adapt here to our framework.

If $\psi$ and $\psi'$ are decorated derivations, then a \emph{consistent permutation}
between their steps relates two direct derivations if they consume
and produce the same items, up to an isomorphism that is consistent with the decorations.

\begin{definition}[consistent permutation]
\label{de:consistent_permutation}

Given a decorated derivation  $\psi = \langle \alpha, \rho, \omega \rangle:  G_0 \Rrel{}^* G_n$, 
we denote by $\col{\psi}$ the colimit of the corresponding diagram in category $\tgraph{T}$, and by $in^X_{\col{\psi}}$
the injection of $X$ into the colimit, for any graph $X$ in $\rho$.
Given two such decorated derivations $\psi$ and $\psi'$ of equal length $n$, 
a \emph{consistent permutation} $\perm$ from $\psi$ to $\psi'$ is a permutation
$\perm$ on $\interval{n}$ such that
\begin{enumerate}
\item there exists an isomorphism $\xi: \col{\psi} \to \col{\psi'}$;
\item for each $i \in \interval{n}$ the direct derivations $\delta_i$ of $\psi$ and $\delta_{\perm(i)}$ of $\psi'$
use the same rule;

\item for each $i \in \interval{n}$, let  $p : (L \stackrel{l}{\leftarrow} K \stackrel{r}{\rightarrow} R)$ be the
rule used by direct derivations $\delta_i: G_{i-1} \Rrel{p/m} G_{i}$ and
$\delta'_{\perm(i)}: G'_{\perm(i)-1} \Rrel{p/m'} G'_{\perm(i)}$; then 
\begin{itemize}
\item$\xi \circ in^{G_{i-1}}_{\col{\psi}} \circ m^L =
  in^{G_{\perm(i)-1}}_{\col{\psi'}} \circ m'^L$, and 
  \item $\xi \circ in^{G_{i}}_{\col{\psi}} \circ m^R =
  in^{G_{\perm(i)}}_{\col{\psi'}} \circ m'^R$;
\end{itemize}
  
\item {[$\alpha$-consistency]} $\xi \circ in^{G_0}_{\col{\phi}} \circ \alpha = in^{G'_0}_{\col{\phi'}} \circ \alpha'$;

\item {[$\omega$-consistency]}  $\xi \circ in^{G_n}_{\col{\phi}} \circ \omega = in^{G'_n}_{\col{\phi'}} \circ \omega'$;

\end{enumerate}
A permutation $\perm$ from $\psi$ to $\psi'$ is called \emph{left-consistent} if it satisfies 
conditions (1)-(4), but possibly not $\omega$-consistency.
It is easy to show, by induction on the length of derivations, that the isomorphism 
$\xi: \col{\psi} \to \col{\psi'}$ is uniquely determined by conditions (2)-(4), if it exists. 

\end{definition}

In the case of linear rules the existence of a consistent permutation
is a characterisation of equivalence $\equiv^c$. Here, it just
provides a necessary condition.

\begin{lemma}
\label{le:consistent_permutations}
Let $\psi$, $\psi'$ be decorated derivations. 
\begin{enumerate}
\item if $\psi \equiv^c_\sigma \psi'$ then
$|\psi| = |\psi'|$ and $\perm$ is a consistent permutation on 
$\interval{|\psi|}$ between them. We write  $\psi \equiv^{c}_\perm \psi'$ in this case.
\item If $\psi; \psi_1  \equiv^{c}_\perm \psi'; \psi'_1$ and $\psi  \equiv^{c}_{\perm_0} \psi'$, 
then $\perm_0$ is the restriction of $\perm$ to $\interval{|\psi|}$. In this case it also holds
$\psi_1  \equiv^{c}_{\perm_1} \psi'_1$, with $\perm_1(i) = \perm(i + |\psi|) -  |\psi|$.
\item
If $\psi \equiv^{c} \psi'$, then there is a unique consistent permutation $\perm$ such that 
$\psi \equiv^{c}_\perm \psi'$.
\end{enumerate}
\end{lemma}
} %

\begin{proof}{[sketch]}
\begin{enumerate}
\item This holds by  \cite[Thm. 3.5.3]{Handbook}. Just note that the proof of this direction does not use linearity of rules. 

\item Suppose by absurd that  $j$ be the smallest index in $\interval{|\psi|}$ such that 
$\perm(j) \not =  \perm_0(j)$. Let  $p : (L \stackrel{l}{\leftarrow} K \stackrel{r}{\rightarrow} R)$ be the 
rule used in $\delta_j$ and let 
$x \in L \setminus{l(K)}$ be an item consumed by it, which exists because all rules are consuming. 
By Definition~\ref{de:consistent_permutation} we deduce that both direct derivations $\delta'_{\perm(j)}$
and $\delta'_{\perm_0(j)}$ of $\psi';\psi'_1$ use the same rule $p$ (say, with matches $m'$ and $m''$), 
and that 
the items ${m'}^L(x) \in G'_{\perm(j)-1}$ and 
${m''}^L(x) \in G'_{\perm_0(j)-1}$ which are consumed by $\delta'_{\perm(j)}$ and $\delta'_{\perm_0(j)}$, 
respectively, are identified in the colimit $\col{\psi';\psi'_1}$ (actually, from $\psi  \equiv^{c}_{\perm_0} \psi'$
we know that there is a morphism $G'_{\perm_0(j)-1} \to \col{\psi'}$, but we can compose it with 
the obvious -- not necessarily injective -- morphism  $\col{\psi'} \to  \col{\psi';\psi'_1}$).
 But given the shape of the derivation diagram determined
by the left-linearity of rules, and the 
properties of colimits in $\graph$, this is not possible, because there is no undirected path of morphisms 
relating the images of element $x \in L$ in  $G'_{\perm(j)-1}$ and $G'_{\perm_0(j)-1}$ respectively.
Therefore $\perm$ and $\perm_0$ must coincide on $\interval{|\psi|}$.

For the second part, by the fact just proved clearly $\perm_1$ is a
well-defined permutation on $\interval{|\psi_1|}$. Then the fact that
$\psi_1 \equiv^{c}_{\perm_1} \psi'_1$ is almost immediate. Only
commutation of the source decorations is not obvious, but it follows
from commutation of the target for $\psi \equiv^{c}_{\perm_0} \psi'$
and Definition~\ref{def:seq_com_decorated}.

\item Direct consequence of the previous point, considering zero-length decorated derivations $\psi_1$ and $\psi_1'$.
\end{enumerate}
\end{proof}

The sequential composition of decorated derivations lifts to
composition of derivation traces so that we can consider the
corresponding category.

\begin{definition}[category of concatenable traces]
  \label{de:abs-shift-cat}
   Let $\mathcal{G}$ be a graph grammar.
   The \emph{category of concatenable traces} of 
  $\mathcal{G}$, denoted by \tr{\mathcal{G}}, has abstract graphs as
  objects and concatenable traces as arrows.
\end{definition}

\subsection{A weak prime domain for a grammar}
\label{ss:graph-dom}

For a grammar $\mathcal{G}$ we obtain a partially ordered representation of its derivations
starting from the initial graph by considering the concatenable traces ordered 
by prefix. 
Formally, as done in~\cite{Handbook,Bal:PhD} for linear
grammars, we consider the category
$\slice{[G_s]}{\tr{\mathcal{G}}}$, which, by definition of sequential
composition between traces, is easily shown to be a preorder.

\begin{proposition}
  Let $\mathcal{G}$ be a graph grammar. Then the category
  $\slice{[G_s]}{\tr{\mathcal{G}}}$ is a preorder.
\end{proposition}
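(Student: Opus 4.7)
The plan is to establish the preorder condition by proving left-cancellation for concatenable traces emanating from $[G_s]$. Recall that objects of $\slice{[G_s]}{\tr{\mathcal{G}}}$ are arrows $[\psi]_c:[G_s]\to[H]$ and a morphism from $[\psi]_c$ to $[\psi']_c:[G_s]\to[H']$ is an arrow $[\eta]_c:[H]\to[H']$ in $\tr{\mathcal{G}}$ such that $[\psi]_c;[\eta]_c=[\psi']_c$. A category is a preorder iff between any two objects there is at most one morphism, so I have to show: whenever $[\eta_1]_c,[\eta_2]_c:[H]\to[H']$ satisfy $[\psi]_c;[\eta_1]_c=[\psi]_c;[\eta_2]_c$, then $[\eta_1]_c=[\eta_2]_c$.

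I would first pick representative decorated derivations $\psi$, $\eta_1$, $\eta_2$ so that the compositions $\psi;\eta_1$ and $\psi;\eta_2$ are well-formed in the sense of Definition~\ref{def:seq_com_decorated}. This is possible since $\target{\psi}=\can{H}=\source{\eta_i}$ is fixed by the choice of canonical graph at $[H]$. The hypothesis then becomes $\psi;\eta_1\equiv^c\psi;\eta_2$, and by Lemma~\ref{le:consistent_permutations}(1) this equivalence is witnessed by a consistent permutation $\perm$ on $\interval{|\psi|+|\eta_1|}$, so that $\psi;\eta_1\equiv^c_\perm\psi;\eta_2$.

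The key step is to invoke Lemma~\ref{le:consistent_permutations}(2) with the same decorated derivation $\psi$ on both sides and $\perm_0$ the identity permutation on $\interval{|\psi|}$; clearly $\psi\equiv^c_{\mathrm{id}}\psi$. The lemma then forces $\perm$ to restrict to the identity on the initial segment $\interval{|\psi|}$, and asserts $\eta_1\equiv^c_{\perm_1}\eta_2$ for the induced permutation $\perm_1(i)=\perm(i+|\psi|)-|\psi|$ on $\interval{|\eta_1|}$. Hence $[\eta_1]_c=[\eta_2]_c$, which yields uniqueness of the slice morphism and completes the argument.

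The proof is essentially a direct unpacking of Lemma~\ref{le:consistent_permutations}(2): the hard work (which in turn rests on left-linearity of rules and the colimit characterisation of $\equiv^c$) has already been carried out there, and no further obstacle beyond careful bookkeeping of representatives, canonical graphs, and decorations should arise.
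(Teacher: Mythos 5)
Your proof is correct and follows essentially the same route as the paper's: both reduce the claim to left-cancellation of sequential composition modulo $\equiv^c$ and obtain it from Lemma~\ref{le:consistent_permutations}(2), instantiated with $\psi \equiv^c_{\mathrm{id}} \psi$, which forces the witnessing permutation to restrict to the identity on the prefix and yields $\eta_1 \equiv^c \eta_2$. Your version merely spells out the bookkeeping of representatives and canonical graphs that the paper leaves implicit.
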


\begin{proof}
  Let $[\psi] : [G_s] \to [G]$, $[\psi'] : [G_s] \to [G']$ be
  concatenable traces and let
  $[\psi_1], [\psi_2] : [\psi] \to [\psi']$ be arrows in the slice
  category. Spelled out, this means that $\psi_1, \psi_2: G \to G'$
  are such that $\psi; \psi_1 \equiv^c \psi; \psi_2 \equiv^c
  \psi'$. By point (2) of Lemma~\ref{le:consistent_permutations}, 
  using the fact that $\psi \equiv^c \psi$ we can conclude
  that $\psi_1 \equiv^c \psi_2$, as desired. 
 \end{proof}

Explicitly, elements of the preorder are concatenable traces
$[\psi]_c : [G_s] \to [G]$ and, for $[\psi']_c : [G_s] \to [G']$, we
have $[\psi]_c \sqsubseteq [\psi']_c$ if there is $\psi'' : G \to G'$
such that $\psi ; \psi'' \equiv^c \psi'$.
Note that, given two concatenable traces $[\psi]_c : [G_s] \to [G]$
and $[\psi']_c : [G_s] \to [G']$, if
$[\psi]_c \sqsubseteq [\psi']_c \sqsubseteq [\psi]_c$ then $\psi$ can be
obtained from $\psi'$  by composing it with a zero-length
trace. Hence the elements of the partial order induced by
$\slice{[G_s]}{\tr{\mathcal{G}}}$ intuitively consist of classes of
concatenable traces whose decorated derivations  are
related by an isomorphism that has to be consistent with
the decoration of the source.
Once applied to the grammar in Fig.~\ref{fi:running-gg}, this
construction produces a domain isomorphic to that in
Fig.~\ref{fi:running-configurations}.

\begin{lemma}
\label{le:partial_order_gg}
  Let $\mathcal{G}$ be a graph grammar. The partial order induced by
  $\slice{[G_s]}{\tr{\mathcal{G}}}$, denoted $\poset{\mathcal{G}}$,
  has as elements
  $\ltrace{\psi} = \{ [\psi \cdot \nu]_c \mid \nu : \target{\psi}
  \stackrel{\sim}{\to} \target{\psi} \}$ and
  $\ltrace{\psi} \sqsubseteq \ltrace{\psi'}$ if
  $\psi;\psi'' \equiv^c \psi'$ for some decorated derivation $\psi''$.
\end{lemma}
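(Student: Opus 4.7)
The plan is to recall that any preorder $(X,\le)$ quotients to a partial order whose elements are the classes under $x \approx y \iff x \le y \wedge y \le x$, with order inherited from $\le$. So I must (i) identify the equivalence classes of $\slice{[G_s]}{\tr{\mathcal{G}}}$ under $\approx$ and (ii) transfer the order. The shape of the claimed description $\ltrace{\psi} = \{[\psi \cdot \nu]_c \mid \nu : \target{\psi} \stackrel{\sim}{\to} \target{\psi}\}$ suggests that two traces are $\approx$-equivalent iff they differ only by an automorphism inserted at the target.

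One direction is easy: if $\psi' \equiv^c \psi \cdot \nu$ for some automorphism $\nu$ of $\target{\psi}$, then $[\psi]_c \sqsubseteq [\psi']_c$ is witnessed by the zero-length decorated derivation encoding $\nu$, and $[\psi']_c \sqsubseteq [\psi]_c$ by the one encoding $\nu^{-1}$. For the converse, which is the heart of the proof, I assume $[\psi]_c \sqsubseteq [\psi']_c$ and $[\psi']_c \sqsubseteq [\psi]_c$, so that there exist $\psi_1, \psi_2$ with $\psi;\psi_1 \equiv^c \psi'$ and $\psi';\psi_2 \equiv^c \psi$, whence $\psi;(\psi_1;\psi_2) \equiv^c \psi$. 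I then apply the cancellation part of Lemma~\ref{le:consistent_permutations}(2) with $\psi \equiv^c_{id} \psi$ on the left, to conclude that $\psi_1;\psi_2 \equiv^c \iota$ for some zero-length decorated derivation $\iota$. Since $\equiv^c$ preserves length (both $\equiv^a$ and $\equiv^{sh}$ do by inspection) and lengths add under sequential composition, both $\psi_1$ and $\psi_2$ have length zero. Unfolding Definition~\ref{def:seq_com_decorated} for a zero-length second factor, $\psi;\psi_1$ coincides with $\psi$ after post-composition of its target decoration with an automorphism $\nu$ of $\target{\psi}$, so $\psi' \equiv^c \psi \cdot \nu$, as required.

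For the order: once the classes are identified as the $\ltrace{\psi}$, the induced order $\ltrace{\psi} \sqsubseteq \ltrace{\psi'}$ is, by construction, the existence of some representative $[\psi\cdot\nu]_c$ below some representative $[\psi'\cdot\nu']_c$ in the slice preorder, which unfolds to the existence of $\psi''$ with $\psi;\psi'' \equiv^c \psi'$. Well-definedness on classes is inherited from the preorder; alternatively, it can be checked directly by observing that altering the representative on the left by $\nu$ amounts to pre-composing $\psi''$ with (a zero-length trace encoding) $\nu^{-1}$, while altering the representative on the right by $\nu'$ amounts to post-composing $\psi''$ with $\nu'$, both operations preserving the existence of a suitable $\psi''$. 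The main obstacle is the cancellation step, which depends essentially on Lemma~\ref{le:consistent_permutations}(2); this in turn relies on left-linearity of rules through the colimit argument in its proof.
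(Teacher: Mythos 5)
Your proof is correct and follows exactly the route the paper intends: it quotients the preorder established in the preceding proposition, and identifies the equivalence classes via the remark (made just before the lemma) that mutual dominance forces the connecting derivations to have length zero, hence to act only as an automorphism on the target decoration. The paper records the proof as ``Immediate'' on this basis; your write-up simply spells out the cancellation step via Lemma~\ref{le:consistent_permutations}(2) and length-preservation of $\equiv^c$, which is the same argument in full detail.
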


\begin{proof}
  Immediate.
\end{proof}

The domain of interest is then obtained by ideal completion of
$\poset{\mathcal{G}}$, with (the principal ideals generated by) the elements in
$\poset{\mathcal{G}}$ as compact elements.
In order to give a proof for this, we need a preliminary technical lemma that
essentially proves the existence and provides the shape of the least 
upper bounds in the domain of traces.

\begin{lemma}[properties of $\equiv^c$]
  \label{le:prop-c}
  Let $\psi$ and $\psi'$ be decorated derivations. Then the following hold:
  \begin{enumerate}
  \item
    \label{le:prop-c:1}
    Let $\psi_1, \psi_1'$ be such
    that $\psi;\psi_1 \equiv^c_{\perm} \psi';\psi_1'$ and let
    $n = | \{ j \in \interval[|\psi|]{|\psi;\psi_1|-1} \mid \perm(j)
    < |\psi'| \}|$. Then for all $\phi_2, \phi_2'$ such that
    $\psi;\phi_2 \equiv^c \psi';\phi_2'$ it holds $|\phi_2| \geq n$
    and there are $\psi_2, \psi_2',\psi_3$ such that

    \begin{itemize}
    \item $\psi;\psi_1 \equiv^c \psi;\psi_2;\psi_3$
    \item $\psi;\psi_2 \equiv^c \psi';\psi_2'$
      
    \item $|\psi_2| = n$
    
    \end{itemize}

  \item
    \label{le:prop-c:2}
    Let $\psi_1, \psi_1'$, $\psi_2, \psi_2'$ be such that
    $\psi;\psi_1 \equiv^c_{\perm_1} \psi';\psi_1'$ and
    $\psi;\psi_2 \equiv^c_{\perm_2} \psi';\psi_2'$ with $\psi_1, \psi_2$
    of minimal length. Then $\psi_1 \equiv^c_\perm \psi_2 \cdot \nu$,
    where $\nu : \target{\psi_2} \to \target{\psi_2}$ is some graph isomorphism and
    $\perm(j) = \perm_2^{-1}(\perm_1(j + |\psi|))-|\psi|$ for
    $j \in \interval[0]{|\psi_1|-1}$.    
  \end{enumerate}
\end{lemma}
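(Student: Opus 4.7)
The plan is to exploit the consistent-permutation characterisation of $\equiv^c$ (Lemma~\ref{le:consistent_permutations}) together with iterated applications of the interchange operator (Proposition~\ref{pr:interchangeEG}).

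For part~(1), I would first partition the indices $\interval[0]{|\psi;\psi_1|-1}$ into four blocks according to $\perm$: put $A = \{j < |\psi| : \perm(j) < |\psi'|\}$, $B = \{j < |\psi| : \perm(j) \geq |\psi'|\}$, $C = \{j \geq |\psi| : \perm(j) < |\psi'|\}$, $D = \{j \geq |\psi| : \perm(j) \geq |\psi'|\}$, so that $n = |C|$ and, since $\perm$ is bijective, $|A|+|C| = |\psi'|$, $|B|+|D| = |\psi_1'|$. The key step is to rearrange $\psi_1$ by iterated interchange so that the $n$ direct derivations indexed by $C$ appear first, yielding $\psi_1 \equiv^{sh} \psi_2;\psi_3$ with $|\psi_2| = n$. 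The interchanges are legitimate because the $C$-steps, being mapped by $\perm$ to positions strictly preceding every $B$-image in $\psi';\psi_1'$, cannot causally depend on the intervening $D$-steps (otherwise the colimit morphisms required by a consistent permutation could not exist).

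Having obtained $\psi_2$, I would take $\psi_2'$ to be the first $|B|$ direct derivations of $\psi_1'$, i.e.\ those in the image of $\perm\restriction_B$, and exhibit an explicit consistent permutation from $\psi;\psi_2$ to $\psi';\psi_2'$: on $A\cup C$ reuse $\perm$ (these are exactly the positions of $\psi'$), and on $B$ use $\perm\restriction_B$ translated to the $\psi_2'$-range. The colimit and decoration conditions of Definition~\ref{de:consistent_permutation} are inherited from those satisfied by $\perm$. For the minimality claim $|\phi_2| \geq n$, let $\perm''$ be the consistent permutation for $\psi;\phi_2 \equiv^c \psi';\phi_2'$ and let $A''$ be the analogue of $A$. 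The $|\psi'|-|A''|$ positions of $\psi'$ that are not images of $\psi$-indices under $\perm''$ must be pre-images (via $\perm''^{-1}$) of indices in $\interval[|\psi|]{|\psi;\phi_2|-1}$, giving $|\phi_2| \geq |\psi'|-|A''|$. The hard part is showing $|A''| = |A|$: this should follow because $A$ is intrinsic to $\psi$ and $\psi'$ — it records which initial items of $\psi$ are identified in the colimit with initial items of $\psi'$ — and the isomorphism of colimits forced by Lemma~\ref{le:consistent_permutations}(2) rules out dependence on the particular extension.

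For part~(2), I would apply part~(1) with $\psi_1$ as the driving extension and $\phi_2 := \psi_2$. This yields a decomposition $\psi_1 \equiv^{sh} \tilde\psi_2;\psi_3$ with $|\tilde\psi_2| = n$ and $\psi;\tilde\psi_2 \equiv^c \psi';\tilde\psi_2'$; by the minimality of $|\psi_1|$, the tail $\psi_3$ must be a zero-length decorated derivation, hence consists of a single graph isomorphism. Uniqueness of the consistent permutation (Lemma~\ref{le:consistent_permutations}(3)) then identifies $\tilde\psi_2$ with $\psi_2$ up to a final iso $\nu:\target{\psi_2}\to\target{\psi_2}$, giving $\psi_1 \equiv^c \psi_2 \cdot \nu$. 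The explicit formula for $\perm$ arises by tracking an index $j \in \interval[0]{|\psi_1|-1}$ through $\perm_1$ into $\psi';\psi_1'$ and then back through $\perm_2^{-1}$ into $\psi;\psi_2$, i.e.\ $\perm(j) = \perm_2^{-1}(\perm_1(j+|\psi|)) - |\psi|$. The main obstacle here is verifying that this composition always lands in $\interval[|\psi|]{|\psi;\psi_2|-1}$ — equivalently, that no $j$ gets sent back into the $\psi$-range. I would argue this by contradiction: such a $j$ would mean that a step of $\psi_1$ is identified (in the colimit) with a step of $\psi$ already present in both derivations, contradicting the minimal choice of either $\psi_1$ or $\psi_2$ (one could shorten the extension by applying the interchange in the opposite direction). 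Consistency of $\perm$ is then inherited from $\perm_1$ and $\perm_2$.
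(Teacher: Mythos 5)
Your overall strategy coincides with the paper's: both proofs normalise the permutation by iterated interchange of sequentially independent adjacent steps (your block decomposition $A,B,C,D$ is a repackaging of the paper's inversion-counting argument, which sorts the tail of $\perm$ into monotone order), split off the first $n$ steps as $\psi_2$, and for part~(2) compose $\perm_2^{-1}\circ\perm_1$ and check that it is a left-consistent permutation. The one genuine gap is in the minimality claim $|\phi_2|\geq n$. You correctly isolate the crux---that the set of steps of $\psi'$ ``already matched in $\psi$'' does not depend on the chosen common extension---but the lemma you invoke, Lemma~\ref{le:consistent_permutations}(2), does not apply: it presupposes $\psi\equiv^c\psi'$, which fails here since $\psi$ and $\psi'$ are merely two compatible prefixes, not equivalent derivations. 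The correct justification is the one used \emph{inside} the proof of that lemma: every rule consumes an item of its left-hand side, left-linearity forces each item to be consumed at most once along a derivation, and the injections into the colimit therefore determine, independently of the particular extension, which step of $\psi'$ consumes the same item as a given step of $\psi$. The same intrinsic-matching argument is what is really needed to license, in part~(2), your claim that $\perm_2^{-1}(\perm_1(j+|\psi|))$ never falls back into $\interval[0]{|\psi|-1}$ (equivalently, that $\perm_1$ and $\perm_2$ have the same image on the tails); the appeal to ``shortening by interchange'' is the right intuition but should be grounded in that item-tracking argument.

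Two smaller imprecisions. Taking $\psi_2'$ to be ``the first $|B|$ direct derivations of $\psi_1'$'' presupposes that the image of $B$ under $\perm$ is an initial segment of the $\psi_1'$-range, which is not automatic; you need the symmetric normalisation the paper performs, rearranging $\psi_1'$ so that $\perm^{-1}$ is monotone on $\interval[|\psi'|]{h-1}$. Relatedly, after splitting $\phi_1=\psi_2;\psi_3$ the restricted permutation need not be $\omega$-consistent, and the paper repairs this by adjusting the $\omega$ decoration of $\psi_2$ (and correspondingly the $\alpha$ decoration of $\psi_3$). Finally, your first route for part~(2) (apply part~(1) with $\phi_2:=\psi_2$) leaves unresolved how the resulting $\tilde\psi_2$ relates to the given $\psi_2$; the explicit-formula argument you give afterwards, which is the paper's, makes that detour unnecessary.
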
  

\begin{proof}
\begin{enumerate}
\item 
  We first observe that if $\psi, \psi'$ are derivation traces and
  $\psi_1, \psi_1'$ are such that
  $\psi;\psi_1 \equiv^c_{\perm} \psi';\psi_1'$, with $|\psi| = k$, $|\psi'|=k'$, $|\psi;\psi_1| = |\psi';\psi_1'|=h$ then there is
  a $\phi_1$ such that
  $\psi;\psi_1 \equiv^c \psi;\phi_1 \equiv^c_{\perm_1} \psi';\psi_1'$
  and
  \begin{center}
    for $i, j \in \interval[|\psi|]{h-1}$, $i \leq j$ implies
     $\perm_1(i) \leq \perm_1(j)$. \hfill{(\dag)}
  \end{center}
  
  In order to prove this, we can proceed by induction on the number of
  inversions
  $x = |\{ (i, j) \in \interval[|\psi|]{h - 1} \mid i \leq
  j\ \land\ \perm(i) > \perm(j) \}|$, i.e., on the number of pairs
  $(i,j)$ in the interval of interest that do not respect the
  monotonicity condition. When $x=0$ the thesis immediately
  holds. Assume that $x > 0$. Then there are certainly indices
  $j \in \interval[|\psi|]{h-2}$ such that
  $\perm(j) > \perm(j+1)$. Among these, take the index $i$ such that
  $\perm(i+1)$ is minimal. Then it can be shown that direct derivations at 
  position $i$ and
  $i+1$ in $\psi_1$ are sequentially independent,  and thus they can be switched, i.e., there
  is $\phi_2$ such that
  $\psi; \phi_2 \equiv^c_{id[i \mapsto i+1, i+1 \mapsto i]} \psi;
  \psi_1$. Therefore
  $\psi; \phi_2 \equiv^c_{\perm \circ id[i \mapsto i+1, i+1 \mapsto
    i]} \psi'; \psi_1'$. This reduces the number of inversions and thus
  the inductive hypothesis allows us to conclude.

  In the same way, we can prove that there is
  a $\phi'_1$ such that
  $\psi;\phi_1 \equiv_{\perm_2}^c \psi';\phi'_1 \equiv^c \psi';\psi_1'$
  and
  \begin{center}
    for $i, j \in \interval[|\psi'|]{h\!-\!1}$, if $i \leq j$ then
     $\perm^{-1}_2(i) \leq \perm^{-1}_2(j)$ {(\ddag)}
  \end{center}
  
  Putting conditions (\dag) and (\ddag) together we derive that
  $\psi;\psi_1 \equiv^c \psi;\phi_1\equiv^c_{\perm'} = \psi';\phi'_1
  \equiv^c \psi';\psi'_1$.  Now let $y \in\interval[|\psi|]{h-1}$ be
  the largest index such that $\perm'(y) < |\psi'|$ (or $y= |\psi|$ if
  it does not exist), let $l_3 = h - y$ and consider decorated
  derivations $\psi_2, \psi_3, \psi_2',\psi_3'$ such that
  $|\psi_3| = |\psi'_3| = l_3$ and
  $ \psi;\psi_2;\psi_3 = \psi;\phi_1 \equiv^c_{\perm'} \psi';\phi'_1 =
  \psi';\psi'_2;\psi'_3$.  By construction we obtain that
  $|\psi_2| = n$ and that $\perm'$ restricts to a permutation
  $\perm'_2$ on $\interval[0]{|\psi;\psi_2|-1}$. Commutation with the
  target decoration can be obtained, if necessary, by changing the
  $\omega$ decoration of $\psi_2$, affecting only the $\alpha$
  decoration of $\psi_3$. Thus
  we conclude that
  $\psi;\psi_2 \equiv^c \psi';\psi_2'$.
  
  Finally, notice that by the definition of $y$ and the properties of $\perm'$, 
  it follows that $\perm'(j) < |\psi'|$ for all $j \in
  \interval[|\psi|]{|\psi;\psi_2|-1}$ and $\perm'(j) \geq |\psi'|$ for all $j \in \interval[|\psi;\psi_2|]{h-1}$. 
  That is, the direct derivations in $\psi_2$ match 
  all direct derivations of $\psi'$ that are not matched in $\psi$. This implies that 
  there cannot exist a derivation $\phi_2$ shorter than $n$ such that 
  $\psi;\phi_2 \equiv^c \psi';\phi_2'$ for some $\phi_2'$.

\item 
  Let $n = |\psi|$ and $m = |\psi_1| = |\psi_2|$, which must have the same length.
  By the last part of the proof of the previous point,
  since 
  both $\psi_1$ and $\psi_2$ are of 
  minimal length, we have that for all $j \in
  \interval[n]{n+m-1}$ it holds  $\perm_1(j) < |\psi'|$ and $\perm_2(j) < |\psi'|$.
Furthermore, $\perm_1(\interval[n]{n+m-1}) = \perm_2(\interval[n]{n+m-1})$, 
because both $\psi_1$ and  $\psi_2$ consist of direct derivation that match 
those of $\psi'$ which are not matched in $\psi$.

Thus 
$\perm(j) = \perm_2^{-1}(\perm_1(j + |\psi|))-|\psi|$ is a
well-defined permutation on $\interval[0]{|\psi_1|-1}$ from $\psi_1$
to $\psi_2$. It is easy to see that the only condition that can be
violated for concluding $\psi_1 \equiv^c_\sigma \psi_2$ is commutation
of the target decorations. This can be reestablished by post-composing
$\psi_2$ with a graph isomorphism.
\end{enumerate}
\end{proof}

Relying on the results above we can easily prove that the ideal
completion of the partial order of traces is a domain.

\begin{proposition}[domain of traces] 
  \label{pr:domain-gg} 
  Let $\mathcal{G}$ be a graph grammar. Then
  $\dom{\mathcal{G}} =  \ideal{\poset{\mathcal{G}}}$
  is a
  domain.
\end{proposition}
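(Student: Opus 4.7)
The plan is to verify that $\dom{\mathcal{G}} = \ideal{\poset{\mathcal{G}}}$ satisfies the three defining properties of a domain (algebraicity, finitariness, coherence). A convenient reduction is to properties of the generators: it suffices to show that $\poset{\mathcal{G}}$ is a finitary partial order such that for every pairwise consistent triple $\ltrace{\psi_1}, \ltrace{\psi_2}, \ltrace{\psi_3}$, the join $\ltrace{\psi_1} \sqcup \ltrace{\psi_2}$ exists in $\poset{\mathcal{G}}$ and is consistent with $\ltrace{\psi_3}$. From such a property, coherence of the ideal completion follows by a standard argument: directed sups of compatible finite joins give joins of arbitrary pairwise consistent ideals, while algebraicity, compactness of principal ideals, and the isomorphism $\compact{\ideal{\poset{\mathcal{G}}}} \simeq \poset{\mathcal{G}}$ are intrinsic to the ideal completion.

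Finitariness of $\poset{\mathcal{G}}$ is immediate: by the description of the order in Lemma~\ref{le:partial_order_gg}, the ideal below $\ltrace{\psi}$ consists of classes whose representatives are decorated derivations that embed as initial segments of $\psi$ up to suitable permutation; since $\psi$ has finite length, only finitely many such classes can arise.

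For the existence of the binary join, consider pairwise consistent traces $\ltrace{\psi_1}, \ltrace{\psi_2}$ with a common upper bound $\ltrace{\phi}$, so that $\psi_1;\phi_1 \equiv^c \phi \equiv^c \psi_2;\phi_2$ for some $\phi_1,\phi_2$. Applying Lemma~\ref{le:prop-c}(\ref{le:prop-c:1}) with $\psi \mapsto \psi_1$ and $\psi' \mapsto \psi_2$ yields decorated derivations $\chi, \chi', \chi''$ with $\psi_1;\phi_1 \equiv^c \psi_1;\chi;\chi''$ and $\psi_1;\chi \equiv^c \psi_2;\chi'$, where $|\chi|$ is minimal. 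The minimality clause of the lemma shows that $\ltrace{\psi_1;\chi}$ sits below every common upper bound of $\ltrace{\psi_1}$ and $\ltrace{\psi_2}$, and therefore is their least upper bound in $\poset{\mathcal{G}}$.

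The main obstacle is the triple compatibility: if $\ltrace{\psi_1}, \ltrace{\psi_2}, \ltrace{\psi_3}$ are pairwise consistent, then the binary join $\ltrace{\psi_1} \sqcup \ltrace{\psi_2}$ and $\ltrace{\psi_3}$ must admit a common upper bound. The intended strategy is to build it from the three pairwise minimal extensions provided by Lemma~\ref{le:prop-c}(\ref{le:prop-c:1}), call them $\chi_{12}, \chi_{13}, \chi_{23}$, and to show that the direct derivations occurring in $\chi_{13}$ and $\chi_{23}$ can be permuted and combined after $\psi_1;\chi_{12}$ to reach a single trace extending all three. The uniqueness of left-consistent permutations granted by Lemma~\ref{le:consistent_permutations}(3), together with Lemma~\ref{le:prop-c}(\ref{le:prop-c:2}), ensures that the three pairwise constructions agree on shared initial segments, so that the composed permutation is globally consistent. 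This cube-like compatibility is the crux of the proof; once it is in place, the generator-based characterisation yields coherence, completing the verification of the domain axioms.
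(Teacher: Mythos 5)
Your overall architecture matches the paper's: both reduce to the generator-level characterisation (Lemma~\ref{le:generators}), both get finitariness by bounding the classes below $\ltrace{\psi}$ via consistent permutations of initial segments, and both obtain binary joins from the minimal common extension of Lemma~\ref{le:prop-c}(\ref{le:prop-c:1}), with uniqueness from Lemma~\ref{le:prop-c}(\ref{le:prop-c:2}) and Lemma~\ref{le:traces_left_consistent}. The problem is that the one step you yourself single out as ``the crux'' --- that $\ltrace{\psi_1}\sqcup\ltrace{\psi_2}$ remains consistent with $\ltrace{\psi_3}$ when the three traces are pairwise consistent --- is never actually proved. You describe an ``intended strategy'' of permuting and combining the steps of the pairwise extensions $\chi_{13}$ and $\chi_{23}$ after $\psi_1;\chi_{12}$, and assert that uniqueness of left-consistent permutations makes ``the composed permutation globally consistent.'' That assertion does not follow: agreement of the three pairwise permutations on shared initial segments does not by itself produce a single decorated derivation extending all three traces, because a direct derivation occurring in $\psi_1$ but not in $\psi_2$ may or may not be matched in $\psi_3$, and the two cases have to be handled differently before one can splice the extensions together.

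This is exactly where the paper's proof does real work: it argues by induction on $k=|\psi_1|+|\psi_3|$, peeling off the last step $\delta$ of $\psi_1$ and case-splitting on whether $\delta$ is already matched in $\psi_2$ (reducing to a shorter $\psi_1'$) or not (in which case $\ltrace{\psi_1}\sqcup\ltrace{\psi_2}$ is rewritten as $\ltrace{\psi_2;\phi'';\delta'}$ and the inductive hypothesis is applied to $\psi_2;\phi''$), and then on whether $\delta$ is matched in $\psi_3$, using consistency of the permutations to extend the two candidate upper bounds by matching copies of $\delta$ uniformly. Without some argument of this shape (or an equivalent cube/local-confluence lemma for concatenable traces proved beforehand), coherence of $\ideal{\poset{\mathcal{G}}}$ is not established, so the proposal as written has a genuine gap at its central point.
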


\begin{proof}
  By Lemma~\ref{le:generators}
  it is sufficient to prove 
  (1) that
  $\principal{\ltrace{\psi}}$ is finite for every $\ltrace{\psi} \in
  \poset{\mathcal{G}}$, and (2) that if $\{\ltrace{\psi_1},
  \ltrace{\psi_2},
  \ltrace{\psi_3}\}$ is pairwise consistent then $\ltrace{\psi_1}
  \sqcup
  \ltrace{\psi_2}$ exists and is consistent with $\ltrace{\psi_3}$.

  \begin{enumerate}

  \item Let $\ltrace{\psi'} \sqsubseteq \ltrace{\psi}$. By
    Lemma~\ref{le:partial_order_gg}
    we know that
    $\psi';\psi'' \equiv^c_{\perm} \psi$ for some decorated derivation
    $\psi''$ and a
    permutation $\perm$.  Now suppose that
    $\psi'_1$ and $\psi'_2$ are decorated derivations such that
    $\psi'_1;\psi''_1 \equiv^c_{\perm_1} \psi$ and
    $\psi'_2;\psi''_2 \equiv^c_{\perm_2} \psi$ for some $\psi''_1$,
    $\psi''_2$, and that
    $\perm_1(\interval[0]{|\psi'_1|}) =
    \perm_2(\interval[0]{|\psi'_2|}) \subseteq
    \interval[0]{|\psi|}$. Then $\perm_2^{-1}\circ \perm_1$ is a
    permutation on $\interval[0]{|\psi'_1|}$ from $\psi'_1$ to
    $\psi'_2$ witnessing $\psi'_1 \equiv^c_{\perm_2^{-1}\circ \perm_1} \psi'_2; \nu$ for some isomorphism $\nu$. 
    Therefore
    $\ltrace{\psi'_1} = \ltrace{\psi'_2}$.  As a consequence, the
    cardinality of $\principal{\ltrace{\psi}}$ is bound by
    $2^{|\psi|}$.

  \item Given two consistent elements $\ltrace{\psi_1}$ and
    $\ltrace{\psi_2}$ of $\poset{\mathcal{G}}$, there exists
    $\ltrace{\psi} = \ltrace{\psi_1} \sqcup \ltrace{\psi_2}$, where
    $\psi$ is the minimal common extension of $\psi_1$ and $\psi_2$,
    provided by Lemma~\ref{le:prop-c}(\ref{le:prop-c:1}). Uniqueness
    of $\ltrace{\psi}$ follows by
    Lemma~\ref{le:prop-c}(\ref{le:prop-c:2})
    because minimal common are essentially unique (up to $\equiv^c$
    and right-composition with isomorphisms).  Suppose further that
    $\ltrace{\psi_3}$ is compatible with both $\ltrace{\psi_1}$ and
    $\ltrace{\psi_2}$: we have to show that it is compatible with
    $\ltrace{\psi}$.  Let
    $\ltrace{\psi'} = \ltrace{\psi_2} \sqcup \ltrace{\psi_3}$. Then
    there exist $\phi_1$, $\phi$ and $\phi'$ such that
    $\psi_1 ; \phi_1 \equiv^c_{\perm_1} \psi_2 ; \phi \equiv^c_{\perm}
    \psi$ and $\psi_2 ; \phi' \equiv^c_{\perm'} \psi'$ for suitable
    permutations $\perm_1$, $\perm$ and $\perm'$.
    
    We conclude by showing that either $\ltrace{\psi}$ and $\ltrace{\psi'}$ 
    are compatible, or $\ltrace{\psi_1} \sqcup  \ltrace{\psi_3}$ and $\ltrace{\psi'}$
    are compatible, both of which are equivalent and imply the thesis. We proceed by 
    induction on $k = |\psi_1|+|\psi_3|$.  If
    $|\psi_1|=0$, i.e.~$\psi_1$ is a zero-length decorated derivation, hence,
    by Lemma~\ref{le:prop-c}, also $\phi$ is so and thus
    $\ltrace{\psi} = \ltrace{\psi_2}$, and the latter is compatible
    with $\ltrace{\psi'}$. If $|\psi_3|=0$ we conclude analogously.  
    If $k>0$, let $\delta$ be the last
     derivation step in $\psi_1$, i.e.,
    $\psi_1 = \psi_1'; \delta$. If $\sigma_1(|\psi_1|-1) < |\psi_2|$,
    namely if step $\delta$ is already in $\psi_2$, then by
    Lemma~\ref{le:prop-c} we get that
    $\ltrace{\psi} = \ltrace{\psi_1'} \sqcup \ltrace{\psi_2}$. Since
    $|\psi_1'| < k$ we conclude by inductive hypothesis
    that $\psi$ and $\psi'$ are compatible. If instead,
    $\sigma_1(|\psi_1|-1) \geq |\psi_2|$ then, again by Lemma~\ref{le:prop-c},
    we can write $\psi$ as
    $\psi \equiv^c_{\sigma''} \psi_2; \phi''; \delta'$, where
    $\ltrace{\psi_2; \phi''} = \ltrace{\psi_1'} \sqcup
    \ltrace{\psi_2}$ and $\sigma''(|\psi_1|-1) = |\psi|-1$, i.e., $\delta$
    is mapped to $\delta'$. Hence, by inductive hypothesis
    $\psi_2; \phi''$ and $\psi'$ are compatible.

    Now, since $\ltrace{\psi_1}$ and  $\ltrace{\psi_3}$ are compatible (thus
    $\psi_1;\phi'_1 \equiv^c_{\perm_3} \psi_3;\phi'_3$ for suitable derivations 
    $\phi'_1, \phi'_3$ and permutation $\perm_3$), either step $\delta$ is 
    already in $\psi_3$ (thus $\perm_3(|\psi_1|-1) < |\psi_3|)$, or it isn't, and 
    $\perm_3(|\psi_1|-1) \geq |\psi_3|$. In the first case $\delta$ is related to 
    a step in $\psi'$, and it follows that $\ltrace{\psi'} \sqcup  \ltrace{\psi_2; \phi''} = 
    \ltrace{\psi'} \sqcup  \ltrace{\psi_2; \phi'';\delta'}$ and we conclude. 
    If instead $\delta$ is not a step in $\psi_3$,  we can write $\psi_3;\phi'_3$ as
     $\psi_3;\phi''_3;\delta''$, where step $\delta''$ matches step $\delta$ of $\psi_1$.
     By inductive hypothesis we have that $\psi_3;\phi''_3$ and $\psi'$ are compatible, 
     and we get $\ltrace{\psi_3;\phi''_3} \sqcup \ltrace{\psi'} = 
     \ltrace{\psi_2;\phi''} \sqcup \ltrace{\psi'}$. Since both steps $\delta'$ and $\delta''$ are 
     related by suitable
     permutations to step $\delta$ of $\psi_1$, we 
     can extend uniformly the two derivations preserving consistency,  obtaining
      $\ltrace{\psi_3;\phi''_3;\delta''} \sqcup \ltrace{\psi'} = 
     \ltrace{\psi_2;\phi'';\delta'} \sqcup \ltrace{\psi'} =  \ltrace{\psi} \sqcup \ltrace{\psi'}$, as desired.

  \end{enumerate}
\end{proof}

We can show that $\dom{\mathcal{G}}$ is a weak prime domain. The proof relies on the fact that irreducibles
are (the principal ideals of) elements of the form $\ltrace{\epsilon}$, where
$\epsilon = \psi; \delta$ is a decorated derivation such that its last
direct derivation $\delta$ cannot be shifted back, i.e., minimal
traces enabling some direct derivation. These are called
\emph{pre-events} in~\cite{Handbook,Bal:PhD}, where graph
grammars are linear and thus, consistently with Lemma~\ref{pr:irr-prime-alg}, such
elements provide the primes of the domain. Two irreducibles
$\ltrace{\epsilon}$ and $\ltrace{\epsilon'}$ are interchangeable when
they are different minimal traces for the same direct derivation. 

\begin{theorem}[weak prime domains from graph grammars]
  \label{th:fusion-domain-for-gg}
  Let $\mathcal{G}$ be a graph grammar. Then
  $\dom{\mathcal{G}}$ is a weak prime domain.
\end{theorem}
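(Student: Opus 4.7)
The plan is to apply Lemma~\ref{le:generators} to the partial order $\poset{\mathcal{G}}$. Since Proposition~\ref{pr:domain-gg} already establishes that $\poset{\mathcal{G}}$ satisfies the finitariness and three-way consistency hypothesis of that lemma (via the proof that $\dom{\mathcal{G}} = \ideal{\poset{\mathcal{G}}}$ is a domain), it suffices to verify (a) transitivity of $\leftrightarrow$ on consistent irreducibles of $\poset{\mathcal{G}}$, and (b) that for every irreducible $i \in \ir{\poset{\mathcal{G}}}$ and consistent $d, d' \in \poset{\mathcal{G}}$ with $i \sqsubseteq d \sqcup d'$, there exists $i' \in \ir{\poset{\mathcal{G}}}$ with $i \leftrightarrow i'$ and $i' \sqsubseteq d$ or $i' \sqsubseteq d'$.

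The first step is to characterise the irreducibles. Following~\cite{Handbook,Bal:PhD}, I call $\ltrace{\psi}$ a \emph{pre-event} if $\psi = \psi_0;\delta$ is of positive length and the final direct derivation $\delta$ cannot be shifted backward past any step of $\psi_0$ (equivalently, it is not sequentially independent from the last step of $\psi_0$). Using Lemma~\ref{le:prop-c}, one shows that pre-events are exactly the elements with a unique immediate predecessor in $\poset{\mathcal{G}}$, namely $\ltrace{\psi_0}$, and hence (by Lemma~\ref{le:unique-pred}) they coincide with the irreducibles of $\dom{\mathcal{G}}$ whose principal ideal lies inside $\poset{\mathcal{G}}$. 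The second step is to characterise interchangeability: invoking Lemma~\ref{le:eq-char}(\ref{le:eq-char:4}), two pre-events $\ltrace{\psi_0;\delta}$ and $\ltrace{\psi_0';\delta'}$ are interchangeable iff they are consistent and there is a consistent permutation relating them that maps the last step of one to the last step of the other, i.e., they are different minimal enablings of the same direct derivation (the rule and matches agree up to the colimit isomorphism).

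Transitivity (a) then follows from the characterisation: if $i_1 \leftrightarrow i_2$ and $i_2 \leftrightarrow i_3$ via the same target direct derivation, and $i_1, i_3$ are consistent, then by Lemma~\ref{le:prop-c}(\ref{le:prop-c:2}) the consistent permutations between the three minimal traces compose into a left-consistent permutation relating $i_1$ and $i_3$, which by the $\omega$-decoration adjustment of Lemma~\ref{le:traces_left_consistent} yields interchangeability.

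For (b), write $i = \ltrace{\psi_0;\delta}$, $d = \ltrace{\phi}$, $d' = \ltrace{\phi'}$, and $d \sqcup d' = \ltrace{\chi}$. From $i \sqsubseteq \ltrace{\chi}$, there is a consistent permutation embedding $\psi_0;\delta$ into $\chi$; consider the image of $\delta$ in $\chi$. By construction of the join in the proof of Proposition~\ref{pr:domain-gg}, every step of $\chi$ comes either from $\phi$ or from $\phi'$ (up to consistent permutation), so the image of $\delta$ lies in (a consistent extension of) either $\phi$ or $\phi'$. Assuming without loss of generality that it lies in $\phi$, one extracts a minimal trace $i' = \ltrace{\psi_0';\delta'}$ contained in $d$ whose final step $\delta'$ matches $\delta$ via the induced permutation; by the previous characterisation $i \leftrightarrow i'$, which is what is needed.

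The main obstacle will be making precise the step ``extract a minimal trace for the image of $\delta$ below $d$'': this requires applying Lemma~\ref{le:prop-c}(\ref{le:prop-c:1}) to shuffle $\phi$ so that all steps not causally needed by $\delta$ are postponed, and then using Lemma~\ref{le:consistent_permutations} to certify that the residual prefix is indeed a pre-event consistently permuting to $\psi_0;\delta$. Once (a) and (b) are established, Lemma~\ref{le:generators} yields that $\dom{\mathcal{G}} = \ideal{\poset{\mathcal{G}}}$ is a weak prime domain.
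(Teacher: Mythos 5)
Your proposal follows essentially the same route as the paper's proof: it characterises the irreducibles of $\poset{\mathcal{G}}$ as pre-events (minimal traces whose final direct derivation cannot be anticipated), characterises interchangeability via Lemma~\ref{le:eq-char}(\ref{le:eq-char:4}) as being different minimal enablings of the same direct derivation, and then verifies the hypotheses of Lemma~\ref{le:generators}, obtaining weak primality by locating the image of the final step $\delta$ inside one of the two joined traces and extracting its minimal enabling there (the paper does this by reordering so that the position of the image of $\delta$ is minimal); you are in fact slightly more explicit than the paper in checking the transitivity hypothesis of Lemma~\ref{le:generators}. The only caveat is your parenthetical claim that ``cannot be shifted backward'' is equivalent to ``not sequentially independent from the last step of $\psi_0$'': the correct formal condition (which you also state, and which the paper uses) is that every consistent permutation fixes the last position, since after reordering $\psi_0$ the final step might become independent of a different predecessor; as this aside is not used in the argument, the proof stands.
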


\begin{proof}
  We know by Proposition~\ref{pr:domain-gg} that $\dom{\mathcal{G}}$
  is a domain. Hence, recalling Definition~\ref{de:fusion-domain-new},
  we have to show that $\dom{\mathcal{G}}$ is weak prime
  algebraic. 

  We will exploit the characterisation in
  Lemma~\ref{le:generators}. First provide a characterisation of
  irreducibles and of the interchangeability relation among them. As
  usual, we confuse compact elements of $\dom{\mathcal{G}}$ with the
  corresponding generators in $\poset{\mathcal{G}}$.

  \medskip

  As mentioned above, irreducibles in $\dom{\mathcal{G}}$ are, in the
  terminology of~\cite{Handbook,Bal:PhD}, \emph{pre-events}, namely
  elements of the form $\ltrace{\epsilon}$, where
  $\epsilon = \psi; \delta$ is a decorated derivation such that its
  last direct derivation $\delta$ cannot be switched back. Formally,
  $\ltrace{\epsilon}$ is a pre-event if letting $n = |\epsilon|$ then
  for all $\epsilon = \psi; \delta \equiv^c_\perm \psi'$ it holds
  $\perm(n) = n$.

  In fact, assume that
  $\ltrace{\epsilon} = \ltrace{\psi_1} \sqcup \ltrace{\psi_2}$, and
  let
  $\epsilon  \equiv^c_\perm \psi_1 ; \psi_1' \equiv^c_{\perm'} \psi_2;
  \psi_2'$ for suitable $\psi_1', \psi_2'$ of minimal length. 
  Since $\epsilon$ is a pre-event, we have that if
  $n = |\psi; \delta| = |\psi_1 ; \psi_1'| = |\psi_2 ; \psi_2'|$, then
  $\perm'(n) = n$. This implies that $|\psi_1'| = 0$ (and thus
  $\ltrace{\epsilon} = \ltrace{\psi_1}$) or $|\psi_2'| = 0$ (and thus
  $\ltrace{\epsilon} = \ltrace{\psi_2}$),
  as desired.

   \medskip

   Two irreducibles $\ltrace{\epsilon}$ and $\ltrace{\epsilon'}$ are
   interchangeable iff the corresponding traces are compatible and
   whenever $\epsilon; \psi_1 \equiv^c_{\perm} \epsilon' ; \psi_1'$
   with $\psi_1, \psi_1'$ of minimal length (thus
   $\ltrace{\epsilon; \psi_1} = \ltrace{\epsilon'; \psi'_1} =
   \ltrace{\epsilon} \sqcup \ltrace{\epsilon'}$), then
   $\perm(|\epsilon|) = |\epsilon'|$.

   In fact, assume that $\ltrace{\epsilon} = \ltrace{\psi; \delta}$
   and $\ltrace{\epsilon'} = \ltrace{\psi';\delta'}$ are
   interchangeable, and
   $\epsilon; \psi_1 \equiv^c_{\perm} \epsilon' ; \psi_1'$ with
   $\psi_1, \psi_1'$ of minimal length. By the proof of
   Lemma~\ref{le:prop-c}(\ref{le:prop-c:1})
   we have that $\perm$ maps steps in $\psi_1$ to
   $\epsilon'$ and, analogously, $\perm^{-1}$ maps steps in
   $\psi_1'$ to $\epsilon$ (formally, $\perm(j) < |\epsilon'|$ for $j
   \geq |\epsilon|$ and, dually, if $\perm(j) \geq
   |\epsilon'|$ then $j <
   |\epsilon|$). By Lemma~\ref{le:eq-char}(\ref{le:eq-char:4}) we have
   that $\ltrace{\epsilon} \sqcup \ltrace{\epsilon'} = \ltrace{\psi}
   \sqcup \ltrace{\epsilon'} = \ltrace{\epsilon} \sqcup
   \ltrace{\psi'}$. Hence we can view the previous equivalence of
   decorated derivations as $\psi; (\delta ; \psi_1) \equiv^c_{\perm}
   (\psi'; \delta') ;\psi_1'$, with $\delta ; \psi_1$ and
   $\psi_1'$ of minimal length. This means that
   $\perm$ maps steps in $\delta;\psi_1$ to
   $\epsilon'$ and, with a dual argument, steps in
   $\delta';\psi_1'$ to
   $\epsilon$. Putting all this together we get that necessarily
   $\perm(|\epsilon|) = |\epsilon'|$, as desired.

   For the converse, assume that
   $\ltrace{\epsilon}$,
   $\ltrace{\epsilon'}$ are compatible, that $\ltrace{\psi} =
   \ltrace{\epsilon} \sqcup \ltrace{\epsilon'}$, and that $\psi
   \equiv^c \epsilon; \psi_1 \equiv^c_{\perm} \epsilon' ;
   \psi_1'$ where $\perm(|\epsilon|) =
   |\epsilon'|$. Then, reverting the reasoning above, we get that
   $\ltrace{\psi} \sqcup \ltrace{\epsilon'} = \ltrace{\epsilon} \sqcup
   \ltrace{\psi'}$, and thus we conclude that $\ltrace{\epsilon},
   \ltrace{\epsilon'}$ are interchangeable by
   Lemma~\ref{le:eq-char}(\ref{le:eq-char:4}).

  \medskip

  We conclude that $\dom{\mathcal{G}}$ is a weak prime domain, relying on
  Lemma~\ref{le:generators}. Let $\ltrace{\epsilon}$ with
  $\epsilon = \psi; \delta$ be an irreducible, and
  $\ltrace{\epsilon} \sqsubseteq \ltrace{\psi_1} \sqcup
  \ltrace{\psi_2}$.
  Let $\psi_1'$ and $\psi_2'$ be decorated derivations of minimal
  length such that
  $\epsilon;\psi \equiv^c_\perm \psi_1;\psi_1' \equiv^c_{\perm_1}
  \psi_2;\psi_2'$ for some $\psi$. If
  $\perm(|\epsilon|) \in \interval[0]{|\psi_1|-1}$ then consider
  $\phi_1$ such that
  $\psi_1;\psi_1' \equiv^c_{\perm'} \phi_1; \psi_1'$ and
  $\perm'(\perm(|\epsilon|))$ is minimal. Then $\ltrace{\phi_1}$ is an
  irreducible, $\ltrace{\phi_1}$ and $\ltrace{\epsilon}$ are
  interchangeable, and clearly
  $\ltrace{\phi_1} \sqsubseteq \ltrace{\psi_1}$. If instead
  $\perm(|\epsilon|) \geq |\psi_1|$ we have that
  $\perm_1(\perm(|\epsilon|)) < |\psi_2|$, and we can conclude, in the
  same way, the existence of
  $\ltrace{\phi_2} \sqsubseteq {\ltrace{\psi_2}}$ irreducible and
  interchangeable with $\ltrace{\epsilon}$.
\end{proof}

Note that when the rules are right-linear the domain and {\esabbr} semantics
specialises to the usual prime event structure semantics
(see~\cite{Handbook,Bal:PhD,Sch:RRSG}), since the construction of the
domain in the present paper is formally the same as
in~\cite{Handbook}.

\subsection{Any connected {\esabbr} is generated by some grammar}
\label{ss:es-graph}

By Theorem~\ref{th:fusion-domain-for-gg}, given a graph grammar
$\mathcal{G}$ the domain $\dom{\mathcal{G}}$ is weak prime.
We next show that also the converse holds, i.e., any connected
{\esabbr} (and thus any weak prime domain) is generated by a suitable graph
grammar.
This shows that weak prime domains and connected {\esabbr} are
precisely what is needed to capture the concurrent semantics of
non-linear graph grammars, and thus strengthen our claim that they
represent the right structure for modelling formalisms with fusions.

\smallskip

\noindent
\textbf{Construction (graph grammar for a connected {\esabbr})} 
Let $\langle E, \#, \vdash \rangle$ be a 
connected {\esabbr}. 
The grammar $\mathcal{G}_{E} = \langle T, P, \pi, G_s \rangle$ is
defined as follows. 

First, for every element $e \in E$, we define the following graphs, which are
then used as basic building blocks
\begin{itemize}
\item $I_e$ and $S_e$ as shown in Fig.~\ref{fi:any-fes}(\subref{fi:any-fes-ie}) and Fig.~\ref{fi:any-fes}(\subref{fi:any-fes-Se});

\item let $\pmin{e}$ denote the set-theoretical product of the minimal enablings of
  $e$, i.e., $\pmin{e} = \Pi \{ X \subseteq E \mid X \vdash_0 e
  \}$; for every tuple $u \in \pmin{e}$ we define the graph $L_{u,e}$ as in
  Fig.~\ref{fi:any-fes}(\subref{fi:any-fes-Lue}).
\end{itemize}
Moreover, for every pair of events $e, e' \in E$ such that $e \# e'$,
we define a graph $C_{e,e'}$ as in
Fig.~\ref{fi:any-fes}(\subref{fi:any-fes-C}).

The set of productions is $P = E$, i.e., we add a rule for every event $e \in E$, and we define such rule in a way that
\begin{itemize}

\item it deletes $I_e$ and  $C_{e,e'}$ for each $e' \in E$ such that $e \# e'$.
\item it preserves the graph $S_e \cup \bigcup_{u \in U_e} L_{u,e}$ 

\item for all $e' \in E$, for all graphs $L_{u,e'}$ such that
  $e$ occurs in $u$, it merges the corresponding nodes and that of $S_{e'}$ into one.

\end{itemize}
The graph $S_e \cup \bigcup_{u \in U_e} L_{u,e}$ arises from $S_e$ and
$L_{u,e}$, $u \in \pmin{e}$ by merging all the nodes (we use $\bigcup$
and $\biguplus$ to denote union and disjoint union, respectively, with
a meaning illustrated in Fig.~\ref{fi:any-fes}(\subref{fi:any-fes-U})
and Fig.~\ref{fi:any-fes}(\subref{fi:any-fes-Uplus}).) Hence, there is a match for the
rule $e$ only if $S_e$ and all $L_{u,e}$ for $u \in \pmin{e}$ have been merged
and this happens if and only if at least one minimal enabling of $e$
has been entirely executed.  The deletion of the graphs $C_{e,e'}$ establishes the
needed conflicts. The rule is consuming since it deletes the node of graph
$I_e$.
\full{
Formally, the rule for $e$ has as left-hand side the graph
\[
  I_e \cup 
  (\bigcup_{\substack{e' \in E\\ e\#e'}} C_{e,e'}) \cup
  (\bigcup_{e' \in E} (S_{e'} \uplus \biguplus_{\substack{\ u'\in\pmin{e'}\\e \in u'}} L_{u',e'})) 
  \cup (S_e \cup \bigcup_{u \in \pmin{e}} L_{u,e}) 
\]
while the right-hand side is
\[
  (S_e \cup \bigcup_{u \in \pmin{e}} L_{u,e}) 
  \cup
  (\bigcup_{e' \in E} (S_{e'} \cup \bigcup_{\substack{\ u'\in\pmin{e'}\\e \in u'}} L_{u',e'}))
\]
}
The rule is schematised in
Fig.~\ref{fi:any-fes}(\subref{fi:any-fes-rule}), where it is intended
that $e$ occurs in $u_j^1, \ldots, u_j^{n_j}$ for $u_j^i \in U_{e_j}$,
$i \in \interval{n_j}$, $j \in \interval{k}$. Moreover
$e_1', \ldots, e_h'$ are the events in conflict with $e$ and, finally,
$U_e = \{ u_1, \ldots, u_n \}$.

The start graph is just the disjoint union of all the basic graphs
introduced above
\[
  G_s = (\bigcup_{e\#e'} C_{e,e'}) \cup
  \bigcup_{e \in E}  (I_e \cup S_e \uplus \biguplus_{u \in U_e} L_{u,e})
\]

\full{
Then the type graph is
\[
  T = (\bigcup_{e\#e'} C_{e,e'}) \cup
  \bigcup_{e \in E}  (I_e \cup S_e \cup \bigcup_{u \in U_e} L_{u,e})
\]
}
\smallskip

Note that the interfaces of the rules are not given explicitly. They
can be deduced from the left and right-hand side, and the
labelling. The same applies to the type graph.

It is not difficult to show that the grammar
$\mathcal{G}_{E}$ generates exactly the {\esabbr} ${E}$. 

\begin{theorem}[connected ES from graph grammars]
  Let $\langle E, \#, \vdash \rangle$ be a 
  connected {\esabbr}.  Then,
  ${E}$ and $\ev{\dom{\mathcal{G}_{E}}}$ are isomorphic.
\end{theorem}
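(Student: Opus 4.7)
The overall strategy is to reduce the claim to an isomorphism of weak prime domains. Since $E$ is connected, Theorem~\ref{th:es-dom-equivalence} gives $E \cong \ev{\dom{E}}$, so it suffices to exhibit an isomorphism $\dom{\mathcal{G}_E} \cong \dom{E}$ in $\WDom$ and then apply the functor $\zev$. I would build this isomorphism by first constructing an order isomorphism between the compact elements of the two domains and then extending it by ideal completion.

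The candidate bijection is $\varphi : \conff{E} \to \compact{\dom{\mathcal{G}_E}}$ defined by sending a finite configuration $C = \{e_1, \ldots, e_n\}$, enumerated in any order such that for each $k$ the prefix $\{e_1, \ldots, e_{k-1}\}$ enables $e_k$, to the trace class $\ltrace{\psi_C}$ of the decorated derivation $\psi_C$ in $\mathcal{G}_E$ that fires the rules $e_1, \ldots, e_n$ in that order. To show well-definedness, one observes that two admissible enumerations of $C$ differ by transpositions of sequentially independent rule applications: indeed, distinct rules for $e, e' \in C$ do not compete for deleted items, since $e, e'$ are consistent and therefore the conflict gadget $C_{e,e'}$ does not occur in either left-hand side, while the gadgets $I_e, I_{e'}$ are distinct by construction. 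Hence $\varphi(C)$ is independent of the chosen enumeration.

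The core technical step is an enabling analysis for $\mathcal{G}_E$ showing that: (i) rule $e$ fires at most once in any derivation, because it deletes $I_e$; (ii) rules $e$ and $e'$ with $e \# e'$ cannot both appear in the same derivation, because both would need to delete the shared gadget $C_{e,e'}$; (iii) a match for rule $e$ exists in a reachable graph if and only if some minimal enabling $X \vdash_0 e$ has been fully consumed. Point~(iii) is the delicate one: the left-hand side of rule $e$ demands a single node assembling $S_e$ together with all the $L_{u,e}$'s, and the combined effect of typing with the DPO identification and dangling conditions ensures that such a match exists precisely when at least one of the $L_{u,e}$ gadgets has been entirely identified with $S_e$, which in turn happens exactly after all events of some $X \vdash_0 e$ have fired.

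These three properties imply that $\varphi$ is a bijection onto $\compact{\dom{\mathcal{G}_E}}$ and an order isomorphism: $C \subseteq C'$ yields a prefix relation between the corresponding derivations, and conversely a prefix relation between trace classes forces a subset relation between the sets of fired rules. Extending $\varphi$ by ideal completion produces the desired isomorphism in $\WDom$, and applying $\zev$ concludes. The main obstacle is item~(iii) of the enabling analysis, which requires a careful case inspection of the gadgets in Fig.~\ref{fi:any-fes} and of the non-injective DPO matches they enable. Connectedness of $E$ plays a subtle role at this stage: it guarantees that any two minimal enablings $X, X' \vdash_0 e$ related by $\conn{e}^*$ correspond to interchangeable pre-events of $\dom{\mathcal{G}_E}$, so they collapse into a single $\leftrightarrow^*$-equivalence class and the events of $\ev{\dom{\mathcal{G}_E}}$ are in bijection with those of $E$, rather than finer.
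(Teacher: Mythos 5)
Your proposal is correct and takes essentially the same approach as the paper: both reduce the statement to a domain isomorphism $\dom{\mathcal{G}_E} \cong \dom{E}$ grounded in the same enabling analysis of the gadgets (the consumption of $I_e$ forces single firing, $C_{e,e'}$ encodes conflict, and a match for rule $e$ exists exactly when $S_e$ has been merged with all the $L_{u,e}$, i.e.\ when some minimal enabling has been executed), and then conclude via the equivalence of Theorem~\ref{th:es-dom-equivalence}. The only cosmetic difference is that the paper establishes the correspondence on irreducibles (pre-events versus minimal enablings $\esir{C}{e}$) and lifts it using irreducible algebraicity, whereas you set it up on all compact elements (trace classes versus finite configurations) and extend by ideal completion.
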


\begin{proof}
  First observe that any rule in $\mathcal{G}_{E}$ is executed at
  most once in a derivation since it consumes an item (the node of graph
  $I_e$) that is generated by no other rule. If we consider
  $\dom{\mathcal{G}_{E}}$, then the irreducibles are minimal
  $\ltrace{\epsilon}$ with $\epsilon = \psi; \delta$.
  By the shape of rule $e$, the derivation $\psi$ must contain
  the occurrences of a minimal set of rules such that the graphs
  $S_e$ and $L_{u,e}$ for $u \in U_e $ are merged along the common node.
  By construction, in order to merge all such graphs, if we denote by
  $X_\psi$ the set of rules applied in $\psi$, it must be
  $X_\psi \supseteq C$ for some $C \in \conff{E}$ such that
  $C \vdash_0 e$. Therefore by minimality we conclude that
  $X_\psi \vdash_0 e$. Relying on this observation, a routine
  induction on the $|C|$ shows that minimal enablings $C \vdash_0 e$
  in $E$ are in one to one correspondence with irreducibles
  $\ltrace{\epsilon}$ in $\dom{\mathcal{G}_{E}}$. Recalling, that, in
  turn, irreducibles in $\dom{E}$ are again minimal enablings, i.e.,
  $\esir{C}{e}$ with $C \in\conff{E}$ such that $C \vdash_0 e$ we
  obtain a bijection between irreducibles in $\dom{\mathcal{G}_{E}}$
  and $\dom{E}$.
  
  The fact that the correspondence preserves and reflects the order
  is, again, almost immediate by construction. In fact, consider two
  irreducibles $\ltrace{\epsilon}$ and $\ltrace{\epsilon'}$ in
  $\dom{\mathcal{G}_{E}}$ and the corresponding irreducibles
  $\esir{C}{e}$ and $\esir{C'}{e'}$ in $\dom{E}$. If
  $\esir{C}{e} \subseteq \esir{C'}{e'}$, take
  $X = \esir{C'}{e'} \setminus \esir{C}{e}$. Then $\epsilon$ can be
  extended with the rules corresponding to the events in $X$, thus
  showing the existence of a derivation $\psi$ such that
  $\epsilon; \psi \equiv^c \epsilon'$. In fact, if this were not
  possible, there would be an event $e'' \in X$ such that the
  corresponding rule compete for deleting some item of the start graph
  with a rule $e_1$ in $\epsilon$, hence $e_1 \in \esir{C}{e}$. By
  construction, the only possibility is that the common item is
  $C_{e'',e_1}$. But this would mean that $e'' \# e_1$. This
  contradicts the fact that $\{ e_1, e''\} \subseteq \esir{C'}{e'}$.
  The converse, i.e., the fact that if
  $\ltrace{\epsilon} \sqsubseteq \ltrace{\epsilon'}$ then
  $\esir{C}{e} \subseteq \esir{C'}{e'}$ is immediate.

  Recalling that domains are irreducible algebraic
  (Proposition~\ref{pr:domains-irr-alg}), we conclude that
  $\dom{\mathcal{G}_{E}}$ and $\dom{E}$ are isomorphic. 
  Since $E$ is connected {\esabbr}, by
  Theorem~\ref{th:es-dom-equivalence}, $E \simeq \ev{\dom{E}}$ and thus
  $\ev{\dom{\mathcal{G}_{E}}}$ and $E$ are isomorphic, as desired.
\end{proof}

\begin{figure}
  \hfill
  \mbox{}
  \subcaptionbox{$I_{e}$\label{fi:any-fes-ie}}{
    \begin{tikzpicture}[node distance=10mm, >=stealth',x=8mm]
      \node at (0,0) [node, label=below:${i_e}$] (ie) {}
      node[above] {};
      \pgfBox
    \end{tikzpicture}
  }
   \hfill
  \subcaptionbox{$S_e$\label{fi:any-fes-Se}}[10mm]
  {
    \begin{tikzpicture}[node distance=10mm, >=stealth',x=8mm]
      \node at (0,0) [node, label=below:$e$] (ne0) {} edge [in=105, out=75,loop]  node [above] {$e$} ();
      \pgfBox
    \end{tikzpicture}
  }
 \hfill
  \subcaptionbox{$L_{u,e}$\label{fi:any-fes-Lue}}[12mm]
  {
    \begin{tikzpicture}[node distance=10mm, >=stealth',x=8mm]
      \node at (1,0) [node, label=below:$e$] (ne1) {} edge [in=105, out=75,loop]  node [above] {$u$} ();
      \pgfBox
    \end{tikzpicture}
  }
  \hfill
  \subcaptionbox{$C_{e, e'}$\label{fi:any-fes-C}}{
    \begin{tikzpicture}[node distance=10mm, >=stealth',x=8mm]
      \node at (0,0) [node, label=below:${e \# e'}$] (ne0) {}
      node[above] {};
      \pgfBox
    \end{tikzpicture}
  }
  \mbox{}
  \hfill\\[4mm]
  \subcaptionbox{rule $e$\label{fi:any-fes-rule}}{
    \begin{tikzpicture}
      \node (l) {
        \begin{tikzpicture}[node distance=10mm, >=stealth',x=8mm]
          \node at (0,2) [node, lab, label=below:$i_e$] (ie) {}; 
          \node at (1,2) [node, lab, label=below:\scriptsize $e \# e_1'$] (c1) {}; 
          \node at (2,2) [node, lab, label=below:\scriptsize $e \# e_h'$] (cn) {}; 
          \draw[dotted] (c1) -- (cn); 
          \node at (3.5,2) [node, label=below:$e$] (ne0) {}
          edge [in=195, out=165, loop] node [below] (lu1)  {$e$} ()        
          edge [in=150, out=120, loop] node [above] (lu1)  {$u_1$} ()
          edge [in=60, out=30, loop]  node [above]  (lun)  {$u_n$} ();
          \draw[dotted] (lu1) -- (lun);           
          \node at (0,0) [node, label=below:{$e_1$}] (n1e0) {}  edge [in=105, out=75,loop]  node [above] {$e_1$} ();
          \node at (0.7,0) [node, label=below:{$e_1$}] (n1e1) {}  edge [in=105, out=75,loop]  node [above] {$u_1^1$} ();
          \node at (1.5,0) [node, label=below:{$e_1$}] (n1e2) {} edge [in=105, out=75,loop]  node [above] {$u_1^{n_1}$} ();
          \draw[dotted] (n1e1) -- (n1e2); 
          \node at (2.5,0) [node, label=below:{$e_k$}] (nke0) {}  edge [in=105, out=75,loop]  node [above] {$e_k$} ();
          \node at (3.2,0) [node, label=below:{$e_k$}] (nke1) {}  edge [in=105, out=75,loop]  node [above] {$u_k^1$} ();
          \node at (4.0,0) [node, label=below:{$e_k$}] (nke2) {} edge [in=105, out=75,loop]  node [above] {$u_k^{n_k}$} ();
          \draw[dotted] (nke1) -- (nke2); 
          \pgfBox
        \end{tikzpicture}
      };
      \node  [right=of l] (r) {
        \begin{tikzpicture}[node distance=10mm, >=stealth',x=8mm]
          \node at (3,2) [node, label=below:$e$] (ne0) {}
          edge [in=195, out=165, loop] node [below] (lu1)  {$e$} ()        
          edge [in=150, out=120, loop] node [above] (lu1)  {$u_1$} ()
          edge [in=60, out=30, loop]  node [above]  (lun)  {$u_n$} ();
          \draw[dotted] (lu1) -- (lun);           
          \node at (1,0) [node, label=below:$e_1$] (ne0) {} 
          edge [in=195, out=165, loop] node [below] (lu1)  {$e_1$} ()
          edge [in=150, out=120, loop] node [above] (lu1)  {$u_1^1$} ()
          edge [in=60, out=30, loop]  node [above]  (lun)  {$u_1^{n_1}$} ();
          \draw[dotted] (lu1) -- (lun); 
          \node at (3,0) [node, label=below:$e_k$] (ne0) {}
          edge [in=195, out=165, loop] node [below] (lu1)  {$e_k$} ()
          edge [in=150, out=120, loop] node [above] (lu1)  {$u_k^1$} ()
          edge [in=60, out=30, loop]  node [above]  (lun)  {$u_k^{n_k}$} ();
          \pgfBox
        \end{tikzpicture}
      };
      \path (l) edge[->] node[trans, above] {$e$} (r);
    \end{tikzpicture}
  }\\[4mm]
  \mbox{} \hfill
  \subcaptionbox{$S_e \uplus L_{u_1,e} \uplus L_{u_2,e}$\label{fi:any-fes-U}}
  [35mm]
  {
    \begin{tikzpicture}[node distance=10mm, >=stealth',x=8mm]
      \node at (0,0) [node, label=below:$e$] (ne0) {} 
      edge [in=105, out=75,loop]  node [above] {$e$} ();;
      \node at (1,0) [node, label=below:$e$] (ne1) {} 
      edge [in=105, out=75,loop]  node [above] {$u_1$} ();
      \node at (2,0) [node, label=below:$e$] (ne2) {} edge [in=105, out=75,loop]  node [above] {$u_2$} ();
      \pgfBox
    \end{tikzpicture}
  }
  \hfill
  \subcaptionbox{$S_e \cup  L_{u_1,e} \cup L_{u_2,e}$\label{fi:any-fes-Uplus}}
  [30mm]
  {
    \begin{tikzpicture}[node distance=10mm, >=stealth',x=8mm]
      \node at (0,0) [node, label=below:$e$] (ne0) {}
      edge [in=195, out=165, loop] node [below] (lu1)  {$e$} ()    
      edge [in=130, out=100, loop]  node [above] {$u_1$} ()
      edge [in=80, out=50, loop]  node [above] {$u_2$} ();
      \pgfBox
    \end{tikzpicture}
  }
  \hfill
  \mbox{}
  \caption{Some graphs illustrating the construction of $\mathcal{G}_E$.}
  \label{fi:any-fes}
\end{figure}

\begin{example}
  \label{ex:final}
  Consider the running example {\esabbr}, from
  Example~\ref{ex:event-structure}, with set of events
  $\{ a, b, c \}$, empty conflict relation and the minimal enablings by
  $\{a\} \vdash_0 c$ and $\{b\} \vdash_0 c$. The associated grammar is
  depicted in Fig.~\ref{fi:any-fes-a}.
  
  As a further example, consider an {\esabbr} $E_1$ with events
  $\{ a, b, c, d, e \}$. The conflict relation $\#$ is given by
  $e \# d$ and minimal enablings $\emptyset \vdash_0 a$,
  $\emptyset \vdash_0 b$, $\emptyset \vdash_0 c$, $\emptyset \vdash_0 e$,
  $\{a, b\} \vdash_0 d$ and $\{c\} \vdash_0 d$. The grammar is in
  Fig.~\ref{fi:any-fes-b}.
\end{example}

\begin{figure}
  \subcaptionbox*{$T$}{
    \begin{tikzpicture}[node distance=10mm, >=stealth',x=6.2mm]
      \node at (0,1) [node, label=below:${a}$] (na) {} edge [in=105, out=75,loop]  node [above] {$a$} ();
      \node at (1,1) [node, label=below:${b}$] (nb) {} edge [in=105, out=75,loop]  node [above] {$b$} ();
      \node at (2.2,1) [node, label=below:${c}$] (nc) {} edge [in=160, out=130, loop]  node [above] {$c$} ()
      edge [in=50, out=20, loop]  node [above] {$(a,b)$} ();
      \node at (0,0) [node, label=below:$i_{a}$] (ia) {};
      \node at (1,0) [node, label=below:$i_{b}$] (ib) {};
      \node at (2,0) [node, label=below:$i_{c}$] (ic) {};
      \pgfBox
    \end{tikzpicture}
  }
  \hspace{10mm}
  \subcaptionbox*{$G_s$}{
    \begin{tikzpicture}[node distance=10mm, >=stealth',x=6.2mm]
      \node at (0,1) [node, label=below:$a$] (ne0) {} edge [in=105, out=75,loop]  node [lab,above] {$a$} ();
      \node at (1,1) [node, label=below:$b$] (ne1) {} edge [in=105, out=75,loop]  node [above] {$b$} ();
      \node at (2,1) [node, label=below:$c$] (ne2) {} edge [in=105, out=75,loop]  node [above] {$c$} ();
      \node at (3,1) [node, label=below:$c$] (ne2) {} edge [in=105, out=75,loop]  node [above] {$(a,b)$} ();
      \node at (0,0) [node, label=below:$i_{a}$] (ia) {};
      \node at (1,0) [node, label=below:$i_{b}$] (ib) {};
      \node at (2,0) [node, label=below:$i_{c}$] (ic) {};
      \pgfBox
      \pgfBox
    \end{tikzpicture}  }
  \\[2mm]
  \subcaptionbox*{}{    
    \begin{tikzpicture}[node distance=6.2mm]
      \node (l) {
      \begin{tikzpicture}[node distance=10mm, >=stealth',x=6.2mm]
        \node at (0,0) [node, label=below:$i_{a}$] (ia) {};
        \node at (1,0) [node, label=below:$a$] (ne0) {} edge [in=105, out=75,loop]  node [above] {$a$} ();
        \node at (2,0) [node, label=below:$c$] (ne2) {} edge [in=105, out=75,loop]  node [above] {$c$} ();
        \node at (3,0) [node, label=below:$c$] (ne2) {} edge [in=105, out=75,loop]  node [above] {$(a,b)$} ();
        \pgfBox
      \end{tikzpicture} 
    };
    \node  [right=of l] (r) {
      \begin{tikzpicture}[node distance=10mm, >=stealth',x=6.2mm]
        \node at (0,0) [node, label=below:$a$] (ne0) {} edge [in=105, out=75,loop]  node [above] {$a$} ();
        \node at (1.5,0) [node, label=below:${c}$] (nc) {} edge [in=160, out=130, loop]  node [above] {$c$} ()
        edge [in=50, out=20, loop]  node [above] {$(a,b)$} ();
        \pgfBox
      \end{tikzpicture}
    };
    \path (l) edge[->] node[trans, above] {$a$} (r);
    \end{tikzpicture}
  }
  \\
  \subcaptionbox*{}{    
    \begin{tikzpicture}[node distance=6.2mm]
      \node (l) {
      \begin{tikzpicture}[node distance=10mm, >=stealth',x=6.2mm]
        \node at (0,0) [node, label=below:$i_{b}$] (ib) {};
        \node at (1,0) [node, label=below:$b$] (ne0) {} edge [in=105, out=75,loop]  node [above] {$b$} ();
        \node at (2,0) [node, label=below:$c$] (ne2) {} edge [in=105, out=75,loop]  node [above] {$c$} ();
        \node at (3,0) [node, label=below:$c$] (ne2) {} edge [in=105, out=75,loop]  node [above] {$(a,b)$} ();
        \pgfBox
      \end{tikzpicture} 
    };
    \node  [right=of l] (r) {
      \begin{tikzpicture}[node distance=10mm, >=stealth',x=6.2mm]
        \node at (0,0) [node, label=below:$b$] (ne0) {} edge [in=105, out=75,loop]  node [above] {$b$} ();
        \node at (1.5,0) [node, label=below:${c}$] (nc) {} edge [in=160, out=130, loop]  node [above] {$c$} ()
        edge [in=50, out=20, loop]  node [above] {$(a,b)$} ();
        \pgfBox
      \end{tikzpicture}
    };
    \path (l) edge[->] node[trans, above] {$b$} (r);
    \end{tikzpicture}
  }
  \hfill
  \subcaptionbox*{}{    
    \begin{tikzpicture}[node distance=6.2mm]
      \node (l) {
      \begin{tikzpicture}[node distance=10mm, >=stealth',x=6.2mm]
        \node at (0,0) [node, label=below:$i_{c}$] (ic) {};
        \node at (1,0) [node, label=below:${c}$] (nc) {} 
        edge [in=160, out=130, loop]  node [above] {$c$} ()
        edge [in=50, out=20, loop]  node [above] {$(a,b)$} ();
        \pgfBox
      \end{tikzpicture} 
    };
    \node  [right=of l] (r) {
      \begin{tikzpicture}[node distance=10mm, >=stealth',x=6.2mm]
        \node at (0,0) [node, label=below:${c}$] (nc) {} 
        edge [in=160, out=130, loop]  node [above] {$c$} ()
        edge [in=50, out=20, loop]  node [above] {$(a,b)$} ();
        \pgfBox
      \end{tikzpicture}
    };
   \path (l) edge[->] node[trans, above] {$c$} (r);
    \end{tikzpicture}
  }
  \caption{The grammar associated to our running example.}
  \label{fi:any-fes-a}
\end{figure}

\begin{figure}
  \subcaptionbox*{$T$}{
    \begin{tikzpicture}[node distance=10mm, >=stealth']
      \node at (0,0) [node, label=below:${a}$] (na) {} edge [in=105, out=75,loop]  node [above] {$a$} ();
      \node at (1,0) [node, label=below:${b}$] (nb) {} edge [in=105, out=75,loop]  node [above] {$b$} ();
      \node at (2,0) [node, label=below:${c}$] (nc) {} edge [in=105, out=75,loop]  node [above] {$c$} ();
      \node at (3,0) [node, label=below:${d}$] (nc) {} edge [in=180, out=150, loop]  node [above] {$d$} ()
      edge [in=70, out=110, loop]  node [above] {$(a,c)$} ()
      edge [in=30, out=0, loop]  node [above right=0mm and -3.5mm] {$(b,c)$} ();
      \node at (4.2,0) [node, label=below:${e}$] (ne) {} edge [in=105, out=75,loop]  node [above] {$e$} ();
      \node at (5,0) [node, label=below:$d\#e$] (de) {};
      \pgfBox
    \end{tikzpicture}
  }
  \\[2mm]
  \subcaptionbox*{$G_s$}{
    \begin{tikzpicture}[node distance=10mm, >=stealth',x=6.2mm]
      \node at (0,1) [node, label=below:${a}$] (na) {} edge [in=105, out=75,loop]  node [above] {$a$} ();
      \node at (1,1) [node, label=below:${b}$] (nb) {} edge [in=105, out=75,loop]  node [above] {$b$} ();
      \node at (2,1) [node, label=below:${c}$] (nc) {} edge [in=105, out=75,loop]  node [above] {$c$} ();
      \node at (3,1) [node, label=below:${d}$] (nd) {} edge [in=105, out=75,loop]  node [above] {$d$} ();
      \node at (3.9,1) [node, label=below:${d}$] (nd1) {} edge [in=105, out=75,loop]  node [above] {$(a,c)$} ();
      \node at (5.1,1) [node, label=below:${d}$] (nd2) {} edge [in=105, out=75,loop]  node [above] {$(b,c)$} ();
      \node at (6,1) [node, label=below:${e}$] (ne) {} edge [in=105, out=75,loop]  node [above] {$e$} ();
      \node at (7,1) [node, label=below:$d\#e$] (de) {};
      \node at (0,0) [node, label=below:$i_{a}$] (ia) {};
      \node at (1,0) [node, label=below:$i_{b}$] (ib) {};
      \node at (2,0) [node, label=below:$i_{c}$] (ic) {};
      \node at (3,0) [node, label=below:$i_{d}$] (ic) {};
      \node at (4,0) [node, label=below:$i_{e}$] (ic) {};
      \pgfBox
    \end{tikzpicture}
  }
  \\[2mm]
  \subcaptionbox*{}{    
    \begin{tikzpicture}[node distance=6.2mm]
      \node (l) {
      \begin{tikzpicture}[node distance=10mm, >=stealth',x=6.2mm]
        \node at (0,0) [node, label=below:$i_{a}$] (ia) {};
        \node at (1,0) [node, label=below:$a$] (ne0) {} edge [in=105, out=75,loop]  node [above] {$a$} ();
        \node at (2,0) [node, label=below:$d$] (ne2) {} edge [in=105, out=75,loop]  node [above] {$d$} ();
        \node at (3,0) [node, label=below:$d$] (ne2) {} edge [in=105, out=75,loop]  node [above] {$(a,c)$} ();
        \pgfBox
      \end{tikzpicture} 
    };
    \node  [right=of l] (r) {
      \begin{tikzpicture}[node distance=10mm, >=stealth',x=6.2mm]
        \node at (0,0) [node, label=below:$a$] (ne0) {} 
        edge [in=105, out=75,loop]  node [above] {$a$} ();
        \node at (1.5,0) [node, label=below:${d}$] (nc) {} 
        edge [in=160, out=130, loop]  node [above] {$d$} ()
        edge [in=50, out=20, loop]  node [above] {$(a,c)$} ();
        \pgfBox
      \end{tikzpicture}
    };
   \path (l) edge[->] node[trans, above] {$a$} (r);
    \end{tikzpicture}
  }
  \hfill
  \subcaptionbox*{}{    
    \begin{tikzpicture}[node distance=6.2mm]
      \node (l) {
      \begin{tikzpicture}[node distance=10mm, >=stealth',x=6.2mm]
        \node at (0,0) [node, label=below:$i_{b}$] (ib) {};
        \node at (1,0) [node, label=below:$b$] (ne0) {} edge [in=105, out=75,loop]  node [above] {$b$} ();
        \node at (2,0) [node, label=below:$d$] (ne2) {} edge [in=105, out=75,loop]  node [above] {$d$} ();
        \node at (3,0) [node, label=below:$d$] (ne2) {} edge [in=105, out=75,loop]  node [above] {$(b,c)$} ();
        \pgfBox
      \end{tikzpicture} 
    };
    \node  [right=of l] (r) {
      \begin{tikzpicture}[node distance=10mm, >=stealth',x=6.2mm]
        \node at (0,0) [node, label=below:$b$] (ne0) {} 
        edge [in=105, out=75,loop]  node [above] {$b$} ();
        \node at (1.5,0) [node, label=below:${d}$] (nc) {}
        edge [in=160, out=130, loop]  node [above] {$d$} ()
        edge [in=50, out=20, loop]  node [above] {$(b,c)$} ();
        \pgfBox
      \end{tikzpicture}
    };
   \path (l) edge[->] node[trans, above] {$b$} (r);
    \end{tikzpicture}
  }
  \hfill
  \subcaptionbox*{}{    
    \begin{tikzpicture}[node distance=6.2mm]
      \node (l) {
      \begin{tikzpicture}[node distance=10mm, >=stealth', x=6mm]
        \node at (0,0) [node, label=below:$i_{c}$] (ic) {};
        \node at (1,0) [node, label=below:$c$] (ne0) {} edge [in=105, out=75,loop]  node [above] {$c$} ();
        \node at (2,0) [node, label=below:$d$] (ne2) {} edge [in=105, out=75,loop]  node [above] {$d$} ();
        \node at (3,0) [node, label=below:$d$] (ne2) {} edge [in=105, out=75,loop]  node [above] {$(a,c)$} ();       
        \node at (4.5,0) [node, label=below:$d$] (ne2) {} edge [in=105, out=75,loop]  node [above] {$(b,c)$} ();
        \pgfBox
      \end{tikzpicture} 
    };
    \node  [right=of l] (r) {
      \begin{tikzpicture}[node distance=10mm, >=stealth']
        \node at (0,0) [node, label=below:$c$] (ne0) {} edge [in=105, out=75,loop]  node [above] {$c$} ();
        \node at (1,0) [node, label=below:${d}$] (nc) {} edge [in=180, out=150, loop]  node [above] {$d$} ()
        edge [in=70, out=110, loop]  node [above] {$(a,c)$} ()
        edge [in=30, out=0, loop]  node [above right=0mm and -3.5mm] {$(b,c)$} ();
        \pgfBox
      \end{tikzpicture}
    };
   \path (l) edge[->] node[trans, above] {$c$} (r);
    \end{tikzpicture}
  }
  \hfill
  \subcaptionbox*{}{    
    \begin{tikzpicture}[node distance=6.2mm]
      \node (l) {
      \begin{tikzpicture}[node distance=10mm, >=stealth']
        \node at (0,0) [node, label=below:$i_{d}$] (id) {};
        \node at (1,0) [node, label=below:${d}$] (nc) {} edge [in=180, out=150, loop]  node [above] {$d$} ()
        edge [in=70, out=110, loop]  node [above] {$(a,c)$} ()
        edge [in=30, out=0, loop]  node [above right=0mm and -3.5mm] {$(b,c)$} ();
        \node at (2.1,0) [node, label=below:$d\#e$] (de) {};
        \pgfBox
      \end{tikzpicture}       
      }; 
      \node  [right=of l] (r) {
      \begin{tikzpicture}[node distance=10mm, >=stealth']
        \node at (0,0) [node, label=below:${d}$] (nc) {} edge [in=180, out=150, loop]  node [above] {$d$} ()
        edge [in=70, out=110, loop]  node [above] {$(a,c)$} ()
        edge [in=30, out=0, loop]  node [above right=0mm and -3.5mm] {$(b,c)$} ();
        \pgfBox
      \end{tikzpicture}
    };
   \path (l) edge[->] node[trans, above] {$d$} (r);
    \end{tikzpicture}
  }
  \subcaptionbox*{}{    
    \begin{tikzpicture}[node distance=6.2mm]
      \node (l) {
      \begin{tikzpicture}[node distance=10mm, >=stealth', x=6mm]
        \node at (0,0) [node, label=below:$i_{e}$] (ie) {};
        \node at (1,0) [node, label=below:${e}$] (e) {} 
        edge [in=75, out=105, loop] node [above] {$e$} ();
        \node at (2.1,0) [node, label=below:$d\#e$] (de) {};
        \pgfBox
      \end{tikzpicture}       
      }; 
      \node  [right=of l] (r) {
      \begin{tikzpicture}[node distance=10mm, >=stealth']
        \node at (0,0) [node, label=below:${e}$] (e) {} 
        edge [in=75, out=105, loop] node [above] {$e$} ();
        \pgfBox
      \end{tikzpicture}
    };
   \path (l) edge[->] node[trans, above] {$e$} (r);
    \end{tikzpicture}
  }
  \caption{The grammar for the {\esabbr} in example~\ref{ex:final}.}
  \label{fi:any-fes-b}
\end{figure}

\subsection{A prime {\esabbr} semantics for grammars with fusions}
\label{ss:prime-gg}

A possibility for recovering a notion of causality based on prime
{\esabbr} also for graph grammars with fusions is to
introduce suitable restrictions on the concurrent applicability of
rules.
Indeed, the lack of stability arises essentially from
considering as concurrent those fusions which act on common items.
Preventing fusions to act on already merged items, one may lose some
concurrency, yet gaining a definite notion of causality.
Technically, a prime {\esabbr} can be obtained for left-linear
rewriting systems by restricting the applicability condition: the
match must be such that the pair $\langle l; m^L, r \rangle$ of
Fig.~\ref{fi:deriv} is jointly mono. This essentially means that items
which have been already fused, should not be fused again.

Formally, this means changing the applicability condition, restricting
to fusion safe derivations.
 
\begin{figure}[t]
\[
\xymatrix@R=6mm{ 
  {L} \ar[d]_{m^L} & {K} \ar[l]_{l} \ar[r]^{r} 
  \ar[d]^{m^K} & {R} \ar[d]^{m^R}\\
  {G} & {D} \ar[l]^{l^*} \ar[r]_{r^*} & {H} }
\]
\caption{A direct derivation.}
\label{fi:deriv-jointly}
\end{figure}

\begin{definition}[fusion safe (direct) derivation]
  A \emph{fusion safe} direct derivation is a direct derivation as in
  Fig.~\ref{fi:deriv-jointly} where $\langle l; m^L, r \rangle$ is
  jointly mono. A derivation is fusion safe if it consists of a
  sequence of fusion safe direct derivations.
\end{definition}

Consider our running example in Fig.~\ref{fi:running}.
Clearly, the derivations labelled $p_a$ and $p_b$ starting from 
$G_s$ are now in conflict, since e.g. the application of $p_a$ 
forbids the application of $p_b$ to $G_a$, since the 
derivation would not be anymore jointly mono.
We thus end up in the situation presented by the configurations
in Fig.~\ref{fi:non-fusion}, 
hence the applications of $p_c$ to $G_a$ and $G_b$ respectively 
must be considered as different events.

The notion of sequential independence remains unchanged. Note that the interchange operator (see Proposition~\ref{pr:interchangeEG}) 
applied to sequential independent derivations that are fusion safe produces a new pair of fusion safe 
derivations.
Then we can consider concatenable fusion safe traces, that form a subcategory of the category of traces.

\begin{definition}[fusion safe traces]
  \label{de:abs-shift-cat-2}
  Let $\mathcal{G}$ be a graph grammar.  The \emph{category of
    concatenable fusion safe traces} of $\mathcal{G}$, denoted by
  \trs{\mathcal{G}}, has abstract graphs as objects and concatenable
  fusion safe traces as arrows.
\end{definition}

The construction of Theorem~\ref{th:fusion-domain-for-gg} recasted on
fusion safe traces now produces a prime domain (hence a
prime {\esabbr}).

\begin{theorem}[prime domain structure for graph grammars]
  Let $\mathcal{G}$ be a graph grammar. Then
  $\ideal{\slice{[G_s]}{\trs{\mathcal{G}}}}$ is a prime domain.
\end{theorem}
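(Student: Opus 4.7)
The plan is to essentially replay the proof of Theorem~\ref{th:fusion-domain-for-gg} in the fusion safe setting, obtaining a weak prime domain, and then show that the extra ``jointly mono'' requirement collapses interchangeability to the identity, so the domain is in fact prime.

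First, I would verify that $\slice{[G_s]}{\trs{\mathcal{G}}}$ is a preorder and that its quotient $\poset{\mathcal{G}}^s$ (defined as in Lemma~\ref{le:partial_order_gg}, but restricted to fusion safe traces) is a finitary partial order satisfying the generator hypothesis of Lemma~\ref{le:generators}. The crucial observation here is that the interchange operator of Proposition~\ref{pr:interchangeEG} preserves fusion safeness: switching two sequentially independent direct derivations does not affect the jointly mono condition of either step, since the items matched by the two rules are essentially disjoint (modulo the preserved parts). With this, the proofs of Lemma~\ref{le:prop-c} and Proposition~\ref{pr:domain-gg} go through verbatim for fusion safe traces, so $\ideal{\slice{[G_s]}{\trs{\mathcal{G}}}}$ is a domain.

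Next, I would characterise irreducibles exactly as in the proof of Theorem~\ref{th:fusion-domain-for-gg}: they are classes $\ltrace{\epsilon}$ with $\epsilon = \psi;\delta$ a fusion safe pre-event, i.e., a minimal fusion safe trace whose last step cannot be shifted back. The same argument then shows that $\ideal{\slice{[G_s]}{\trs{\mathcal{G}}}}$ is weak prime: given an irreducible below a join of compatibles, the permutation analysis provides a matching interchangeable irreducible below one of the two.

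The main step, and the expected obstacle, is to show that the interchangeability relation $\leftrightarrow$ on irreducibles is the identity. Suppose $\ltrace{\epsilon} \leftrightarrow \ltrace{\epsilon'}$ with $\epsilon = \psi;\delta$ and $\epsilon' = \psi';\delta'$. By the characterisation of $\leftrightarrow$ in the proof of Theorem~\ref{th:fusion-domain-for-gg}, $\delta$ and $\delta'$ are two minimal fusion safe histories of the ``same'' direct derivation, in the sense that the consistent permutation sending $\epsilon;\psi_1 \equiv^c_\sigma \epsilon';\psi_1'$ (with $\psi_1,\psi_1'$ of minimal length) satisfies $\sigma(|\epsilon|)=|\epsilon'|$. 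I would argue that this is incompatible with fusion safeness unless $\ltrace{\epsilon} = \ltrace{\epsilon'}$: the two distinct minimal histories would, in the common extension, both participate in the same rewriting step via different matches whose induced fusions overlap on already identified items. Formally, by analysing the colimit $\col{\epsilon;\psi_1}$ and using that the pair $\langle l;m^L, r\rangle$ of the distinguished step must be jointly mono in both ways of reaching it, one concludes $\delta$ and $\delta'$ must coincide up to abstraction equivalence, hence $\ltrace{\epsilon} = \ltrace{\epsilon'}$.

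Once $\leftrightarrow$ is the identity on irreducibles, by Proposition~\ref{pr:domains-irr-alg} any element is the join of its irreducibles, and by the weak prime property these irreducibles are in fact primes (since the weakening ``up to $\leftrightarrow$'' is vacuous). Therefore $\ideal{\slice{[G_s]}{\trs{\mathcal{G}}}}$ is prime algebraic; invoking Proposition~\ref{pr:irr-prime-alg} gives the thesis.
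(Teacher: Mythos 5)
Your proposal is correct and follows essentially the same route as the paper: replay the weak-prime-domain construction of Theorem~\ref{th:fusion-domain-for-gg} on fusion safe traces (using that the interchange operator preserves fusion safety), then show that fusion safety forces the interchangeability relation on irreducibles to be the identity, so that irreducibles are primes and the domain is prime algebraic. The one refinement worth noting is that the paper locates the jointly-mono violation not in the distinguished step $\delta$ itself but in the step of $\psi_1$ that the consistent permutation schedules after $\delta_2$: being needed for the minimal enabling yet unable to generate items consumed by $\delta_1$, it can only re-merge items already merged by a step of $\psi_2$, which is exactly what fusion safety forbids.
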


\begin{proof}
  The proof is the same as for
  Theorem~\ref{th:fusion-domain-for-gg}. We already know that the
  domain is weak prime, hence, by Proposition~\ref{pr:fusion-domains},
  all irreducibles are weak primes.  Additionally, interchangeability,
  as characterised in the proof of the mentioned theorem, is the
  identity.

  In fact, given two irreducibles $\ltrace{\epsilon_i}$ with
  $\epsilon_i = \psi_i; \delta_i$ for $i \in \{1,2\}$ such that
  $\ltrace{\epsilon_1} \leftrightarrow \ltrace{\epsilon_2}$, by
  interchangeability
  $\ltrace{\psi_1} \sqcup \ltrace{\epsilon_2} =
  \ltrace{\epsilon_1} \sqcup \ltrace{\psi_2}$. Let such join be
  $\ltrace{\psi_1; \delta_1; \psi_1'} = \ltrace{\psi_2; \delta_2;
    \psi_2'}$ for suitable $\psi_1', \psi_2'$. This means that
  $\psi_1; \delta_1; \psi_1' \equiv^c_\sigma \psi_2; \delta_2;
  \psi_2'$ for a suitable permutation $\sigma$, with
  $\sigma(|\epsilon_1|) = \sigma(|\epsilon_2|)$. There are two
  possibilities. If $|\psi_1| = |\psi_2|$ and $\sigma$ restricts to a
  permutation of $\interval{|\psi_1|}$, then $\psi_1 \equiv^c \psi_2$
  and we conclude. Otherwise a step in $\psi_2$ is not mapped
  to $\psi_1$ or viceversa. Assume, without loss of generality, that
  there is $i \in \interval{|\psi_1|}$ such that
  $\sigma(i) > |\psi_2|$. This means that the $i$-th step in $\psi_1$
  is performed in $\psi_2'$. Since such step is performed after
  $\delta_2$, it cannot generate items consumed by $\delta_1$. Hence
  it must merge items that are merged by a different step in
  $\psi_2$. But this contradicts its fusion safety.

  \smallskip
  
  Hence all weak primes are primes and we
  conclude.
\end{proof}

\section{Conclusions and Related Work}
\label{se:conc}

In the paper we provided a characterisation of a class of domains,
referred to as weak prime algebraic domains, which is appropriate for
describing the concurrent semantics of those formalisms where a
computational step can merge parts of the state. We show a
categorical equivalence between weak prime algebraic domains and a
suitably defined class of connected event structures. We also prove
that the category of general event structures coreflects into
the category of weak prime algebraic domains.

The appropriateness of
the class of weak prime domains is witnessed by the results 
that show that weak prime algebraic domains
are precisely those arising from left-linear graph rewriting systems,
i.e., those systems where rules besides generating and deleting can
also merge graph items.
Furthermore, we show how to recover 
a prime event structures semantics also for rule-based
formalisms with fusions by introducing suitable restrictions on the
concurrent applicability of rules.

We have shown that the  
characterisations of prime domains and event structures in terms of intervals 
and asynchronous graphs naturally extend to weak prime domains. 
The characterisation of weak prime domains in terms of the interchangeability 
equivalence on irreducibles naturally suggest a presentation in terms of prime 
event structures endowed with an equivalence relation, allowing us to establish 
a link with the work in~\cite{win2017,VismeW19}.

Technically, the starting point for our proposal is the relaxation of the stability
condition for event structures. As already noted by Winskel
in~\cite{Win:ESSCCS} ``[t]he stability axiom would go if one wished to
model processes which had an event which could be caused in several
compatible ways [\ldots]; then I expect complete irreducibles would
play a similar role to complete primes here''.  Indeed, the
correspondence between irreducibles and weak primes, which
exploits the
notion of interchangeability, is the ingenious step that allows us to
obtain a smooth extension of the classical duality between
prime event structures and prime algebraic domains.

The coreflection between the category of general unstable event structures 
(with binary conflict)
and the one of weak prime algebraic domains says that the latter are
exactly the partial orders of configurations of the former. 
Such class of domains has been
studied originally in~\cite{Winskel:phd} where, generalising the work
on concrete domains and sequentiality~\cite{KP:CD}, a characterisation
is given in terms of a set of axioms expressing properties of prime
intervals.
In our paper we also provide an in depth comparison with these  results,
based on the observation that, roughly speaking, weak primes correspond to executions 
of events with their minimal enablings, while intervals can be seen as executions of 
events in a generic configuration.
A comparison is also drawn with the more recent notions of asynchronous graph~\cite{Mel:hab}, 
an alternative representation of prime algebraic domains based on the notion of path equivalence,
which we generalise in order to account for weak prime ones.

The need of resorting to unstable {\esabbr} for modelling the
concurrent computations of name passing process calculi has been
observed by several authors. In particular, in~\cite{CVY:ESSPE} an
{\esabbr} semantics for the $\pi$-calculus is defined by relying on
 {\esabbr} \emph{with names}, %
namely labelled {\esabbr} tailored for modelling
parallel extrusions. An event can have various minimal enablings but
with the constraint that distinct minimal enablings can differ only for one
event (intuitively, the extruder).
{\esabbr} with names are not connected {\esabbr} since they can have 
non-connected minimal enablings (roughly, because identical events in 
disconnected minimal enablings can be identified via the labelling).
Nevertheless, a connected {\esabbr} semantics could be obtained by
transforming {\esabbr} with names through the coreflection in the
paper: More details are reported 
in Appendix~\ref{app:pi}.

We believe that our results cover a long road in establishing weak prime domains and 
connected event structures as a foundational concept in the event-based semantics for 
concurrent computational systems.
Our next step will be to look at possible more general formalisms. 
In particular, the paper~\cite{GP:CSESPN} studies a characterisation 
of the partial order of configurations for a variety of classes of event
structures in terms of axiomatisability of the associated
propositional theories. Even if the focus in the present paper is 
on event structures that generalise Winskel's ones, we believe that
our work can provide interesting suggestions for
further development.

\smallskip

\paragraph*{Acknowledgements}
We are grateful to the anonymous referees of the conference version of
the paper for their insightful comments and suggestions. We are also indebted to Paul-Andr\`e Melli\'es for insightful discussions on the relation between event structures and asynchronous graphs.

\bibliography{Unstable}

\bibliographystyle{IEEEtran}

\appendix

\subsection{Event Structures with Non-Binary Conflict}
\label{app:consistency}

In the literature also {\esabbr} with non-binary conflict have been considered, 
where the binary conflict relation is replaced by a consistency predicate~\cite{Dro:ESD}. 
It is noteworthy that the duality results of Section~\ref{se:fes} easily adapt to this case.

\begin{definition}[{\esabbr} with non-binary conflict]
  An {\esabbr} \emph{with non-binary conflict} ({\esnabbr} for
  short) is a tuple $\langle E, \vdash, Con \rangle$ such
  that
  \begin{itemize}
  \item $E$ is a set of events
  \item $Con \subseteq \mathbf{2}^E_{fin}$ is the consistency
    predicate, satisfying $X \in Con$ and $Y \subseteq X$ implies $Y
    \in Con$
  \item $\vdash\ \subseteq Con \times E$ is the \emph{enabling}
    relation, satisfying $X \vdash e$ and $X \subseteq Y \in Con$
    implies $Y \vdash e$.
  \end{itemize}
  The {\esnabbr} ${E}$ is \emph{stable} if $X\vdash e$,
  $Y \vdash e$, and $X \cup Y \cup \{e\} \in Con$ imply
  $X \cap Y \vdash e$.
\end{definition}

A configuration $C \subseteq E$ is just a set such that
it is secured and all its finite subsets are consistent.
The notion of live {\esnabbr} is easily adapted to take into account
non-binary conflicts and also in this case we will implicitly assume
all {\esnabbr} to be live.

\begin{definition}[live {\esnabbr}]
  An {\esnabbr} ${E}$ is \emph{live} if for all
  $X \in Con$ there is $C \in \conf{E}$ such that
  $X \subseteq C$ and moreover
  for all $e \in E$ we have $\{ e \} \in  Con$.
\end{definition}

The notion of the category of {\esnabbr} is adapted accordingly.

\begin{definition}[category of event structures]
  \label{a-de:es-morphism}
  A morphism of {\esnabbr} $f : {E}_1 \to {E}_2$ is a partial
  function $f : E_1 \to E_2$ such that
  for all $X_1 \subseteq E_1$ and $e_1, e_1' \in E_1$ with $f(e_1)$, $f(e_1')$ defined
  \begin{itemize}
  \item if $X_1 \in Con_1$ then $f(X_1) \in Con_2$;
  \item if $\{ e_1, e_1' \} \in Con_1$ and $f(e_1) = f(e_1')$ then $e_1 = e_1'$;
  \item if $X_1 \vdash_1 e_1$ then 
  $f(X_1) \vdash_2 f(e_1)$.
  \end{itemize}
  We denote by $\esn$ the category of {\esnabbr} and {\esnabbr} morphisms,
  and by $\cesn$ its full subcategory having connected {\esnabbr} as objects (the definition of conectedness remains unchanged).
\end{definition}

In the definition of domains (Definition~\ref{c-de:domain}), the
existence of joins is now required only for consistent subsets,
instead of being required for pairwise consistent.

\begin{definition}[b-domains]
  \label{de:domain}
  A \emph{bounded complete domain (b-domain)} is an algebraic
  finitary partial order where all consistent subsets $X \subseteq D$
  admit a least upper bound $\bigsqcup X$.
  B-domain morphisms are as in Definition~\ref{de:domain-category}. 
  We denote by $\Domb$ the corresponding category.
\end{definition}

The definition of weak prime algebraicity remains formally the same, 
but the underlying partial order is required to be a b-domain.

\begin{definition}[weak prime algebraic b-domain]
  \label{a-de:fusion-domain-new}
  A \emph{weak prime algebraic b-domain} (or simply \emph{weak prime
    b-domain}) is a {\wi} b-domain $D$ such that for all $d \in D$ it
  holds $d = \bigsqcup (\principal{d} \cap \wpr{D})$. We denote by
  $\WDomb$ the corresponding category.
\end{definition}

The proof of the fact that, given an {\esnabbr} ${E}$, the partial order of
configurations $\dom{{E}} = \langle \conf{{E}}, \subseteq \rangle$ is
a weak prime b-domain, is unchanged. The same holds for the fact that
if $f : E_1 \to E_2$ is an {\esnabbr} morphism then
$\dom{f} : \dom{{E}_1} \to \dom{{E}_2}$ is a weak prime b-domain
morphism.

Vice versa the {\esnabbr} associated with a
weak prime b-domain is defined as follows.

\begin{definition}[{\esnabbr} for a weak prime b-domain]
  \label{a-de:es-for-dom}
  Let $D$ be a weak prime b-domain. The {\esnabbr} 
  $\ev{D} = \langle E, Con, \vdash \rangle$ is defined as
  follows
  \begin{itemize}

  \item $E = \eqclassir{\ir{D}}$;

  \item
    $Con = \{ X \mid \exists d \in \compact{D}.\ X \subseteq
    \eqclassir{\ir{d}} \}$;

  \item $X \vdash e$ if there exists $i \in e$ such that
    $\eqclassir{\ir{i} \setminus \{ i \}} \subseteq X$.

  \end{itemize}

Given a morphism $f : D_1 \to D_2$, its image
$\ev{f} : \ev{D_1} \to \ev{D_2}$ is defined for
$\eqclassir{i_1} \in E$ as
$\ev{f}(\eqclassir{i_1}) = \eqclassir{i_2}$, where
 $i_2 \in \diff{f(i_1)}{f(\pred{i_1})}$,
and $\ev{f}(\eqclassir{i_1})$ is undefined if
$f(\pred{i_1}) = f(i_1)$.
\end{definition}

We then get a result corresponding to Theorem~\ref{th:duality} for {\esabbr}
with non-binary conflict and  weak prime b-domains.

\begin{theorem}[corecflection of $\esn$ and $\WDomb$]
 \label{th:es-dom-equivalence-non-binary}
 The functors $\zdom : \esn \to \WDomb$ and $\zev : \WDomb \to \esn$
 form a coreflection.  It restricts to an equivalence between $\WDomb$
 and $\cesn$.
\end{theorem}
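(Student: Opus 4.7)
The plan is to mimic the proof of Theorem~\ref{th:es-dom-equivalence} step by step, replacing binary conflict with the consistency predicate wherever needed, and verifying that every lemma invoked along the way extends to b-domains. Most of the technical work (characterising irreducibles as $\esir{C}{e}$, the unique-decomposition-up-to-$\leftrightarrow$ lemma, the chains lemma, etc.) is stated in purely order-theoretic terms, so only those arguments that explicitly manipulate $\#$ or $Con$ require attention. The one delicate point flagged by the paper is that Lemma~\ref{le:transitive} (transitivity of $\leftrightarrow$ on consistent irreducibles) must be strengthened in the b-domain setting to require that the three irreducibles $i,i',i''$ be jointly consistent, not merely pairwise so; the counterexample in Fig.~\ref{fi:not-eq} shows this is necessary. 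Fortunately, everywhere the original proof applies transitivity, the irreducibles sit inside a single compact $d$ and so are jointly consistent, so the strengthened statement suffices.

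First I would verify that $\zdom : \esn \to \WDomb$ is well-defined: Lemma~\ref{le:es-to-fusion-domain} goes through unchanged, and Proposition~\ref{pr:es-to-dom} still gives that $\dom{E}$ is weak prime, with joins existing for consistent (not merely pairwise consistent) sets of configurations, which is exactly the b-domain condition. Next I would define $\zev : \WDomb \to \esn$ as in the original construction but with the new $Con$, and check that if $X \subseteq \eqclassir{\ir{d}}$ then any subset of $X$ is also contained in $\eqclassir{\ir{d}}$, so $Con$ is downward closed; enabling preservation is unchanged. The compacts-versus-configurations correspondence (the analogue of Lemma~\ref{le:comp-conf}) then yields that $C \in \conff{\ev{D}}$ iff $C = \eqclassir{\ir{d}}$ for a unique $d \in \compact{D}$, with the proof using Proposition~\ref{pr:unique-dec} verbatim.

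The counit $\theta_E \colon \ev{\dom{E}} \to E$ and unit $\eta_D \colon D \to \dom{\ev{D}}$ are defined exactly as before ($\eqclassir{\esir{C}{e}} \mapsto e$ and the canonical iso on compacts, respectively). The only new verification is that $\theta_E$ preserves consistency: if $X$ is consistent in $\ev{\dom{E}}$ via some witnessing compact (i.e., finite configuration) $C$, then every representative of every element of $X$ is a subset of $C$, so their event-labels all lie in $C$, whence $\theta_E(X) \subseteq C \in Con_E$. Conflict reflection and enabling preservation are unchanged. Naturality of $\theta$ and $\eta$ is verbatim from the original proof.

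For the restriction to $\cesn$ being an equivalence, I would show $\theta_E$ is an isomorphism when $E$ is connected. Surjectivity uses liveness (which in the non-binary setting now reads: every consistent $X$ extends to a configuration), injectivity uses exactly the connectedness chain argument from Theorem~\ref{th:es-dom-equivalence} together with Lemma~\ref{le:es-to-fusion-domain}(\ref{le:es-to-fusion-domain:4}), and the inverse is a morphism by a routine check now aided by the reformulated liveness condition. The main obstacle throughout is simply the careful bookkeeping to confirm that no step silently exploits binary conflict; the one real mathematical delicacy is the revised transitivity lemma, and as noted above its hypotheses are always met in the contexts where it is applied.
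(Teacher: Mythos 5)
Your proposal follows exactly the route the paper takes: the appendix gives no standalone proof but explicitly states that the duality of Section~\ref{se:fes} adapts once binary conflict is replaced by the consistency predicate, with the single delicate point being that Lemma~\ref{le:transitive} fails as stated (Fig.~\ref{fi:not-eq}) and must be strengthened to joint consistency of $i,i',i''$ --- which, as you observe, is always available where it is invoked since the relevant irreducibles are bounded by a common compact. Your additional checks (downward closure of $Con$, consistency preservation by $\theta_E$, the reformulated liveness) are the right bookkeeping and are consistent with the paper's intent.
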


Concerning the interval-based characterisation in
Section~\ref{ss:intervals}, we recall that the paper by
Droste~\cite{Dro:ESD} considers also the case of event structures with
a general consistency relation (rather than a binary conflict) and
shows that the corresponding domains can be characterised as
algebraic complete partial orders where
axioms (F), (C) of Section~\ref{ss:intervals} and, additionally, (I) below hold.

\begin{description}
\item[(I)] for all $x,x',y,y' \in \compact{D}$ if $\dint{x}{x'} \sim \dint{y}{y'}$
  and $x \sqsubseteq x'$ then $y \sqsubseteq y'$.
\end{description}

The definition of the {\esabbr} $\evwd{D}$ associated with a domain $D$
(Definition~\ref{de:evwd}) can be easily adapted to the non-binary
case. The only thing that changes is the definition of
consistency: a set $X \subseteq E$ is consistent if for all $e \in X$
there exists a representative $\dint{c_e}{c_e'} \in e$ such that
$\{ c_e \mid e \in X \}$ is bounded in $D$.
Then the correspondence with our approach established in
Section~\ref{ss:intervals} easily extends to this setting: algebraic
complete partial orders where axioms (F), (C) and (I) hold are exactly
the weak prime b-domains and the obvious rephrasing of
Proposition~\ref{pr:es-int} continue to hold.

Also the connection with asynchronous graphs in
Section~\ref{ss:async-graphs} can be adapted easily. Unsurprisingly,
Proposition~\ref{pr:was-ces} holds for connected {\esabbr} with
non-binary conflict if we modify the definition of asynchronous graph
(Definition~\ref{de:async-graph}) by omitting the coherence
axiom~(\ref{de:async-graph:4}).

\subsection{An Event Structure Semantics for the $\pi$-calculus}
\label{app:pi}

The need of resorting to unstable {\esabbr} for modelling the
concurrent computations of name passing process calculi has been
observed by several authors. In particular, in~\cite{CVY:ESSPE} an
{\esabbr} semantics for the pi-calculus is defined by relying on
so-called {\esabbr} \emph{with names} ({\esnmabbr} for short), namely
{\esabbr} that are tailored for parallel extrusions: labelled unstable
{\esabbr} with the constraint that two minimal enablings can differ
only for one event (intuitively, the extruder).

Given a global set of names $\mathcal{N}$, {\esabbr} \emph{with names}
({\esnmabbr} for short) are triples $(E, X, \lambda)$ where $E$ is a
prime {\esabbr}, $X \subseteq \mathcal{N}$ is a set of names
(intuitively, the names that are restricted), and
$\lambda : E \to \{ x(y), \bar{x}(y) \}$ is a function mapping each event
to either an input or an output prefix.

A configuration $C$ is a configuration of the underlying prime
{\esabbr} such that there exists a maximal $e \in C$
satisfying
\begin{itemize}
\item $C \setminus \{ e \}$ is a configuration;
\item if $\lambda(e) = x(y)$ or $\lambda(e) = \bar{x}(y)$ with
  $x \in X$ then there exists $e' \in C \setminus \{e\}$ such that
  $\lambda(e') = \bar{z}(x)$.
\end{itemize}

The latter requirement above boils down to ensuring that if the name were 
restricted, it has been extruded before.
Thus, {\esnmabbr} are unstable {\esabbr} with the additional constraint 
that two minimal enablings can differ only for one event (the extruder!):
namely, if $X_1 \vdash_0 e$ and $X_2 \vdash_0 e$ then
$X_1 \cap X_2 = X_1 \setminus \{ e_1 \} = X_2 \setminus \{ e_2 \}$
for suitable $e_1, e_2$.

Note that, {\esnmabbr} are not connected {\esabbr} since they can have 
non-connected minimal enablings (roughly, because identical events in 
disconnected minimal enablings are identified via the labelling).
Consider e.g. $E = \{ \bar{a}(x), \bar{b}(x), x(y) \}$, with
$\bar{a}(x) \# \bar{b}(x)$, and $X=\{x\}$. Then the
configurations are $\emptyset$, $\{\bar{a}(x)\}$, $\{\bar{b}(x)\}$,
$\{\bar{a}(x), x(y)\}$, and $\{\bar{b}(x), x(y)\}$, hence $x(y)$ has two
non-connected minimal enablings. 

\end{document}